%% file: draft.tex
\documentclass[12pt]{article}

\usepackage{styles/natbib}
\usepackage{microtype}
\usepackage{graphicx}
\usepackage{subfigure}
\usepackage{booktabs} % for professional tables
\usepackage{fullpage}
\usepackage{longtable}
\usepackage{caption}
\usepackage{enumitem}
\usepackage[breaklinks,colorlinks,
            linkcolor=red,
            anchorcolor=blue,
            citecolor=blue
            ]{hyperref}
\usepackage{amsmath}
\usepackage{amssymb}
\usepackage{mathtools}
\usepackage{amsthm}

\usepackage[capitalize,noabbrev]{cleveref}

\theoremstyle{plain}
\newtheorem{theorem}{Theorem}[section]
\newtheorem{proposition}[theorem]{Proposition}

\theoremstyle{definition}

\newtheorem{assumption}[theorem]{Assumption}
\theoremstyle{remark}

\newtheorem{remark}[theorem]{Remark}
\usepackage[textsize=tiny]{todonotes}

\usepackage{relsize}
\usepackage[utf8]{inputenc} % allow utf-8 input
\usepackage[T1]{fontenc}    % use 8-bit T1 fonts
\usepackage{url}            % simple URL typesetting
\usepackage{booktabs}       % professional-quality tables
\usepackage{amsfonts}       % blackboard math symbols
\usepackage{nicefrac}       % compact symbols for 1/2, etc.
\usepackage{microtype}      % microtypography
\usepackage{styles/smile}
\usepackage{dsfont}
\usepackage{xcolor}
\usepackage{tikz}

\newcommand{\diff}{{\rm d}}
\newcommand{\indep}{\perp\!\!\!\perp}

\NewDocumentCommand{\indicator}{}{\mathds{1}}

\title{Partial Identification of Policy-Relevant Treatment Effects with Instrumental Variables via Optimal Transport}
\author{Jiyuan Tan \quad Jose Blanchet \quad Vasilis Syrgkanis \\ \texttt{\{jiyuantan, jose.blanchet, vsyrgk\}@stanford.edu} \\ Management Science and Engineering, Stanford University}

\begin{document}
\maketitle

\begin{abstract}
    Policy-Relevant Treatment Effects (PRTEs) are generally not point-identified under standard Instrumental Variable (IV) assumptions when the instrument generates limited support in treatment propensity. We show that PRTE partial identification in the generalized Roy model can instead be formulated as a Constrained Conditional Optimal Transport (CCOT) problem over the joint conditional law of the potential outcome and the latent resistance. The resulting multidimensional CCOT problem reduces analytically to separable one-dimensional OT problems with product costs, yielding sharp closed-form bounds and avoiding direct solution of the original high-dimensional CCOT problem. We also develop estimation and inference procedures for these bounds: for discrete instruments, we use a Double Machine Learning (DML) approach based on Neyman-orthogonal scores that accommodates high-dimensional covariates while achieving the parametric $\sqrt{n}$ rate and asymptotic normality; for continuous instruments, we explicitly characterize the corresponding nonparametric convergence rates. The framework accommodates covariates, discrete and continuous instruments, and extensions to general treatment settings. In simulations and a bed-net subsidy application, the resulting bounds are substantially tighter than the moment-relaxation method.
\end{abstract}

\input{sections/introduction}

\input{sections/preliminaries}

\input{sections/pid}

\input{sections/extension}

\section{Estimation and Inference}\label{sec:est_inf}
\input{sections/est_inf}

\section{Simulation}\label{sec:simulation}

\input{sections/experiments}

\section{Conclusion}\label{sec:conclusion}
\input{sections/conclusion}

\bibliographystyle{plainnat}
\bibliography{ref}

\newpage
\appendix
\input{sections/notation}

\newpage

\input{sections/appendix}
\end{document}

%% file: sections/introduction.tex
\section{Introduction}

Instrumental variables (IVs) are a standard tool for identifying causal effects when treatment is endogenous in observational studies. Under the classical IV assumptions, \citet{angrist1995identification} show that the Local Average Treatment Effect (LATE), defined as the average effect for the subpopulation of compliers, is point-identified. However, many policy questions are not about compliers themselves; they ask how outcomes would change under a counterfactual policy that shifts treatment take-up in the broader population. Such questions are naturally captured by the Policy-Relevant Treatment Effects (PRTEs), a central target in the generalized Roy and Marginal Treatment Effect (MTE) framework \citep{heckman1999local,heckman2005structural,heckman2006understanding}, which encompass a wide range of causal estimands including the Average Treatment Effect (ATE). Empirical policy evaluations within this framework include \citet{carneiro2010evaluating,carneiro2011estimating}. Unlike LATE, PRTEs are generally not point-identified when the instrument induces only limited support in the treatment propensity.

%Classical applications include the causal return to education \citep{angrist1991does,card1993using}, the earnings effects of military service \citep{angrist1990lifetime}, the long-run impact of institutions on economic development \citep{acemoglu2001colonial}, and policy interventions with imperfect treatment compliance \citep{angrist1995identification}. 

This limited-support problem is central in the generalized Roy model \citep{heckman2005structural}, where treatment selection follows a threshold-crossing mechanism formally equivalent to the classical monotonicity condition \citep{angrist1995identification,vytlacil2002independence}. In this framework, PRTEs can be written as weighted averages of the MTE, so identification depends on how much of the latent resistance distribution is reached by the propensity score induced by the instrument. When the propensity score lacks full support over the unit interval, PRTEs and related quantities such as the ATE remain only partially identified, and a long line of work has studied such identified sets under various assumptions \citep{manski1990nonparametric,manski1997monotone,manski_pepper2000monotone,heckman2001instrumental,magne2018using}. 

Despite its importance, deriving tight, closed-form bounds for general PRTEs remains a significant challenge. Existing strategies fall into three broad approaches: moment-relaxation over MTR functions \citep{magne2018using,mogstad2018identification}, second-order stochastic dominance constraints on the MTE function solved by rearrangement \citep{marx2024sharp}, and infinite-dimensional linear programming over the joint law of outcome and latent resistance solved by sieve approximation \citep{han_yang2024computational}. Each faces limitations---discarded distributional information, restricted scope to weights on the complier interval under a binary instrument, lack of closed-form solutions or incompatibility with continuous instruments---that we discuss in detail in \cref{subsec:related_work}.

In this paper, we develop a unified Optimal Transport (OT) framework for partial identification in the generalized Roy model, and use it to derive sharp, closed-form bounds for PRTEs. Specifically, we recast PRTE partial identification as a Constrained Conditional Optimal Transport (CCOT) problem, a structured instance within the broader OT framework that is equivalent to the infinite-dimensional LP of \citet{han_yang2024computational}. Rather than summarizing the observed distributions through moment conditions or imposing distributional constraints on conditional mean (MTE) functions, we place the distributional constraints directly on the joint conditional law of the potential outcome and latent resistance. Then, we exploit the OT structure to prove that this heavily constrained, multidimensional CCOT problem analytically reduces to separable standard one-dimensional OT problems. This reduction completely bypasses computationally intensive linear programming and sieve approximation, yielding explicit, closed-form solutions for the sharp bounds. Because the framework constrains the full joint latent law rather than a conditional mean, it applies in principle to a broader class of partial identification problems in the generalized Roy model; here we focus on PRTEs. 

Our main contributions can be summarized as follows.

\begin{enumerate}
    \item \textbf{A Unified OT Framework for the Generalized Roy Model:} We formulate partial identification in the generalized Roy model as a CCOT problem. The framework imposes distributional constraints directly at the joint law level and we prove that constraints decomposes analytically into separable marginal constraints.
    \item \textbf{Analytic Decomposition and Closed-Form PRTE Bounds:} Leveraging the decomposition, we prove the CCOT problem decomposes into separable one-dimensional OT problems with product costs, yielding explicit, closed-form sharp bounds and bypassing the sieve approximation required by existing computational approaches.
    \item \textbf{Semi-parametric Estimation and Inference:} We develop estimation results for the derived bounds. For discrete instruments, we leverage Double Machine Learning (DML) to construct a debiased estimator, achieving $\sqrt{n}$-consistency and asymptotic normality. For continuous instruments, we explicitly characterize the nonparametric convergence rates.
    \item \textbf{Generalization to General Treatments:} We extend our identification strategy beyond the binary treatment setting to continuous and multi-valued treatments, demonstrating the flexibility of the OT framework. For the multi-valued treatment case, this extension requires an additional algebraic condition on the ranges of the propensity scores induced by the instrument (\cref{asp:algebra}).
    \item \textbf{Empirical Validation:} We validate our theoretical results through both synthetic simulations and a real-world empirical application, demonstrating how our closed-form bounds tighten the identified set relative to the moment-relaxation methods.
\end{enumerate}

\paragraph{Organization of this paper.} The remainder of the paper is organized as follows. \cref{sec:pre} introduces the setup and optimal transport background; \cref{sec:pid} presents the main identification theory and closed-form bounds; \cref{sec:extension} extends the framework to general treatments; \cref{sec:est_inf} develops estimation and inference; and \cref{sec:simulation} reports simulation and empirical results.

\subsection{Related Work}\label{subsec:related_work}
\paragraph{Local IV and MTE.}
The Marginal Treatment Effect (MTE), introduced by \citet{heckman1999local} and systematically developed by \citet{heckman2005structural}, is the foundational building block of our identification analysis. The MTE is defined as the treatment effect for individuals at a specific quantile of unobserved resistance to treatment, and serves as a unifying object from which all standard IV estimands--including LATE, ATE, ATT, and PRTE--can be recovered as weighted averages \citep{heckman2005structural, heckman2006understanding}. The formal equivalence between the threshold-crossing selection model and the monotonicity assumption of \citet{angrist1995identification} is established by \citet{vytlacil2002independence}, who shows that the two frameworks are equivalent under a continuity condition on the unobservable. Point identification of the MTE is achieved via the Local Instrumental Variables (LIV) estimand \citep{heckman1999local}, which requires full propensity score support over $[0,1]$---a condition that fails under limited instrument variation, the regime our paper addresses. Empirical MTE-based policy evaluations include \citet{carneiro2010evaluating,carneiro2011estimating}.

\paragraph{IV with General Treatment.} \citet{angrist1995two} extends the LATE framework to treatments with multiple discrete ordered levels. \citet{heckman2007econometric} extends the MTE framework to the ordered choice model. \citet{kirkeboen2016field} provides methodology for identifying treatment effects in settings with multiple unordered discrete choices--specifically fields of study--by accounting for how individuals self-select into alternatives based on their unobserved comparative advantages. \citet{lee2018identifying} develop identification results for multivalued treatments. In the continuous treatment setting, \citet{florens2008identification} use the control function method to identify the ATE and ATT, and \citet{imbens2009identification} use a similar approach to identify structural equations in models with continuous endogenous variables. While these results establish point identification for certain causal quantities, our framework instead targets partial identification of PRTEs, and we extend our results to the continuous and multi-valued treatment settings in \cref{sec:extension}.

\paragraph{Partial Identification.} \citet{manski1989anatomy,manski1990nonparametric,manski1997monotone,manski_pepper2000monotone,manskiInferenceRegressionsInterval2002} pioneered partial identification, providing closed-form bounds for the IV problem under a range of assumptions. In the econometric literature, partial identification is often cast as estimating the identified set defined by moment inequalities \citep{chernozhukov_hong_tamer2007,rosen2008confidence,romano2008inference,canay2010inference,bugni2010bootstrap}; see \citep{tamer2010partial} for a comprehensive survey. As a special case, our bound recovers the classical Manski bound for ATE. The computational and analytical approaches of \citet{han_yang2024computational} and \citet{marx2024sharp} to PRTE partial identification in the MTE framework are discussed in the \emph{PRTE Partial Identification} paragraph below.

\citet{balke_pearl1997bounds} develop the linear programming approach for partial identification. Recently, this optimization-based approach has been revisited and generalized using modern techniques \citep{balazadeh_meresht_syrgkanis_krishnan2022,guo_yin_wang_jordan2022,duarte2024automated,voronin2025linear,levis2025covariate}. In particular, \citet{levis2025covariate} use covariates to tighten \citet{balke_pearl1997bounds}'s classical bound and develop estimation theory in the IV setting. Several works establish connections between partial identification and robust optimization \citep{guo_yin_wang_jordan2022,balazadeh_meresht_syrgkanis_krishnan2022,gao_ge_qian2024,penn_gunderson_bravohermsdorff_silva_watson2025,fan_pass_shi2025,tan2024consistency}. The partial identification problem can be formulated as an optimal transport problem with marginal constraints derived from the observational distribution \citep{ji_lei_spector2024,lin2025estimation,lin2025tightening}. For a recent primer on optimal transport methods for causal inference, see \citep{gunsilius2025primer}. In a related but distinct setting, \citet{fan_guerre_zhu2017partial} study partial identification of functionals of the joint distribution of potential outcomes in settings where the conditional marginal distributions are identified---including the latent threshold-crossing model of \citet{heckman2005structural} under full propensity-score support. \citet{oberreynolds2023estimating} characterizes sharp bounds on functionals of the joint complier outcome distribution via optimal transport under the classical binary-instrument LATE framework of \citet{angrist1995identification}, focusing on the identified complier subpopulation rather than the partially identified PRTEs we target here.

\paragraph{PRTE Partial Identification.} Three strategies have emerged for PRTE partial identification under limited support. \citet{magne2018using} optimize over MTR functions subject to moment constraints \citep{mogstad2018identification}, which discards distributional information beyond the first moment. \citet{han_yang2024computational} preserve full distributional content by formulating an infinite-dimensional LP over the joint law of latent states, solved by sieve approximation for discrete instruments. \citet{marx2024sharp} generalize \citep{magne2018using} by imposing full distributional content on the MTE function via a second-order stochastic dominance (SSD) constraint, obtaining sharp closed-form bounds by rearrangement bounds; his explicit theorems cover weights on the complier interval under a binary instrument, with extensions beyond compliers requiring rank similarity. 

Our work differs from all three in scope: we impose distributional constraints directly on the joint conditional law of the potential outcome and latent resistance rather than on a conditional mean, cast the resulting program as a CCOT problem equivalent to the LP of \citet{han_yang2024computational}, and show it reduces analytically to separable one-dimensional OT problems with product costs. This delivers (i) sharp closed-form bounds for the full PRTE on all of $[0,1]$ under multi-valued instruments via an interval decomposition across $K\!+\!1$ compliance types, including boundary intervals where only one potential outcome marginal is identified, (ii) extensions to continuous and mixed instruments with sub-problems indexed by the propensity score continuum, and (iii) discrete-instrument inference together with continuous-instrument rate results, which are not available in either \citet{marx2024sharp} or \citet{han_yang2024computational}. On the complier interval under a binary instrument, the CCOT solution and Marx's SSD rearrangement coincide. Beyond PRTE, our sharpness result characterizes the identified set at the level of the full joint latent law and sheds light on the joint-identification question raised in \citet{marx2024sharp} through its reduction to independent one-dimensional marginal constraints.

% Crucially, while \citet{marx2024sharp} imposes distributional constraints only on the conditional mean (the MTE function) and does not characterize the identified set at the level of the joint latent distribution, our sharpness result (\cref{prop:sharpness}) directly characterizes the identified set of the full joint latent law of $(Y(0), Y(1), U, X)$.

%% file: sections/preliminaries.tex
\section{Preliminaries} \label{sec:pre}
\subsection{IV Model and Policy-Relevant Treatment Effects}\label{subsec:iv_prte}
For most of this paper, we work with the generalized Roy model \citep{heckman2005structural}. In the model, the observable variables are treatment $W\in \mathcal{W} = \{0,1\}$, instrument $ Z \in \mathcal{Z}$, covariates $X\in\mathcal{X}$ and the outcome $Y \in \mathcal{Y}$, where $\mathcal{X} \subset\mathbb{R}^{d_X}, \mathcal{Z}$ and $ \mathcal{Y} \subset \mathbb{R}$ are the domains of the corresponding variables. We consider both discrete and continuous instrument settings in this paper. Throughout the paper, we assume the domains of all random variables are compact. The unobservables are the potential outcomes $Y(0), Y(1)$, and the variable $U$ in the selection equation (\ref{eq:thresh_xing}). 
\begin{assumption}[Structural Assumptions] \label{asp:structure}
    We will make the following structural assumptions for our IV model. Throughout, we assume $Y(w) \in \mathcal{Y}$, $Z \in \mathcal{Z}$, and $X \in \mathcal{X}$ almost surely.
    \begin{enumerate}
        \item (Consistency) $Y = WY(1) + (1-W)Y(0)$.
        \item (Conditional Instrumental Exogeneity) $Z \indep (U,Y(0),Y(1)) \mid X$, where $\indep$ denotes statistical independence.
        \item (Threshold Crossing) The treatment is selected by
        \begin{align}\label{eq:thresh_xing}
            W = \indicator (U\leqslant p(Z,X)),
        \end{align}
        where $p(Z,X) = \mathbb{P}(W=1 \mid Z, X)$. Moreover, $U \mid X \sim \mathrm{Unif}(0,1)$.
        \item (Common Instrument Support) The conditional support of $Z$ given $X$ does not depend on $X$: $\mathrm{Supp}(Z \mid X=x) = \mathcal{Z}$ for all $x \in \mathcal{X}$.
    \end{enumerate}
\end{assumption}

It is without loss of generality to assume that $U\sim\text{Unif}(0,1)$ if $U$ is continuously distributed, because we can always normalize its marginal distribution.\footnote{By the probability integral transform, if $U \mid X$ has a continuous CDF $F_{U\mid X}(\cdot \mid x)$, then $\tilde{U} := F_{U\mid X}(U \mid X) \sim \mathrm{Unif}(0,1)$ conditional on $X$. Since $\tilde{U}$ is a strictly increasing function of $U$, the threshold-crossing model is preserved under this reparametrization, and we may work with $\tilde{U}$ in place of $U$ without loss of generality.} The common instrument support condition is essential for the identification of the propensity score $p(z,x)$: without $z \in \mathrm{Supp}(Z \mid X=x)$ for all $x$, the conditional probability $\mathbb{P}(W=1 \mid Z=z, X=x)$ is not well-defined and hence $p(z,x)$ is not identified.

We also require the following boundedness condition on the potential outcomes, which is needed to obtain informative bounds.
\begin{assumption}[Boundedness of the Potential Outcome]\label{asp:bounded_y}
The outcome domain satisfies $\mathcal{Y} \subseteq [y_{\min}, y_{\max}]$ for known constants $y_{\min} < y_{\max}$.
\end{assumption}

The fundamental building block for evaluating causal parameters in this framework is the MTE, originally introduced by \citet{heckman1999local,heckman2005structural}. The MTE is defined as the average treatment effect for individuals with covariates $X=x$ and latent variable $U=u$:
$$ \text{MTE}(x,u) = \mathbb{E}[Y(1) - Y(0) \mid X=x, U=u]. $$

In policy analysis, we are often interested in the effect of an intervention that shifts the distribution of the instrument $Z$, or more generally, alters the propensity score from a baseline status to a new regime. Let the baseline policy be characterized by the current propensity score $p(Z,X)$ and the corresponding treatment status $W$. Consider an alternative policy that induces a new propensity score $q(Z,X)$ and a new counterfactual treatment status $W^q = \indicator(U \leqslant q(Z,X))$, where we assume the latent variable $U$ is invariant to the policy change, as is standard in the MTE framework.

The PRTE evaluates the average per-person impact of shifting from the baseline policy to the alternative policy. Let $Y^q$ denote the outcome realized under the alternative policy, such that $Y^q = W^qY(1) + (1-W^q)Y(0)$. The PRTE is formally defined as:
$$ \text{PRTE} = \frac{\mathbb{E}[Y^q - Y]}{\mathbb{E}[W^q - W]}. $$

Many common treatment effect parameters can be expressed as a weighted average of the MTE. Specifically, the PRTE can be rewritten as:
$$ \text{PRTE} = \mathbb{E}_{X,U}[\text{MTE}(X, U) \omega_{\text{PRTE}}(X,U)], $$
where the policy-specific weight function is given by:
$$ \omega_{\text{PRTE}}(x,u) = \frac{\mathbb{P}(q(Z,X) \geqslant u \mid X=x) - \mathbb{P}(p(Z,X) \geqslant u \mid X=x)}{\mathbb{E}[q(Z,X) - p(Z,X)]}, $$
provided $\mathbb{E}[q(Z,X) - p(Z,X)] \neq 0$.
Following \citet{heckman2005structural,magne2018using}, we consider the following more general target parameter.
\begin{align}\label{eq:target_mte}
    \theta_\omega = \mathbb{E}_X \int_0^1 \text{MTE}(X,u) \omega(X,u) \, \mathrm{d}u,
\end{align}
where $\omega(x,u)$ is an identifiable function. We can obtain different PRTEs by taking
\begin{align}\label{eq:omega_form}
    \omega(x,u) \propto \mathbb{E}[\indicator(u \leqslant q_1(Z,X)) - \indicator(u \leqslant q_0(Z,X)) \mid X=x],
\end{align}
for given propensity functions $q_0$ and $q_1$, where $q_0 = p$ and $q_1 = q$ correspond to the baseline and alternative propensity score functions, respectively, in the PRTE case. The coefficient hidden inside the proportionality symbol is identifiable. \cref{tab:causal_quantities} shows a variety of common causal parameters under different choices of $\omega$.
\begin{table}[htbp]
    \centering
    \caption{Common Causal Parameters as Weighted Averages of the MTE}
    \label{tab:causal_quantities}
    \renewcommand{\arraystretch}{1.8} % Increases row spacing for large fractions
    \begin{tabular}{lcc}
        \toprule
        \textbf{Target Parameter} & \textbf{Notation} & \textbf{Weight Function} $\omega(x,u)$ \\
        \midrule
        Average Treatment Effect & ATE & $1$ \\
        Average Treatment on the Treated & ATT & $ \frac{\mathbb{E}[\indicator (u\in(0,p(Z,X)))\mid X=x]}{\mathbb{E}[p(Z,X)]}$ \\
        Average Treatment on the Untreated & ATU & $ \frac{\mathbb{E}[\indicator (u\in(p(Z,X),1))\mid X=x]}{\mathbb{E}[1 - p(Z,X)]}$\\
        Local Average Treatment Effect\textsuperscript{a} & LATE$(z_1, z_0)$ & $ \frac{\indicator(u \in [p(z_0, x), p(z_1, x)])}{\mathbb{E}[p(z_1, X) - p(z_0, X)]}$ \\
        Policy-Relevant Treatment Effect & PRTE & $ \frac{\mathbb{P}(q(Z,X) \geqslant u \mid X=x) - \mathbb{P}(p(Z,X) \geqslant u \mid X=x)}{\mathbb{E}[q(Z,X) - p(Z,X)]}$ \\
        \bottomrule
        \multicolumn{3}{l}{\footnotesize \textsuperscript{a} Evaluated for a shift from instrument state $z_0$ to $z_1$, assuming $p(z_1, x) \geqslant p(z_0, x)$.}
    \end{tabular}
\end{table}

While our bounds recover the more classical bound for the first 4 quantities in \cref{tab:causal_quantities}, we will focus in particular on PRTEs. As a leading example, consider a uniform policy shift of the form $q(Z,X) = \min(p(Z,X) + \alpha, 1)$ for some $\alpha > 0$. This corresponds to a policy that uniformly raises the probability of treatment uptake by $\alpha$. In our empirical application (\cref{sec:simulation}), the instrument is the offered price of an insecticide-treated bed net \citep{dupas2014subsidies}, and $q = p + \alpha$ models a uniform price subsidy that lowers the cost of purchase and thereby increases take-up probability by $\alpha$.

\citet{heckman2001instrumental} shows that if the support of $p(Z,x)$ spans over the entire interval $(0,1)$ for all $x$, then MTE is identifiable. However, because the common support of $p(Z,X)$ is often a strict subset of $(0,1)$, the MTE is generally not identifiable. For instance, when both $X$ and $Z$ are discrete, the supports of $p(Z,x)$ are discrete and this common support assumption is violated. Consequently, the integral defining the PRTE is not point-identified without imposing strong parametric extrapolations.

\subsection{A Primer on Optimal Transport}\label{subsec:ot_primer}

To build the framework for our partial identification results, we briefly introduce the relevant concepts from OT theory. OT provides a mathematical framework for coupling two probability distributions at minimum cost, and has emerged as a powerful toolkit for bounding joint distributions when only their marginals are known \citep{galichon2016optimal}.

Let $\mu$ and $\nu$ be two probability measures on state spaces $\mathcal{A}$ and $\mathcal{B}$, respectively. In the context of partial identification, we often observe the marginal distributions of two variables but not their joint distribution. We define $\Pi(\mu, \nu)$ as the set of all \textit{couplings} between $\mu$ and $\nu$. Formally, a coupling $\pi \in \Pi(\mu, \nu)$ is a joint probability measure on the product space $\mathcal{A} \times \mathcal{B}$ such that its marginals are exactly $\mu$ and $\nu$. 

Given a cost function $c: \mathcal{A} \times \mathcal{B} \to \mathbb{R} \cup \{+\infty\}$ that represents the cost of associating state $a$ with state $b$, the standard Kantorovich optimal transport problem seeks the coupling that minimizes the expected cost:
\begin{align}
    \inf_{\pi \in \Pi(\mu, \nu)} \int_{\mathcal{A} \times \mathcal{B}} c(a, b) \, \mathrm{d}\pi(a, b). \label{eq:kantorovich}
\end{align} 

Generally, solving (\ref{eq:kantorovich}) in multidimensional spaces requires computationally intensive linear programming. However, a celebrated result in OT literature demonstrates that when the spaces are one-dimensional and the cost function satisfies specific structural conditions--such as supermodularity--the optimal coupling admits a sharp, closed-form solution. 

We formalize this one-dimensional closed-form solution in the following theorem, which will be instrumental in deriving the analytic bounds for the PRTE.

\begin{theorem}[1D Optimal Transport with Product Cost, e.g., \citep{villani2021topics, galichon2016optimal}] \label{thm:1d_ot}
    Let $\mu$ and $\nu$ be probability measures on $\mathbb{R}$ with quantile functions $Q_\mu$ and $Q_\nu$. Suppose the cost function is a simple product $c(a,b) = a \cdot b$. Then, the minimum and maximum of the expected cost over all valid couplings are completely characterized by the countermonotonic and comonotonic couplings, respectively:
    \begin{align*}
        \min_{\pi \in \Pi(\mu, \nu)} \int_{\mathbb{R}^2} a \cdot b \, \mathrm{d}\pi(a, b) &= \int_0^1 Q_\mu(t) Q_\nu(1-t) \, \mathrm{d}t, \\
        \max_{\pi \in \Pi(\mu, \nu)} \int_{\mathbb{R}^2} a \cdot b \, \mathrm{d}\pi(a, b) &= \int_0^1 Q_\mu(t) Q_\nu(t) \, \mathrm{d}t.
    \end{align*}
    The optimal coupling is attained by the countermonotonic coupling $(Q_\mu(\xi), Q_\nu(1-\xi))$ for the minimum, and the comonotonic coupling $(Q_\mu(\xi), Q_\nu(\xi))$ for the maximum, where $\xi \sim \mathrm{Unif}(0,1)$.
\end{theorem}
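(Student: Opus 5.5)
The plan is to work on the common probability space $\xi \sim \mathrm{Unif}(0,1)$ and treat the maximum first; the minimum then follows by symmetry. Since the paper assumes compact domains, $\mu$ and $\nu$ have bounded supports, so every integral below is finite and the quantile functions are bounded and nondecreasing.

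\textbf{Feasibility.} By the probability integral transform, $Q_\mu(\xi)\sim\mu$, and both $Q_\nu(\xi)$ and $Q_\nu(1-\xi)$ have law $\nu$. The comonotonic and countermonotonic pairs are therefore couplings in $\Pi(\mu,\nu)$. Their expected costs are exactly the stated integrals, so these values are attained.

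\textbf{Optimality of the comonotone coupling.} For any $\pi \in \Pi(\mu,\nu)$ with $(A,B)\sim\pi$, I would use Hoeffding's covariance identity:
\[
\mathbb{E}[AB]-\mathbb{E}[A]\mathbb{E}[B]=\int_{\mathbb{R}^2}\bigl(H(a,b)-F(a)G(b)\bigr)\,\mathrm{d}a\,\mathrm{d}b .
\]
Here $H$ is the joint CDF and $F,G$ are the marginal CDFs. Since $\mathbb{E}[A]$ and $\mathbb{E}[B]$ are fixed by the marginals, maximizing $\mathbb{E}[AB]$ amounts to maximizing $H$ pointwise. The Fr\'echet--Hoeffding bounds give
\[
\max(F(a)+G(b)-1,0)\leqslant H(a,b)\leqslant \min(F(a),G(b)).
\]
The upper bound is the joint CDF of $(Q_\mu(\xi),Q_\nu(\xi))$, because $\{Q_\mu(\xi)\leqslant a\}=\{\xi\leqslant F(a)\}$ up to null sets. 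Likewise, the lower bound is the joint CDF of $(Q_\mu(\xi),Q_\nu(1-\xi))$. This yields both the maximum and the minimum simultaneously.

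\textbf{Alternative route.} If I wanted to avoid Hoeffding's identity, I would approximate by discrete measures. For empirical measures with $n$ equal atoms, the rearrangement inequality $\sum a_i b_{\sigma(i)}\leqslant\sum a_{(i)}b_{(i)}$ combined with Birkhoff's theorem shows that sorted matching is optimal. One would then pass to the limit using weak convergence and bounded continuous cost on compact support, together with continuity of the quantile integrals.

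\textbf{Main obstacle.} The step needing care is the quantile bookkeeping at atoms and flat pieces of $F$, namely the identity $\{Q_\mu(\xi)\leqslant a\}=\{\xi\leqslant F(a)\}$. It follows from the Galois relation $Q_\mu(t)\leqslant a \iff t\leqslant F(a)$ for the left-continuous quantile. Boundedness of the supports justifies Fubini in Hoeffding's identity.
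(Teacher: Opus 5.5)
Your argument is correct. Note that the paper never proves this statement: it imports it by citation from the optimal transport literature, so your Hoeffding--Fr\'echet argument supplies a self-contained proof where the paper has none.

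The route usually taken in the optimal transport literature is different. One takes an optimal plan, which exists by weak compactness of $\Pi(\mu,\nu)$ and continuity of the cost. One then uses $c$-cyclical monotonicity of its support for $c(a,b)=-ab$; for two support points this reads $(a_2-a_1)(b_2-b_1)\geqslant 0$, so the support is a monotone set. Finally one checks that a monotone support forces the joint CDF to equal $\min(F,G)$.

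Your route needs no existence theory and treats the minimum and maximum symmetrically through the two Fr\'echet bounds. It also gives uniqueness essentially for free. An optimal $\pi$ must satisfy $\iint(\min(F,G)-H)\,\mathrm{d}a\,\mathrm{d}b=0$, with a nonnegative integrand that is continuous from the upper right, and this forces $H=\min(F,G)$ everywhere.

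What your route gives up is portability. It rests on writing $ab$, modulo marginal terms, as a mixture of products of half-line indicators, and on the existence of a pointwise-maximal joint CDF. Both are special to scalar marginals; this is the Lorentz / Cambanis--Simons--Stout mechanism. Cyclical monotonicity, by contrast, is the same tool that drives Brenier's theorem in higher dimensions.

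Your restriction to bounded supports is also necessary, not just convenient. As literally stated for arbitrary probability measures on $\mathbb{R}$, the integrals need not exist, so some integrability must be assumed. Finite second moments suffice, and Hoeffding's identity still holds there. The paper's application, with compact $\mathcal{Y}$ and bounded $\omega$, is covered by your setting.
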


\cref{thm:1d_ot} provides a closed-form characterization of the optimal value of 1D OT problems. We will leverage this characterization to derive tight bounds for PRTEs later.

\subsection{Notation}\label{subsec:notation}

We use $\mathbb{P}$ and $\mathbb{E}$ for probability and expectation, and $\indep$ for conditional independence. $\indicator(\cdot)$ denotes the indicator function. We write $\mathrm{d}$ for the differential or Lebesgue measure. For a probability measure $\mu$ on $\mathbb{R}$, $Q_\mu(t)$ denotes its quantile function. $\Pi(\mu, \nu)$ denotes the set of all couplings of two probability measures $\mu$ and $\nu$. Calligraphic letters $\mathcal{X}, \mathcal{Y}, \mathcal{Z}, \mathcal{W}$ denote domains of the corresponding variables. We write $O_P(\cdot)$ and $o_P(\cdot)$ for stochastic order notation: $X_n = O_P(a_n)$ means $X_n/a_n$ is bounded in probability, and $X_n = o_P(a_n)$ means $X_n/a_n \xrightarrow{P} 0$.

%% file: sections/pid.tex
\section{Partial Identification via Optimal Transport}  \label{sec:pid}

\subsection{CCOT Formulation of Partial Identification}\label{subsec:ot_formulation}
In this section, we demonstrate how to derive closed-form, tight bounds for the PRTE. We begin by formally mapping the partial identification problem into an optimal transport framework.

The core intuition is to search over all possible joint distributions of the latent and observable variables $(Y(0), Y(1), U, X, Z, W)$ that are compatible with both the observed data and the structural conditions in \cref{asp:structure}. Because the potential outcomes are conditionally independent of the instrument given $X$, the identifying power of our model is entirely captured by the joint distributions of $(Y(w), U, X)$ for $w \in \{0, 1\}$.

Let $\pi_w$ denote the joint probability measure of $(Y(w), U, X)$ and $ \pi_w(\cdot\mid x,u) $ be the conditional distribution of $Y(w)$ given $x,u$. Recall that our target parameter is a linear functional of these measures:
$$\theta_\omega(\pi_0, \pi_1) = \mathbb{E}_{(X,U,Y(1))\sim\pi_1} [Y(1) \omega(X,U)] - \mathbb{E}_{(X,U,Y(0))\sim\pi_0} [Y(0) \omega(X,U)]. $$
To achieve sharp bounds, we must constrain $\pi_w$ using all available observational information. Let $\mathbb{P}_{obs}$ denote the true observed joint distribution of $(Y, W, Z, X)$. Under the threshold-crossing model in \cref{asp:structure}, while $U$ is generally unobserved, its conditional distribution is explicitly tied to the treatment status and the propensity score. Specifically, conditional on $X$ and $Z$, the event $W=1$ perfectly corresponds to $U \leqslant p(Z,X)$, and $W=0$ corresponds to $U > p(Z,X)$.

Consequently, the joint distributions $\pi_0$ and $\pi_1$ fully parameterize the observational distribution of $(X,Z,W,Y)$. To see this, we can link the unknown structural distributions directly to the observed conditional distribution of the data by exploiting the consistency equation $Y = W Y(1) + (1-W) Y(0)$ and the selection mechanism $W = \indicator(U \leqslant p(Z,X))$. For any measurable set $A \subseteq \mathcal{Y}$ and $W=1$, we have:
\begin{align*}
    \mathbb{P}_{obs}(Y \in A, W=1 \mid Z=z, X=x) &= \mathbb{P}(Y(1) \in A, U \leqslant p(z,x) \mid Z=z, X=x).
\end{align*}
By the conditional independence assumption $Z \indep (Y(1), U) \mid X$, we can omit the $Z$ in the second conditioning:
\begin{align*}
    \mathbb{P}(Y(1) \in A, U \leqslant p(z,x) \mid Z=z, X=x) &= \mathbb{P}(Y(1) \in A, U \leqslant p(z,x) \mid X=x).
\end{align*}
Since instruments sharing the same propensity score $p(z,x)$ yield the same right-hand side value, we can replace the conditioning on $Z$ with conditioning on $p(Z)$. Next, we disintegrate the joint measure $\pi_1$ into the conditional distribution of the potential outcome given the covariates and the latent variable. Because $U \mid X \sim \text{Unif}(0,1)$, integrating over the region $U \leqslant p(z,x)$ yields:
\begin{align*}
    \mathbb{P}_{obs}(Y \in A, W=1 \mid p(Z,X)=p, X=x) &= \int_0^{p} \pi_1(A \mid u, x) \mathrm{d}u.
\end{align*}
Following the exact same sequence of steps for $W=0$, where the observed outcome is $Y(0)$ and the selection mechanism dictates $U > p(z,x)$, we obtain the symmetric result:
\begin{align*}
    \mathbb{P}_{obs}(Y \in A, W=0 \mid p(Z,X)=p, X=x) &= \int_{p}^1 \pi_0(A \mid u, x) \mathrm{d}u.
\end{align*}
Furthermore, by construction, the marginal distribution of $(U, X)$ under $\pi_w$ must equal the known population distribution $\mathbb{P}_{U,X}$.

This allows us to formulate the partial identification of the PRTE as the following infinite dimensional linear programming problem:
\begin{equation}\label{eq:ot_formulation}
    \begin{split}
        \max_{\pi_0, \pi_1} \ / \min_{\pi_0, \pi_1}\quad & \theta_\omega(\pi_0, \pi_1)\\
        \text{subject to}  \quad &  \int_0^{p(z,x)} \pi_1(\mathrm{d}y \mid u, x) \mathrm{d}u = \mathbb{P}_{obs}(\mathrm{d}y, W=1 \mid p(Z,X)=p(z,x),X= x) ,\\
        & \int_{p(z,x)}^1 \pi_0(\mathrm{d}y \mid u, x) \mathrm{d}u = \mathbb{P}_{obs}(\mathrm{d}y, W=0 \mid  p(Z,X)=p(z,x), X=x), \forall z\in\mathcal{Z}, x\in\mathcal{X},\\
        & \pi_w(\mathrm{d}x, \mathrm{d}u) = \mathrm{d}u \ \mathbb{P}_{obs}(\mathrm{d}x) \quad \text{for } w \in \{0,1\},\\
        & \pi_w \text{ is supported on } \mathcal{Y} \times [0,1] \times \mathcal{X} \quad \text{for } w \in \{0,1\}.
    \end{split}
\end{equation}
We call the optimization in \eqref{eq:ot_formulation} the Constrained Conditional Optimal Transport (CCOT) problem. The feasible set is constrained by the observational compatibility conditions in \eqref{eq:ot_formulation}; the problem is conditional in that, for fixed covariates $x$ and latent resistance $u$, it seeks a conditional measure $\pi_w(\cdot \mid u, x)$ on $\mathcal{Y}$; and it is a transport problem because the objective is linear in $\pi_w(\cdot \mid u, x)$ with effective cost $c(y, u, x) = y \cdot \omega(x, u)$, so that mass is transported from the latent space $U$ to the outcome space $Y(w)$ conditional on $X$.

To formally guarantee that the CCOT formulation in (\ref{eq:ot_formulation}) yields sharp bounds for the PRTE, we must show that any pair of measures $(\pi_0, \pi_1)$ satisfying the CCOT constraints corresponds to a valid data-generating process that satisfies our structural assumptions and perfectly reproduces the observed data. Let $\Gamma(\mathbb{P}_{obs})$ be the set of all measure pairs $(\pi_0, \pi_1)$ satisfying the constraints in (\ref{eq:ot_formulation}). The following proposition characterizes the joint distribution of $(Y(0), Y(1), U, X)$, from which observational
equivalence of the marginal pair $(\pi_0, \pi_1)$ follows immediately.

\begin{proposition}[Sharp Identified Set of the Joint Latent Law] \label{prop:sharpness}
    Suppose the observed distribution $\mathbb{P}_{obs}$ is generated by a true structural model satisfying \cref{asp:structure}. Let $\tilde{\pi}$ be any probability measure on $\mathcal{Y} \times \mathcal{Y} \times [0,1] \times \mathcal{X}$ whose $(Y(w), U, X)$-marginals $\tilde{\pi}_w$ satisfy $(\tilde{\pi}_0, \tilde{\pi}_1) \in \Gamma(\mathbb{P}_{obs})$. Then there exists a joint distribution of the latent and observable variables $(Y(0), Y(1), U, X, Z, W)$ such that:
    \begin{enumerate}
        \item[(i)] the joint marginal distribution of $(Y(0), Y(1), U, X)$ is exactly $\tilde{\pi}$;
        \item[(ii)] \cref{asp:structure} holds for $(Y(0), Y(1), U, X, Z, W)$;
        \item[(iii)] the induced distribution of $(Z, X, W, Y)$ exactly matches $\mathbb{P}_{obs}$.
    \end{enumerate}
    Conversely, every joint marginal $\tilde{\pi}$ arising from a distribution of $(Y(0), Y(1), U, X, Z, W)$ satisfying \textup{(ii)}--\textup{(iii)} has $(Y(w), U, X)$-marginals in $\Gamma(\mathbb{P}_{obs})$ for $w \in \{0, 1\}$.
\end{proposition}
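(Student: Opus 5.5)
The proof is constructive. Take $\tilde{\pi}$ with $(\tilde{\pi}_0,\tilde{\pi}_1)\in\Gamma(\mathbb{P}_{obs})$ and build a joint law for $(Y(0),Y(1),U,X,Z,W)$ in three steps.

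\emph{Step 1: the latent block.} Draw $(Y(0),Y(1),U,X)\sim\tilde{\pi}$. The third constraint in \eqref{eq:ot_formulation} says the $(U,X)$-marginal of each $\tilde{\pi}_w$ is $\mathrm{d}u\,\mathbb{P}_{obs}(\mathrm{d}x)$. Hence $X\sim\mathbb{P}_{obs,X}$ and $U\mid X\sim\mathrm{Unif}(0,1)$.

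\emph{Step 2: the instrument.} Conditionally on $X=x$, draw $Z$ independently of $(Y(0),Y(1),U)$ from $\mathbb{P}_{obs}(Z\in\cdot\mid X=x)$. This is the product (disintegration) construction on compact Polish spaces, so measurability is routine. The $(Z,X)$ law then equals that under $\mathbb{P}_{obs}$. Conditional exogeneity holds by construction, and common support is inherited from $\mathbb{P}_{obs}$, which is generated by a model satisfying \cref{asp:structure}.

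\emph{Step 3: treatment and outcome.} Set $W=\indicator(U\leqslant p(Z,X))$, where $p$ is the observed propensity score, and set $Y=WY(1)+(1-W)Y(0)$. Consistency then holds. Part (i) holds by Step 1.

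Two things remain for (ii) and (iii), and I expect the threshold-crossing requirement to be the delicate point.

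\emph{Threshold crossing.} We must check $p(Z,X)=\mathbb{P}(W=1\mid Z,X)$ in the constructed model. Independence and uniformity give
\[
\mathbb{P}(U\leqslant p(z,x)\mid Z=z,X=x)=p(z,x),
\]
so the model's own propensity coincides with the observed one.

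\emph{Matching the observed law.} For measurable $A$, independence of $Z$ and $(Y(1),U)$ given $X$, together with the disintegration of $\tilde{\pi}_1$, gives
\[
\mathbb{P}(Y\in A,W=1\mid Z=z,X=x)=\int_0^{p(z,x)}\tilde{\pi}_1(A\mid u,x)\,\mathrm{d}u .
\]
By the first constraint this equals $\mathbb{P}_{obs}(Y\in A,W=1\mid p(Z,X)=p(z,x),X=x)$. This in turn equals $\mathbb{P}_{obs}(Y\in A,W=1\mid Z=z,X=x)$, by the same derivation as in the text applied to the true structural model. The $W=0$ case is symmetric, using the second constraint. Since the $(Z,X)$ marginals agree, the full law of $(Z,X,W,Y)$ equals $\mathbb{P}_{obs}$, which proves (iii).

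The main obstacle is that the constraints are stated conditionally on $p(Z,X)$ and for all $(z,x)$. I will interpret them as holding for $\mathbb{P}_{obs}$-a.e.\ $(z,x)$ and argue via regular conditional probabilities. The reduction from conditioning on $Z$ to conditioning on $p(Z,X)$ must be justified using the true generating model, not the constructed one. In the true model the conditional law given $Z=z$ depends on $z$ only through $p(z,x)$, by exactly the derivation preceding \eqref{eq:ot_formulation}.

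\emph{Converse.} Given any law satisfying (ii) and (iii), the displayed derivation before \eqref{eq:ot_formulation} applies verbatim: consistency, exogeneity and threshold crossing yield the two integral constraints. Uniformity of $U\mid X$ and matching of the $X$-marginal yield the $(U,X)$ constraint, and the support conditions hold by \cref{asp:structure} and \cref{asp:bounded_y}. Therefore the marginals of $\tilde{\pi}$ lie in $\Gamma(\mathbb{P}_{obs})$.
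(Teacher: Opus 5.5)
Your proposal is correct and takes essentially the same route as the paper. The paper builds the same law $\tilde{\pi}(\mathrm{d}y_0,\mathrm{d}y_1\mid u,x)\,\mathrm{d}u\,\mathbb{P}_{obs}(\mathrm{d}z\mid x)\,\mathbb{P}_{obs}(\mathrm{d}x)\,\delta_{\indicator(u\leqslant p(z,x))}$, checks (i)--(iii) by marginalization and the constraints, and proves the converse via the derivation preceding \eqref{eq:ot_formulation}. Your explicit checks that the constructed propensity equals $p(z,x)$, that common support is inherited, and that conditioning on $p(Z,X)=p(z,x)$ can be replaced by conditioning on $Z=z$ through the true model are steps the paper leaves implicit (it simply restates the constraint with $Z=z$), so they tighten the same argument rather than change it.
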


The content of \cref{prop:sharpness} is twofold. First, the CCOT constraints act only on the two marginals $\tilde{\pi}_0$ and $\tilde{\pi}_1$: \emph{any} coupling of these marginals---including every joint law of $(Y(0), Y(1))$ given $(U, X)$---is observationally equivalent. Second, the converse makes the characterization sharp: no joint law outside this set can be generated by a structural model matching $\mathbb{P}_{obs}$. In particular, applying \cref{prop:sharpness} to the independent coupling $\tilde{\pi} := \tilde{\pi}_0 \otimes_{(U, X)} \tilde{\pi}_1$ shows that every marginal pair $(\tilde{\pi}_0, \tilde{\pi}_1) \in \Gamma(\mathbb{P}_{obs})$ is observationally equivalent to the true data-generating process. Consequently, the identified set for the PRTE target is $\{\theta_\omega(\pi_0, \pi_1) : (\pi_0, \pi_1) \in \Gamma(\mathbb{P}_{obs})\}$, and the maximum and minimum of $\theta_\omega(\pi_0, \pi_1)$ over $\Gamma(\mathbb{P}_{obs})$ are the sharp upper and lower bounds on the PRTE.

\begin{remark}
The fact that the moment relaxation approach of \citet{magne2018using} can yield arbitrarily loose bounds compared to our CCOT formulation is well-recognized; see also \citet{han_yang2024computational}. A detailed comparison example is given in \cref{exp:loose_bd} in the appendix.
\end{remark}

\begin{remark}[Beyond PRTE: CCOT for general joint functionals]\label{rem:general_functional}
The CCOT perspective is not specific to the PRTE. At a conceptual level, one could consider a bounded measurable cost $f:\mathcal{Y}^2 \to \mathbb{R}$ and the joint target $\theta_f := \mathbb{E}[f(Y(1),Y(0))]$ by replacing the PRTE objective in (\ref{eq:ot_formulation}) with $\mathbb{E}_{\pi}[f(Y(1),Y(0))]$ and enlarging the decision variable from the pair $(\pi_0,\pi_1)$ to a joint conditional kernel $\pi(\diff y_1,\diff y_0 \mid u,x)$ whose marginals satisfy the same observational constraints. A sharp characterization should again follow from the lifting logic behind \cref{prop:sharpness}. What is special about the PRTE is the separable product cost $y \cdot \omega(x,u)$: it allows the CCOT problem to collapse into one-dimensional OT problems for each potential outcome separately. For a general joint functional, that separability is lost, and the resulting interval-specific problems are instead OT problems over the coupling between the identified marginals of $Y(1)$ and $Y(0)$, with no comparable closed-form reduction in general. Since the present paper is about the PRTE case, we do not pursue this extension here.
\end{remark}

Next, we show how to obtain the closed-form solution of the CCOT problem (\ref{eq:ot_formulation}).
Notice that in (\ref{eq:ot_formulation}), the objective is the difference of two linear functionals of $\pi_0$ and $\pi_1$, and the constraints on $\pi_0$ and $\pi_1$ are separate. Therefore, we only need to consider one side of the problem, as the other side follows identically. In what follows, we only consider the $W=1$ side minimization problem, i.e., 
\begin{equation}\label{eq:ot_formulation_1}
    \begin{split}
    \min_{ \pi_1}\quad &  \mathbb{E}_{\pi_1}[Y(1)\omega(X,U)]\\
        \text{subject to}  \quad &  \int_0^{p(z,x)} \pi_1(\mathrm{d}y \mid u, x) \mathrm{d}u = \mathbb{P}_{obs}(\mathrm{d}y, W=1 \mid  p(Z,X)= p(z,x), X = x) ,\  \forall z\in\mathcal{Z}, \forall x\in\mathcal{X}\\ 
        & \pi_1(\mathrm{d}x, \mathrm{d}u) = \mathrm{d}u \ \mathbb{P}_{obs}(\mathrm{d}x),\\
        &\pi_1 \text{ is supported on } \mathcal{Y} \times [0,1] \times \mathcal{X}.
    \end{split}
\end{equation} 
The solution for the maximization problem can be derived similarly. Writing $\theta_\omega = \theta_{\omega,1} - \theta_{\omega,0}$, let $[\underline{\theta}_{\omega,w}, \overline{\theta}_{\omega,w}]$ denote the sharp identified interval for the $w$-th component. Because the feasible sets for $\pi_0$ and $\pi_1$ are Cartesian products, the sharp identified interval for the full target is
\[
[\underline{\theta}_\omega, \overline{\theta}_\omega]
=
[\underline{\theta}_{\omega,1} - \overline{\theta}_{\omega,0},\,
\overline{\theta}_{\omega,1} - \underline{\theta}_{\omega,0}].
\]
Accordingly, we state the main derivations for $\theta_{\omega,1}$; the formulas for $\theta_{\omega,0}$ are obtained by the same arguments with the untreated analogues. 
\subsection{Closed-Form Bounds without Covariates}\label{subsec:cf_no_cov}

To better illustrate the idea, we first derive the results without covariates and generalize them in the next subsection. Therefore, in this subsection, $\omega$ and $p$ do not depend on $x$. 

\subsubsection{Discrete Instrument Setting}\label{subsubsec:discrete_iv}
We first consider the setting where there are no covariates and the instrument $Z$ takes values in a finite discrete set $\mathcal{Z}$. Let the unique values of the propensity score be ordered as:
\[
    \mathrm{Range}(p(z)) := \{p(z) : z \in \mathcal{Z}\} = \{p_1, \dots, p_K\},
\]
with the conventions $p_0 := 0$ and $p_{K+1} := 1$, such that $0 = p_0 \leqslant p_1 < p_2 < \dots < p_K \leqslant p_{K+1} = 1$. We allow different instruments to be mapped to the same propensity value. This discretizes the unit interval of the latent variable $U$ into $K+1$ disjoint sub-intervals, $I_i = (p_i, p_{i+1}]$ for $i=0, \dots, K$.

Because the objective function (\ref{eq:ot_formulation_1}) is an integral over $U$, we can additively decompose it across these sub-intervals:
\begin{align*}
    \mathbb{E}_{\pi_1}[Y(1)\omega(U)] = \sum_{i=0}^K (p_{i+1} - p_i) \mathbb{E}_{\pi_1}[Y(1)\omega(U) \mid U \in I_i].
\end{align*}

The fundamental advantage of the discrete setting is that the global observational constraints difference out, perfectly isolating the conditional distribution of $Y(1)$ within each interval $I_i$. Recall that from (\ref{eq:ot_formulation_1}),
\begin{align}\label{eq:ot_y_constraint}
    \int_0^{p_i} \pi_{1}(\mathrm{d}y \mid u) \mathrm{d}u = \mathbb{P}_{obs}(\mathrm{d}y, W=1 \mid p(Z)=p_i), \quad i=1,\dots,K.
\end{align}
The key insight of this work is that (\ref{eq:ot_y_constraint}) can be decomposed into separate marginal constraints, which enables us to convert (\ref{eq:ot_formulation_1}) into a series of seperate standard one-dimensional OT problems. For $i = 1, \dots, K-1$, subtracting the $i$-th constraint from the $(i+1)$-th in (\ref{eq:ot_y_constraint}) isolates the integral over $(p_i, p_{i+1}]$; for $i=0$, the first constraint directly governs $(0, p_1]$. Since cumulative sums recover the original constraints, this invertible transformation shows that the system (\ref{eq:ot_y_constraint}) is equivalent to
\begin{align*}
   \frac{1}{p_{i+1}-p_i} \int^{p_{i+1}}_{p_{i}} \pi_{1}(\mathrm{d}y \mid u) \mathrm{d}u = \mu_{1,i}(\mathrm{d}y) , \quad i=0,\dots,K-1,
\end{align*}
where $\mu_{1,i}$ is given by
\begin{align}\label{eq:y_identified_measure}
    \mu_{1, i}(\mathrm{d}y) = 
    \begin{cases}
         \frac{\mathbb{P}_{obs}(\mathrm{d}y, W=1 \mid p(Z) = p_1 )}{p_1}, & \text{if } i=0, \\[10pt]
         \frac{\mathbb{P}_{obs}(\mathrm{d}y, W=1 \mid p(Z) = p_{i+1} ) - \mathbb{P}_{obs}(\mathrm{d}y, W=1 \mid p(Z) = p_{i} )}{p_{i+1} - p_i}, & \text{if } 0 < i < K.
    \end{cases}
\end{align}

Note that for the final interval $I_K = (p_K, 1]$, the potential outcome $Y(1)$ is never observed because no individual in the population has a propensity score high enough to induce treatment when $U > p_K$. By (\ref{eq:ot_y_constraint}), since $\pi_1$ is nonnegative, $\mathbb{P}_{obs}(\mathrm{d}y, W=1 \mid p(Z)=p)$ is non-decreasing in $p$. Note that $\mu_{1,i}(\mathcal{Y}) = 1$ since $\mathbb{P}_{obs}(W=1 \mid p(Z)=p) = p$. Therefore, the  $\mu_{1,i}$  are valid distributions. \citet{mourifie2017testing} provide a simple method to test the monotonicity of $\mathbb{P}_{obs}(\mathrm{d}y, W=1 \mid p(Z)=p)$.

For $0<i<K$, notice that 
\begin{align*}
   \mu_{1,i}(A) &= \frac{\mathbb{P}_{obs}(Y\in A, W=1 \mid p(Z) = p_{i+1} ) - \mathbb{P}_{obs}(Y\in A, W=1 \mid p(Z) = p_{i} )}{p_{i+1} - p_i}\\
   & = \frac{\mathbb{P}(Y(1)\in A, U\in(0,p_{i+1}) \mid p(Z) = p_{i+1} ) - \mathbb{P}(Y(1)\in A, U\in(0,p_{i})\mid p(Z) = p_{i} )}{p_{i+1} - p_i}\\
   & = \mathbb{P}(Y(1)\in A \mid U\in(p_i,p_{i+1}) ),
\end{align*}
for any measurable set $A$, where we use \cref{asp:structure}.(3) and consistency in the second equality and \cref{asp:structure}.(2) in the last equality.  Therefore, $\mu_{1,i}$ can be interpreted as the distribution of the potential outcome under treatment for the $i$-th complier group. The $i$-th complier group consists of units that refuse treatment when $p(Z) \leqslant p_i$ and comply only when $p(Z) > p_i$. In particular, when the instrument is binary, this recovers the identification results of \citet{imbens1997estimating}

Substituting the objective decomposition and the interval-wise constraints derived above, the original problem (\ref{eq:ot_formulation_1}) can be equivalently written as
\begin{equation}\label{eq:ot_decomposed}
    \min_{\pi_1} \quad \sum_{i=0}^{K} (p_{i+1} - p_i) \int_{\mathcal{Y} \times I_i} y \cdot \omega(u) \, \pi_{1}(\mathrm{d}y \mid u) \frac{\mathrm{d}u}{p_{i+1} - p_i}
\end{equation}
subject to the $K$ independent marginal constraints
\begin{equation}\label{eq:ot_decomposed_constraint}
    \begin{split}
        &\frac{1}{p_{i+1} - p_i}\int_{p_i}^{p_{i+1}} \pi_{1}(\mathrm{d}y \mid u) \mathrm{d}u = \mu_{1,i}(\mathrm{d}y),\quad i = 0, \dots, K-1, \\
    &\pi_1(\mathrm{d}u)=\mathrm{d}u,
    \end{split}
\end{equation}
with the support constraint $\text{supp}(\pi_1(\cdot \mid u)) \subseteq \mathcal{Y}$ for all $u \in [0,1]$, and no further distributional constraint on $\pi_1(\cdot \mid u)$ for $u \in I_K = (p_K, 1]$. Crucially, each constraint in (\ref{eq:ot_decomposed_constraint}) involves $\pi_1$ only on the single interval $I_i$, and the objective (\ref{eq:ot_decomposed}) is additively separable across these intervals. Therefore, the global minimization problem decomposes into $K+1$ independent subproblems, one for each interval.

Specifically, for each interval $i = 0, \dots, K-1$, the data perfectly fixes the marginal distribution of $Y(1)$ as $\mu_{1, i}$ and the marginal distribution of $U$ as uniform on $(p_i, p_{i+1}]$. Let $\nu_i$ denote the uniform probability measure $\text{Unif}(p_i, p_{i+1})$. The set of valid joint distributions for $(Y(1), U)$ conditional on $U \in I_i$ is exactly the set of all couplings $\Pi(\mu_{1, i}, \nu_i)$. Therefore, for each of these $K$ intervals, we solve a standard 1D optimal transport problem:
\begin{equation} \label{eq:ot_subproblem}
    \min_{\gamma_i \in \Pi(\mu_{1, i}, \nu_i)} \int_{\mathcal{Y} \times I_i} y \cdot \omega(u) \, \mathrm{d}\gamma_i(y, u).
\end{equation}

Conversely, for the final interval $I_K = (p_K, 1]$, the propensity score never reaches a value high enough to induce treatment, meaning the data provides absolutely no constraints on the distribution of $Y(1)$ for this subpopulation. Without distributional constraints, the optimal transport problem degenerates into a trivial pointwise minimization. By \cref{asp:bounded_y}, for each $u \in I_K$, the optimal solution assigns all probability mass to $y_{\min}$ if $\omega(u) \geqslant 0$, or to $y_{\max}$ if $\omega(u) < 0$:
\begin{equation} \label{eq:ot_trivial}
    \min_{\pi_{1}} \int_{p_K}^1 \int_{\mathcal{Y}} y \, \omega(u) \, \pi_{1}(\mathrm{d}y \mid u) \mathrm{d}u \ = \ \int_{p_K}^1 \Big( y_{\min} \max\{0, \omega(u)\} + y_{\max} \min\{0, \omega(u)\} \Big) \mathrm{d}u.
\end{equation}
Summing the optimal values of the $K$ constrained subproblems in (\ref{eq:ot_subproblem}) scaled by their respective interval lengths $(p_{i+1} - p_i)$ and adding the trivial unconstrained bound from (\ref{eq:ot_trivial}), we obtain the global minimum. Applying \cref{thm:1d_ot} to each subproblem in (\ref{eq:ot_subproblem}) yields the countermonotonic coupling, which leads directly to the following closed-form sharp bounds.

\begin{theorem}[Closed-Form Sharp Bounds]\label{thm:discrete_lower_bound}
    Suppose \cref{asp:structure} and \cref{asp:bounded_y} hold, $\mathcal{X} = \emptyset$, and $\mathcal{Z}$ is finite. For $i=0, \dots, K-1$, let $Q_{Y, i}(t)$ be the quantile function of the identified conditional distribution $\mu_{1, i}$ defined in (\ref{eq:y_identified_measure}). Let $Q_{\omega, i}(t)$ be the quantile function of the random variable $\omega(U)$ where $U \sim \text{Unif}(p_i, p_{i+1})$. The sharp lower bound for the $W=1$ component of the PRTE is given by:
    \begin{equation}\label{eq:discrete_lower_bound}
        \underline{\theta}_{\omega, 1} = \sum_{i=0}^{K-1} (p_{i+1} - p_i) \int_0^1 Q_{Y, i}(t) Q_{\omega, i}(1-t) \, \mathrm{d}t \ + \ \int_{p_K}^1 \Big( y_{\min} \max\{0, \omega(u)\} + y_{\max} \min\{0, \omega(u)\} \Big) \mathrm{d}u.
    \end{equation}
    Similarly, the sharp upper bound is given by replacing $Q_{\omega,k}(1-t)$ with $Q_{\omega,k}(t)$ in the OT term and swapping $y_{\min}$ and $y_{\max}$ in the trivial tail term.

    % \begin{equation}
    %     \overline{\theta}_{\omega, 1} = \sum_{i=0}^{K-1} (p_{i+1} - p_i) \int_0^1 Q_{Y, i}(t) Q_{\omega, i}(t) \, \mathrm{d}t \ + \ \int_{p_K}^1 \Big( y_{\max} \max\{0, \omega(u)\} + y_{\min} \min\{0, \omega(u)\} \Big) \mathrm{d}u.
    % \end{equation}
\end{theorem}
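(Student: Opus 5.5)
The plan is to make rigorous the reduction sketched just before the statement, in four steps: sharpness via \cref{prop:sharpness}, an exact decomposition of the feasible set, interval-wise optimization via \cref{thm:1d_ot}, and attainment by gluing.

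\textbf{Sharpness.} By \cref{prop:sharpness}, the identified set for $\theta_{\omega,1}$ is exactly $\{\mathbb{E}_{\pi_1}[Y(1)\omega(U)] : \pi_1 \text{ feasible for } (\ref{eq:ot_formulation_1})\}$. Hence it suffices to show that the infimum of (\ref{eq:ot_formulation_1}) equals the right-hand side of (\ref{eq:discrete_lower_bound}) and that this infimum is attained. The upper bound then follows by applying the same argument to $-\omega$. Doing so turns the countermonotonic pairing into the comonotonic one and swaps the roles of $y_{\min}$ and $y_{\max}$ in the tail term.

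\textbf{Decomposition of the feasible set.} First I would show that the constraint system (\ref{eq:ot_y_constraint}), together with $\pi_1(\mathrm{d}u)=\mathrm{d}u$, is equivalent to the interval constraints (\ref{eq:ot_decomposed_constraint}). Taking successive differences and then cumulative sums gives a bijective linear map between the two systems. Next I would disintegrate: for a kernel $\pi_1(\cdot\mid u)$, define $\gamma_i(\mathrm{d}y,\mathrm{d}u) := \pi_1(\mathrm{d}y\mid u)\,\mathrm{d}u/(p_{i+1}-p_i)$ on $\mathcal{Y}\times I_i$. Its $U$-marginal is $\nu_i$, and by (\ref{eq:ot_decomposed_constraint}) its $Y$-marginal is $\mu_{1,i}$, so $\gamma_i\in\Pi(\mu_{1,i},\nu_i)$. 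Conversely, given couplings $\gamma_i\in\Pi(\mu_{1,i},\nu_i)$ for $i<K$ and an arbitrary kernel on $I_K$ supported in $\mathcal{Y}$, I would disintegrate each $\gamma_i$ with respect to $\nu_i$ and paste the resulting kernels together. This produces a feasible $\pi_1$. The pasting is measurable because the $I_i$ form a finite partition of $(0,1]$. The boundary cases also need checking. If $p_1=0$, the interval $I_0$ is null and drops out. If $p_K=1$, the tail term vanishes. Finally, $\mu_{1,i}$ is a genuine probability measure, because $\mathbb{P}_{obs}$ is generated by a true model, so the monotonicity in $p$ holds and the total mass is $1$.

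\textbf{Optimization on each piece.} The objective is additive across the pieces, so the infimum splits into $\sum_{i<K}(p_{i+1}-p_i)\min_{\Pi(\mu_{1,i},\nu_i)}\int y\,\omega(u)\,\mathrm{d}\gamma_i$ plus the unconstrained tail problem. For $i<K$, I would push $\gamma_i$ forward under $(y,u)\mapsto(y,\omega(u))$. Call the law of $\omega(U)$ with $U\sim\nu_i$ by $\rho_i$. The image of $\Pi(\mu_{1,i},\nu_i)$ under this map is exactly $\Pi(\mu_{1,i},\rho_i)$. The inclusion into $\Pi(\mu_{1,i},\rho_i)$ is immediate. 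The reverse inclusion is the main obstacle, because $\omega$ need not be injective and a coupling $\eta\in\Pi(\mu_{1,i},\rho_i)$ must be lifted back to $(y,u)$-space. I would handle it by gluing: conditionally on $\omega(U)=v$, draw $Y$ from $\eta(\cdot\mid v)$ independently of $U$, which is the gluing lemma, valid on Polish spaces. Alternatively, for the countermonotonic optimizer only, I would construct it explicitly. Take $\xi\sim\mathrm{Unif}(0,1)$, set $V=Q_{\omega,i}(1-\xi)$ and $Y=Q_{Y,i}(\xi)$, and realize $U$ as a uniform draw from $\nu_i$ conditioned on $\omega(U)=V$. \cref{thm:1d_ot} then gives the value $\int_0^1 Q_{Y,i}(t)Q_{\omega,i}(1-t)\,\mathrm{d}t$. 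Boundedness of $\mathcal{Y}$ and integrability of $\omega$ guarantee that every integral is finite.

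\textbf{Tail and attainment.} For $I_K$, the only constraint is $\mathrm{supp}\,\pi_1(\cdot\mid u)\subseteq\mathcal{Y}\subseteq[y_{\min},y_{\max}]$. So $\int y\,\omega(u)\,\pi_1(\mathrm{d}y\mid u)\geq y_{\min}\omega(u)^+ - y_{\max}\omega(u)^-$ pointwise, with equality for a Dirac mass at $y_{\min}$ or $y_{\max}$. This requires these endpoints to lie in $\mathcal{Y}$; I would take that as implicit in \cref{asp:bounded_y}, or else read the bound as an infimum. Pasting the optimal pieces together gives a feasible $\pi_1$ that attains (\ref{eq:discrete_lower_bound}), and this proves sharpness.
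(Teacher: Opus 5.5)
Your proposal is correct and follows essentially the paper's route. The paper obtains this theorem as the no-covariate, all-degenerate-LIV case of \cref{thm:covariate_continuous_lower_bound}, whose proof likewise differences the constraints into interval-wise marginal constraints, solves each gap by the countermonotonic coupling of \cref{thm:1d_ot}, places Dirac masses at $y_{\min}$ or $y_{\max}$ on the tail, and obtains sharpness from \cref{prop:sharpness} by assembling the optimizers. Your pushforward-and-gluing step through the law of $\omega(U)$ is a useful addition: it justifies applying \cref{thm:1d_ot}, which is stated for the cost $a\cdot b$, to the cost $y\,\omega(u)$ when $\omega$ is not injective, a reduction the paper uses without comment.
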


We defer the formal proof of this theorem to \cref{thm:covariate_continuous_lower_bound}, since the present theorem is a special case ($\mathcal{X}=\emptyset$, every LIV interval degenerate). \cref{thm:discrete_lower_bound} provides a closed-form, analytically tractable expression for the sharp bounds, bypassing any need for linear programming and requiring only estimation of the conditional quantile function from the observed data.

As a special case, if we set $ \omega(u) = 1 $, \cref{thm:discrete_lower_bound} recovers the classical bound by \citet{manski1990nonparametric} for $\mathbb{E}[Y(1)]$, which is known to be tight \citep{heckman2001instrumental}:
\begin{align*}
    \underline{\theta}_{\omega, 1} &= \sum_{i=0}^{K-1} (p_{i+1} - p_i) \int_0^1 Q_{Y, i}(t) Q_{\omega, i}(1-t) \, \mathrm{d}t \ + \ y_{\min} \int_{p_K}^1 \omega(u) \, \mathrm{d}u\\
    & = \sum_{i=0}^{K-1} (p_{i+1} - p_i) \int_0^1 Q_{Y, i}(t) \, \mathrm{d}t \ + \ y_{\min} (1-p_K)\\
    & = p_K \mathbb{E}[Y \mid W = 1, p(Z) = p_K] \ + \ y_{\min} (1-p_K).
\end{align*}
\cref{fig:decomposition} illustrates the three identification regions that compose the bound in \cref{thm:discrete_lower_bound}; \cref{ex:discrete_bd} in the appendix works through a concrete binary-instrument example with a constant propensity score, deriving the explicit three-part decomposition.

\begin{figure}[!htbp]
\centering
\begin{tikzpicture}[scale=1.0, font=\small]
  %% --- Layout (K=3 propensity levels for illustration) ---
  \def\W{12.0}    % full width (U = 1)
  \def\SH{1.10}   % main strip height
  \def\pA{2.5}    % p_1
  \def\pB{6.0}    % p_2
  \def\pC{9.2}    % p_3
  \def\qS{7.5}    % q* strictly between p_2 and p_3 (intermediate step in omega)
  % omega panel: baseline at y=2.20, levels measured upward
  \def\omY{2.20}  % omega y-baseline (omega = 0 line)
  \def\omA{1.80}  % omega height on I_0
  \def\omB{1.10}  % omega height on I_1
  \def\omC{0.65}  % omega height on (p_2, q*] subset of I_2
  \def\omD{0.35}  % omega height on (q*, p_3] subset of I_2
  \def\omE{0.16}  % omega height on trivial region I_3 (non-zero!)

  %% --- Main identification strip ---
  % I_0, I_1: LIV (omega constant on interval -> integral identifiable)
  \fill[teal!12]   (0,0)    rectangle (\pA,\SH);
  \fill[teal!22]   (\pA,0)  rectangle (\pB,\SH);
  % I_2: OT (omega has interior step at q* -> need quantile pairing)
  \fill[blue!22]   (\pB,0)  rectangle (\pC,\SH);
  % I_3: trivial bound
  \fill[orange!15] (\pC,0)  rectangle (\W,\SH);
  \draw[thick]     (0,0)    rectangle (\W,\SH);
  \draw[thick, teal!55]   (\pA,0) -- (\pA,\SH);
  \draw[thick, teal!55]   (\pB,0) -- (\pB,\SH);
  \draw[thick, orange!60] (\pC,0) -- (\pC,\SH);

  % Region labels inside strip
  \node[teal!70!black,  font=\footnotesize\bfseries] at (\pA/2,         0.70) {LIV};
  \node[teal!70!black,  font=\scriptsize]            at (\pA/2,         0.28) {$I_0=(0,p_1]$};
  \node[teal!70!black,  font=\footnotesize\bfseries] at ({(\pA+\pB)/2}, 0.70) {LIV};
  \node[teal!70!black,  font=\scriptsize]            at ({(\pA+\pB)/2}, 0.28) {$I_1=(p_1,p_2]$};
  \node[blue!80!black,  font=\footnotesize\bfseries] at ({(\pB+\pC)/2}, 0.70) {OT Region};
  \node[blue!80!black,  font=\scriptsize]            at ({(\pB+\pC)/2}, 0.28) {$I_2=(p_2,p_3]$};
  \node[orange!80!black, font=\scriptsize\bfseries, align=center]
      at ({(\pC+\W)/2}, \SH/2) {Trivial\\[-1pt]Bound};

  %% --- Identified-measure markers at propensity levels ---
  % p_1
  \draw[blue!55, thick] (\pA,\SH) -- (\pA,{\SH+0.30});
  \filldraw[fill=white, draw=blue!65, thick] (\pA,{\SH+0.30}) circle (3pt);
  \node[above, font=\scriptsize, blue!70!black, align=center] at (\pA,{\SH+0.30})
      {$p_1$\\[-1pt]{\tiny identified}};
  % p_2
  \draw[blue!55, thick] (\pB,\SH) -- (\pB,{\SH+0.30});
  \filldraw[fill=white, draw=blue!65, thick] (\pB,{\SH+0.30}) circle (3pt);
  \node[above, font=\scriptsize, blue!70!black, align=center] at (\pB,{\SH+0.30})
      {$p_2$\\[-1pt]{\tiny identified}};
  % p_3
  \draw[blue!55, thick] (\pC,\SH) -- (\pC,{\SH+0.30});
  \filldraw[fill=white, draw=blue!65, thick] (\pC,{\SH+0.30}) circle (3pt);
  \node[above, font=\scriptsize, blue!70!black, align=center] at (\pC,{\SH+0.30})
      {$p_3$\\[-1pt]{\tiny identified}};

  %% --- U axis ---
  \draw[->,thick] (-0.3,-0.50) -- (\W+0.6,-0.50) node[right]{$U$};
  \draw (0,   -0.38) -- (0,   -0.60); \node[below,font=\scriptsize] at (0,   -0.60) {$0$};
  \draw (\pA, -0.38) -- (\pA, -0.60); \node[below,font=\scriptsize] at (\pA, -0.60) {$p_1$};
  \draw (\pB, -0.38) -- (\pB, -0.60); \node[below,font=\scriptsize] at (\pB, -0.60) {$p_2$};
  % q* tick (shorter, different style to distinguish from propensity levels)
  \draw[gray!65] (\qS,-0.38) -- (\qS,-0.55);
  \node[below,font=\scriptsize,gray!75] at (\qS,-0.55) {$q^*$};
  \draw (\pC, -0.38) -- (\pC, -0.60); \node[below,font=\scriptsize] at (\pC, -0.60) {$p_3$};
  \draw (\W,  -0.38) -- (\W,  -0.60); \node[below,font=\scriptsize] at (\W,  -0.60) {$1$};

  %% --- Dashed vertical guides through all panels ---
  \draw[gray!40,dashed,thin] (\pA,-0.60) -- (\pA,{\omY+\omA+0.25});
  \draw[gray!40,dashed,thin] (\pB,-0.60) -- (\pB,{\omY+\omA+0.25});
  % q* guide (dotted, lighter -- not a propensity level)
  \draw[gray!35,dotted,thin] (\qS,-0.55) -- (\qS,{\omY+\omA+0.25});
  \draw[gray!40,dashed,thin] (\pC,-0.60) -- (\pC,{\omY+\omA+0.25});

  %% --- Omega piecewise-constant panel ---
  \draw[->,thick] (-0.3,\omY) -- (-0.3,{\omY+\omA+0.55})
      node[above,font=\scriptsize]{$\omega(u)$};
  \draw[gray!30,thin] (0,\omY) -- (\W,\omY);       % zero baseline
  \node[left,font=\scriptsize,gray] at (-0.3,\omY) {$0$};

  % y-axis tick marks
  \draw[gray!45] (-0.42,{\omY+\omA}) -- (-0.18,{\omY+\omA});
  \node[left,font=\scriptsize,gray!60] at (-0.42,{\omY+\omA}) {$\omega_0$};
  \draw[gray!45] (-0.42,{\omY+\omB}) -- (-0.18,{\omY+\omB});
  \node[left,font=\scriptsize,gray!60] at (-0.42,{\omY+\omB}) {$\omega_1$};
  \draw[gray!35] (-0.42,{\omY+\omC}) -- (-0.18,{\omY+\omC});  % unlabeled
  \draw[gray!35] (-0.42,{\omY+\omD}) -- (-0.18,{\omY+\omD});  % unlabeled
  \draw[gray!45] (-0.42,{\omY+\omE}) -- (-0.18,{\omY+\omE});
  \node[left,font=\scriptsize,gray!60] at (-0.42,{\omY+\omE}) {$\omega_4$};

  % piecewise-constant step function (non-increasing, with intermediate step at q*)
  \draw[thick, violet!80!black]
      (0,    {\omY+\omA}) -- (\pA, {\omY+\omA})   % constant on I_0
      -- (\pA, {\omY+\omB})                         % drop at p_1
      -- (\pB, {\omY+\omB})                         % constant on I_1
      -- (\pB, {\omY+\omC})                         % drop at p_2
      -- (\qS, {\omY+\omC})                         % constant on (p_2, q*]
      -- (\qS, {\omY+\omD})                         % drop at q* (strictly between p_2 and p_3!)
      -- (\pC, {\omY+\omD})                         % constant on (q*, p_3]
      -- (\pC, {\omY+\omE})                         % drop at p_3 (to non-zero level!)
      -- (\W,  {\omY+\omE});                        % non-zero weight in trivial region!

  %% --- Connecting arrows (strip to bound labels) ---
  \draw[->,dashed,teal!50]    (\pA/2,         0) -- (\pA/2,         -1.00);
  \draw[->,dashed,teal!50]    ({(\pA+\pB)/2}, 0) -- ({(\pA+\pB)/2}, -1.00);
  \draw[->,dashed,blue!40]    ({(\pB+\pC)/2}, 0) -- ({(\pB+\pC)/2}, -1.00);
  \draw[->,dashed,orange!55]  ({(\pC+\W)/2},  0) -- ({(\pC+\W)/2},  -1.00);

  %% --- Brief per-region bound labels ---
  % LIV regions: omega constant -> integral is directly identifiable mean
  \node[teal!70!black, font=\scriptsize, align=center]
      at (\pA/2,         -1.38) {direct\\mean};
  \node[teal!70!black, font=\scriptsize, align=center]
      at ({(\pA+\pB)/2}, -1.38) {direct\\mean};
  % OT region: omega has interior step -> quantile pairing needed
  \node[blue!75!black, font=\scriptsize, align=center]
      at ({(\pB+\pC)/2}, -1.38) {quantile\\pairing};
  \node[orange!80!black, font=\scriptsize, align=center]
      at ({(\pC+\W)/2},  -1.38) {assign $y_{\min}$};

  % plus signs between labels
  \node[font=\footnotesize] at ({(\pA/2+(\pA+\pB)/2)/2},        -1.38) {$+$};
  \node[font=\footnotesize] at ({((\pA+\pB)/2+(\pB+\pC)/2)/2},  -1.38) {$+$};
  \node[font=\footnotesize] at ({((\pB+\pC)/2+(\pC+\W)/2)/2},   -1.38) {$+$};

  %% --- Annotation explaining LIV vs OT distinction in omega panel ---
  % Arrow at p_1 drop: "drop at p_k -> LIV"
  \draw[->,gray!60,thin] (\pA+0.15,{\omY+(\omA+\omB)/2}) -- (\pA+1.2,{\omY+(\omA+\omB)/2});
  \node[right,font=\tiny,teal!65!black] at (\pA+1.2,{\omY+(\omA+\omB)/2})
      {drop at $p_k$ $\Rightarrow$ \textbf{LIV}};
  % Arrow at q* drop: "drop at q* -> OT"
  \draw[->,gray!60,thin] (\qS+0.15,{\omY+(\omC+\omD)/2}) -- (\qS+1.0,{\omY+(\omC+\omD)/2});
  \node[right,font=\tiny,blue!70!black] at (\qS+1.0,{\omY+(\omC+\omD)/2})
      {drop at $q^*$ $\Rightarrow$ \textbf{OT}};

  %% --- Global sharp lower bound formula ---
  \node[align=center, font=\footnotesize] at ({\W/2}, -2.30) {
    $
    \underline{\theta}_{\omega,1}
    \;=\; \sum_{i=0}^{K-1}(p_{i+1}-p_i)
      \int_0^1 Q_{Y,i}(t)\,Q_{\omega,i}(1-t)\,\mathrm{d}t
    \;+\; \int_{p_K}^{1}\!\bigl(y_{\min}[\omega]_+\!+y_{\max}[\omega]_-\bigr)\mathrm{d}u$
  };

\end{tikzpicture}
\caption{Decomposition of the unit interval $[0,1]$ of the latent variable $U$ for the sharp lower bound in \cref{thm:discrete_lower_bound} ($K=3$ propensity levels). The key distinction is driven by where $\omega(u)$ drops. \textbf{\color{teal!65}LIV regions} ($I_0$, $I_1$): $\omega$ is constant on the entire interval because its step falls exactly at a propensity level $p_k$; the integral therefore reduces to a directly identifiable conditional mean $\omega_k\,\mathbb{E}_{\mu_{1,k}}[Y]$. \textbf{\color{blue!65}OT region} ($I_2$): $\omega$ has an interior step at $q^*\!\in\!(p_2,p_3)$ (dotted guide), so the bound depends on how the mass of $\mu_{1,2}$ is distributed relative to $q^*$; the tight bound requires countermonotone quantile coupling. \textbf{\color{orange!65}Trivial bound} ($I_3$): $Y(1)$ is never observed; the lower bound assigns $y_{\min}$ where $\omega(u)\geqslant 0$ and $y_{\max}$ where $\omega(u)<0$. The top panel shows the piecewise-constant $\omega(u)$; the bottom row gives the bound contribution per region.}
\label{fig:decomposition}
\end{figure}

By \cref{thm:discrete_lower_bound}, the length of the bound is
\begin{align*}
\overline{\theta }_{\omega ,1} -\underline{\theta }_{\omega ,1} & =\sum _{i=0}^{K-1} (p_{i+1} -p_{i} )\int _{0}^{1}( Q_{Y,i} (t)-Q_{Y,i}( 1-t)) Q_{\omega ,i} (t) \mathrm{d} t + ( y_{\max} -y_{\min})\int _{p_{K}}^{1} |\omega (u)| \mathrm{d} u
\end{align*}
which depends on the range of the propensity function and distribution $\mu_{1,i}$. If $p_K$ is close to $1$ and the support of $\mu_{1,i}$ is small, the length of the bound is also small. In particular, if each $Q_{Y,i}$ is constant and $p_K = 1,p_0 = 0$, $\theta_{\omega,1}$ is identifiable from the data. 

\begin{remark}[Decomposition into 1D marginal constraints and \citet{marx2024sharp}] \label{rem:marx-open}
\citet{marx2024sharp} raises the open question of characterizing the distribution of treatment effects for the complier group. Although \cref{prop:sharpness} states sharpness at the joint level of $(Y(0), Y(1), U, X)$, the CCOT constraints defining $\Gamma(\mathbb{P}_{obs})$ reduce to independent one-dimensional marginal constraints on the conditional kernels $\pi_w(\diff y \mid u, x)$. Consequently, once these marginals are pinned down, the remaining freedom lies entirely in the coupling between $Y(1)$ and $Y(0)$ given $(U, X)$. In particular, in the binary-instrument, no-covariate setting with propensity scores $p_1 < p_2$, the complier interval is $I_c=(p_1,p_2]$. By (\ref{eq:y_identified_measure}), the treated-outcome distribution for compliers is identified as $\mu_{1,1}$. Symmetrically, the untreated-outcome distribution for the same complier group is identified as
\[
\mu_{0,1}(\diff y)
:=
\frac{\mathbb{P}_{obs}(\diff y, W=0 \mid p(Z)=p_1)-\mathbb{P}_{obs}(\diff y, W=0 \mid p(Z)=p_2)}{p_2-p_1}.
\]
Hence the identified set for the complier treatment-effect distribution is exactly
\[
\mathcal I_{\Delta,c}
=
\left\{
(y_1-y_0)_{\#}\gamma
:
\gamma \in \Pi(\mu_{1,1},\mu_{0,1})
\right\}.
\]
Thus, Marx's question can be restated as follows: once the identified complier marginals are fixed, characterize all possible pushforwards of their couplings under the difference map $(y_1,y_0)\mapsto y_1-y_0$. We do not have a simpler closed-form characterization of $\mathcal I_{\Delta,c}$, but this makes precise that the remaining difficulty is entirely a coupling problem rather than a missing marginal-identification problem.
\end{remark}

\subsubsection{General Instrument Setting}\label{subsubsec:continuous_iv}
We now unify consider a general instrument setting under a single regularity condition on the instrument domain. Rather than assuming $\mathcal{Z}$ is finite, we only require it to be compact with finitely many connected components. This framework nests the finite discrete case (each component a singleton), the connected continuous case of the LIV literature \citep{heckman1999local}, and any intermixture of the two.

\begin{assumption}[Compactness and Piecewise Continuity]\label{asp:continuity_p}
    The domain of the instrument $\mathcal{Z}$ is a compact set with finitely many connected components. Furthermore, for almost every $x \in \mathcal{X}$, the conditional propensity score function $z \mapsto p(z,x) = \mathbb{P}(W=1 \mid Z=z, X=x)$ is continuous on each connected component.
\end{assumption}

Under \cref{asp:continuity_p}, the image of each connected component of $\mathcal{Z}$ under $p(\cdot)$ is a closed interval (possibly a singleton if $p$ is constant on the component or if the component is a single point). Merging overlaps and ordering from left to right, the range of the propensity score admits a \emph{maximal-interval decomposition}
\begin{equation}\label{eq:range_decomposition}
    p(\mathcal{Z}) \;=\; \bigcup_{k=1}^{K}\bigl[\underline{p}_k,\overline{p}_k\bigr], \qquad 0 \leqslant \underline{p}_1 \leqslant \overline{p}_1 < \underline{p}_2 \leqslant \overline{p}_2 < \cdots < \underline{p}_K \leqslant \overline{p}_K \leqslant 1,
\end{equation}
where we explicitly allow degenerate intervals $\underline{p}_k = \overline{p}_k$ corresponding to isolated propensity values, as in the finite discrete case. Adopting the boundary conventions $\overline{p}_0 := 0$ and $\underline{p}_{K+1} := 1$, this decomposition partitions the unit interval of the latent variable $U$ into three types of regions: the \emph{LIV intervals} $L_k := [\underline{p}_k,\overline{p}_k]$ for $k=1,\dots,K$, the \emph{OT gaps} $G_k := (\overline{p}_k, \underline{p}_{k+1})$ for $k=0,\dots,K-1$, and the \emph{trivial tail} $(\overline{p}_K,1]$.

The CCOT problem for the treated ($W=1$) component becomes
\begin{equation}\label{eq:ot_formulation_continuous}
    \begin{split}
        \min_{\pi_1}\quad &  \mathbb{E}_{\pi_1}[Y(1)\omega(U)]  \\
        \text{subject to}  \quad &  \int_0^{p} \pi_{1}(\mathrm{d}y \mid u) \mathrm{d}u = \mathbb{P}_{obs}(\mathrm{d}y, W=1 \mid p(Z)=p), \ \forall p \in p(\mathcal{Z}), \\
        & \pi_1(\mathrm{d}u) = \mathrm{d}u.
    \end{split}
\end{equation}
Following the same strategy as in \cref{subsubsec:discrete_iv}, we rewrite (\ref{eq:ot_formulation_continuous}) into an obviously separable form across the three region types. The objective splits additively as
\begin{align*}
\mathbb{E}_{\pi_1}[Y(1)\omega(U)]
&= \sum_{k=0}^{K-1}\int_{G_k}\!\int_{\mathcal{Y}} y\,\omega(u)\,\pi_1(\mathrm{d}y\mid u)\,\mathrm{d}u
+ \sum_{k=1}^{K}\int_{L_k}\!\int_{\mathcal{Y}} y\,\omega(u)\,\pi_1(\mathrm{d}y\mid u)\,\mathrm{d}u \\
&\quad\quad+ \int_{\overline{p}_K}^{1}\!\int_{\mathcal{Y}} y\,\omega(u)\,\pi_1(\mathrm{d}y\mid u)\,\mathrm{d}u.
\end{align*}
On each gap $G_k$, the same telescoping argument that produced (\ref{eq:y_identified_measure}) --- differencing the observational constraint in (\ref{eq:ot_formulation_continuous}) at $p=\overline{p}_k$ and $p=\underline{p}_{k+1}$ --- fixes the marginal of $Y(1)$ on $G_k$ to be
\begin{equation}\label{eq:gap_identified_measure}
    \mu_{1,k}(\mathrm{d}y) \;:=\; \frac{\mathbb{P}_{obs}(\mathrm{d}y, W=1 \mid p(Z)=\underline{p}_{k+1}) - \mathbb{P}_{obs}(\mathrm{d}y, W=1 \mid p(Z)=\overline{p}_k)}{\underline{p}_{k+1}-\overline{p}_k}, \qquad k=0,\dots,K-1,
\end{equation}
with the convention $\mathbb{P}_{obs}(\mathrm{d}y, W=1 \mid p(Z)=0) \equiv 0$ so that $k=0$ reduces to $\mu_{1,0}(\mathrm{d}y) = \mathbb{P}_{obs}(\mathrm{d}y, W=1 \mid p(Z)=\underline{p}_1)/\underline{p}_1$. On each non-degenerate LIV interval $L_k$, the constraint in (\ref{eq:ot_formulation_continuous}) holds for every $p$ in the continuum $[\underline{p}_k,\overline{p}_k]$; differentiating in $p$ pins $\pi_1(\mathrm{d}y\mid u)$ down pointwise via the LIV identity
\begin{equation}\label{eq:liv_identity}
\pi_1(\mathrm{d}y\mid u) \;=\; \frac{\partial}{\partial p}\bigg|_{p=u}\mathbb{P}_{obs}(\mathrm{d}y, W=1 \mid p(Z)=p), \qquad u\in L_k\ \text{a.e.,}
\end{equation}
whose absolute continuity in $p$ follows from the integral representation in (\ref{eq:ot_formulation_continuous}). The trivial tail $(\overline{p}_K,1]$ carries no constraint. Because cumulative sums of the gap marginals plus integration of the LIV densities recover the original constraint at every $p\in p(\mathcal{Z})$, this transformation is invertible and the two constraint systems are equivalent. Each rewritten constraint involves $\pi_1$ only on its own region and the objective is additively separable, so the global problem decomposes into three types of sub-problems, one for each region type.

\begin{enumerate}
    \item \textbf{The OT Gaps} $G_k = (\overline{p}_k, \underline{p}_{k+1})$ for $k=0,\dots,K-1$: the marginal of $Y(1)$ on $G_k$ is fixed to $\mu_{1,k}$ of (\ref{eq:gap_identified_measure}) and the $U$-marginal is $\text{Unif}(G_k)$, so the contribution of $G_k$ to the objective is a standard 1D optimal transport problem between $\mu_{1,k}$ and $\text{Unif}(G_k)$, which is solved by \cref{thm:1d_ot}.
    \item \textbf{The LIV Intervals} $L_k = [\underline{p}_k, \overline{p}_k]$ for $k=1,\dots,K$: for non-degenerate $L_k$, the LIV identity in (\ref{eq:liv_identity}) point-identifies the conditional mean
    \[\mathbb{E}[Y(1) \mid U=u] \;=\; \frac{\diff}{\diff u}\mathbb{E}_{obs}[Y W \mid p(Z)=u], \qquad u \in L_k,\]
    so the contribution of $L_k$ to the objective is uniquely determined and no optimization is required. Degenerate $L_k$ ($\underline{p}_k = \overline{p}_k$) contribute a Lebesgue-null integral and drop out.
    \item \textbf{The Trivial Tail} $u \in (\overline{p}_K, 1]$: for individuals with extreme resistance to treatment, $Y(1)$ is never observed and the data imposes no constraint. Utilizing \cref{asp:bounded_y}, the minimization problem degenerates to assigning all mass to $y_{\min}$ where $\omega(u) \geqslant 0$ and to $y_{\max}$ where $\omega(u) < 0$.
\end{enumerate}

Combining the solutions across these regions yields the closed-form sharp bounds for the general instrument setting, which are illustrated in \cref{fig:continuous_decomposition}.

\begin{theorem}[General Closed-Form Sharp Bounds]\label{thm:continuous_lower_bound}
    Suppose \cref{asp:structure}, \cref{asp:bounded_y}, and \cref{asp:continuity_p} hold, and $\mathcal{X} = \emptyset$. Let $p(\mathcal{Z}) = \bigcup_{k=1}^{K}[\underline{p}_k, \overline{p}_k]$ be the maximal-interval decomposition in (\ref{eq:range_decomposition}). For each $k = 0,\dots,K-1$, let $Q_{Y,k}(t)$ be the quantile function of $\mu_{1,k}$ defined in (\ref{eq:gap_identified_measure}), and $Q_{\omega,k}(t)$ the quantile function of $\omega(U)$ for $U \sim \text{Unif}(G_k)$. The sharp lower bound for the $W=1$ component of the target parameter is given by
    \begin{equation}\label{eq:continuous_lower_bound}
        \begin{split}
            \underline{\theta}_{\omega,1}\ =\
            & \underbrace{\sum_{k=0}^{K-1}(\underline{p}_{k+1}-\overline{p}_k)\int_0^1 Q_{Y,k}(t)\,Q_{\omega,k}(1-t)\,\mathrm{d}t}_{\text{OT Bound on the Gaps}} + \underbrace{\sum_{k=1}^{K}\int_{\underline{p}_k}^{\overline{p}_k}\left(\frac{\diff}{\diff u}\mathbb{E}_{obs}[YW \mid p(Z)=u]\right)\omega(u)\,\mathrm{d}u}_{\text{Point-Identified LIV on the Intervals}} \\
            & \quad\quad + \underbrace{\int_{\overline{p}_K}^{1}\Big(y_{\min}\max\{0,\omega(u)\} + y_{\max}\min\{0,\omega(u)\}\Big)\mathrm{d}u}_{\text{Trivial Lower Bound on the Tail}}.
        \end{split}
    \end{equation}
    Similarly, the sharp upper bound is given by replaceing $Q_{\omega,k}(1-t)$ with $Q_{\omega,k}(t)$ in the OT term and swapping $y_{\min}$ and $y_{\max}$ in the trivial tail term.
    % \begin{equation*}
    %     \begin{split}
    %         \overline{\theta}_{\omega,1}\ =\
    %         & \sum_{k=0}^{K-1}(\underline{p}_{k+1}-\overline{p}_k)\int_0^1 Q_{Y,k}(t)\,Q_{\omega,k}(t)\,\mathrm{d}t \ + \ \sum_{k=1}^{K}\int_{\underline{p}_k}^{\overline{p}_k}\left(\frac{\diff}{\diff u}\mathbb{E}_{obs}[YW \mid p(Z)=u]\right)\omega(u)\,\mathrm{d}u \\
    %         & \quad\quad + \int_{\overline{p}_K}^{1}\Big(y_{\max}\max\{0,\omega(u)\} + y_{\min}\min\{0,\omega(u)\}\Big)\mathrm{d}u.
    %     \end{split}
    % \end{equation*}
\end{theorem}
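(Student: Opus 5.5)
The proof establishes two inequalities: every feasible $\pi_1$ for (\ref{eq:ot_formulation_continuous}) has objective at least the right-hand side of (\ref{eq:continuous_lower_bound}), and some feasible $\pi_1$ attains it. Both directions rest on showing that the constraint system in (\ref{eq:ot_formulation_continuous}) is equivalent to three region-wise systems, one for each region type.

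\emph{Step 1: equivalence of the constraint systems.} Write $F(p;A) := \mathbb{P}_{obs}(Y\in A, W=1 \mid p(Z)=p)$ for $p\in p(\mathcal{Z})$.

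For the forward direction, take a feasible $\pi_1$.
\begin{itemize}
\item On a gap, subtract the constraint at $p=\overline{p}_k$ from the one at $p=\underline{p}_{k+1}$ (with the convention $F(0;\cdot)\equiv 0$). This gives $\frac{1}{|G_k|}\int_{G_k}\pi_1(A\mid u)\,\mathrm{d}u=\mu_{1,k}(A)$.
\item On a non-degenerate $L_k$, the map $p\mapsto F(p;A)$ equals $\int_0^p \pi_1(A\mid u)\,\mathrm{d}u$. It is therefore absolutely continuous on $L_k$, and the Lebesgue differentiation theorem yields (\ref{eq:liv_identity}) for a.e.\ $u\in L_k$.
\end{itemize}
To do this for all $A$ simultaneously, I will work with a countable generating $\pi$-system of intervals with rational endpoints. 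Since $\mathcal{Y}$ is compact this suffices, and it gives an a.e.-defined kernel $\kappa(\cdot\mid u)$ on $L_k$. Taking $A=\mathcal{Y}$ and $y\mapsto y$ then gives $\int y\,\pi_1(\mathrm{d}y\mid u)=\frac{\mathrm{d}}{\mathrm{d}u}\mathbb{E}_{obs}[YW\mid p(Z)=u]$ a.e.\ on $L_k$.

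For the converse, take any kernel whose gap averages equal $\mu_{1,k}$ and which equals $\kappa$ on the $L_k$. Then $\int_0^p\pi_1(\cdot\mid u)\,\mathrm{d}u$, for $p\in[\underline{p}_j,\overline{p}_j]$, is a telescoping sum. The gap pieces reproduce the jumps of $F$ across each gap, and the LIV pieces reproduce the increments of $F$ along each $L_k$. Together they recover $F(p;\cdot)$. Nothing is imposed on $(\overline{p}_K,1]$.

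Nonnegativity of $\mu_{1,k}$ and of $\kappa$ is inherited from the true model. By assumption $\mathbb{P}_{obs}$ is generated by a structural model satisfying \cref{asp:structure}, so the true $\pi_1$ is feasible. This guarantees the three sub-problems below are nonempty.

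\emph{Step 2: decomposition and the gap sub-problems.} The objective splits additively over the regions, and by Step 1 the feasible set is a product of region-wise feasible sets. The minimum is therefore the sum of the region-wise minima.

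On $G_k$, the law of $(Y(1),U)$ given $U\in G_k$ ranges exactly over $\Pi(\mu_{1,k},\mathrm{Unif}(G_k))$. The contribution is then $|G_k|\min_\gamma\int y\,\omega(u)\,\mathrm{d}\gamma$. I rewrite this as an OT problem between $\mu_{1,k}$ and the pushforward of $\mathrm{Unif}(G_k)$ under $\omega$, which has quantile function $Q_{\omega,k}$.

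One point needs care: couplings of $(Y,U)$ project onto couplings of $(Y,\omega(U))$, but I must check the projection is onto. Given any coupling of $(Y,\omega(U))$, I lift it by drawing $U$ from the conditional law of $U$ given $\omega(U)$ under the uniform distribution (gluing lemma). With that established, \cref{thm:1d_ot} gives $\int_0^1 Q_{Y,k}(t)\,Q_{\omega,k}(1-t)\,\mathrm{d}t$, attained by the countermonotone coupling.

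\emph{Step 3: the other regions, and the main obstacle.} On each $L_k$ the contribution is fixed, namely $\int_{L_k}\omega(u)\,\frac{\mathrm{d}}{\mathrm{d}u}\mathbb{E}_{obs}[YW\mid p(Z)=u]\,\mathrm{d}u$. On the tail, the pointwise choice $\delta_{y_{\min}}$ where $\omega\geqslant 0$ and $\delta_{y_{\max}}$ otherwise is measurable in $u$. Since $y\,\omega(u)\geqslant y_{\min}\max\{0,\omega(u)\}+y_{\max}\min\{0,\omega(u)\}$ for every $y\in[y_{\min},y_{\max}]$, this choice attains the tail bound.

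Gluing the optimal pieces across regions gives a measurable kernel that is feasible by Step 1, so the lower bound is attained and hence sharp. The upper bound follows the same way, using the comonotone coupling on the gaps and swapping $y_{\min}$ and $y_{\max}$ on the tail.

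I expect the main obstacle to be Step 1 on the LIV intervals. The difficulty is turning the derivative of a measure-valued map into a genuine a.e.-defined probability kernel, and verifying the differentiability of $\mathbb{E}_{obs}[YW\mid p(Z)=u]$ that (\ref{eq:continuous_lower_bound}) presupposes. I will handle this through the integral representation in (\ref{eq:ot_formulation_continuous}): it makes $F(\cdot;A)$ Lipschitz in $p$, so the derivative exists a.e., and the countable $\pi$-system argument assembles the derivatives into a kernel.
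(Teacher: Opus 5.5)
Your proposal is correct and follows essentially the same route as the paper's proof, which is given in the appendix for the covariate version (this theorem is its $\mathcal{X}=\emptyset$ case). Both difference the constraints across each gap, differentiate along each non-degenerate LIV interval, use telescoping to show the region-wise system is equivalent to the original one, solve each gap with the countermonotone 1D coupling, take the LIV contribution as fixed and the pointwise Dirac choice on the tail, and glue the optimizers to attain the bound; your pushforward-plus-gluing step, which justifies applying \cref{thm:1d_ot} to the non-product cost $y\,\omega(u)$ through $\omega_{\#}\mathrm{Unif}(G_k)$, adds rigor the paper skips when it invokes \cref{thm:1d_ot} directly.
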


The proof combines the subtraction-of-constraints decomposition from \cref{thm:discrete_lower_bound} (applied region-by-region to identify $\mu_{1,k}$ on each gap) with the LIV-derivative argument on each non-degenerate interval $L_k$, followed by the 1D OT coupling of \cref{thm:1d_ot}. We defer the formal proof to \cref{thm:covariate_continuous_lower_bound}, which restates the result in the unified form in the covariate subsection below. \cref{thm:continuous_lower_bound} subsumes both classical settings as limit cases. When every LIV interval is degenerate ($\underline{p}_k = \overline{p}_k$ for all $k$), the LIV sum vanishes identically, each $\mu_{1,k}$ coincides with the discrete complier distribution in (\ref{eq:y_identified_measure}), and (\ref{eq:continuous_lower_bound}) reduces exactly to \cref{thm:discrete_lower_bound}. Conversely, when $K=1$ with $\underline{p}_1 < \overline{p}_1$, the gap sum contains a single OT term on $(0,\underline{p}_1)$, the LIV sum contributes a single integral on $[\underline{p}_1,\overline{p}_1]$, and we recover the connected-continuous case; in particular, we match the identification results of \citet{heckman1999local} when additionally $\underline{p}_1 = 0$ and $\overline{p}_1 = 1$, so the propensity score has full support with no gaps. The length of the bound depends on the gap widths, the identified gap distributions $\mu_{1,k}$, and the tail beyond $\overline{p}_K$:
\begin{align*}
    \overline{\theta}_{\omega,1} - \underline{\theta}_{\omega,1} = \sum_{k=0}^{K-1}(\underline{p}_{k+1}-\overline{p}_k)\int_0^1\Big(Q_{Y,k}(t) - Q_{Y,k}(1-t)\Big)Q_{\omega,k}(t)\,\mathrm{d}t \ + \ (y_{\max}-y_{\min})\int_{\overline{p}_K}^{1}|\omega(u)|\,\mathrm{d}u.
\end{align*}

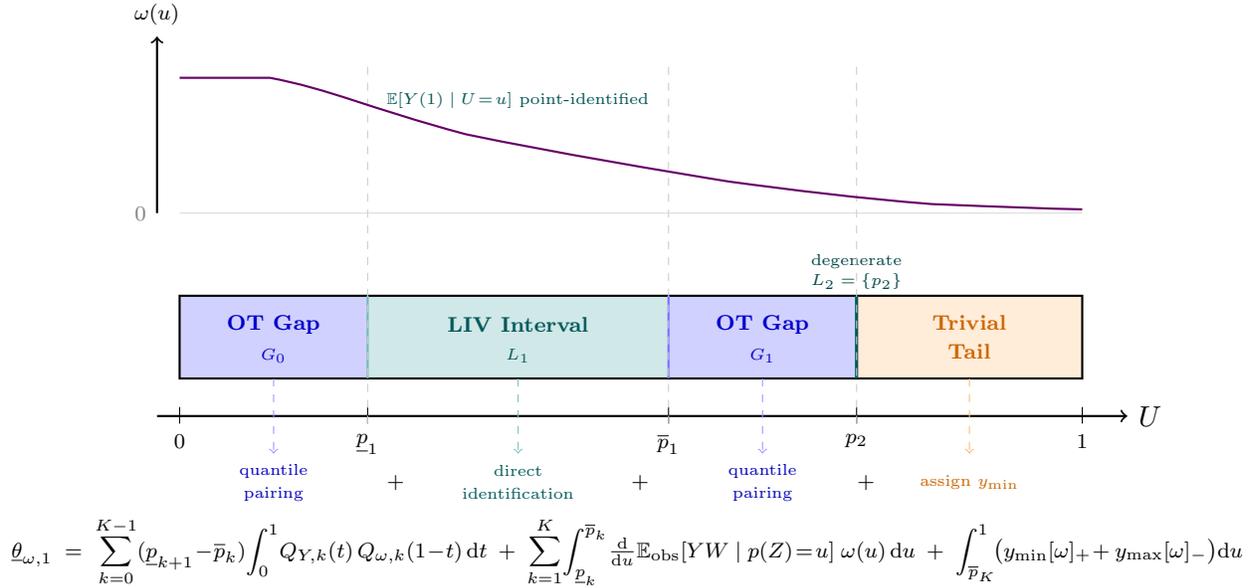
\begin{figure}[!htbp]
\centering
\begin{tikzpicture}[scale=1.0, font=\small]
  %% --- Layout (general instrument setting: $K=2$, order OT / LIV / OT / trivial) ---
  \def\W{12.0}    % full width (U = 1)
  \def\SH{1.10}   % main strip height
  \def\pOneL{2.5} % \underline{p}_1
  \def\pOneH{6.5} % \overline{p}_1
  \def\pTwo{9.0}  % p_2 — degenerate L_2 = \{p_2\}, marks the start of the trivial tail
  \def\qL{1.2}    % where omega begins descending
  \def\omY{2.20}  % omega baseline
  \def\omMax{1.80}% omega plateau height

  %% --- Main identification strip ---
  % G_0 = (0, \underline{p}_1): OT gap
  \fill[blue!18]    (0,0)        rectangle (\pOneL,\SH);
  % L_1 = [\underline{p}_1, \overline{p}_1]: non-degenerate LIV interval
  \fill[teal!18]    (\pOneL,0)   rectangle (\pOneH,\SH);
  % G_1 = (\overline{p}_1, p_2): OT gap
  \fill[blue!18]    (\pOneH,0)   rectangle (\pTwo,\SH);
  % Trivial tail (p_2, 1]
  \fill[orange!15]  (\pTwo,0)    rectangle (\W,\SH);
  \draw[thick]      (0,0)        rectangle (\W,\SH);
  % Vertical boundaries between regions
  \draw[thick, teal!55]   (\pOneL,0) -- (\pOneL,\SH);
  \draw[thick, blue!55]   (\pOneH,0) -- (\pOneH,\SH);
  % L_2 = \{p_2\}: degenerate LIV rendered as a thick teal line at the boundary to the trivial tail
  \draw[very thick, teal!70!black] (\pTwo,0) -- (\pTwo,\SH);

  % Region labels inside strip
  \node[blue!80!black,   font=\scriptsize\bfseries]  at (\pOneL/2,            0.72) {OT Gap};
  \node[blue!80!black,   font=\tiny]                 at (\pOneL/2,            0.30) {$G_0$};
  \node[teal!70!black,   font=\scriptsize\bfseries]  at ({(\pOneL+\pOneH)/2}, 0.72) {LIV Interval};
  \node[teal!70!black,   font=\tiny]                 at ({(\pOneL+\pOneH)/2}, 0.30) {$L_1$};
  \node[blue!80!black,   font=\scriptsize\bfseries]  at ({(\pOneH+\pTwo)/2},  0.72) {OT Gap};
  \node[blue!80!black,   font=\tiny]                 at ({(\pOneH+\pTwo)/2},  0.30) {$G_1$};
  \node[orange!80!black, font=\scriptsize\bfseries, align=center]
      at ({(\pTwo+\W)/2}, \SH/2) {Trivial\\[-1pt]Tail};

  % Degenerate LIV singleton label (above strip)
  \node[teal!55!black, font=\tiny, align=center]
      at (\pTwo, \SH+0.32) {degenerate\\[-1pt]$L_2=\{p_2\}$};

  %% --- U axis ---
  \draw[->,thick] (-0.3,-0.50) -- (\W+0.6,-0.50) node[right]{$U$};
  \draw (0,      -0.38) -- (0,      -0.60); \node[below,font=\scriptsize] at (0,      -0.60) {$0$};
  \draw (\pOneL, -0.38) -- (\pOneL, -0.60); \node[below,font=\scriptsize] at (\pOneL, -0.60) {$\underline{p}_1$};
  \draw (\pOneH, -0.38) -- (\pOneH, -0.60); \node[below,font=\scriptsize] at (\pOneH, -0.60) {$\overline{p}_1$};
  \draw (\pTwo,  -0.38) -- (\pTwo,  -0.60); \node[below,font=\scriptsize] at (\pTwo,  -0.60) {$p_2$};
  \draw (\W,     -0.38) -- (\W,     -0.60); \node[below,font=\scriptsize] at (\W,     -0.60) {$1$};

  %% --- Dashed vertical guides through all panels ---
  \draw[gray!40,dashed,thin] (\pOneL,-0.60) -- (\pOneL,{\omY+\omMax+0.25});
  \draw[gray!40,dashed,thin] (\pOneH,-0.60) -- (\pOneH,{\omY+\omMax+0.25});
  \draw[gray!40,dashed,thin] (\pTwo, -0.60) -- (\pTwo, {\omY+\omMax+0.25});

  %% --- Omega smooth-curve panel ---
  \draw[->,thick] (-0.3,\omY) -- (-0.3,{\omY+\omMax+0.55})
      node[above,font=\scriptsize]{$\omega(u)$};
  \draw[gray!30,thin] (0,\omY) -- (\W,\omY);      % zero baseline
  \node[left,font=\scriptsize,gray] at (-0.3,\omY) {$0$};

  % Smooth non-increasing omega curve
  \draw[thick, violet!80!black]
      (0, {\omY+\omMax}) -- (\qL, {\omY+\omMax})
      .. controls (2.0, {\omY+1.65}) and (\pOneL, {\omY+1.40}) ..
         (3.8, {\omY+1.05})
      .. controls (5.0, {\omY+0.80}) and (\pOneH, {\omY+0.55}) ..
         (7.3, {\omY+0.42})
      .. controls (8.2, {\omY+0.30}) and (\pTwo,  {\omY+0.20}) ..
         (10.0, {\omY+0.12})
      .. controls (11.0, {\omY+0.08}) and (11.5, {\omY+0.06}) ..
         (\W, {\omY+0.05});

  % LIV annotation: point-identified via derivative
  \node[font=\tiny,teal!65!black] at ({(\pOneL+\pOneH)/2},{\omY+1.50})
      {$\mathbb{E}[Y(1)\mid U\!=\!u]$ point-identified};

  %% --- Connecting arrows (strip to bound labels) ---
  \draw[->,dashed,blue!40]    (\pOneL/2,              0) -- (\pOneL/2,              -1.00);
  \draw[->,dashed,teal!50]    ({(\pOneL+\pOneH)/2},   0) -- ({(\pOneL+\pOneH)/2},   -1.00);
  \draw[->,dashed,blue!40]    ({(\pOneH+\pTwo)/2},    0) -- ({(\pOneH+\pTwo)/2},    -1.00);
  \draw[->,dashed,orange!55]  ({(\pTwo+\W)/2},        0) -- ({(\pTwo+\W)/2},        -1.00);

  %% --- Brief per-region bound labels ---
  \node[blue!75!black, font=\tiny, align=center]
      at (\pOneL/2, -1.38) {quantile\\pairing};
  \node[teal!70!black, font=\tiny, align=center]
      at ({(\pOneL+\pOneH)/2}, -1.38) {direct\\identification};
  \node[blue!75!black, font=\tiny, align=center]
      at ({(\pOneH+\pTwo)/2}, -1.38) {quantile\\pairing};
  \node[orange!80!black, font=\tiny, align=center]
      at ({(\pTwo+\W)/2}, -1.38) {assign $y_{\min}$};

  % plus signs between labels
  \node[font=\scriptsize] at ({(\pOneL/2 + (\pOneL+\pOneH)/2)/2},         -1.38) {$+$};
  \node[font=\scriptsize] at ({((\pOneL+\pOneH)/2 + (\pOneH+\pTwo)/2)/2}, -1.38) {$+$};
  \node[font=\scriptsize] at ({((\pOneH+\pTwo)/2 + (\pTwo+\W)/2)/2},      -1.38) {$+$};

  %% --- Global sharp lower bound formula ---
  \node[align=center, font=\scriptsize] at ({\W/2}, -2.30) {
    $
    \underline{\theta}_{\omega,1}
    \;=\; \displaystyle\sum_{k=0}^{K-1}(\underline{p}_{k+1}\!-\!\overline{p}_k)\!\int_0^1\! Q_{Y,k}(t)\,Q_{\omega,k}(1\!-\!t)\,\mathrm{d}t
    \;+\; \displaystyle\sum_{k=1}^{K}\!\int_{\underline{p}_k}^{\overline{p}_k}\!
      \tfrac{\mathrm{d}}{\mathrm{d}u}\mathbb{E}_{\mathrm{obs}}[YW\mid p(Z)\!=\!u]
      \;\omega(u)\,\mathrm{d}u
    \;+\; \displaystyle\int_{\overline{p}_K}^{1}\!\bigl(y_{\min}[\omega]_+\!+y_{\max}[\omega]_-\bigr)\mathrm{d}u$
  };

\end{tikzpicture}
\caption{Decomposition of the unit interval $[0,1]$ for the sharp lower bound in the general instrument setting (\cref{thm:continuous_lower_bound}), illustrated with $K=2$ in the order OT / LIV / OT / trivial: the non-degenerate LIV interval $L_1 = [\underline{p}_1, \overline{p}_1]$ (teal strip) lies between two OT gaps, and the \emph{degenerate} LIV singleton $L_2 = \{p_2\}$ (teal line) sits at the boundary of the trivial tail. \textbf{\color{blue!65}OT gaps} $G_0 = (0, \underline{p}_1)$ and $G_1 = (\overline{p}_1, p_2)$: only the aggregate distributions $\mu_{1,0}, \mu_{1,1}$ are identified, so the tight bound requires countermonotone quantile coupling. \textbf{\color{teal!65}LIV interval} $L_1$: the propensity score sweeps a full sub-interval, so the standard LIV derivative point-identifies $\mathbb{E}[Y(1)\mid U=u]$ exactly; the singleton $L_2$ contributes a Lebesgue-null integral and drops out. \textbf{\color{orange!65}Trivial tail} $(p_2, 1]$: $Y(1)$ is never observed; the lower bound assigns $y_{\min}$ where $\omega(u)\geqslant 0$ and $y_{\max}$ where $\omega(u)<0$. Collapsing every LIV interval to a singleton recovers the discrete case (\cref{fig:decomposition}); taking $K=1$ with $\underline{p}_1<\overline{p}_1$ and no trailing singleton recovers the connected-continuous case.}
\label{fig:continuous_decomposition}
\end{figure}

\subsection{Closed-Form Bounds with Covariates}\label{subsec:cf_cov}

Generalizing the previous results to accommodate covariates $X$ is mathematically straightforward. Because our structural assumption \cref{asp:structure} imposes conditional independence $Z \indep (Y(w), U) \mid X$, the global CCOT problem naturally disintegrates into a family of conditional one-dimensional optimal transport problems, indexed by $x \in \mathcal{X}$. We can therefore solve the optimization conditional on $X=x$ and aggregate the resulting conditional bounds over the marginal distribution of $X$ via the law of iterated expectations.

Under \cref{asp:continuity_p}, for $\mathbb{P}_{obs}$-a.e.\ $x \in \mathcal{X}$ the conditional range of the propensity score admits a maximal-interval decomposition
\begin{equation}\label{eq:range_decomposition_cov}
    p(\mathcal{Z}, x) \;=\; \bigcup_{k=1}^{K_x}[\underline{p}_k(x), \overline{p}_k(x)], \qquad 0 \leqslant \underline{p}_1(x) \leqslant \overline{p}_1(x) < \cdots < \underline{p}_{K_x}(x) \leqslant \overline{p}_{K_x}(x) \leqslant 1,
\end{equation}
with degenerate intervals $\underline{p}_k(x) = \overline{p}_k(x)$ corresponding to isolated conditional propensity values and boundary conventions $\overline{p}_0(x) := 0$ and $\underline{p}_{K_x+1}(x) := 1$. For each $x$, this partitions $[0,1]$ into the conditional LIV intervals $L_k(x) := [\underline{p}_k(x), \overline{p}_k(x)]$, the OT gaps $G_k(x) := (\overline{p}_k(x), \underline{p}_{k+1}(x))$, and the trivial tail $(\overline{p}_{K_x}(x), 1]$. The conditional gap-identified measure of $Y(1)$ on $G_k(x)$ is
\begin{equation}\label{eq:gap_identified_measure_cov}
    \mu_{1,k\mid x}(\mathrm{d}y) \;:=\; \frac{\mathbb{P}_{obs}(\mathrm{d}y, W=1 \mid p(Z,X)=\underline{p}_{k+1}(x), X=x) - \mathbb{P}_{obs}(\mathrm{d}y, W=1 \mid p(Z,X)=\overline{p}_k(x), X=x)}{\underline{p}_{k+1}(x)-\overline{p}_k(x)},
\end{equation}
for $k=0,\dots,K_x-1$, with the convention $\mathbb{P}_{obs}(\mathrm{d}y, W=1 \mid p(Z,X)=0, X=x) \equiv 0$.

\begin{theorem}[Conditional General Bound]\label{thm:covariate_continuous_lower_bound}
    Suppose \cref{asp:structure}, \cref{asp:bounded_y}, and \cref{asp:continuity_p} hold. For each $k = 0,\dots,K_x-1$, let $Q_{Y, k \mid x}(t)$ be the quantile function of $\mu_{1,k\mid x}$ defined in (\ref{eq:gap_identified_measure_cov}), and $Q_{\omega, k \mid x}(t)$ the quantile function of $\omega(x, U)$ for $U \sim \text{Unif}(G_k(x))$. The sharp lower bound for the $W=1$ component of the target parameter is
    \begin{equation} \label{eq:covariate_continuous_lower_bound}
        \begin{split}
        \underline{\theta}_{\omega,1}  = \mathbb{E}_{X} \Bigg[ & \sum_{k=0}^{K_X-1}(\underline{p}_{k+1}(X)-\overline{p}_k(X))\int_0^1 Q_{Y,k\mid X}(t)\,Q_{\omega,k\mid X}(1-t)\,\mathrm{d}t \\
        & + \sum_{k=1}^{K_X}\int_{\underline{p}_k(X)}^{\overline{p}_k(X)} \left( \frac{\diff }{\diff u} \mathbb{E}_{obs}[Y W \mid p(Z,X)=u, X] \right) \omega(X, u) \, \mathrm{d}u \\
        & + \int_{\overline{p}_{K_X}(X)}^1 \Big( y_{\min} \max\{0, \omega(X, u)\} + y_{\max} \min\{0, \omega(X, u)\} \Big) \mathrm{d}u \Bigg].
        \end{split}
    \end{equation}
    Similarly, the sharp upper bound is given by replacing $Q_{\omega,k\mid X}(1-t)$ with $Q_{\omega,k\mid X}(t)$ in the OT term and swapping $y_{\min}$ and $y_{\max}$ in the trivial tail term.
    % \begin{equation*}
    %     \begin{split}
    %     \overline{\theta}_{\omega,1}  = \mathbb{E}_{X} \Bigg[ & \sum_{k=0}^{K_X-1}(\underline{p}_{k+1}(X)-\overline{p}_k(X))\int_0^1 Q_{Y,k\mid X}(t)\,Q_{\omega,k\mid X}(t)\,\mathrm{d}t \\
    %     & + \sum_{k=1}^{K_X}\int_{\underline{p}_k(X)}^{\overline{p}_k(X)} \left( \frac{\diff }{\diff u} \mathbb{E}_{obs}[Y W \mid p(Z,X)=u, X] \right) \omega(X, u) \, \mathrm{d}u  \\
    %     & + \int_{\overline{p}_{K_X}(X)}^1 \Big( y_{\max} \max\{0, \omega(X, u)\} + y_{\min} \min\{0, \omega(X, u)\} \Big) \mathrm{d}u \Bigg].
    %     \end{split}
    % \end{equation*}
\end{theorem}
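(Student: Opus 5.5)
The plan is to reduce \eqref{eq:ot_formulation_1} to a family of covariate-specific problems, establish the equivalence of constraint systems for each fixed $x$, and then show that the infimum over kernels equals the sum of the regionwise optima. The last step needs both a lower-bound (weak duality) argument and an attaining construction.

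\textbf{Step 1 (disintegration in $x$).} By \cref{asp:structure}(2), every constraint in \eqref{eq:ot_formulation_1} is indexed by $x$ and involves only the kernel $\pi_1(\diff y\mid u,x)$. The $(U,X)$-marginal constraint fixes $\pi_1(\diff x,\diff u)=\diff u\,\mathbb{P}_{obs}(\diff x)$. The objective therefore equals $\mathbb{E}_X[\int_0^1\int y\,\omega(X,u)\,\pi_1(\diff y\mid u,X)\,\diff u]$, and the feasible set is the set of measurable families $x\mapsto\pi_1(\cdot\mid\cdot,x)$ with each slice feasible for the no-covariate problem \eqref{eq:ot_formulation_continuous}. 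The data for that problem are $\mathbb{P}_{obs}(\cdot,W=1\mid p(Z,X)=p,X=x)$ and the range $p(\mathcal{Z},x)$, which is well defined by \cref{asp:structure}(4).

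The lower bound follows directly: the objective is at least $\mathbb{E}_X[V(X)]$, where $V(x)$ is the conditional optimum. For the reverse inequality I will exhibit, for a.e.\ $x$, an optimizer built from explicit quantile maps. These maps are jointly measurable in $(x,u,y)$ because quantile functions of measurable families of measures are measurable. This avoids a measurable-selection theorem.

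\textbf{Step 2 (fixed $x$: equivalence of constraints).} Fix $x$ and write $F(p):=\mathbb{P}_{obs}(\diff y,W=1\mid p(Z,X)=p,X=x)$. I will show that a kernel satisfies $\int_0^p\pi_1=F(p)$ for all $p\in p(\mathcal{Z},x)$ if and only if two conditions hold:
\begin{enumerate}
\item[(a)] its average over each gap $G_k(x)$ equals $\mu_{1,k\mid x}$;
\item[(b)] on each non-degenerate $L_k(x)$ it equals the LIV derivative \eqref{eq:liv_identity} for a.e.\ $u$.
\end{enumerate}
For necessity, differencing $F$ at $\overline{p}_k$ and $\underline{p}_{k+1}$ gives (a). For (b), $F(p)(A)=\int_0^p\pi_1(A\mid u)\,\diff u$ on an interval, so Lebesgue differentiation yields the density for each $A$ in a countable generating algebra. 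Uniqueness of measures then gives the kernel a.e. For sufficiency, sum (a) and (b) cumulatively from $0$.

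This step also verifies that $\mu_{1,k\mid x}$ and the derivative are genuine probability measures. Nonnegativity comes from the existence of the true model, which is feasible, and total mass $1$ from $F(p)(\mathcal{Y})=p$.

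\textbf{Step 3 (regionwise optimization).} On $L_k(x)$ the kernel is fixed, so its contribution is $\int_{L_k}\omega\,\partial_u\mathbb{E}[YW\mid p=u,x]\,\diff u$. Here I use that $\int y\,\partial_pF(p)(\diff y)=\partial_p\int y\,F(p)(\diff y)$, which holds by boundedness (\cref{asp:bounded_y}) and the absolute continuity established in Step 2.

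On the tail there is no constraint beyond support in $[y_{\min},y_{\max}]$. The pointwise bound $y\,\omega\geqslant y_{\min}\omega^+ - y_{\max}\omega^-$ holds and is attained by Dirac kernels.

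On a gap $G_k(x)$, any kernel $\kappa$ satisfying (a) induces the coupling $\gamma(\diff y,\diff u)=\kappa(\diff y\mid u)\,\diff u/|G_k|\in\Pi(\mu_{1,k\mid x},\mathrm{Unif}(G_k))$, and conversely every such coupling disintegrates into such a kernel. The cost $y\,\omega(x,u)$ is a product of $y$ and $\omega$. Pushing $\gamma$ forward under $(y,u)\mapsto(y,\omega(x,u))$ lands in $\Pi(\mu_{1,k\mid x},\mathrm{Law}(\omega(x,U)))$, so \cref{thm:1d_ot} gives the lower bound $\int_0^1Q_{Y,k\mid x}(t)Q_{\omega,k\mid x}(1-t)\,\diff t$.

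\textbf{Main obstacle: attainment on a gap.} The countermonotone coupling lives on $(y,\omega)$-space and must be realized as a kernel in $u$, even though $u\mapsto\omega(x,u)$ may be non-monotone or have atoms. My first attempt uses the randomized rank transform. With $V\sim\mathrm{Unif}$ independent of $U\sim\mathrm{Unif}(G_k)$, set $\tau=F_\omega(\omega(U)^-)+V\,\Delta F_\omega(\omega(U))$. Then $\tau\sim\mathrm{Unif}(0,1)$ and $\omega(U)=Q_{\omega,k\mid x}(\tau)$ a.s. Defining $Y:=Q_{Y,k\mid x}(1-\tau)$ gives a coupling of $(Y,U)$ with the right marginals whose cost equals the OT value. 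Its kernel $\kappa(\cdot\mid u)$ is the conditional law of $Y$ given $U=u$, which is explicit and measurable in $(x,u)$.

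For the upper bound, replace $1-\tau$ by $\tau$ and swap $y_{\min}$ and $y_{\max}$. Integrating the per-$x$ optimal value over $X$ then yields \eqref{eq:covariate_continuous_lower_bound}, and sharpness follows from \cref{prop:sharpness}.
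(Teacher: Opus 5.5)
Your proposal is correct and follows the same route as the paper's proof in \cref{sec:cov_proof}: disintegrate over $X$, difference the constraints to isolate the gap marginals $\mu_{1,k\mid x}$, pin down the kernel on each non-degenerate $L_k(x)$ by the LIV derivative, take pointwise extremes on the tail, and solve each gap with \cref{thm:1d_ot}. You also make rigorous two points the paper glosses: you push couplings forward to $(y,\omega(x,u))$-space and lift the countermonotone pair back to a kernel in $u$ via the randomized rank transform (the paper instead asserts attainment by ``the countermonotonic coupling'' of $\mu_{1,k\mid x}$ and $\mathrm{Unif}(G_k(x))$, which, read literally, need not be optimal when $\omega(x,\cdot)$ is non-monotone or non-increasing on the gap), and you replace the paper's appeal to measurable selection with an explicit measurable quantile construction.
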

The proof of \cref{thm:covariate_continuous_lower_bound} is given in \cref{sec:cov_proof}. As in the no-covariate case, \cref{thm:covariate_continuous_lower_bound} subsumes both the finite-discrete and the connected-continuous settings: when every conditional LIV interval is degenerate ($\underline{p}_k(X) = \overline{p}_k(X)$ a.s.), the LIV sum vanishes and the expression reduces to the discrete covariate bound; when $K_X = 1$ with $\underline{p}_1(X) < \overline{p}_1(X)$, the gap sum collapses to a single term on $(0, \underline{p}_1(X))$ and we recover the standard LIV covariate bound.

%% file: sections/extension.tex
\section{Extension to General Treatment} \label{sec:extension}

In this section, we generalize our partial identification framework to accommodate continuous treatments. We allow the treatment domain $\mathcal{W}$ to be a compact subset of $\mathbb{R}$. We impose the following generalized structural assumptions.

\begin{assumption}[Generalized Structural Assumptions] \label{asp:structure_generalized}
The potential outcomes and the treatment selection mechanism satisfy the following conditions:
\begin{enumerate}
    \item \textbf{Consistency:} $Y = Y(W)$.
    \item \textbf{Conditional Instrumental Exogeneity:} $Z \indep (U, \{Y(w)\}_{w \in \mathcal{W}}) \mid X$.
    \item \textbf{Selection Mechanism:} The treatment $W$ is selected via the structural equation:
    \begin{equation} \label{eq:selection_generalized}
        W = F_{W \mid Z,X}^{-1}(U \mid Z,X),
    \end{equation}
    where $U \mid X \sim \text{Unif}(0,1)$, and $F_{W \mid Z,X}^{-1}(\cdot \mid z,x)$ is the conditional quantile function of $W$ given $Z=z$ and $X=x$.
    \item \textbf{Common Instrument Support:} The conditional support of $Z$ given $X$ does not depend on $X$: $\mathrm{Supp}(Z \mid X=x) = \mathcal{Z}$ for all $x \in \mathcal{X}$.
\end{enumerate}
\end{assumption}
As in the binary case, the common instrument support condition ensures that the conditional distribution $F_{W \mid Z=z, X=x}$ is well-defined for every $z \in \mathcal{Z}$ and every $x \in \mathcal{X}$, which is necessary for the structural equation (\ref{eq:selection_generalized}) to be identified uniformly across covariates. If $W$ is a binary treatment, \cref{asp:structure_generalized} reduces to the threshold-crossing model utilized in our previous assumptions after the harmless reparameterization $U' = 1-U$.\footnote{Under the generalized ordered-choice convention, $W = F^{-1}_{W\mid Z,X}(U\mid Z,X)$ is increasing in $U$, whereas our earlier binary threshold-crossing convention writes treatment as $W=\indicator(U \le p(Z,X))$. Setting $U' = 1-U$ aligns the two models.} If $W$ is multi-valued, this model is equivalent to the discrete ordered choice model \citep{heckman2007econometric}. In the continuous treatment regime, this assumption provides the structural foundation for the control function approach in nonseparable models \citep{imbens2009identification, blundell2003endogeneity, chesher2003identification}. We discuss this connection in detail in \cref{subsec:triangle}.

In this model, $U$ can be interpreted as a latent rank that orders individuals' treatment intensity within each $(Z,X)$ cell. Under the ordered-choice convention in (\ref{eq:selection_generalized}), larger values of $U$ correspond to larger realizations of $W$.

Following the binary treatment case, we consider the following general target parameter. Let $\pi_{w,x}$ be the structural joint distribution of $(U, Y(w)) \mid X=x$, which can be seen as a probability kernel. For any identifiable weight function $\omega(w,x,u)$, we define:
\begin{equation}\label{eq:target_generalized}
    \theta_\omega = \int_{\mathcal{W}} \mathbb{E}_X \left[ \int_0^1 \mathbb{E}_{\pi_{w,X}}[Y(w) \mid U=u] \,\omega(w,X,u) \,\mathrm{d}u \right] \lambda(\mathrm{d}w),
\end{equation}
where $\lambda(\cdot)$ denotes the Lebesgue measure for continuous treatments and the counting measure for discrete treatments. This directly generalizes $\theta_\omega$ from (\ref{eq:target_mte}). To see this, consider the binary treatment case $\mathcal{W} = \{0,1\}$, where $\int_{\mathcal{W}} \cdot \,\mathrm{d}w$ reduces to the sum over $\{0,1\}$. Setting $\omega(1,x,u) = \omega(x,u)$ and $\omega(0,x,u) = -\omega(x,u)$, the expression (\ref{eq:target_generalized}) becomes:
\begin{equation*}
    \theta_\omega = \mathbb{E}_X \int_0^1 \Big(\mathbb{E}_{\pi_{1,X}}[Y(1) \mid U=u] - \mathbb{E}_{\pi_{0,X}}[Y(0) \mid U=u]\Big) \omega(X,u) \,\mathrm{d}u = \mathbb{E}_X \int_0^1 \text{MTE}(X,u)\,\omega(X,u)\,\mathrm{d}u,
\end{equation*}
recovering (\ref{eq:target_mte}). 

As an example, consider a counterfactual policy that alters two components while holding the covariate distribution fixed: (i) the treatment assignment mechanism changes from $F_{W \mid Z,X}$ to $\tilde{F}_{W \mid Z,X}$, and (ii) the instrument distribution shifts from $F_{Z \mid X}$ to $\tilde{F}_{\tilde{Z} \mid X}$. Under this new policy, the counterfactual treatment is $\tilde{W} = \tilde{F}_{W \mid Z,X}^{-1}(U \mid \tilde{Z},X)$, and the realized outcome is $\tilde{Y} = Y(\tilde{W})$. The policy effect $\theta = \mathbb{E}[\tilde{Y} - Y]$ can be expressed as $\theta_\omega$ with the specific weight:
\begin{equation}\label{eq:policy_weight}
    \omega(w,x,u)\,\mathrm{d}u = \mathrm{d}\tilde{F}_{U,\tilde{W} \mid X}(u,w \mid x) - \mathrm{d}F_{U,W \mid X}(u,w \mid x),
\end{equation}
where the right-hand side denotes the $w$-section of the signed measure on $\mathcal{W} \times [0,1]$. The policy weight (\ref{eq:policy_weight}) reduces to (\ref{eq:omega_form}) in the binary case.

The observed data constrains the structural kernel $\pi_{w,x}$. Note that the conditional distribution $\mathbb{P}_{\text{obs}}(\mathrm{d}y \mid W=w, Z=z, X=x)$ is only well-defined when $w$ belongs to the conditional support of $W$ given $(Z,X)$. We therefore define the effective instrument set:
\begin{equation*}
    \mathcal{Z}(x,w) \coloneqq \{z \in \mathcal{Z} : w \in \text{supp}(F_{W \mid Z=z, X=x})\},
\end{equation*}
and impose the observational constraints only for $z \in \mathcal{Z}(x,w)$. Specifically, for any $w \in \mathcal{W}$ and $x \in \mathcal{X}$, integrating the conditional distribution of the potential outcome over the latent variable $U$ must perfectly reconstruct the observed conditional outcome measure:
\begin{equation*}
    \int_0^1 \pi_{w,x}(\mathrm{d}y \mid U=u) \mathrm{d}F_{U \mid Z, X, W}(u \mid z, x, w) = \mathbb{P}_{\text{obs}}(\mathrm{d}y \mid W=w, Z=z, X=x), \quad \forall z \in \mathcal{Z}(x,w).
\end{equation*}
Crucially, because both the objective functional and the observational constraints are additively separable across the support of $X$ and $W$, the global optimization over all joint distributions decomposes into a family of independent CCOT sub-problems. Formally, this is equivalent to solving the following CCOT problem for $\lambda$-a.e.\ $w \in \mathcal{W}$ and $\mathbb{P}_{\mathrm{obs}}$-a.e.\ $x \in \mathcal{X}$:
\begin{equation}\label{eq:ot_generalized_separate}
    \begin{split}
        \max_{\pi_{w,x}} / \min_{\pi_{w,x}} \quad & \int_0^1 \mathbb{E}_{\pi_{w,x}}[Y(w) \mid U=u] \,\omega(w,x,u)\,\mathrm{d}u \\
\text{s.t.} \quad & \int_0^1 \pi_{w,x}(\mathrm{d}y \mid U=u) \mathrm{d}F_{U \mid Z, X, W}(u \mid z, x, w) \\&
\quad\quad\quad\quad\quad\quad= \mathbb{P}_{\text{obs}}(\mathrm{d}y \mid W=w, Z=z, X=x), \quad \forall z \in \mathcal{Z}(x,w), \\
& \pi_{w,x}(\mathrm{d}u) = \mathrm{d}u,\\
&\pi_{w,x} \text{ is supported on } \mathcal{Y} \times [0,1].
    \end{split}
\end{equation}
Let $\overline{\theta}(x,w)$ and $\underline{\theta}(x,w)$ denote the optimal values of the maximization and minimization in (\ref{eq:ot_generalized_separate}), respectively. The tight bounds for the global parameter $\theta_\omega$ are then obtained by aggregating over all $(x,w)$ strata:
\begin{equation}\label{eq:global_bounds_aggregation}
    \underline{\theta}_\omega = \int_{\mathcal{W}} \mathbb{E}_X\!\left[\underline{\theta}(X,w)\right] \lambda(\mathrm{d}w), \qquad \overline{\theta}_\omega = \int_{\mathcal{W}} \mathbb{E}_X\!\left[\overline{\theta}(X,w)\right] \lambda(\mathrm{d}w).
\end{equation}
To formally guarantee that (\ref{eq:ot_generalized_separate}) yields sharp bounds for the continuous treatment setting, we must establish that any family of measures satisfying the generalized observational and marginal constraints corresponds to a valid data-generating process. Let $\Gamma_{\text{gen}}(\mathbb{P}_{\text{obs}})$ denote the set of all measure families $\{\pi_{w,x}\}_{w \in \mathcal{W}, x \in \mathcal{X}}$ that satisfy the constraints in the CCOT problem above. The following proposition verifies this observational equivalence, extending the sharpness guarantee from the binary framework, \cref{prop:sharpness}.

\begin{proposition}[Generalized Observational Equivalence and Sharpness] \label{prop:sharpness_generalized}
    Suppose the observed distribution $\mathbb{P}_{\text{obs}}$ is generated by a true structural model satisfying \cref{asp:structure_generalized}. Then, for any candidate family of probability kernels $\{\tilde{\pi}_{w,x}\}_{w \in \mathcal{W}, x \in \mathcal{X}} \in \Gamma_{\text{gen}}(\mathbb{P}_{\text{obs}})$, there exists a probability space supporting a jointly measurable process $\{Y(w)\}_{w \in \mathcal{W}}$ together with random variables $(U, X, Z, W)$ such that:
    \begin{enumerate}
        \item[(i)] For $\lambda$-a.e.\ $w \in \mathcal{W}$ and $\mathbb{P}_{\text{obs}}$-a.e.\ $x \in \mathcal{X}$, the conditional distribution of $(Y(w), U) \mid X=x$ is exactly $\tilde{\pi}_{w,x}$. Moreover, conditional on $(U, X)$, the process $\{Y(w)\}_{w \in \mathcal{W}}$ is independent of $(Z, W)$ and has mutually independent coordinates.
        \item[(ii)] \cref{asp:structure_generalized} holds for $(\{Y(w)\}_{w \in \mathcal{W}}, U, X, Z, W)$.
        \item[(iii)] The induced distribution of the observable variables $(Z,X,W,Y)$ exactly matches the observed data distribution $\mathbb{P}_{\text{obs}}$.
    \end{enumerate}
\end{proposition}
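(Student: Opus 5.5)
We construct the probability space explicitly, in the order $X$, then $U$, then the potential-outcome process, then $Z$, then $W$. Draw $X \sim \mathbb{P}_{\text{obs}}(\mathrm{d}x)$. Given $X=x$, draw $U \sim \mathrm{Unif}(0,1)$. Given $(U,X)=(u,x)$, draw the process $\{Y(w)\}_{w\in\mathcal{W}}$ with independent coordinates, where $Y(w)\sim\tilde\pi_{w,x}(\cdot\mid U=u)$. Independently of $(U,\{Y(w)\})$ given $X$, draw $Z\sim\mathbb{P}_{\text{obs}}(\mathrm{d}z\mid X=x)$. Finally set $W:=F^{-1}_{W\mid Z,X}(U\mid Z,X)$, using the observed conditional quantile function, and set $Y:=Y(W)$.

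Properties (i) and (ii) then hold almost by construction. The marginal of $(Y(w),U)$ given $X=x$ is $\mathrm{d}u\,\tilde\pi_{w,x}(\mathrm{d}y\mid u)=\tilde\pi_{w,x}$, using the constraint $\tilde\pi_{w,x}(\mathrm{d}u)=\mathrm{d}u$. Conditional independence of the process from $(Z,W)$ given $(U,X)$ holds because $Z$ is drawn independently and $W$ is a function of $(U,Z,X)$. Consistency, exogeneity, the selection equation, $U\mid X\sim\mathrm{Unif}(0,1)$ and common support all follow directly. Common support is inherited from $\mathbb{P}_{\text{obs}}$ because the true model satisfies \cref{asp:structure_generalized}.

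For (iii), first note that $(Z,X,W)$ has the observed law. Since $U\mid Z,X\sim\mathrm{Unif}(0,1)$, the quantile transform gives $W\mid Z,X\sim F_{W\mid Z,X}$. Next, for $z\in\mathcal{Z}(x,w)$,
\[
\mathbb{P}(Y\in\mathrm{d}y\mid W=w,Z=z,X=x)=\int_0^1\tilde\pi_{w,x}(\mathrm{d}y\mid u)\,\mathrm{d}F_{U\mid Z,X,W}(u\mid z,x,w).
\]
This identity holds because, given $(U,Z,X)$, $W$ is determined while $Y(w)$ depends only on $(U,X)$, and because the conditional law of $U$ given $(Z,X,W)$ in our construction coincides with the one in the true model. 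The latter follows since both models generate $(U,Z,X,W)$ by the same recipe from the same primitives. By the CCOT constraint in (\ref{eq:ot_generalized_separate}), the right-hand side equals $\mathbb{P}_{\text{obs}}(\mathrm{d}y\mid W=w,Z=z,X=x)$. For $z\notin\mathcal{Z}(x,w)$, the point $w$ lies off the support, so such $(z,w)$ pairs carry no mass and can be ignored. Integrating over the law of $(Z,X,W)$ then reproduces $\mathbb{P}_{\text{obs}}$.

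The main obstacle is measure-theoretic. A continuum of independent coordinates $\{Y(w)\}$ built from kernels measurable in $(w,x,u)$ is not jointly measurable in $w$; this is the classical failure of product measures over uncountable index sets. My first attempt will be to exploit the fact that only $\lambda$-a.e.\ statements are needed. I would build the process as $Y(w)=G_{w,X}(U,\xi_w)$ with $G_{w,x}(u,\cdot)$ the conditional quantile function of $\tilde\pi_{w,x}(\cdot\mid u)$. For the driving noise, I would take either an exchangeable family realized on a rich product space, in the sense of Fubini extensions (Sun's exact law of large numbers framework), or, more simply, a single $\xi$ that makes $Y(w)$ jointly measurable. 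In the second case I would weaken ``mutually independent'' to holding on a Fubini extension. I would also check that choosing versions of the kernels a.e.\ does not affect (iii), since the constraint and the quantile coupling only involve $\lambda\otimes\mathbb{P}_X$-a.e.\ $(w,x)$.
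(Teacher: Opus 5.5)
Your construction and verification match the paper's. You draw $X$, then $Z\mid X$ and an independent $U\sim\mathrm{Unif}(0,1)$, set $W=F^{-1}_{W\mid Z,X}(U\mid Z,X)$, and draw the coordinates $Y(w)\sim\tilde\pi_{w,X}(\cdot\mid U)$ independently given $(U,X)$. You recover $W\mid Z,X$ by the quantile transform, and obtain $Y\mid W,Z,X$ by integrating the kernel against the law of $U$ given $(Z,X,W)$ before invoking the CCOT constraint. Your justification of that last step (the law of $(U,Z,X,W)$ is identical to the true model's) is cleaner than the paper's ``renormalized Lebesgue measure on $\mathcal{U}_w(z,x)$''. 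That description is undefined when the set is a singleton, which is the generic case for continuous $W$.

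\textbf{The measurability obstacle.} The obstacle you flag is real, and the paper does not resolve it. The paper builds the process by Kolmogorov's extension theorem and explicitly declines any path regularity. That gives only the product $\sigma$-algebra on $\mathcal{Y}^{\mathcal{W}}$, under which $\omega\mapsto Y(W(\omega))(\omega)$ need not be measurable when $W$ is continuous. So consistency and the computation in (iii) are not justified there.

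\textbf{The statement cannot hold literally.} A classical result going back to Doob (see Sun's work on Fubini extensions) says that a process measurable for the usual product $\sigma$-algebra with essentially pairwise independent coordinates has $\lambda$-almost every coordinate almost surely constant. So for uncountable $\mathcal{W}$, ``jointly measurable'' and ``mutually independent coordinates'' cannot both hold unless $\tilde\pi_{w,x}(\cdot\mid u)$ is a Dirac mass for almost every $(w,x,u)$. No proof can deliver the statement as written.

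\textbf{The right repair.} Your single-$\xi$ option is the correct fix, but you misdescribe it. With $Y(w)=G_{w,X}(U,\xi)$ the coordinates are not independent in any sense, on a Fubini extension or otherwise. You simply drop that clause. This costs nothing, since \cref{thm:tsem_bounds,thm:multivalued_bounds} use only the laws of $(Y(w),U)\mid X$. You do need a version of $(w,x,u)\mapsto\tilde\pi_{w,x}(\cdot\mid u)$ that is jointly measurable, so that $G$ is. The substitution step then becomes immediate: $\mathbb{P}(Y\in\cdot\mid U,Z,X)=\tilde\pi_{W,X}(\cdot\mid U)$, because $\xi$ is independent of $(U,Z,X)$.

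The Fubini-extension option is worse. It keeps only essential pairwise independence, and because the graph of $W$ is a $\lambda\boxtimes\mathbb{P}$-null set, it does not by itself control $Y(W)$. For finite $\mathcal{W}$ none of this arises.

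\textbf{Null sets in (iii).} Your claim that only $\lambda\otimes\mathbb{P}_X$-a.e.\ $(w,x)$ matter is too quick. Step (iii) needs the constraint for almost every $(w,x)$ under the law of $(W,X)$. That follows from a $\lambda$-a.e.\ statement only if the law of $W$ given $(Z,X)$ is dominated by $\lambda$. This holds for finite $\mathcal{W}$ and for absolutely continuous $W$. It fails in the mixed setting of \cref{rmk:mixed_distributions}, where atoms of $W$ are Lebesgue-null. The paper's proof has the same gap.
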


This formal CCOT formulation highlights that variation in $F_{U \mid Z=z, X=x, W=w}$ across different instruments $Z$ imposes multiple marginal constraints on the structural distribution $\pi_{w,x}$. In general, the solution depends on the union of these supports, 
$$\mathcal{U}_{\text{id}}(x,w) \coloneqq \bigcup_{z \in \mathcal{Z}(x,w)} \text{supp}(F_{U \mid Z=z, X=x, W=w}).$$ 
Note that each $\text{supp}(F_{U \mid Z=z, X=x, W=w})$ is either a single point or a closed interval in $[0,1]$. Throughout this section, we assume that $\mathcal{U}_{\text{id}}(x,w)$ is measurable for each $(x,w)$. This measurability condition holds, for instance, if $\mathcal{Z}(x,w)$ is compact and the map $z \mapsto F_{W \mid Z,X}(w \mid z,x)$ is continuous, in which case $\mathcal{U}_{\text{id}}(x,w)$ is a compact subset of $[0,1]$.

In the following subsections, we demonstrate how to obtain closed-form solutions under specific structural settings. The general case is much more complicated and can be seen as a mix of these settings, which we will leave for future work. 

\subsection{Strictly Monotonic Treatment Selection} \label{subsec:triangle}

\citet{imbens2009identification} considers the following assumption for the IV model.

\begin{assumption}[Strict Monotonicity] \label{asp:strict_monotonicity}
The structural treatment function $W = h(Z,X,U)$ is strictly increasing in the latent variable $U$ almost surely.
\end{assumption}

Under \cref{asp:strict_monotonicity}, it is observationally equivalent to the conditional quantile representation in (\ref{eq:selection_generalized}). Crucially, the unobserved confounder $U$ can be perfectly inverted and recovered from the observables as the conditional rank of the treatment, $U = F_{W \mid Z,X}(W \mid Z,X)$. This recovered latent rank serves directly as a control variable: conditioning on $U$ alongside the covariates $X$ effectively absorbs the endogeneity, allowing us to isolate the exogenous variation in $W$. This formal equivalence connects our generalized partial identification framework to classical control function approach \citep{blundell2003endogeneity, chesher2003identification}.

Because $U$ is perfectly recoverable, the conditional distribution $F_{U \mid Z=z, X=x, W=w}$ degenerates to a point mass at $u = F_{W \mid Z,X}(w \mid z, x)$. Consequently, the integral constraint in (\ref{eq:ot_generalized_separate}) simplifies to:
\begin{equation*}
    \pi_{w,x}(\mathrm{d}y \mid U = u) = \mathbb{P}_{\text{obs}}(\mathrm{d}y \mid W=w, Z=z_u, X=x), \quad \text{for } u \in \mathcal{U}_{\text{id}}(x,w),
\end{equation*}
where $z_u \in \mathcal{Z}(x,w)$ satisfies $F_{W \mid Z,X}(w \mid z_u, x) = u$. When multiple instrument values map to the same $u$, the observational constraints ensure that $\mathbb{P}_{\text{obs}}(\mathrm{d}y \mid W=w, Z=z_u, X=x)$ is identical for all such $z_u$, so the choice is immaterial. That is, the structural kernel $\pi_{w,x}$ is point-identified on the identifiable region $\mathcal{U}_{\text{id}}(x,w) = \bigcup_{z \in \mathcal{Z}(x,w)} \{F_{W\mid Z,X}(w \mid z,x)\}$. Therefore, if $u \in \mathcal{U}_{\text{id}}(x,w)$, we have:
\begin{equation}\label{eq:id_point_mass}
    \mathbb{E}_{\pi_{w,x}}[Y(w) \mid U=u] = \mathbb{E}_{\text{obs}}[Y \mid W=w, Z=z_u, X=x],
\end{equation}
where $z_u \in \mathcal{Z}$ is the specific baseline instrument realization that satisfies $w = F_{W \mid Z,X}^{-1}(u \mid z_u, x)$. 

Outside this identifiable support, there is no constraint, and the optimal solution trivially assigns probability mass to the global outcome bounds to minimize or maximize the objective. Substituting this into our separated CCOT formulation yields the closed-form bounds.

\begin{theorem}[Closed-Form Bounds under Strictly Monotonic Treatment Selection] \label{thm:tsem_bounds}
Suppose the generalized structural conditions in \cref{asp:structure_generalized}, the strict monotonicity in \cref{asp:strict_monotonicity}, and the outcome boundedness in \cref{asp:bounded_y} hold. For any identifiable weight $\omega(w,x,u)$, the sharp lower bound for $\theta_\omega$ is:
\begin{align*}
    \underline{\theta} &= \int_{\mathcal{W}} \mathbb{E}_{X} \Bigg[ \int_0^1 \indicator\Big(u \in \mathcal{U}_{\text{id}}(x,w)\Big) \mathbb{E}_{\text{obs}}[Y \mid w, Z=z_u, X] \,\omega(w,X,u)\,\mathrm{d}u \Bigg] \mathrm{d}w \\
    &\quad + \int_{\mathcal{W}} \mathbb{E}_{X} \Bigg[ \int_0^1 \indicator\Big(u \notin \mathcal{U}_{\text{id}}(x,w)\Big) \Big( y_{\min} \max\{0, \omega(w,X,u)\} + y_{\max} \min\{0, \omega(w,X,u)\} \Big) \mathrm{d}u \Bigg] \mathrm{d}w,
\end{align*}
where $z_u$ satisfies $F_{W \mid Z,X}(w \mid z_u, x) = u$. The sharp upper bound $\overline{\theta}$ is obtained symmetrically by swapping $y_{\min}$ and $y_{\max}$ in the trivial bound term.
\end{theorem}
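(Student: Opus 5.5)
The plan is to work stratum by stratum: first use the separable CCOT formulation (\ref{eq:ot_generalized_separate}) together with the aggregation identity (\ref{eq:global_bounds_aggregation}), and then lean on \cref{prop:sharpness_generalized} for sharpness. Concretely, I fix $\lambda$-a.e.\ $w$ and $\mathbb{P}_{obs}$-a.e.\ $x$ and compute the optimal value $\underline{\theta}(x,w)$ of the minimization in (\ref{eq:ot_generalized_separate}) in closed form. Once that value is known, integrating over $X$ and $w$ gives the stated expression, and sharpness is inherited. \cref{prop:sharpness_generalized} supplies this: every feasible family $\{\tilde\pi_{w,x}\}\in\Gamma_{\text{gen}}(\mathbb{P}_{obs})$ is generated by a structural model satisfying \cref{asp:structure_generalized} and reproducing $\mathbb{P}_{obs}$. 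Conversely, any true model's kernels are feasible. The identified set is therefore exactly the image of $\Gamma_{\text{gen}}$ under the objective, and its infimum and supremum are the sharp bounds.

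For the stratum problem, I first show that under \cref{asp:strict_monotonicity} the constraint set pins down $\pi_{w,x}(\cdot\mid u)$ on $\mathcal{U}_{\text{id}}(x,w)$ and leaves it free off that set. Strict monotonicity makes $u\mapsto F^{-1}_{W\mid Z,X}(u\mid z,x)$ injective, so $U=F_{W\mid Z,X}(W\mid Z,X)$. Hence $F_{U\mid Z=z,X=x,W=w}$ is the Dirac mass at $u_z:=F_{W\mid Z,X}(w\mid z,x)$. Each constraint then reads $\pi_{w,x}(\mathrm{d}y\mid U=u_z)=\mathbb{P}_{obs}(\mathrm{d}y\mid W=w,Z=z,X=x)$, for $z\in\mathcal{Z}(x,w)$. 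This is a pointwise constraint, whose consistency across different $z$ with a common $u_z$ is guaranteed because $\mathbb{P}_{obs}$ comes from a true model; this gives (\ref{eq:id_point_mass}). No constraint involves $u\notin\mathcal{U}_{\text{id}}(x,w)$, so the feasible set factorizes.

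The objective $\int_0^1\mathbb{E}_{\pi_{w,x}}[Y(w)\mid U=u]\,\omega(w,x,u)\,\mathrm{d}u$ then splits along the same lines. On $\mathcal{U}_{\text{id}}$, which is measurable by the standing assumption, it equals the identified integral. On the complement, it is minimized pointwise over probability measures on $\mathcal{Y}\subseteq[y_{\min},y_{\max}]$ (\cref{asp:bounded_y}). The minimizer puts a point mass at $y_{\min}$ where $\omega\ge0$ and at $y_{\max}$ where $\omega<0$, exactly as in (\ref{eq:ot_trivial}); the maximum is symmetric with $y_{\min}$ and $y_{\max}$ swapped.

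The main obstacle is showing that this pointwise optimizer is attained, or approached, by an admissible kernel. The case $\mathcal{U}_{\text{id}}$ of positive measure is worth noting, since there the identified part must coincide with a measurable choice $u\mapsto z_u$. I would handle measurability by defining the kernel as $\pi_{w,x}(\cdot\mid u)=\mathbb{P}_{obs}(\cdot\mid W=w,Z=z_u,X=x)$ on $\mathcal{U}_{\text{id}}$, choosing $z_u$ via a measurable selection theorem. By the consistency above, the resulting kernel does not depend on which $z_u$ is selected. Off $\mathcal{U}_{\text{id}}$ I would set the kernel to $\delta_{y_{\min}}\indicator(\omega\ge0)+\delta_{y_{\max}}\indicator(\omega<0)$, which is measurable since $\omega$ is. If $y_{\min}$ or $y_{\max}$ is not in $\mathcal{Y}$, I use near-optimal points of $\mathcal{Y}$, so that the bound is an infimum. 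The uniform marginal $\pi_{w,x}(\mathrm{d}u)=\mathrm{d}u$ holds by construction. This kernel is feasible, attains the lower value, and so proves the formula.
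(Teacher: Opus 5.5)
Your proposal is correct and follows the paper's proof essentially step for step: you reduce to $\Gamma_{\text{gen}}$ via \cref{prop:sharpness_generalized}, separate over the $(x,w)$ strata, use the point-mass collapse of $F_{U\mid Z,X,W}$ to pin the kernel on $\mathcal{U}_{\text{id}}(x,w)$, take the pointwise Dirac optimizer off that set, and assemble the optimum by measurable selection. The one slip is your fallback for $y_{\min}\notin\mathcal{Y}$: because $\mathcal{Y}$ is compact, hence closed, $\min\mathcal{Y}>y_{\min}$ and no feasible kernel can approach $y_{\min}$, so in that case the formula is a valid but generally non-sharp bound. Instead, as the paper implicitly does, you should simply take $y_{\min}$ and $y_{\max}$ to be the endpoints of $\mathcal{Y}$.
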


\subsection{Ordered Choice Model}\label{subsec:multi_treatment}
Next, let us consider the multi-valued treatment setting, which is known as the ordered choice model \citep{heckman2007econometric}. In this case, $\mathcal{W} $ is a finite set and $\text{supp}(F_{U \mid Z=z, X=x, W=w})$ is an interval for each $z$. Let us denote this interval as $I_{x,w}(z)$. Then, the first constraint of (\ref{eq:ot_generalized_separate}) simplifies to:
\begin{equation}\label{eq:contraints_multitreatment}
    \frac{1}{|I_{x,w}(z)|} \int_{I_{x,w}(z)} \pi_{w,x}(\mathrm{d}y \mid U=u) \mathrm{d}u = \mathbb{P}_{\text{obs}}(\mathrm{d}y \mid W=w, Z=z, X=x), \quad \forall z \in \mathcal{Z}(x,w).
\end{equation}
In particular, if $|I_{x,w}(z)|=0$, the interval $I_{x,w}(z)$ degenerates to a single point $\{u_0\}$ and the constraint reduces to $\pi_{w,x}(\mathrm{d}y \mid U=u_0) = \mathbb{P}_{\text{obs}}(\mathrm{d}y \mid W=w, Z=z, X=x)$, i.e., the conditional distribution is point-identified at $u_0$. By convention, the left-hand side of (\ref{eq:contraints_multitreatment}) is interpreted as $\pi_{w,x}(\mathrm{d}y \mid U=u_0)$ in this degenerate case.

The main difficulty of (\ref{eq:contraints_multitreatment}) is that the intervals in $\{I_{x,w}(z)\}_{z \in \mathcal{Z}(x,w)}$ may overlap with each other, making it hard to disentangle the constraints like in the binary treatment setting. Surprisingly, we find that if $\{I_{x,w}(z)\}_{z \in \mathcal{Z}(x,w)}$ forms a $\pi$-system, closed-form solutions still exist.

\begin{assumption}[$\pi$-System] \label{asp:algebra}
Assume that the collection of intervals $\{I_{x,w}(z)\}_{z \in \mathcal{Z}(x,w)}$ forms a $\pi$-system for all $x \in \mathcal{X}$ and $w \in \mathcal{W}$. That is, for any $z, z' \in \mathcal{Z}(x,w)$, the intersection is also in the family: $I_{x,w}(z) \cap I_{x,w}(z') \in \{I_{x,w}(z'')\}_{z'' \in \mathcal{Z}(x,w)} \cup \{\emptyset\}$.
\end{assumption}

In particular, if the $\{I_{x,w}(z)\}_{z \in \mathcal{Z}(x,w)}$ share the same start point or end point, they naturally satisfy this algebraic property, which is exactly the case in the binary treatment setting. More generally, \cref{asp:algebra} endows $\{I_{x,w}(z)\}$ with a hierarchical Directed Acyclic Graph (DAG) ordering under strict inclusion---$z$ is an ancestor of $z'$ whenever $I_{x,w}(z') \subsetneq I_{x,w}(z)$---which, as we show below, is exactly the structure needed to disentangle the overlapping constraints.

For each $z \in \mathcal{Z}(x,w)$, let $\text{children}(z)$ and $\text{Dec}(z)$ denote its direct children and all descendants in this DAG. Define the disjoint \emph{isolated sub-region}
\begin{equation*}
    J_{x,w}(z) = I_{x,w}(z) \setminus \bigcup_{z' \in \text{children}(z)} I_{x,w}(z'),
\end{equation*}
and, recursively from the leaves upward, the \emph{isolated measure}
\begin{equation}\label{eq:mu_isolated}
    \mu_{w,z,x}(\mathrm{d}y) = \frac{1}{|J_{x,w}(z)|} \Bigg( |I_{x,w}(z)| \mathbb{P}_{\text{obs}}(\mathrm{d}y \mid W=w, Z=z, X=x) - \sum_{z' \in \text{Dec}(z)} |J_{x,w}(z')| \mu_{w,z',x}(\mathrm{d}y) \Bigg).
\end{equation}
The measure $\mu_{w,z,x}$ is the general analogue of $\mu_{1,i}$ from the binary case: it isolates the conditional distribution of $Y(w)$ on the disjoint slice $J_{x,w}(z)$. As in the binary setting, $\mu_{w,z,x}$ is a valid (nonnegative) probability measure under correct model specification, and its nonnegativity serves as a testable implication of the structural model. We also let $J_{x,w}(\emptyset) = [0,1] \setminus \bigcup_{z \in \mathcal{Z}(x,w)} I_{x,w}(z)$ denote the unconstrained domain.

In \cref{sec:proofs_extension}, we show that under \cref{asp:algebra} the original constraint system (\ref{eq:contraints_multitreatment}) is equivalent to the {disentangled} marginal constraints
\begin{equation}\label{eq:simplified_marginal_constraint}
    \frac{1}{|J_{x,w}(z)|} \int_{J_{x,w}(z)} \pi_{w,x}(\mathrm{d}y \mid U=u) \mathrm{d}u = \mu_{w,z,x}(\mathrm{d}y), \quad \forall z \in \mathcal{Z}(x,w),
\end{equation}
each acting on a single disjoint region $J_{x,w}(z)$. Because the $\{J_{x,w}(z)\}_{z \in \mathcal{Z}(x,w)}$ partition the identified support $\mathcal{U}_{\text{id}}(x,w)$ and the objective in (\ref{eq:ot_generalized_separate}) is additively separable across these regions, the global problem decomposes into independent 1D optimal transport sub-problems with product cost---one per $J_{x,w}(z)$, pairing $\mu_{w,z,x}$ with $\mathrm{Unif}(J_{x,w}(z))$---plus a pointwise trivial bound on the unconstrained domain $J_{x,w}(\emptyset)$. Applying \cref{thm:1d_ot} to each sub-problem yields the following closed-form result.

\begin{theorem}[Closed-Form Bounds for Multi-Valued Treatment] \label{thm:multivalued_bounds}
Suppose \cref{asp:structure_generalized}, \cref{asp:algebra}, and \cref{asp:bounded_y} hold, $\mathcal{Z}$ is a discrete set, $\mathcal{W}$ is finite, and for every $(x,w)$ and $z \in \mathcal{Z}(x,w)$ the support $\mathrm{supp}(F_{U \mid Z=z, X=x, W=w})$ is an interval $I_{x,w}(z) \subseteq [0,1]$. Let $Q_{Y, J_{x,w}(z)}(t)$ be the quantile function of the isolated measure $\mu_{w,z,x}$, and let $Q_{\omega, J_{x,w}(z)}(t)$ be the quantile function of $\omega(w,x,U)$ for $U \sim \text{Unif}(J_{x,w}(z))$. 

For a given $x \in \mathcal{X}$ and $w \in \mathcal{W}$, let $J_{x,w}(\emptyset) = [0,1] \setminus \bigcup_{z \in \mathcal{Z}(x,w)} I_{x,w}(z)$ be the unconstrained domain. The sharp lower bound $\underline{\theta}(x,w)$ is given by:
\begin{align*}
    \underline{\theta}(x,w) &= \sum_{z \in \mathcal{Z}(x,w)} |J_{x,w}(z)| \int_0^1 Q_{Y, J_{x,w}(z)}(t) Q_{\omega, J_{x,w}(z)}(1-t) \, \mathrm{d}t \\
    &\quad + \int_{J_{x,w}(\emptyset)} \Big( y_{\min} \max\{0, \omega(w,x,u)\} + y_{\max} \min\{0, \omega(w,x,u)\} \Big) \mathrm{d}u.
\end{align*}
The global sharp lower bound is $\underline{\theta}_\omega = \sum_{w \in \mathcal{W}} \mathbb{E}_{X}[\underline{\theta}(X,w)]$. The sharp upper bound $\overline{\theta}$ is obtained symmetrically by taking the comonotonic coupling integral $\int_0^1 Q_{Y, J_{x,w}(z)}(t) Q_{\omega, J_{x,w}(z)}(t) \mathrm{d}t$ and swapping $y_{\min}$ and $y_{\max}$ in the trivial bound term.
\end{theorem}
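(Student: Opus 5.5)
The proof works stratum by stratum. By \cref{prop:sharpness_generalized} and the aggregation in (\ref{eq:global_bounds_aggregation}), the sharp bound on $\theta_\omega$ is $\sum_{w}\mathbb{E}_X[\underline{\theta}(X,w)]$, where $\underline{\theta}(x,w)$ is the optimal value of the separated CCOT problem (\ref{eq:ot_generalized_separate}). The sum over $w$ appears because $\mathcal{W}$ is finite. So it suffices to fix $(x,w)$ and compute the optimal value of (\ref{eq:ot_generalized_separate}) under the constraint form (\ref{eq:contraints_multitreatment}). The plan has three steps:
\begin{enumerate}
\item show that (\ref{eq:contraints_multitreatment}) is equivalent to the disentangled system (\ref{eq:simplified_marginal_constraint});
\item observe that the problem then splits into independent 1D OT problems plus a pointwise problem;
\item apply \cref{thm:1d_ot} and the trivial pointwise bound.
\end{enumerate}

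For the equivalence (step 1), I would use the $\pi$-system structure of \cref{asp:algebra}. The first sub-step is to show that the sets $J_{x,w}(z)$ are pairwise disjoint. Two intervals in the family are either nested or intersect in another member of the family. If $I(z)\cap I(z')=I(z'')$ with $z''\neq z,z'$, then $I(z'')$ lies below both $z$ and $z'$ in the DAG, so it is removed from both $J$'s. If instead one interval contains the other, the smaller one lies under some child of the larger and is again removed.

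The second sub-step is to show that $I_{x,w}(z)=\bigsqcup_{z'\in\{z\}\cup\mathrm{Dec}(z)}J_{x,w}(z')$ up to null sets, for every $z$. I would prove this by induction from the leaves upward. Because $\mathcal{Z}$ is discrete and the intervals lie in $[0,1]$, I would assume the DAG is well-founded; if the family is infinite, I would argue on finite subfamilies and pass to a limit.

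Given this partition, multiplying (\ref{eq:contraints_multitreatment}) by $|I(z)|$ gives
\[
\int_{I(z)}\pi_{w,x}(\mathrm{d}y\mid u)\,\mathrm{d}u=\sum_{z'\in\{z\}\cup \mathrm{Dec}(z)}\int_{J(z')}\pi_{w,x}(\mathrm{d}y\mid u)\,\mathrm{d}u.
\]
Induction on depth then shows that the constraints for $z$ and all its descendants hold if and only if $\int_{J(z')}\pi=|J(z')|\mu_{w,z',x}$ for $z'$ in that set, which is exactly the recursion (\ref{eq:mu_isolated}). This change of constraints is triangular and invertible, so the two systems are equivalent. Nonnegativity of $\mu_{w,z,x}$ follows because the true model is feasible: $\mu_{w,z,x}$ is the average of $\pi_{w,x}(\cdot\mid u)$ over $J(z)$.

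Degenerate intervals need a convention. When $|I(z)|=0$, the set $J(z)$ is a point or empty, it contributes a null set to the objective, and it drops out. In that case I would interpret $|J(z)|\,\mu$ via the pointwise constraint.

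For steps 2 and 3, the sets $J(z)$, $z\in\mathcal{Z}(x,w)$, together with $J(\emptyset)$ partition $[0,1]$, and the objective $\int_0^1\int y\,\omega(w,x,u)\,\pi_{w,x}(\mathrm{d}y\mid u)\,\mathrm{d}u$ splits additively over this partition. Each constraint in (\ref{eq:simplified_marginal_constraint}) involves the kernel only on its own $J(z)$, so the problem factorizes.

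On each $J(z)$ with $|J(z)|>0$, the feasible joint laws of $(Y,U)$ given $U\in J(z)$ are exactly $\Pi(\mu_{w,z,x},\mathrm{Unif}(J(z)))$. Conversely, disintegrating any coupling gives a feasible kernel, supported in $\mathcal{Y}$ because $\mu_{w,z,x}$ is. Pushing $U$ forward through $\omega(w,x,\cdot)$ turns the cost into the product $y\cdot v$. A coupling of $\mu$ with the law of $\omega(U)$ lifts back to a coupling with $U$: take the conditional law of $U$ given $\omega(U)$. \cref{thm:1d_ot} then gives the value $|J(z)|\int_0^1 Q_{Y,J(z)}(t)Q_{\omega,J(z)}(1-t)\,\mathrm{d}t$, and the comonotone version for the upper bound.

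On $J(\emptyset)$ there is no constraint, so the minimization is pointwise over $\mathcal{Y}\subseteq[y_{\min},y_{\max}]$. It yields $y_{\min}\omega_+ + y_{\max}\min\{0,\omega\}$, using measurable point masses as the kernel. Summing the pieces and invoking \cref{prop:sharpness_generalized} for attainability gives sharpness.

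The main obstacle is step 1: establishing rigorously that the $J$'s are disjoint and tile each $I(z)$ under only the $\pi$-system property, with possibly infinitely many intervals and degenerate members. I would first settle the finite case by induction on the Hasse diagram, then handle the general discrete $\mathcal{Z}$ by approximating with finite subfamilies closed under intersection.
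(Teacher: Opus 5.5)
Your proposal is correct and follows the paper's proof. Both reduce to a fixed $(x,w)$ stratum, disentangle \eqref{eq:contraints_multitreatment} into \eqref{eq:simplified_marginal_constraint} by induction over the inclusion DAG, and then apply \cref{thm:1d_ot} on each $J_{x,w}(z)$ and the pointwise bound on $J_{x,w}(\emptyset)$. Your tiling lemma $I_{x,w}(z)=\bigsqcup_{z'\in\{z\}\cup\mathrm{Dec}(z)}J_{x,w}(z')$ is the right justification for the paper's inductive step, which writes $\int_{I(z)}\pi=\int_{J(z)}\pi+\sum_{z'\in\mathrm{children}(z)}\int_{I(z')}\pi$ and so implicitly assumes the children of $z$ are disjoint. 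One thing to add: distinct instruments with identical intervals must be merged, since otherwise $J_{x,w}(z)=J_{x,w}(z')$ and both your disjointness claim and the sum over $z$ in the statement double count.
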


\begin{remark}[Mixed Continuous-Discrete Treatments] \label{rmk:mixed_distributions}
    In general, the solution to the global CCOT problem in (\ref{eq:ot_generalized_separate}) is a hybrid of the results in \cref{subsec:triangle} and \cref{subsec:multi_treatment}. The exact nature of the subproblem for a given stratum $(x,w)$ depends directly on the localized behavior of the treatment distribution.

    If the CDF $F_{W\mid Z,X}(w)$ is continuous at the evaluation point $w$, the structural quantile function is strictly increasing, collapsing the conditional distribution $F_{U \mid Z=z, X=x, W=w}$ to a point mass. In these regions, the subproblem is perfectly constrained, and the potential outcome expectation is point-identified as in (\ref{eq:id_point_mass}).  Conversely, if the data exhibit a discrete probability mass at $w$, the latent variable $U$ maps to an interval. Over these discrete mass points, the subproblem natively transitions into the localized 1D optimal transport problem described in \cref{subsec:multi_treatment}. Thus, our unified CCOT framework naturally accommodates complex empirical settings with mixed continuous-discrete treatments.
\end{remark}

%% file: sections/est_inf.tex
In this section, we develop estimation and inference results for the closed-form bounds derived in \cref{sec:pid}. For the discrete instrument setting, we leverage DML \citep{chernozhukov2018double} to accommodate high-dimensional covariates, constructing Neyman-orthogonal scores that yield $\sqrt{n}$-consistent and asymptotically normal estimators. In the continuous instrument setting, we characterize the corresponding nonparametric convergence rates.

We focus our exposition on the bounds for the PRTE, as the derivation for other causal quantities follows similarly. For simplicity, we first develop theory for the unscaled numerator $\mathbb{E}[Y^{q}-Y]$. In the empirical designs considered below, the PRTE denominator is the known policy-shift size $\alpha > 0$, so the reported bounds and confidence intervals for $\text{PRTE}_\alpha$ are obtained by dividing the corresponding numerator bounds by $\alpha$. More generally, when $\mathbb{E}[q(Z,X)-p(Z,X)]$ is not fixed by design, it is point-identified and can be estimated separately. Thus, our target weight function simplifies to:
\begin{align*}
    \omega(x,u) \equiv \mathbb{E}_{\text{obs}}[\indicator(u \leqslant  q(Z,X)) -  \indicator(u \leqslant  p(Z,X)) \mid X=x].
\end{align*}
We model the alternative policy as $q(z,x) = \phi(z,x,p(z,x))$ for a known function $\phi$. Common examples for $\phi$ include uniform propensity shifts $\phi(z,x,p) = p + \alpha$ or $p(1+\alpha)$ for a constant $\alpha$, or setting $\phi(z,x,p) = r(z,x)$ to evaluate a specific alternative targeting policy $r$.

\subsection{Discrete Instrument Setting}

For a fixed $x$, the weight function $\omega(x,u)$ is a step function that is constant and monotonic within each latent interval $u \in (p_i(x), p_{i+1}(x)]$. This piecewise structure allows us to decompose the closed-form lower bound integral into a finite weighted sum over the instrument level sets.

Define $q_{j,i}(x)$ as the $j$-th smallest value of the alternative policy propensity $q(z,x)$ that falls strictly inside the baseline interval $[p_i(x), p_{i+1}(x))$, such that:
\begin{equation*}
    p_i(x) \leqslant q_{1,i}(x) < \dots < q_{l_i, i}(x) <  p_{i+1}(x),
\end{equation*}
with the convention $ q_{0,i}(x) = p_i(x)$. Let the baseline and alternative instrument level sets be:
\begin{align*}
    S_k = \{z \in \mathcal{Z} : p(z,x) = p_k(x)\}, \quad T_{j,k} = \{z \in \mathcal{Z} : q(z,x) = q_{j,k}(x)\}.
\end{align*}

Before presenting the rewriting of the target parameter, we impose two regularity conditions on the alternative policy and the propensity scores.

\begin{assumption}[Smoothness of $\phi$ and Propensity Score]\label{asp:phi_smooth}
    (i) $\phi(z,x,p)$ is differentiable with respect to $p$, and $\frac{\partial \phi}{\partial p}$ is bounded almost surely for all $z \in \mathcal{Z}$ and $x \in \mathcal{X}$. (ii) For each $z \in \mathcal{Z}$, the conditional propensity score $p(z,\cdot)$ is continuous on $\mathcal{X}$.
\end{assumption}
Part~(i) will be used in the pathwise derivation of the orthogonal score and in the asymptotic analysis. Because $q(z,x) = \phi(z,x,p(z,x))$, part~(ii) combined with (i) also implies that $q(z,\cdot)$ is continuous on $\mathcal{X}$ for each $z \in \mathcal{Z}$.

To ensure the level sets can be consistently estimated from the data without asymptotic ambiguity, we further impose the following gap assumption on the propensity scores.

\begin{assumption}[Propensity Score Gap]\label{asp:prop_gap}
    There exists a constant $c_{\text{gap}} > 0$ such that for all $x \in \mathcal{X}$ and for all $v, v' \in R_x = \{p(z,x), q(z,x)\}_{z \in \mathcal{Z}}$, either $v' = v$ or $|v - v'| > c_{\text{gap}}$.
\end{assumption}

Fix any two pairs $(z,\tau)$ and $(z',\tau')$ with $\tau, \tau' \in \{p,q\}$. The map $x \mapsto \tau(z,x) - \tau'(z',x)$ is continuous on $\mathcal{X}$ by \cref{asp:phi_smooth}, while \cref{asp:prop_gap} requires its value at every $x$ to be either exactly zero or of absolute value strictly greater than $c_{\text{gap}}$. A continuous function whose range avoids the punctured neighborhood $(-c_{\text{gap}}, c_{\text{gap}}) \setminus \{0\}$ cannot change sign, so the relative ordering of the elements of $R_x$ is invariant across $\mathcal{X}$. We therefore suppress the dependence on $x$ in $S_k$, $T_{j,k}$, $K$, and $l_k$ from now on.

The bound in \cref{thm:discrete_lower_bound} admits a representation as an expectation of a finite weighted sum over instrument level sets and their sub-intervals. Before stating this representation, we introduce the nuisance components that carry the observable content of the bound.

\paragraph{Nuisance components.} For each interval index $k \in \{0,\dots,K\}$ and sub-interval index $j \in \{0,\dots,l_k\}$, define:
\begin{itemize}
    \item[--] \emph{Instrument-level-set weights.}
    \begin{align*}
        \gamma_{\text{full},k}(x) &\coloneqq \mathbb{P}_{\text{obs}}\!\left(Z \in \bigcup_{i=k+1}^K \bigcup_{l=0}^{l_i} T_{l,i} \mathrel{\Big|} X=x\right) - \mathbb{P}_{\text{obs}}\!\left(Z \in \bigcup_{i=k+1}^K S_i \mathrel{\Big|} X=x\right), \\
        \gamma_{j,k}(x) &\coloneqq \mathbb{P}_{\text{obs}}\!\left(Z \in \bigcup_{l=j}^{l_k} T_{l,k} \mathrel{\Big|} X=x\right), \qquad \gamma_K(x) \coloneqq \mathbb{P}_{\text{obs}}\!\left(Z \in \bigcup_{j=1}^{l_K} T_{j,K} \mathrel{\Big|} X=x\right).
    \end{align*}
    These are the aggregate masses that the step weight $\omega$ assigns to instruments; each is estimable by a standard regression of an instrument-set indicator on $X$.
    \item[--] \emph{Scaled complier mean and sub-interval quantile integral.} With the relative threshold $\kappa_{j,k}(x) \coloneqq (q_{j,k}(x) - p_k(x))/(p_{k+1}(x) - p_k(x)) \in [0,1]$,
    \begin{equation*}
        J_{\text{full},k}(x) \coloneqq (p_{k+1}(x)-p_k(x))\,\mathbb{E}_{\mu_{1,k\mid x}}[Y], \qquad J_{j,k}(x) \coloneqq (p_{k+1}(x)-p_k(x))\int_{\kappa_{j,k}(x)}^{\kappa_{j+1,k}(x)} Q_{Y,k\mid x}(u)\,\mathrm{d}u,
    \end{equation*}
    where $\mu_{1,k\mid x}$ is the identified complier distribution of \cref{thm:discrete_lower_bound} and $Q_{Y,k\mid x}$ is its quantile function.
    \item[--] \emph{Trivial-bound boundary term.}
    \begin{equation*}
        \Delta_K(x) \coloneqq y_{\min}\, \mathbb{E}_{\text{obs}}\!\left[ \indicator\!\Big(Z \in \bigcup_{j=1}^{l_K} T_{j,K}\Big) (q(Z,X) - p_K(X)) \mathrel{\Big|} X=x \right].
    \end{equation*}
\end{itemize}

\begin{proposition}[Discrete-Instrument Reduction to Nuisance Functional]\label{prop:discrete_reduction}
    Suppose \cref{asp:structure}, \cref{asp:bounded_y}, \cref{asp:phi_smooth}, and \cref{asp:prop_gap} hold and $\mathcal{Z}$ is finite. Then the sharp lower bound in \cref{thm:continuous_lower_bound} admits the representation
    \begin{align}\label{eq:target_discrete_iv}
        \underline{\theta}_{\omega,1} = \mathbb{E}\Bigg[\sum_{k=0}^{K-1}\bigg(\gamma_{\text{full},k}(X)\, J_{\text{full},k}(X) + \sum_{j=1}^{l_k} \gamma_{j,k}(X)\, J_{j-1,k}(X)\bigg) + \Delta_K(X) \Bigg].
    \end{align}
    The sharp upper bound admits an analogous representation, obtained by replacing $Q_{Y,k\mid x}(u)$ by $Q_{Y,k\mid x}(1-u)$ inside $J_{j,k}$ and swapping $y_{\min}$ with $y_{\max}$ in $\Delta_K$.
\end{proposition}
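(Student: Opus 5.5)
The plan is to start from the covariate version of the bound, \cref{thm:covariate_continuous_lower_bound}, specialized to a finite instrument. Every LIV interval is then degenerate, the LIV sum vanishes, and conditional on $X=x$ the bound is the gap OT sum over $k=0,\dots,K-1$ plus the tail integral over $(p_K(x),1]$. By \cref{asp:prop_gap} and the continuity argument following it, the ordering of $R_x$ is constant in $x$. Hence $K$, $l_k$, $S_k$, $T_{j,k}$ do not depend on $x$, and we may work pointwise in $x$ and take $\mathbb{E}_X$ at the end.

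\textbf{Step 1: the weight on each gap.} Fix $x$ and a gap $I_k=(p_k,p_{k+1}]$. Write
\[
\omega(x,u)=\mathbb{P}(q(Z,x)\geqslant u\mid x)-\mathbb{P}(p(Z,x)\geqslant u\mid x).
\]
For $u\in I_k$, the second term equals $\mathbb{P}(Z\in\bigcup_{i>k}S_i\mid x)$, which is constant in $u$. The first term is a left-continuous nonincreasing step function in $u$, with downward jumps only at the points $q_{j,k}$ lying inside $I_k$. Concretely, on the sub-interval $(q_{j-1,k},q_{j,k}]$ it equals the mass of all $T$-sets at or above $q_{j,k}$: this is the contribution from higher intervals plus $\gamma_{j,k}(x)$.

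We then split $\omega$ on $I_k$ as
\[
\omega(x,u)=\gamma_{\mathrm{full},k}(x)+\sum_{j=1}^{l_k}\gamma_{j,k}(x)\,\indicator(u\leqslant q_{j,k}).
\]
The constant part $\gamma_{\mathrm{full},k}$ collects the $T$-mass strictly above interval $k$ minus the $S$-mass above. This identification needs careful bookkeeping of the endpoint conventions: $q_{j,k}\in[p_k,p_{k+1})$, and the handling of ties $q=p_k$ through the $j=0$ index.

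\textbf{Step 2: the quantile integral on each gap.} Under $U\sim\mathrm{Unif}(I_k)$, $\omega(x,U)$ is a nonincreasing function of $U$. Its quantile function therefore satisfies $Q_{\omega,k}(1-t)=\omega(x,p_k+t(p_{k+1}-p_k))$ for a.e.\ $t$, so the countermonotone pairing of \cref{thm:1d_ot} pairs $Q_{Y,k}(t)$ with $\omega$ evaluated at relative position $t$. We substitute the step decomposition from Step 1:
\begin{itemize}
\item The constant part integrates $Q_{Y,k}$ over $[0,1]$, giving $(p_{k+1}-p_k)\mathbb{E}_{\mu_{1,k\mid x}}[Y]\,\gamma_{\mathrm{full},k}=\gamma_{\mathrm{full},k}J_{\mathrm{full},k}$.
\item Each indicator $\indicator(t\leqslant\kappa_{j,k})$ integrates $Q_{Y,k}$ over $[0,\kappa_{j,k}]$. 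Regrouping these telescoping pieces as $\int_0^{\kappa_{j,k}}=\sum_{j'<j}\int_{\kappa_{j',k}}^{\kappa_{j'+1,k}}$ and using the convention $\kappa_{0,k}=0$ gives a sum of $\gamma$-weighted $J_{j-1,k}$ terms. An Abel summation reorganizes it into the stated form $\sum_{j}\gamma_{j,k}J_{j-1,k}$, using that $\gamma_{j,k}$ is defined as a tail mass over $l\geqslant j$.
\end{itemize}
I expect this index matching between the tail masses $\gamma_{j,k}$ and the pieces $J_{j-1,k}$ to be the main obstacle. My first move is to verify it on a single gap with $l_k=1$ and $l_k=2$ before writing the general telescoping.

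\textbf{Step 3: the tail.} On $(p_K,1]$, no $p$-value lies above $u$, so $\omega(x,u)=\mathbb{P}(q(Z,x)\geqslant u\mid x)\geqslant 0$. The tail term is therefore $y_{\min}\int_{p_K}^1\omega\,\mathrm{d}u$. By Fubini this equals
\[
y_{\min}\,\mathbb{E}\big[(q(Z,x)-p_K)_+\mid x\big],
\]
and the positive part is nonzero exactly when $Z\in\bigcup_{j\geqslant1}T_{j,K}$. This is $\Delta_K(x)$.

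\textbf{Upper bound.} We swap to the comonotone coupling, which replaces $t$ by $1-t$ in the pairing; equivalently, $Q_{Y,k}(u)\mapsto Q_{Y,k}(1-u)$ inside $J_{j,k}$. For the tail, we use $y_{\max}$ in place of $y_{\min}$.
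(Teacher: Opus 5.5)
Your overall route is the same as the paper's. You specialize the covariate bound to degenerate LIV intervals, observe that $\omega(x,\cdot)$ is a nonincreasing step function on each $(p_k,p_{k+1}]$, pair it with $Q_{Y,k\mid x}$ via the countermonotone coupling, and treat the tail by Fubini. Your identity $Q_{\omega,k}(1-t)=\omega(x,p_k+t(p_{k+1}-p_k))$ for a.e.\ $t$ is exactly what is needed, and the tail computation and upper-bound modification are correct.

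The gap is the displayed decomposition in Step~1, which is false as soon as $l_k\geqslant 2$. Because $\gamma_{j,k}(x)=\mathbb{P}(Z\in\bigcup_{l=j}^{l_k}T_{l,k}\mid X=x)$ is already a cumulative tail mass, for $u\in(q_{j-1,k},q_{j,k}]$ your right-hand side equals $\gamma_{\mathrm{full},k}+\sum_{j'\geqslant j}\gamma_{j',k}$. The correct value, which you state in words just before the display, is $\gamma_{\mathrm{full},k}+\gamma_{j,k}$; your display counts the masses of $T_{j',k}$, $j'>j$, repeatedly.

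The error propagates into Step~2. Integrating the half-line indicators and regrouping puts the coefficient $\sum_{j'\geqslant j}\gamma_{j',k}$ in front of $J_{j-1,k}$. No Abel summation can turn this into $\gamma_{j,k}$, since the two expressions differ by $\sum_{j=1}^{l_k}\big(\sum_{j'>j}\gamma_{j',k}\big)J_{j-1,k}$, which is nonzero in general. Your planned check with $l_k=1$ would not detect this, because there $\gamma_{1,k}=\mathbb{P}(Z\in T_{1,k}\mid X=x)$; the check with $l_k=2$ would.

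The repair is a change of coefficients or of indicators. One option is to keep half-line indicators but with point masses:
\[
\omega(x,u)=\gamma_{\mathrm{full},k}(x)+\sum_{j=1}^{l_k}\mathbb{P}(Z\in T_{j,k}\mid X=x)\,\indicator(u\leqslant q_{j,k}),\qquad u\in(p_k,p_{k+1}].
\]
Then your regrouping produces $\sum_{j'\geqslant j}\mathbb{P}(Z\in T_{j',k}\mid X=x)=\gamma_{j,k}(x)$ in front of $J_{j-1,k}$, which is the tail-mass identity you intended to exploit. The other option, which is what the paper does, is to use sub-interval indicators, $\omega=\gamma_{\mathrm{full},k}+\sum_{j=1}^{l_k}\gamma_{j,k}\,\indicator(q_{j-1,k}<u\leqslant q_{j,k})$. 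After rescaling, each piece integrates $Q_{Y,k\mid x}$ over $[\kappa_{j-1,k},\kappa_{j,k}]$ and yields $\gamma_{j,k}J_{j-1,k}$ directly, with no telescoping needed.
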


\cref{prop:discrete_reduction} is the key analytical step that converts the optimal-transport integral in \cref{thm:discrete_lower_bound} into a Neyman-orthogonalizable functional of standard nuisance quantities---conditional probabilities, conditional means, and conditional quantile integrals. The proof, given in \cref{app:closed_form_rewriting}, exploits the piecewise-constant, monotone structure of $\omega(X,\cdot)$ on each baseline interval and identifies the step-function value attached to each sub-interval $(q_{j,k}(X), q_{j+1,k}(X))$.

The sub-interval term $J_{j,k}$ still involves the conditional quantile function $Q_{Y,k\mid X}$, which is delicate to estimate directly. The next proposition replaces it with conditional expectations in the two outcome regimes of interest.

\begin{proposition}[Specialization of the Sub-Interval Contribution]\label{prop:J_specialization}
    Fix $k \in \{0,\dots,K-1\}$ and let $\nu_{j,k}(x) \coloneqq Q_{Y,k\mid x}(\kappa_{j,k}(x))$ for $j \in \{0,\dots,l_k\}$.
    \begin{enumerate}[label=(\roman*)]
        \item \textbf{Continuous outcome.} If $\mu_{1,k\mid X}$ is continuous at $\nu_{j,k}(X)$ and $\nu_{j+1,k}(X)$ almost surely, then, for $j \in \{0,\dots,l_k-1\}$,
        \begin{equation}\label{eq:J_cexpectation}
            J_{j,k}(X) = (p_{k+1}(X) - p_k(X))\,\mathbb{E}_{\mu_{1,k\mid X}}\!\big[ Y\,\indicator\!\big(\nu_{j,k}(X) < Y \leqslant \nu_{j+1,k}(X)\big) \big].
        \end{equation}
        \item \textbf{Binary outcome.} If $Y \in \{0,1\}$, then $J_{\text{full},k}(X) = P_{1,k+1}(X) - P_{1,k}(X)$ with $P_{1,k}(x) \coloneqq \mathbb{E}_{\text{obs}}[YW \mid Z \in S_k, X=x]$, and for $j \in \{1,\dots,l_k\}$,
        \begin{equation}\label{eq:h_jk_binary}
            J_{j-1,k}(X) = h_{j,k}^{-}(X) \coloneqq \max\!\Big(0,\; q_{j,k}(X) - \max\!\big(q_{j-1,k}(X),\; p_{k+1}(X) - J_{\text{full},k}(X)\big)\Big).
        \end{equation}
        Consequently, \eqref{eq:target_discrete_iv} collapses to
        \begin{align}\label{eq:binary_lower_bound}
            \underline{\theta}_{\omega,1} = \mathbb{E}\Bigg[\sum_{k=0}^{K-1}\bigg(\gamma_{\text{full},k}(X)\, J_{\text{full},k}(X) + \sum_{j=1}^{l_k} \gamma_{j,k}(X)\, h_{j,k}^{-}(X)\bigg) + \Delta_K(X)\Bigg].
        \end{align}
    \end{enumerate}
\end{proposition}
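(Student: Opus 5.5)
For part (i) I will work pointwise in $x$ with the quantile representation from \cref{thm:1d_ot}. Fix $x$, write $\mu=\mu_{1,k\mid x}$, let $F$ be its CDF and $Q=Q_{Y,k\mid x}$ its quantile function, and set $a=\kappa_{j,k}(x)$, $b=\kappa_{j+1,k}(x)$. Take $\xi\sim\mathrm{Unif}(0,1)$, so that $Q(\xi)\sim\mu$. Then
\[
\int_a^b Q(u)\,\mathrm{d}u=\mathbb{E}\big[Q(\xi)\,\indicator(a<\xi\leqslant b)\big].
\]
The claim follows once I show that $\{a<\xi\leqslant b\}$ and $\{Q(a)<Q(\xi)\leqslant Q(b)\}$ differ by a null set. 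Two standard facts give this.
\begin{itemize}
\item The Galois relation $Q(\xi)\leqslant y \iff \xi\leqslant F(y)$ holds for every real $y$.
\item If $\mu$ has no atom at $Q(t)$ for some $t\in(0,1)$, then $F(Q(t))=t$.
\end{itemize}
Under the hypothesis that $\mu$ is continuous at $\nu_{j,k}=Q(a)$ and $\nu_{j+1,k}=Q(b)$, the two facts give
\[
\{Q(\xi)\leqslant Q(b)\}=\{\xi\leqslant b\},\qquad \{Q(\xi)\leqslant Q(a)\}=\{\xi\leqslant a\}.
\]
Taking the difference of these events, multiplying by $p_{k+1}(x)-p_k(x)$ and replacing $Q(\xi)$ by $Y\sim\mu$ yields \eqref{eq:J_cexpectation}.

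The main obstacle is the endpoints $a\in\{0,1\}$ or $b\in\{0,1\}$, for example $\kappa_{0,k}=0$. There I would adopt the convention $Q(0)=-\infty$ or $\inf\mathcal{Y}$ and check that the identity still holds up to null sets.

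For part (ii), the first claim follows from the definition \eqref{eq:y_identified_measure} with covariates. The quantity $(p_{k+1}-p_k)\,\mathbb{E}_{\mu_{1,k\mid x}}[Y]$ is the difference of the conditional means of $YW$ at $p_{k+1}$ and $p_k$. Conditioning on $p(Z,X)=p_k(x)$ is the same as conditioning on $Z\in S_k$, so the difference equals $P_{1,k+1}(x)-P_{1,k}(x)$.

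For the second claim, let $m=\mathbb{E}_\mu[Y]$. Since $Y$ is Bernoulli, $Q(u)=\indicator(u>1-m)$ almost everywhere. Hence $\int_{\kappa_{j-1,k}}^{\kappa_{j,k}}Q$ equals the length of $[\kappa_{j-1,k},\kappa_{j,k}]\cap(1-m,1]$, which is
\[
\max\!\big(0,\ \kappa_{j,k}-\max(\kappa_{j-1,k},1-m)\big).
\]
Multiply by $p_{k+1}-p_k$ and use the affine map $\kappa\mapsto p_k+(p_{k+1}-p_k)\kappa$, which sends $\kappa_{j,k}$ to $q_{j,k}$. This map sends $1-m$ to
\[
p_k+(p_{k+1}-p_k)(1-m)=p_{k+1}-J_{\text{full},k}.
\]
Positive parts and maxima are preserved by a positive affine scaling, so the result is exactly $h^-_{j,k}$ in \eqref{eq:h_jk_binary}.

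Finally, substituting these expressions for $J_{\text{full},k}$ and $J_{j-1,k}$ into \eqref{eq:target_discrete_iv} from \cref{prop:discrete_reduction} gives \eqref{eq:binary_lower_bound}.
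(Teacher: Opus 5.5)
Your proposal is correct and follows essentially the paper's route. In part (i) the paper substitutes $u=F_{\mu_{1,k\mid X}}(y)$ and uses that continuity gives $F(\nu_{j,k})=\kappa_{j,k}$; your Galois-relation argument with $\xi\sim\mathrm{Unif}(0,1)$ is the same quantile transform made explicit. In part (ii) the paper likewise computes the integral of the step quantile function as an interval-intersection length and rescales it to the propensity scale. Your treatment of the endpoint $\kappa_{0,k}=0$ via the convention $Q(0)=-\infty$ covers a detail the paper leaves implicit, and it is harmless.
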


Part~(i) converts the sub-interval quantile integral into a plain conditional expectation of $Y$ on the quantile-defined band $(\nu_{j,k}, \nu_{j+1,k}]$, so only the $l_k+1$ boundary quantiles $\nu_{j,k}$---rather than the full quantile process---enter the estimation. Part~(ii) leverages the step-function form of $Q_{Y,k\mid X}$ when $Y$ is binary: the complier mean $\theta_k(X) = \mathbb{E}_{\mu_{1,k\mid X}}[Y]$ reduces to a difference of observed joint probabilities, and the integral of the step function over $[\kappa_{j-1,k}, \kappa_{j,k}]$ evaluates to the closed-form $h_{j,k}^{-}$. Both parts are proved in \cref{app:J_specialization}.

Together, \cref{prop:discrete_reduction,prop:J_specialization} expose \eqref{eq:target_discrete_iv} as a functional of conditional expectations, boundary quantiles, and instrument-level-set probabilities---the form required for DML-style $\sqrt{n}$-inference. We construct the corresponding estimators and establish their asymptotic properties next, treating the two outcome regimes in turn.

\subsubsection{Continuous Outcome}\label{subsubsec:continuous_outcome}

We first treat the case where the conditional complier distribution $\mu_{1,k\mid X}$ is continuous at the boundary quantiles $\nu_{j,k}(X)$, so \cref{prop:J_specialization}(i) applies. The representation~\eqref{eq:J_cexpectation} converts each sub-interval term $J_{j,k}(X)$ into a conditional expectation over the band $(\nu_{j,k}(X), \nu_{j+1,k}(X)]$, thereby reducing the nuisance list from the full quantile process to the boundary quantiles $\nu_{j,k}$ alone. The Neyman-orthogonal score construction additionally requires the conditional indicator expectations
\begin{align*}
    M_{j,k}^{+}(x) \coloneqq \mathbb{E}[\indicator(Y \leqslant \nu_{j,k}(x)) \mid Z \in S_{k+1}, W=1, X=x], \\
     M_{j,k}^{-}(x) \coloneqq \mathbb{E}[\indicator(Y \leqslant \nu_{j,k}(x)) \mid Z \in S_k, W=1, X=x],
\end{align*}
which serve as Riesz representers for the pathwise derivatives with respect to $\nu_{j,k}$.

Given the functional form in~(\ref{eq:target_discrete_iv}) and~(\ref{eq:J_cexpectation}), our DML estimator takes the aggregated Neyman-orthogonal form
\begin{equation}\label{eq:estimator_continuous_outcome}
    \hat{\underline{\theta}}_{\omega, 1} = \frac{1}{|I_2|} \sum_{i \in I_2} \Bigg[ \sum_{k=0}^{K-1}\bigg(\hat{\psi}_{\text{full},k}^{\text{prod}}(O_i; \hat{\eta}) + \sum_{j=1}^{l_k} \hat{\psi}_{j,k}^{\text{prod}}(O_i; \hat{\eta})\bigg) + \hat{\psi}_{\Delta_K}(O_i; \hat{\eta}) \Bigg],
\end{equation}
where each $\hat{\psi}$ is a Neyman-orthogonal score that corrects for the estimation error of its associated nuisance (explicit formulas are given in \cref{app:scores_continuous}). The rest of this subsection develops the nuisance estimators $\hat{\eta}$ trained on an auxiliary sample $I_1$; \cref{thm:dml_normality} below then establishes $\sqrt{n}$-consistency and asymptotic normality of~(\ref{eq:estimator_continuous_outcome}).

\paragraph{Estimation Procedure.} Following the standard DML template, we randomly partition the sample into an auxiliary set $I_1$ and a main estimation set $I_2$, train all nuisance estimators on $I_1$, and average plug-in orthogonal scores on $I_2$. \cref{alg:dml_continuous_outcome} summarizes the full procedure; the explicit moment equation for the conditional quantile $\nu_{j,k}$, the conditional-expectation components $M_{j,k}^{\pm}$, $J_{j,k}^{\pm}$, $J_{\mathrm{full},k}^{\pm}$, and the Neyman-orthogonal scores $\psi^{\mathrm{prod}}$ are deferred to \cref{app:procedure_continuous_outcome} and \cref{app:scores_continuous}.

\begin{algorithm}[h]
\caption{DML estimator: discrete instrument, continuous outcome}\label{alg:dml_continuous_outcome}
\begin{algorithmic}[1]
\STATE Split sample into $I_1$ (auxiliary) and $I_2$ (main).
\STATE \textbf{On $I_1$:} estimate propensity $\hat p(z,x)$ and set $\hat q(z,x) = \phi(z,x,\hat p(z,x))$.
\STATE Cluster $\{\bar p(z), \bar q(z)\}_{z \in \mathcal Z}$ at radius $c_{\mathrm{gap}}/2$ to obtain level sets $\widehat S_k$ and $\widehat T_{j,k}$.
\STATE Estimate weighting probabilities $\hat\pi_k(x)$, $\hat\gamma_{\mathrm{full},k}(x)$, $\hat\gamma_{j,k}(x)$, $\hat\gamma_K(x)$ by ML regression on indicators ({$\pi_k(x)\coloneqq\mathbb{P}(Z\in S_k\mid X=x)$; $\gamma_{\mathrm{full},k},\gamma_{j,k},\gamma_K$ defined preceding \cref{prop:discrete_reduction}}).
\STATE Estimate boundary quantiles $\hat\nu_{j,k}(x)$ via the moment-equation regression in \cref{app:procedure_continuous_outcome}.
\STATE Estimate conditional expectations $\hat M_{j,k}^{\pm}(x)$, $\hat J_{j,k}^{\pm}(x)$, $\hat J_{\mathrm{full},k}^{\pm}(x)$ by flexible ML.
\STATE \textbf{On $I_2$:} compute orthogonal scores $\hat\psi^{\mathrm{prod}}_{\mathrm{full},k}, \hat\psi^{\mathrm{prod}}_{j,k}, \hat\psi_{\Delta_K}$ from \cref{app:scores_continuous}.
\STATE Return $\hat{\underline\theta}_{\omega,1}$ as in~(\ref{eq:estimator_continuous_outcome}) and $\hat\sigma^2 = |I_2|^{-1}\sum_{i\in I_2}(\hat\psi(O_i;\hat\eta) - \hat{\underline\theta}_{\omega,1})^2$.
\end{algorithmic}
\end{algorithm}

To ensure the debiased terms remain well-behaved and that inverse probability weights do not explode, we require the following strict overlap assumption.

\begin{assumption}[Overlap]\label{asp:overlap}
    There exists a constant $c_\pi > 0$ such that $\pi_k(x) \geqslant c_\pi$ for all $x \in \mathcal{X}$ and for all interval indices $k \in \{0, \dots, K\}$.
\end{assumption}

The following theorem formally establishes the asymptotic normality of our DML estimator.

\begin{theorem}[Asymptotic Normality] \label{thm:dml_normality}
    Suppose \cref{asp:structure}, \cref{asp:bounded_y}, \cref{asp:phi_smooth}, \cref{asp:prop_gap} and \cref{asp:overlap} hold, $\mathcal{Z}$ is finite and the estimators for the nuisance parameters $\eta$ trained on $I_1$ converge at a rate of at least $o_P(n^{-1/4})$ in the $L_2$ norm. Furthermore, assume that for each $j = 1, \ldots, l_k$, $k = 0, \ldots, K-1$ and $k' \in \{k, k+1\}$, the conditional density $f_{Y \mid X, Z \in S_{k'}, W=1}(y \mid X)$ exists and is continuous at $y = \nu_{j,k}(X)$ almost surely. Then, the sample-split estimator is $\sqrt{n}$-consistent and asymptotically normal:
    \begin{equation*}
        \sqrt{n/2} \big( \hat{\underline{\theta}}_{\omega, 1} - \underline{\theta}_{\omega, 1} \big) \xrightarrow{d} \mathcal{N}(0, \sigma^2),
    \end{equation*}
    where the asymptotic variance is the variance of the true orthogonal score, $\sigma^2 = \mathbb{E}\big[ ({\psi}(O; \eta) - \underline{\theta}_{\omega, 1})^2 \big]$. It can be consistently estimated by its sample analog $\hat{\sigma}^2 = \frac{1}{|I_2|} \sum_{i \in I_2} \big(\hat{\psi}(O_i; \hat{\eta}) - \hat{\underline{\theta}}_{\omega, 1}\big)^2$.
\end{theorem}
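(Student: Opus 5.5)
The plan follows the standard DML template of \citet{chernozhukov2018double}: a Neyman-orthogonal score, cross-fitting, and a second-order remainder of size $o_P(n^{-1/2})$. Write $\psi(O;\eta)$ for the aggregated score in (\ref{eq:estimator_continuous_outcome}), so that $\mathbb{E}[\psi(O;\eta)]=\underline{\theta}_{\omega,1}$ by \cref{prop:discrete_reduction} and \cref{prop:J_specialization}(i). Conditional on $I_1$, decompose
\[
\hat{\underline{\theta}}_{\omega,1}-\underline{\theta}_{\omega,1}
=(\mathbb{P}_{I_2}-\mathbb{P})\psi(\cdot;\eta)
+(\mathbb{P}_{I_2}-\mathbb{P})\bigl[\psi(\cdot;\hat\eta)-\psi(\cdot;\eta)\bigr]
+\mathbb{P}\bigl[\psi(\cdot;\hat\eta)-\psi(\cdot;\eta)\bigr],
\]
together with a preliminary term for the estimated level sets. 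The first term gives the CLT at rate $\sqrt{|I_2|}=\sqrt{n/2}$; finite variance holds by \cref{asp:bounded_y} and \cref{asp:overlap}. The second is an empirical-process term. Since $\hat\eta$ is independent of $I_2$, Chebyshev's inequality bounds it by $O_P(|I_2|^{-1/2}\|\psi(\cdot;\hat\eta)-\psi(\cdot;\eta)\|_{L_2})=o_P(n^{-1/2})$. This requires $L_2$-continuity of the score in $\eta$. Overlap keeps the inverse weights $1/\pi_k$ bounded, and the indicators $\indicator(Y\le\hat\nu_{j,k}(X))$ differ from the true ones on a set of probability $O(|\hat\nu-\nu|)$ by the bounded conditional density.

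Level-set estimation is handled first. By \cref{asp:prop_gap}, the ordering of $R_x$ is invariant across $\mathcal{X}$, as argued before \cref{prop:discrete_reduction}. Clustering at radius $c_{\mathrm{gap}}/2$ therefore recovers $S_k$ and $T_{j,k}$ exactly once $\sup|\hat p-p|<c_{\mathrm{gap}}/4$. I would argue that this event has probability tending to one: $\mathcal{Z}$ is finite and the clustering uses the averaged $\bar p(z)$, so the error is controlled in probability. On this event the estimated and true level sets coincide, so the level-set term vanishes with probability approaching one. The kinks created by $\max$ in the sub-interval structure are therefore never hit asymptotically.

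The main obstacle is the bias term $\mathbb{P}[\psi(\cdot;\hat\eta)-\psi(\cdot;\eta)]$. I would show it equals a sum of products of nuisance errors plus a term quadratic in $\hat\nu-\nu$, each $O(\|\hat\eta-\eta\|_{L_2}^2)=o_P(n^{-1/2})$. To do so, expand each product score $\psi^{\mathrm{prod}}$, which has the form $\gamma J$ plus corrections, along the path $\eta+t(\hat\eta-\eta)$. The first derivative at $t=0$ vanishes for three reasons. The regression residual corrections $(\indicator(Z\in\cdot)-\gamma)$ and $(YW-J)/\pi$ have conditional mean zero. For $\nu_{j,k}$, differentiating $\mathbb{E}_{\mu}[Y\indicator(\nu_j<Y\le\nu_{j+1})]$ gives $\nu_{j+1}f(\nu_{j+1})-\nu_j f(\nu_j)$, and this is cancelled by the correction $\nu_{j,k}\bigl(\kappa_{j,k}-(M^{+}_{j,k}-M^{-}_{j,k})\text{-weighted terms}\bigr)$. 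Its mean is zero because $\nu_{j,k}$ is the $\kappa_{j,k}$-quantile of the difference measure $\mu_{1,k\mid x}$. Finally, $\kappa_{j,k}$ and the $\partial\phi/\partial p$ terms from \cref{asp:phi_smooth} enter only through $p$, whose estimation error is accounted for by the propensity correction.

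For the second-order remainder, Taylor-expand $F_{\mu}$ around $\nu$. Continuity of the density at $\nu_{j,k}(X)$ gives $F(\hat\nu)-F(\nu)-f(\nu)(\hat\nu-\nu)=o(|\hat\nu-\nu|)$. Bounded $\mathcal{Y}$ gives uniform integrability, so this remainder is $o_P(\|\hat\nu-\nu\|^2)$ or $O_P$ of it. Cross terms are bounded by Cauchy–Schwarz, $\|\hat\gamma-\gamma\|\,\|\hat J-J\|=o_P(n^{-1/2})$. Slutsky's theorem then gives the normal limit.

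Consistency of $\hat\sigma^2$ follows from the same $L_2$ convergence of $\psi(\cdot;\hat\eta)$ to $\psi(\cdot;\eta)$, the consistency of $\hat{\underline\theta}_{\omega,1}$, and the law of large numbers on $I_2$.
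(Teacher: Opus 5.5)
Your plan is the paper's proof in outline. It has the same ingredients: exact recovery of $S_k,T_{j,k}$ with probability tending to one; mean-zero corrections plus component-wise Neyman orthogonality, including the cancellation of the $\nu$-derivative of the band integral against the quantile correction; a conditional Chebyshev bound on $I_2$ for the centered term, using that the indicators change on a set of probability $O(|\hat\nu-\nu|)$; a quadratic bias bound; and Slutsky. The step that fails as written is the second-order remainder in $\nu_{j,k}$. A first-order Taylor expansion of the CDF, with a density that is merely continuous at $\nu_{j,k}(X)$, leaves a remainder $o(|\hat\nu-\nu|)$. After integrating over $X$ this is at best $o_P(\|\hat\nu-\nu\|_{L_2})=o_P(n^{-1/4})$. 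Neither boundedness of $\mathcal{Y}$ nor uniform integrability turns $o(|\hat\nu-\nu|)$ into $O((\hat\nu-\nu)^2)$, so your bias bound does not reach $o_P(n^{-1/2})$.

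The repair is to use the exact check-loss structure of the score rather than a Taylor expansion. Hold the other nuisances at their true values. Let $F^{+}$ and $F^{-}$ be the conditional CDFs of $Y$ given $X$, $W=1$ and $Z\in S_{k+1}$, respectively $Z\in S_k$. Let $\tilde\mu=(p_{k+1}-p_k)\mu_{1,k\mid X}\geq 0$, whose CDF is $\tilde F=p_{k+1}F^{+}-p_kF^{-}$. Then the $\nu_{j,k}$-dependent part of $\mathbb{E}[\psi^J_{j,k}\mid X]$ is $G(\nu)=-\int^{\nu}y\,\tilde\mu(\mathrm{d}y)+\nu\bigl(\tilde F(\nu)-(q_{j,k}-p_k)\bigr)$. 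This is the sign the paper's score uses; with $\nu(\kappa_{j,k}-\cdots)$ as you wrote it, the two $\nu f(\nu)$ terms add rather than cancel. Since $\tilde F(\nu_{j,k})=q_{j,k}-p_k$, one has exactly
\[ 0\leq G(\hat\nu)-G(\nu)=\int_{(\nu\wedge\hat\nu,\,\nu\vee\hat\nu]}|\hat\nu-y|\,\tilde\mu(\mathrm{d}y)\leq|\hat\nu-\nu|\,\tilde\mu\bigl((\nu\wedge\hat\nu,\,\nu\vee\hat\nu]\bigr). \]
This is at most $C(\hat\nu-\nu)^2$ provided the density of $\tilde\mu$ is bounded on a fixed neighbourhood $[\nu_{j,k}(X)-\delta,\,\nu_{j,k}(X)+\delta]$ uniformly in $X$. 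For $|\hat\nu-\nu|>\delta$, use $\tilde\mu(\mathcal{Y})\leq 1$ and $|\hat\nu-\nu|\leq(\hat\nu-\nu)^2/\delta$. The remaining $\nu$-terms are products with the errors of $\hat M^{\pm}$, $\hat p_k$, $\hat\pi_k$, $\hat q_{j,k}$, and are handled by Cauchy--Schwarz as you intend.

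This uniform local density bound is strictly stronger than the pointwise continuity stated in the theorem. The paper's own proof uses it tacitly, both for the constant $C$ in its quadratic bias bound and for the $O(|\hat\nu-\nu|)$ probability of indicator disagreement. You should state it as an explicit assumption rather than try to derive it from continuity at a point.
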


\subsubsection{Discrete Outcome}\label{subsubsec:discrete_outcome}

When the outcome $Y$ takes finitely many values---in particular, when $Y \in \{0,1\}$ is binary---the conditional quantile function $Q_{Y,k\mid X}$ becomes a step function with atoms, and the continuity hypothesis of \cref{prop:J_specialization}(i) fails. Instead, \cref{prop:J_specialization}(ii) applies: the representation~\eqref{eq:binary_lower_bound} expresses $J_{\text{full},k}$ as a difference of observed joint probabilities $P_{1,k+1} - P_{1,k}$ and evaluates the sub-interval contribution in closed form through $h_{j,k}^{-}$, so the nuisance list collapses to conditional expectations alone and quantile estimation is avoided entirely. We present the binary case; the extension to general discrete outcomes is analogous.

Given the functional form in~\eqref{eq:binary_lower_bound}, our DML estimator takes an aggregated Neyman-orthogonal form analogous to the continuous-outcome case:
\begin{equation}\label{eq:binary_estimator}
    \hat{\underline{\theta}}_{\omega,1} = \frac{1}{|I_2|} \sum_{i \in I_2} \Bigg[ \sum_{k=0}^{K-1}\bigg(\hat{\psi}_{\text{full},k}^{\text{prod}}(O_i; \hat{\eta}) + \sum_{j=1}^{l_k} \hat{\psi}_{j,k}^{\text{prod}}(O_i; \hat{\eta})\bigg) + \hat{\psi}_{\Delta_K}(O_i; \hat{\eta}) \Bigg],
\end{equation}
where each $\hat{\psi}$ is a Neyman-orthogonal score (explicit formulas in \cref{app:scores_discrete}). \cref{thm:dml_discrete_outcome} below establishes $\sqrt n$-consistency and asymptotic normality.

\paragraph{Estimation Procedure.} The procedure mirrors \cref{alg:dml_continuous_outcome}, with the conditional-quantile and indicator-expectation steps removed. \cref{alg:dml_binary_outcome} summarizes; full details and the construction of the orthogonal scores are in \cref{app:procedure_binary_outcome} and \cref{app:scores_discrete}.

\begin{algorithm}[h]
\caption{DML estimator: discrete instrument, binary outcome}\label{alg:dml_binary_outcome}
\begin{algorithmic}[1]
\STATE Split sample into $I_1$, $I_2$. On $I_1$, estimate $\hat p$, $\hat q$, and the level sets $\widehat S_k$, $\widehat T_{j,k}$ exactly as in \cref{alg:dml_continuous_outcome}.
\STATE Estimate $\hat P_{1,k}(x) = \widehat{\mathbb E}[YW \mid Z\in\widehat S_k, X=x]$ and $\hat\gamma_{\mathrm{full},k}(x), \hat\gamma_{j,k}(x), \hat\gamma_K(x)$ by ML regression.
\STATE For each $i$, sort $(\hat P_{1,0}(X_i),\dots,\hat P_{1,K}(X_i))$ to enforce monotonicity.
\STATE \textbf{On $I_2$:} compute orthogonal scores $\hat\psi^{\mathrm{prod}}, \hat\psi_{\Delta_K}$ from \cref{app:scores_discrete} and average to obtain $\hat{\underline\theta}_{\omega,1}$ as in~(\ref{eq:binary_estimator}).
\end{algorithmic}
\end{algorithm}

To ensure differentiability of the target functional with respect to the nuisance parameters, we impose the following gap condition, which replaces the conditional-density smoothness assumption of \cref{thm:dml_normality}.

\begin{assumption}[Discrete Outcome Gap]\label{asp:discrete_outcome_gap}
    For all $k = 0, \ldots, K-1$ and $x \in \mathcal{X}$, the integration boundaries $p_{k+1}(x) - J_{\text{full},k}(x)$ and $p_k(x) + J_{\text{full},k}(x)$ do not coincide with any alternative policy level $q_{j,k}(x)$ for $j = 1, \ldots, l_k$.
\end{assumption}
This condition guarantees that the target is pathwise-differentiable in the nuisances despite the $\max$ operators appearing in $h_{j,k}^{-}$. The following theorem establishes the asymptotic normality of our DML estimator in the discrete-outcome setting.

\begin{theorem}[Asymptotic Normality, Discrete Outcome]\label{thm:dml_discrete_outcome}
    Suppose \cref{asp:structure}, \cref{asp:bounded_y}, \cref{asp:phi_smooth}, \cref{asp:prop_gap}, \cref{asp:overlap}  and \cref{asp:discrete_outcome_gap} hold, $\mathcal{Z}$ is finite, outcome is binary and the nuisance estimators trained on $I_1$ converge at a rate of at least $o_P(n^{-1/4})$ in the $L_2$ norm. Then, the sample-split estimator $\hat{\underline{\theta}}_{\omega,1}$ defined in \eqref{eq:binary_estimator} is $\sqrt{n}$-consistent and asymptotically normal:
    \begin{equation*}
        \sqrt{n/2}\big(\hat{\underline{\theta}}_{\omega,1} - \underline{\theta}_{\omega,1}\big) \xrightarrow{d} \mathcal{N}(0, \sigma^2),
    \end{equation*}
    where $\sigma^2 = \mathbb{E}\big[(\psi(O; \eta) - \underline{\theta}_{\omega,1})^2\big]$. It can be consistently estimated by $\hat{\sigma}^2 = \frac{1}{|I_2|}\sum_{i \in I_2}\big(\hat{\psi}(O_i; \hat{\eta}) - \hat{\underline{\theta}}_{\omega,1}\big)^2$.
\end{theorem}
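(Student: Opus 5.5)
The plan is to follow the template used for \cref{thm:dml_normality}, since the binary-outcome representation \eqref{eq:binary_lower_bound} is a finite sum of products of conditional expectations, with no quantile nuisances. First, fix the level sets. By \cref{asp:prop_gap} and the continuity in \cref{asp:phi_smooth}, the ordering of $R_x$ is invariant in $x$. Clustering at radius $c_{\mathrm{gap}}/2$ recovers $S_k$ and $T_{j,k}$ exactly whenever $\sup_{z,x}|\hat p-p|<c_{\mathrm{gap}}/4$. I will argue that this event has probability tending to one, so on it the estimated level sets equal the true ones and the estimation error they induce is $o_P(n^{-1/2})$. The sorting step in \cref{alg:dml_binary_outcome} only moves $\hat P_{1,k}$ closer to the monotone truth, so it can only decrease the $L_2$ error.

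Second, verify that each score $\psi^{\mathrm{prod}}_{\mathrm{full},k}$, $\psi^{\mathrm{prod}}_{j,k}$ and $\psi_{\Delta_K}$ is Neyman orthogonal. Each product $\gamma(X)g(\eta(X))$ gets a score of the form
\[
\gamma g + g\,(\mathbb{1}_{\{Z\in\cdot\}}-\gamma) + \gamma \sum_m \partial_m g \cdot \frac{\mathbb{1}(Z\in S_m)}{\pi_m}\,(YW-P_{1,m}),
\]
where $g$ is either $J_{\mathrm{full},k}=P_{1,k+1}-P_{1,k}$ or $h^-_{j,k}$. The key point is that $h^-_{j,k}$ is piecewise linear in $(q_{j-1,k},q_{j,k},p_{k+1},J_{\mathrm{full},k})$. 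Under \cref{asp:discrete_outcome_gap}, with continuity in $x$ and compactness, the kinks are bounded away from the truth uniformly in $x$. Hence on a neighborhood of the true nuisance, $h^-_{j,k}$ is locally affine and $\partial_m g$ is a fixed constant ($0$ or $\pm1$). This makes the Gateaux derivative of the expected score vanish. I expect this uniform separation from the kinks to be the main obstacle. I would get it from the gap condition combined with continuity of $p$, $q$ and $P_{1,k}$ on the compact $\mathcal{X}$, which gives a uniform margin $\delta>0$. Then I would show that $\mathbb{P}(\sup_x|\hat\eta-\eta|>\delta)\to0$, or else handle the set where $h^-$ switches regime by bounding its contribution by $\|\hat\eta-\eta\|^2$. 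The second route uses a margin-type argument: on the switching set the error is at most $|\hat\eta-\eta|$, and the set has small probability.

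Third, take the standard DML expansion conditional on $I_1$:
\[
\hat{\underline\theta}-\underline\theta = (\mathbb{P}_{I_2}-\mathbb{P})\psi(\eta) + (\mathbb{P}_{I_2}-\mathbb{P})(\psi(\hat\eta)-\psi(\eta)) + \mathbb{P}[\psi(\hat\eta)-\psi(\eta)].
\]
For the second term, sample splitting and Chebyshev give $O_P(\|\hat\eta-\eta\|_2/\sqrt n)=o_P(n^{-1/2})$. The needed $L_2$ continuity of the score holds because overlap (\cref{asp:overlap}) bounds the inverse weights and $Y$ is bounded. The third term is bilinear in the errors by the orthogonality from the second step, so Cauchy--Schwarz and the $o_P(n^{-1/4})$ rates make it $o_P(n^{-1/2})$. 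The first term converges at rate $\sqrt{|I_2|}=\sqrt{n/2}$ to $\mathcal N(0,\sigma^2)$ by the Lindeberg CLT, since the true score is bounded.

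Finally, consistency of $\hat\sigma^2$ follows from the same $L_2$ convergence of $\psi(\hat\eta)$ to $\psi(\eta)$ together with the law of large numbers on $I_2$.
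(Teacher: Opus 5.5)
Your overall structure matches the paper's proof: exact level-set recovery, rearrangement as an $L_2$ contraction, orthogonality from local linearity of $h^-_{j,k}$ under \cref{asp:discrete_outcome_gap}, then the sample-split DML expansion. The orthogonality step, however, fails as you have set it up.

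\textbf{The main gap.} Your score template contains only the residuals $\frac{\indicator(Z\in S_m)}{\pi_m(X)}(YW-P_{1,m}(X))$, so it corrects only for the $P_{1,m}$. But $h^-_{j,k}$ also depends, with nonzero first derivatives, on estimated propensity quantities: $p_{k+1}$, $q_{j,k}$, $q_{j-1,k}$, and $p_k$ via $q_{0,k}=p_k$ when $j=1$. Likewise $\Delta_K$ depends on $q$ and $p_K$.

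For example, perturb $p_{k+1}$ to $p_{k+1}+t\xi$. The expectation of your $\psi^{\mathrm{prod}}_{j,k}$ then moves at rate $-\mathbb{E}[\gamma_{j,k}(X)D^-_{j,k}(X)\xi(X)]$, which is nonzero in general. So $\mathbb{P}[\psi(\hat\eta)-\psi(\eta)]$ carries a first-order term of size $\|\hat p-p\|_{L_2}=o_P(n^{-1/4})$. Your claim that the third term of the expansion is bilinear, hence $o_P(n^{-1/2})$, therefore breaks down.

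What is missing are the $W$-residual corrections that the paper's $\psi^h_{j,k}$ and $\psi_{\Delta_K}$ contain:
\begin{itemize}
\item $-D^-_{j,k}R_{p,k+1}$;
\item $-E^-_{1,k}I^-_{1,k}R_{p,k}$ when $j=1$;
\item the chain-rule policy term $I^-_{j,k}\sum_{z\in T_{j,k}}\frac{\partial\phi}{\partial p}(z,X,p(z,X))\frac{\indicator(Z=z)}{\pi(z,X)}(W-p(z,X))$, and the analogous $T_{j-1,k}$ term weighted by $-E^-_{j,k}I^-_{j,k}$;
\item the $q$- and $p_K$-corrections in $\psi_{\Delta_K}$.
\end{itemize}
With these added, each Gateaux derivative in the propensity and policy directions cancels through $\mathbb{E}[\indicator(Z=z)/\pi(z,X)\mid X]=1$, exactly as in the $P_{1,m}$ directions.

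\textbf{Smaller points.}

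First, level-set recovery. The event $\sup_{z,x}|\hat p-p|<c_{\mathrm{gap}}/4$ holding with probability tending to one does not follow from the $L_2$ rate hypothesis. Argue instead, as the paper does, through the averages $\bar p(z),\bar q(z)$ that the algorithm actually clusters. By the $x$-invariant ordering, these are more than $c_{\mathrm{gap}}$ apart across distinct level sets in population.

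Second, uniformity at the kinks. Your worry about uniform separation from the kinks is well founded, and more careful than the paper, which simply asserts local linearity at the truth. However, the pointwise \cref{asp:discrete_outcome_gap} yields a uniform margin $\delta>0$ only under an extra condition. One example is continuity of $J_{\mathrm{full},k}$ (i.e.\ of the $P_{1,k}$) on compact $\mathcal X$, which is not among the hypotheses. Your route (a) additionally needs sup-norm convergence.

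Given such a margin, route (b) works with $L_2$ rates alone. With $e=\hat\eta-\eta$, the regime-switching remainder is at most $C\,\mathbb{E}[|e|\indicator\{|e|\geqslant\delta\}]\leqslant C\|e\|_{L_2}^2/\delta$.

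The same flip bound, $\mathbb{P}(|e|\geqslant\delta)\leqslant\|e\|_{L_2}^2/\delta^2$, is what your $L_2$-continuity claim for the empirical-process term really needs. The score contains the discontinuous indicators $D^-_{j,k},E^-_{j,k},I^-_{j,k}$, and overlap plus bounded $Y$ do not control them.
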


\subsection{Continuous Instrument Setting}
We now proceed to estimation in the continuous instrument setting. For simplicity, we will assume that $\mathcal{Z}$ is connected throughout this subsection; the multiply connected component case can be derived similarly. When the instrumental variable $Z$ is continuous, the propensity score $p(Z,X)$ takes a continuous support, and the piecewise-constant structure of the weight function no longer holds. The general bound~(\ref{eq:covariate_continuous_lower_bound}) contains a derivative $\frac{\partial}{\partial u}\mathbb{E}_{\text{obs}}[YW \mid p(Z,X)=u, X]$ that is non-trivial to estimate directly. The following proposition leverages Fubini's theorem to collapse the derivative into a finite difference of conditional-expectation evaluations, yielding a representation amenable to plug-in ML estimation.

\begin{proposition}[Continuous-Instrument Reduction via Fubini]\label{prop:continuous_reduction}
    Suppose \cref{asp:structure}, \cref{asp:bounded_y}, \cref{asp:continuity_p}, and \cref{asp:phi_smooth} hold, and $\mathcal{Z}$ is connected. Let $g_1(u,X) \coloneqq \mathbb{E}_{\text{obs}}[Y W \mid p(Z,X)=u, X]$, and let $\underline{p}(X), \overline{p}(X)$ denote the conditional infimum and supremum of $p(Z,X)$ given $X$. Then the sharp lower bound in \cref{thm:covariate_continuous_lower_bound} admits the representation
    \begin{equation} \label{eq:continuous_simplified_target}
     \begin{split}
            \underline{\theta}_{\omega, 1} &= \mathbb{E}_{Z,X} \Bigg[ - \int_{\min\{\underline{p}(X), q(Z,X)\}}^{\underline{p}(X)} Q_{Y,\underline{p} \mid X}\!\left({v}/{\underline{p}(X)}\right) \, \mathrm{d}v \\
        &\qquad + g_1\big(\min\{\overline{p}(X), q(Z,X)\}, X\big) - g_1\big(p(Z,X), X\big)  \\
        &\qquad + y_{\min} \max\{0, q(Z,X) - \overline{p}(X)\} \Bigg].
     \end{split}
    \end{equation}
    The sharp upper bound admits an analogous representation, obtained by replacing $Q_{Y,\underline{p} \mid X}(v/\underline{p}(X))$ by $Q_{Y,\underline{p} \mid X}(1 - v/\underline{p}(X))$ in the first term and $y_{\min}$ by $y_{\max}$ in the third term.
\end{proposition}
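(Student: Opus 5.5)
Since $\mathcal{Z}$ is connected, \cref{asp:continuity_p} makes $p(\mathcal{Z},x)$ a single closed interval $[\underline{p}(x),\overline{p}(x)]$. So in \cref{thm:covariate_continuous_lower_bound} we have $K_x=1$ and exactly three regions: the gap $G_0(x)=(0,\underline{p}(x))$, the LIV interval $L_1(x)=[\underline{p}(x),\overline{p}(x)]$, and the tail $(\overline{p}(x),1]$. The plan is to substitute the PRTE numerator weight
\[
\omega(x,u)=\mathbb{E}_{\text{obs}}\big[\indicator(u\leqslant q(Z,X))-\indicator(u\leqslant p(Z,X))\mid X=x\big]
\]
into each of the three terms of (\ref{eq:covariate_continuous_lower_bound}). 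In each term I will then use Fubini to move $\mathbb{E}_{Z\mid X}$ outside the $u$-integral. The whole argument is done conditionally on $X=x$, and at the end we take $\mathbb{E}_X$.

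\emph{Tail and LIV terms.} For $u>\overline{p}(x)$ we have $\indicator(u\leqslant p(Z,x))=0$, so $\omega(x,u)=\mathbb{P}(q(Z,x)\geqslant u\mid x)\geqslant 0$. The tail term is therefore
\[
y_{\min}\int_{\overline{p}}^1\mathbb{P}(q\geqslant u\mid x)\,\diff u=y_{\min}\,\mathbb{E}_{Z\mid x}\big[\max\{0,q-\overline{p}\}\big],
\]
using $q\leqslant 1$. For the LIV term, the integral representation in (\ref{eq:ot_formulation_continuous}) shows that $g_1(\cdot,x)$ is absolutely continuous on $L_1(x)$ with a bounded derivative, by \cref{asp:bounded_y}. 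Fubini then gives
\[
\mathbb{E}_{Z\mid x}\int_{\underline{p}}^{\overline{p}}\partial_u g_1(u,x)\big[\indicator(u\leqslant q)-\indicator(u\leqslant p)\big]\diff u .
\]
The fundamental theorem of calculus evaluates the inner integral as $g_1(\min\{\overline{p},q\},x)-g_1(p(Z,x),x)$, because $p(Z,x)\in L_1(x)$.

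\emph{Gap term.} For $u<\underline{p}(x)\leqslant p(Z,x)$ the second indicator equals one, so $\omega(x,u)=-\mathbb{P}(q(Z,x)<u\mid x)$, which is nonincreasing in $u$. Write $U=\underline{p}\,s$ with $s\sim\mathrm{Unif}(0,1)$. Since $s\mapsto\omega(x,\underline{p}s)$ is nonincreasing, $Q_{\omega,0\mid x}(1-t)=\omega(x,\underline{p}t)$ for a.e.\ $t$. The countermonotone integral of \cref{thm:1d_ot} therefore becomes
\[
\underline{p}\int_0^1 Q_{Y,\underline{p}\mid x}(t)\,\omega(x,\underline{p}t)\,\diff t=\int_0^{\underline{p}}Q_{Y,\underline{p}\mid x}(v/\underline{p})\,\omega(x,v)\,\diff v .
\]
Applying Fubini once more turns this into
\[
-\mathbb{E}_{Z\mid x}\int_{\min\{\underline{p},q\}}^{\underline{p}}Q_{Y,\underline{p}\mid x}(v/\underline{p})\,\diff v .
\]
Summing the three pieces and taking $\mathbb{E}_X$ yields (\ref{eq:continuous_simplified_target}).

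\emph{Upper bound.} The upper bound follows the same steps. The comonotone pairing gives $Q_{\omega,0\mid x}(t)=\omega(x,\underline{p}(1-t))$, and the change of variables then produces $Q_{Y,\underline{p}\mid x}(1-v/\underline{p})$. In the tail, $y_{\min}$ is swapped for $y_{\max}$.

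\emph{Main obstacle.} The step I expect to be delicate is the LIV term when $q(Z,x)<\underline{p}(x)$. In that case the expression $g_1(\min\{\overline{p},q\},x)$ is evaluated outside the range of the propensity score, where $g_1$ is not identified. The direct computation instead gives $g_1(\underline{p},x)-g_1(p,x)$ on that event, and the missing mass on $(q,\underline{p})$ is exactly what the gap term accounts for. I will handle this by showing the two contributions are consistent. Concretely, I will read $g_1$ at points below $\underline{p}$ as clamped at $\underline{p}$, which is automatic in the leading case $q\geqslant p$, such as $q=p+\alpha$, where $\min\{\underline{p},q\}=\underline{p}$ and the gap term vanishes. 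I will state this convention explicitly in the proof. The remaining justifications are routine: measurability of $Q_{Y,\underline{p}\mid x}$ in $x$, and boundedness to license each application of Fubini, both of which follow from \cref{asp:bounded_y} and compactness.
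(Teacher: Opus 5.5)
Your proposal is correct and follows essentially the same route as the paper's proof. Both specialize \cref{thm:covariate_continuous_lower_bound} to $K_X=1$, substitute the PRTE weight, exchange $\mathbb{E}_{Z\mid X}$ and the $u$-integral by Fubini, rescale the gap via $v=\underline{p}(X)u$, and apply the fundamental theorem of calculus on the LIV interval. Two of your details are more careful than the paper, which leaves them implicit: the monotonicity argument giving $Q_{\omega,0\mid x}(1-t)=\omega(x,\underline{p}t)$ on the gap, and the clamping convention on $\{q(Z,X)<\underline{p}(X)\}$. There the direct computation gives $g_1(\underline{p},X)-g_1(p,X)$, and $g_1$ is not identified below $\underline{p}(X)$, so the term $g_1(\min\{\overline{p},q\},X)$ must be read with $g_1$ clamped at $\underline{p}(X)$; the region $(q,\underline{p})$ is already accounted for by the gap term.
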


\cref{prop:continuous_reduction} is the continuous-instrument analog of \cref{prop:discrete_reduction}: it converts the sharp bound into an expectation over $(Z,X)$ that depends on estimable nuisances only. The first term is a boundary-quantile tail integral capturing the optimal-coupling contribution on $(0, \underline{p}(X))$ through the conditional quantile function of the identified complier distribution $\mu_{1,\underline{p}\mid X}$. The second term replaces the derivative of $g_1$ by a finite difference of $g_1$-values at the truncated limits---this is the key simplification enabled by Fubini, since $g_1$ itself is a standard conditional expectation estimable by flexible ML regressions. The third term is the closed-form trivial-bound contribution on $(\overline{p}(X),1]$, where $\omega(X,u) \geqslant 0$ for the PRTE weight. Crucially, the entire target requires only conditional expectations and tail quantiles, avoiding derivative estimation entirely. The proof is given in \cref{app:procedure_fubini}.

\cref{prop:continuous_reduction} motivates the following plug-in estimator, obtained by replacing each nuisance in~(\ref{eq:continuous_simplified_target}) with its machine-learning estimate trained on an auxiliary sample:
\begin{equation}\label{eq:estimator_continuous_instrument}
    \hat{\underline{\theta}}_{\omega, 1} = \frac{1}{|I_2|} \sum_{i \in I_2} \Big( \hat{\psi}_{1}(O_i) + \hat{\psi}_{2}(O_i) + \hat{\psi}_{3}(O_i) \Big),
\end{equation}
where $\hat{\psi}_1, \hat{\psi}_2, \hat{\psi}_3$ correspond respectively to the boundary-quantile integral, the $g_1$ difference, and the trivial-bound term in~(\ref{eq:continuous_simplified_target}). \cref{thm:continuous_rate} below establishes its nonparametric convergence rate. The rest of this subsection develops the nuisance estimators $\hat{p}, \hat{g}_1$, and $\widehat{Q}_{Y,\underline{p}\mid X}$ that appear in~(\ref{eq:estimator_continuous_instrument}).

\paragraph{Estimation Procedure.} Because the continuous instrument setting requires nonparametric estimation of conditional expectations and quantiles, we employ a three-fold sample split: a propensity-estimation sample $I_0$, a nuisance-estimation sample $I_1$, and a main evaluation sample $I_2$. Splitting off $I_0$ separately is what allows the localized boundary-quantile subsample $\mathcal{I}_\delta = \{i\in I_1 : \hat p(Z_i,X_i) \leqslant \widehat{\underline p}(X_i) + \delta_n\}$ to remain conditionally i.i.d.\ given $I_0$. The conditional expectation $g_1(u,x)$ is estimated by a kernel-weighted local $M$-estimator in $u$ combined with a flexible ML fit in $x$, yielding a piecewise-linear-in-$u$ estimator that is deterministically Lipschitz in $u$---a property no joint ML regression on $(u, X)$ is known to deliver. This construction, related to local-polynomial DR-learning \citep{kennedy2023towards} and generalized random forests \citep{athey2019generalized}, decouples the covariate-dimension ML rate $r_{X,n}$ from the smoothing bias $h_n^2$ in $u$. \cref{alg:continuous_instrument} summarizes; the explicit kernel $M$-estimator, grid quantile regression, and target evaluation are in \cref{app:procedure_continuous_instrument}.

\begin{algorithm}[h]
\caption{Estimator: continuous instrument}\label{alg:continuous_instrument}
\begin{algorithmic}[1]
\STATE Split sample into $I_0, I_1, I_2$.
\STATE \textbf{On $I_0$:} estimate $\hat p(z,x)$, set $\hat q(z,x) = \phi(z,x,\hat p(z,x))$, and form boundary estimates $\widehat{\underline p}(x), \widehat{\overline p}(x)$.
\STATE \textbf{On $I_1$:} for each grid point $u_m = m/M$, train $\hat f_{u_m} = \arg\min_{f\in\mathcal F} \sum_{i\in I_1} K\!\left(\frac{u_m - \hat p(Z_i,X_i)}{h_n}\right)(Y_iW_i - f(X_i))^2$.
\STATE Define $\hat g_1(u,x)$ by linear interpolation between adjacent $\hat f_{u_m}(x)$.
\STATE Restrict to $\mathcal I_\delta = \{i\in I_1 : \hat p(Z_i,X_i) \leqslant \widehat{\underline p}(X_i) + \delta_n\}$ and estimate $\widehat Q_{Y,\underline p\mid X}$ by grid quantile regression on $\mathcal I_\delta$.
\STATE \textbf{On $I_2$:} compute $\hat\psi_1, \hat\psi_2, \hat\psi_3$ from~(\ref{eq:continuous_simplified_target}) and average to obtain $\hat{\underline\theta}_{\omega,1}$ as in~(\ref{eq:estimator_continuous_instrument}).
\end{algorithmic}
\end{algorithm}

To establish the formal convergence rate of our estimator in the continuous instrument setting, we must account for the estimation error of the generated regressor $\hat{p}$, the kernel-smoothed conditional expectation $\hat{g}_1$, and the localized boundary quantile $\widehat{Q}_{Y,\underline{p} \mid X}$. We impose the following regularity condition on the data generating process.

\begin{assumption}[Regularity]\label{asp:continuous_regularity}
    The marginal density of the propensity score $p(Z,X)$ is bounded away from zero and bounded above on its support. Furthermore, the density at the lower boundary $\underline{p}(X)$ is strictly positive almost surely.
\end{assumption}

\begin{assumption}[Nuisance Rates]\label{asp:continuous_rates}
    Recall that $d_X$ is the dimension of the covariates $X$. We assume the nuisance estimators satisfy the following conditions:
    \begin{enumerate}
        \item \textbf{Rate for Propensity Score:} The estimator $\hat{p}(z,x)$ converges to $p(z,x)$ at the nonparametric rate in the supremum norm, denoted by $\| \hat{p} - p \|_\infty = O_P(r_{p,n})$.
        \item \textbf{Regularity of $g_1$:} The conditional expectation function $g_1(u, x)$ is twice continuously differentiable with respect to $u$. Furthermore, its first and second derivatives, $\frac{\partial g_1(u,x)}{\partial u}$ and $\frac{\partial^2 g_1(u,x)}{\partial u^2}$, are bounded uniformly over all $u \in [0,1]$ and $x \in \mathcal{X}$.
        \item \textbf{Localized Kernel Bandwidth:} The 1D kernel function $K(\cdot)$ is continuously differentiable with bounded derivative. The bandwidth sequence is chosen such that $h_n \to 0, n h_n \to \infty$.
        \item \textbf{ML Rate and Realizability:} Let $\mathcal{F}_n$ be the machine learning hypothesis class used for a sample of size $n$. We assume $\mathcal{F}_n$ is sufficiently rich such that $g_1(u,\cdot) \in \mathcal{F}, \forall u \in[0,1]$
        and the statistical estimation error converges at a rate $r_{X,n}$ for an effective sample of size $n$.
        \item \textbf{$W_1$ rate for the conditional distribution of $Y$:} Given $m$ i.i.d.\ samples $\{(Y_i, X_i)\}_{i=1}^m$ on $[y_{\min}, y_{\max}] \times \mathcal{X}$, the conditional distribution estimator satisfies $\sup_{x \in \mathcal{X}} W_1\bigl(\widehat{F}_{Y \mid X=x},\, F_{Y \mid X=x}\bigr) = O_P(r_Q(m))$ for some $r_Q(m) \to 0$.
        \item \textbf{$W_1$-Lipschitz boundary:} Letting $F_{Y,\delta \mid x}$ denote the conditional distribution of $Y$ given $\{p(Z,X) \leqslant \underline{p}(x) + \delta, X = x\}$, this distribution is $W_1$-Lipschitz in $\delta$ at $\delta = 0$, uniformly in $x$: there exists $C_{\mathrm{Lip}} > 0$ such that $\sup_{x \in \mathcal{X}} W_1\!\bigl(F_{Y,\delta \mid x},\, F_{Y,\underline{p} \mid x}\bigr) \leqslant C_{\mathrm{Lip}}\,\delta$ for all sufficiently small $\delta > 0$.
    \end{enumerate}
\end{assumption}

Conditions~1--3 are standard regularity requirements: Condition~1 is satisfied by kernel, sieve, or $\ell_1$-penalized propensity estimators; Condition~2 ensures the piecewise-linear interpolation $\hat{g}_1$ incurs only an $h_n^2$ second-order bias; Condition~3 imposes standard kernel and bandwidth restrictions. Condition~4 is deliberately flexible: any ML procedure (random forests, neural networks, penalized linear methods) achieving kernel-weighted error rate $r_{X,n}$ is admissible. Conditions~5 and~6 ask only for conditional $W_1$-type rates---Condition~5 is strictly weaker than the uniform sup-norm rates established for standard conditional distribution and quantile regression estimators \citep{hall1999methods, guerre2012uniform, belloni2019conditional}, while Condition~6 controls the bias from localizing to observations within $\delta_n$ of $\underline{p}(x)$.

The following theorem formally establishes the nonparametric convergence rate of the lower bound estimator.

\begin{theorem}[Nonparametric Convergence Rate]\label{thm:continuous_rate}
    Suppose \cref{asp:structure}, \cref{asp:bounded_y}, \cref{asp:continuity_p}, \cref{asp:phi_smooth}, \cref{asp:continuous_regularity}, and \cref{asp:continuous_rates} hold, the grid resolution satisfies $M \gtrsim \sqrt{n h_n}$, the localization bandwidth satisfies $\delta_n \to 0$ with $n\delta_n \to \infty$, and $r_{p,n} = o(\delta_n)$. Then, the sample-split estimator $\hat{\underline{\theta}}_{\omega, 1}$ converges to the true lower bound $\underline{\theta}_{\omega, 1}$ at the following rate:
    \begin{equation*}
        |\hat{\underline{\theta}}_{\omega, 1} - \underline{\theta}_{\omega, 1}| = O_P\!\left(\frac{r_{p,n}}{h_n} + r_{X,n} + h_n^2 + r_Q(n\delta_n) + \delta_n\right).
    \end{equation*}
\end{theorem}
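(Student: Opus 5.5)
The plan is to decompose the estimation error of $\hat{\underline{\theta}}_{\omega,1}$ along the three terms of the representation (\ref{eq:continuous_simplified_target}) in \cref{prop:continuous_reduction}, plus the sampling error from averaging over $I_2$. Write $\hat{\underline{\theta}}_{\omega,1}-\underline{\theta}_{\omega,1} = (\mathbb{P}_{I_2}-\mathbb{P})[\hat\psi] + \mathbb{P}[\hat\psi_1-\psi_1] + \mathbb{P}[\hat\psi_2-\psi_2]+\mathbb{P}[\hat\psi_3-\psi_3]$. We condition on $I_0\cup I_1$ throughout. Since every $\hat\psi_\ell$ is bounded by \cref{asp:bounded_y}, the first term is $O_P(n^{-1/2})$ by Chebyshev. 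This is dominated by the stated rate, because $h_n^2+r_{X,n}\gtrsim n^{-1/2}$ in any nonparametric regime; alternatively it can simply be absorbed.

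\textbf{Trivial-tail term $\psi_3$.} The map $(q,\overline p)\mapsto \max\{0,q-\overline p\}$ is $1$-Lipschitz. Both $\hat q=\phi(\cdot,\hat p)$ (since $\partial_p\phi$ is bounded by \cref{asp:phi_smooth}) and $\widehat{\overline p}$ inherit the sup-norm rate $r_{p,n}$. The boundary estimate needs the density of $p(Z,X)$ bounded below (\cref{asp:continuous_regularity}) so that estimated extrema track true extrema. Hence this term contributes $O_P(r_{p,n})\le O_P(r_{p,n}/h_n)$.

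\textbf{LIV term $\psi_2$.} I would split $\hat g_1(\hat a,X)-g_1(a,X)$ into two pieces, $[\hat g_1(\hat a,X)-\hat g_1(a,X)] + [\hat g_1(a,X)-g_1(a,X)]$, where $a$ stands for $\min\{\overline p,q\}$ or $p$.

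For the first piece, $\hat g_1$ is Lipschitz in $u$ with constant of order $\sup_m|\hat f_{u_{m+1}}-\hat f_{u_m}|M$, which is $O_P(1)$ given smoothness of $g_1$ and $M\gtrsim\sqrt{nh_n}$. This piece is therefore $O_P(r_{p,n})$.

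For the second piece, I compare the kernel $M$-estimator $\hat f_{u_m}$ against $g_1(u_m,\cdot)$, and the argument has three components:
\begin{itemize}
\item The generated regressor $\hat p$ enters the kernel weights $K((u_m-\hat p)/h_n)$. Since $K'$ is bounded, replacing $\hat p$ by $p$ perturbs the weights by $O(r_{p,n}/h_n)$ uniformly, which gives the $r_{p,n}/h_n$ term.
\item With the true $p$ in the weights, realizability and Condition 4 give $\|\hat f_{u_m}-\tilde f_{u_m}\|=O_P(r_{X,n})$, where $\tilde f_{u_m}(x)=\int K_h(u_m-v)g_1(v,x)\,dv/\int K_h$. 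The density bounds in \cref{asp:continuous_regularity} keep the effective sample size proportional to $nh_n$.
\item $\tilde f_{u_m}-g_1(u_m,\cdot)=O(h_n^2)$ by a symmetric-kernel Taylor expansion using Condition 2. Linear interpolation on a grid with mesh $1/M$ adds $O(M^{-2})$, which is $o(h_n^2)$ or is absorbed under the grid condition.
\end{itemize}

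\textbf{Boundary quantile term $\psi_1$.} This is the main obstacle. The integral $\int_{\min\{\underline p,q\}}^{\underline p} Q(v/\underline p)\,dv$ equals $\underline p\int_{s}^{1}Q(t)\,dt$ for a truncation level $s$. For two distributions, the difference of such partial quantile integrals is bounded by $\int_0^1|\hat Q-Q|=W_1$, plus a Lipschitz term in the endpoints, which is bounded because $Y$ is bounded. The endpoint contribution is $O_P(r_{p,n})$.

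I would then bound $\sup_xW_1(\widehat F, F_{Y,\underline p\mid x})$ by the triangle inequality through $F_{Y,\delta_n\mid x}$:
\begin{itemize}
\item The estimation leg is $O_P(r_Q(|\mathcal I_\delta|))$. Conditionally on $I_0$, the set $\mathcal I_\delta$ is i.i.d. (this is why $I_0$ is split off), and $|\mathcal I_\delta|\asymp n\delta_n$ because the boundary density is positive.
\item The localization leg is $C_{\rm Lip}\delta_n$ by Condition 6.
\item The subtlety is that $\mathcal I_\delta$ is selected with $\hat p$ rather than $p$. Because $r_{p,n}=o(\delta_n)$, the selected set is sandwiched between the true sets with $\delta_n\pm 2r_{p,n}\asymp\delta_n$. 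I would handle this with a mixture/coupling argument showing that the target conditional law is within $W_1$ distance $O(\delta_n)$ of $F_{Y,\underline p\mid x}$.
\end{itemize}
Collecting all pieces yields the stated rate.
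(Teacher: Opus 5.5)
Your route is the paper's: a sampling error plus three plug-in errors, the $1$-Lipschitz bound for $\psi_3$, the evaluation-error/accuracy split for $\psi_2$ with generated-regressor, ML and smoothing-bias pieces, and the endpoint-plus-$W_1$ bound for $\psi_1$ routed through $F_{Y,\delta_n\mid x}$. (The density condition you invoke for $\widehat{\overline p}$ is not needed, since $|\sup_z\hat p(z,x)-\sup_z p(z,x)|\leqslant\|\hat p-p\|_\infty$.) Two of the shared steps fail as written.

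\emph{The Lipschitz constant of $\hat g_1$.} The triangle inequality gives $\|\partial_u g_1\|_\infty+2M\max_m\|\hat f_{u_m}-g_1(u_m,\cdot)\|_\infty$. But $M\gtrsim\sqrt{nh_n}$ is a \emph{lower} bound on $M$, so it can only inflate the second term. Since the grid error is generically at least of order $(nh_n)^{-1/2}+h_n^2$, nothing in the hypotheses makes this bound $O_P(1)$. The paper's proof makes the same move and adds an unstated ``rates are $o(M^{-1})$'' condition, which conflicts with $M\gtrsim\sqrt{nh_n}$. The fix is to split through $g_1$ instead: $|\hat g_1(\hat a,X)-g_1(a,X)|\leqslant\sup_u|\hat g_1(u,X)-g_1(u,X)|+\|\partial_u g_1\|_\infty|\hat a-a|$. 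Your interpolation bound then gives $\sup_u|\hat g_1-g_1|\leqslant\max_m|\hat f_{u_m}-g_1(u_m,\cdot)|+O(M^{-2})$, so no regularity of $\hat g_1$ in $u$ is needed and $M\gtrsim h_n^{-1}$ suffices.

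\emph{The $h_n^2$ bias.} A symmetric-kernel expansion gives $h_n^2$ only at points at least $h_n$ inside the support of $p(Z,X)$ given $X$. It also needs a Lipschitz design density, which your $\tilde f_{u_m}$ drops. Yet $\min\{\overline p(X),q(Z,X)\}=\overline p(X)$ with positive probability (e.g.\ $q=p+\alpha$), namely on the event where the tail term $\psi_3$ is active. At that endpoint the kernel-weighted local-constant target is a one-sided average, with bias of exact order $h_n|\partial_u g_1(\overline p(x),x)|$; here $\partial_u g_1$ is the marginal treatment response, which is generically nonzero. You need a local-linear fit in $u$ or boundary kernels to get $h_n^2$; otherwise that term is $h_n$.

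On the $\hat p$-selected quantile sample you go beyond the paper. Its claim of an $O(r_{p,n})$ total-variation error is off by a factor $1/\delta_n$: the symmetric difference has mass $O(r_{p,n})$ inside a conditioning event of mass $\asymp\delta_n$. Your sandwich $A_{\delta_n-2r}\subseteq E\subseteq A_{\delta_n+2r}$, with $A_\delta=\{z:p(z,x)\leqslant\underline p(x)+\delta\}$ and $r=\|\hat p-p\|_\infty$, is the right tool. However, Condition~6 of \cref{asp:continuous_rates} cannot close it, because it constrains only the laws given the nested sets $A_\delta$. The set $E\setminus A_{\delta_n-2r}$ is an arbitrary $\hat p$-dependent part of the band with relative weight $O(r_{p,n}/\delta_n)$. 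A mixture bound therefore gives only $C_{\mathrm{Lip}}\delta_n+O(r_{p,n}/\delta_n)$, and $r_{p,n}/\delta_n$ need not be absorbed by the stated rate when $\delta_n\ll h_n$.

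The missing ingredient is the model itself. Given $X=x$, the law of $(Y,W)$ given $Z$ depends on $Z$ only through $v=p(Z,x)$, and $\mathbb{P}_{\mathrm{obs}}(\mathrm{d}y,W=1\mid p(Z,X)=v,X=x)=\int_0^v\pi_1(\mathrm{d}y\mid u,x)\,\mathrm{d}u$. Hence the treated-outcome law at level $v\geqslant\underline p(x)$ is within total variation $(v-\underline p(x))/v$ of the target $F_{Y,\underline p\mid x}=\underline p(x)^{-1}\int_0^{\underline p(x)}\pi_1(\cdot\mid u,x)\,\mathrm{d}u$. So every mixture over $v\in[\underline p(x),\underline p(x)+\delta_n+2r]$ is within $W_1$-distance $(y_{\max}-y_{\min})(\delta_n+2r)/\underline p(x)$ of it. If you keep the prefactor $\underline p(X)$ in $\underline p\int_s^1(\widehat Q-Q)$ instead of bounding it by one, the contribution becomes $(y_{\max}-y_{\min})(\delta_n+2r)=O(\delta_n)$ uniformly in $x$. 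This proves your claim without Condition~6, provided the localized sample is restricted to $W=1$. That restriction is needed anyway, since otherwise the limit is the law of $Y$ given $p(Z,X)=\underline p(X)$ rather than $F_{Y,\underline p\mid x}$.
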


%% file: sections/experiments.tex
We validate our theoretical results through synthetic experiments and a real-data empirical application. All experiments compare our method (hereafter IVOT) bounds against the moment-relaxation bounds of \citet{magne2018using} (hereafter IVMTE), which represent the state-of-the-art alternative. Detailed numerical tables and additional results are deferred to \cref{app:simulation}.

\subsection{Synthetic Experiment}\label{sec:sim_synthetic}

We consider two synthetic settings, one with a continuous instrument and one with a discrete instrument, each using a distinct data-generating process (DGP). In both cases the target is $\theta_\alpha := \mathbb{E}[Y^{q_\alpha} - Y]$ under the policy $q_\alpha(Z) = \operatorname{clip}(p(Z)+\alpha, 0, 1)$ for $\alpha \in [-0.12, 0.12]$. For IVMTE, we specify both MTR functions $m_0,m_1$ as degree-$9$ $u$-splines on $[0,1]$ with nine interior knots at $\{0.1,0.2,\ldots,0.9\}$; this is a deliberately flexible, near-nonparametric sieve, chosen so that IVMTE is not disadvantaged by an overly restrictive parametric choice. Full DGP specifications are given in \cref{app:simulation}.

\paragraph{Continuous instrument.}
The continuous-instrument DGP uses a logistic propensity $p(Z) = \operatorname{logistic}(-1+2Z)$ with limited support $p(Z) \in [0.27, 0.73]$ and a \emph{constant} marginal treatment effect $\text{MTE}(u) = 0.5$, so the ground truth is $\theta_\alpha \approx 0.5\alpha$ for small $|\alpha|$. This setting deliberately isolates the identification difficulty: even though the MTE is constant, the limited propensity support prevents moment-relaxation methods from recovering tight bounds. \cref{fig:synthetic_bounds}(a) displays the identified sets at $n = 5{,}000$. IVOT yields tighter bounds than IVMTE; at $\alpha = 0.05$, the IVOT interval width is $0.011$ versus the IVMTE width of $0.096$, a roughly $8\times$ reduction. The IVMTE bounds are notably asymmetric---the upper bound extends substantially above the truth while the lower bound crosses zero---illustrating how discarding distributional information leads to spuriously wide identified sets. Across the 25 reported $\alpha$ grid points in this design, the true $\theta_\alpha$ falls inside the IVOT interval at 22 grid points. The three near-misses occur at $|\alpha| \leqslant 0.02$ where the IVOT interval is extremely tight (width ${\approx}\,0.001$) and finite-sample error in propensity estimation pushes the truth just outside the bounds. With $n = 10{,}000$, this pointwise inclusion check holds at all 25 grid points (\cref{app:simulation}), confirming this is a finite-sample phenomenon.

\paragraph{Discrete instrument.}
The discrete-instrument DGP uses $Z \sim \text{Bernoulli}(0.5)$ with two-point propensity score $\{0.25, 0.75\}$ and a heterogeneous $\text{MTE}(u) = 0.48 + 0.18u$, designed to test the closed-form bounds (\cref{thm:discrete_lower_bound}) when the propensity takes finitely many values. \cref{fig:synthetic_bounds}(b) shows the results at $n = 5{,}000$. IVOT again yields tighter bounds, and the true $\theta_\alpha$ lies inside the IVOT interval at all reported $\alpha$ grid points. At $\alpha = 0.05$, the IVOT interval is approximately $2.4\times$ narrower than IVMTE (width $0.041$ versus $0.097$), and the IVOT 95\% delta-method confidence interval $[-0.007,\; 0.046]$ is tighter than the IVMTE 95\% backward CI $[-0.050,\; 0.050]$. Because the propensity support consists of only two points, the MTE is identified only on $(0.25, 0.75)$; IVOT tightens bounds by exploiting the full distributional information in the identified region when bounding the contribution from the unidentified regions $(0, 0.25)$ and $(0.75, 1)$.

\paragraph{Discussion.}
The disparity between the two methods is consistent with the theoretical mechanism identified in \cref{sec:pid}: IVMTE reduces the observed conditional distributions to first-moment constraints, whereas IVOT retains the full quantile structure. The advantage is largest when the propensity score support is contained to $[0,1]$ (so that the unidentified region is large) and when the outcome distribution is informative about its quantile structure in the identified region. Numerical details for selected $\alpha$ values are tabulated in \cref{app:simulation}.

\begin{figure}[htbp]
    \centering
    \subfigure[Continuous instrument ($n = 5{,}000$)]{%
        \includegraphics[width=0.46\textwidth]{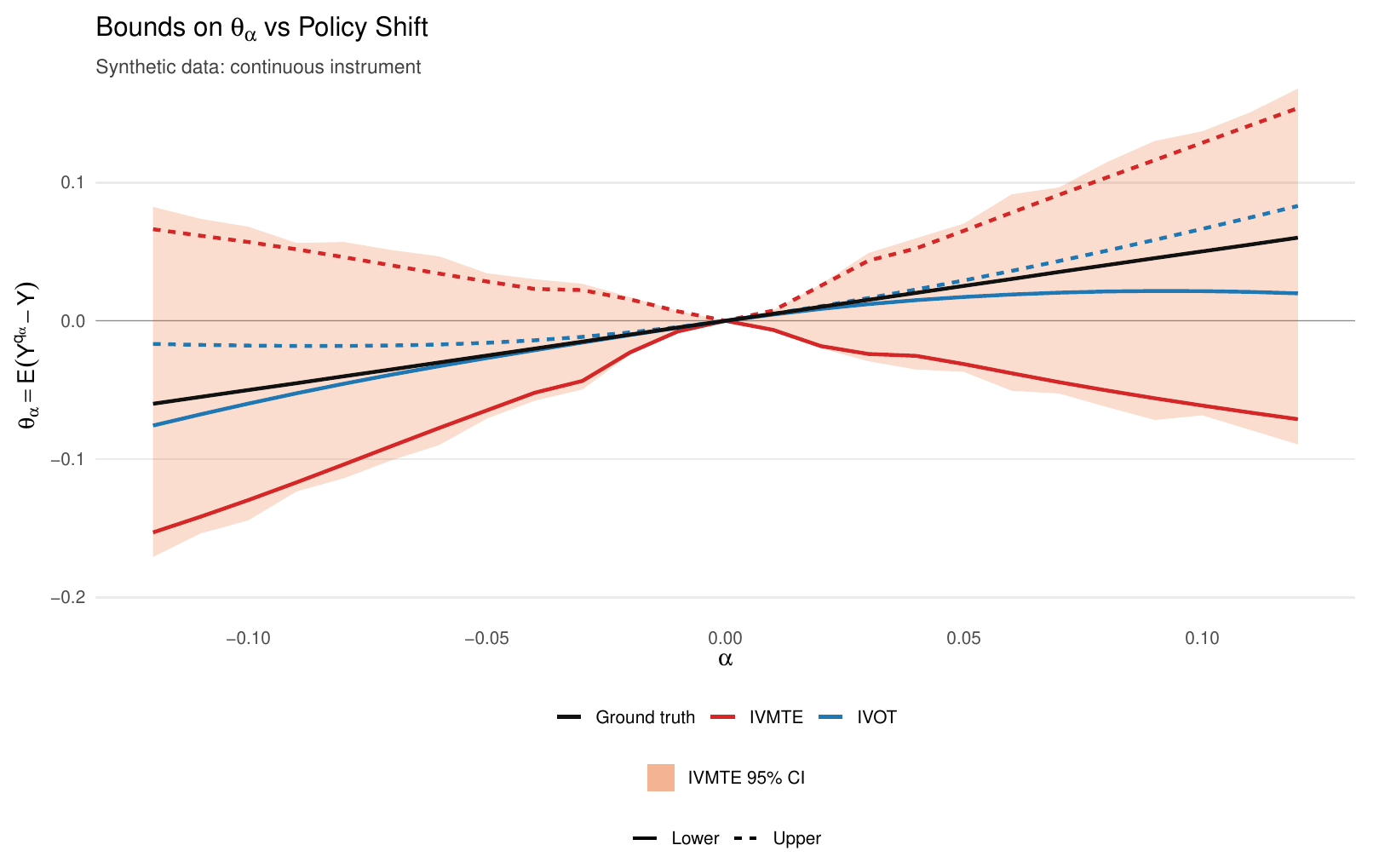}%
    }
    \quad
    \subfigure[Discrete instrument ($n = 5{,}000$)]{%
        \includegraphics[width=0.46\textwidth]{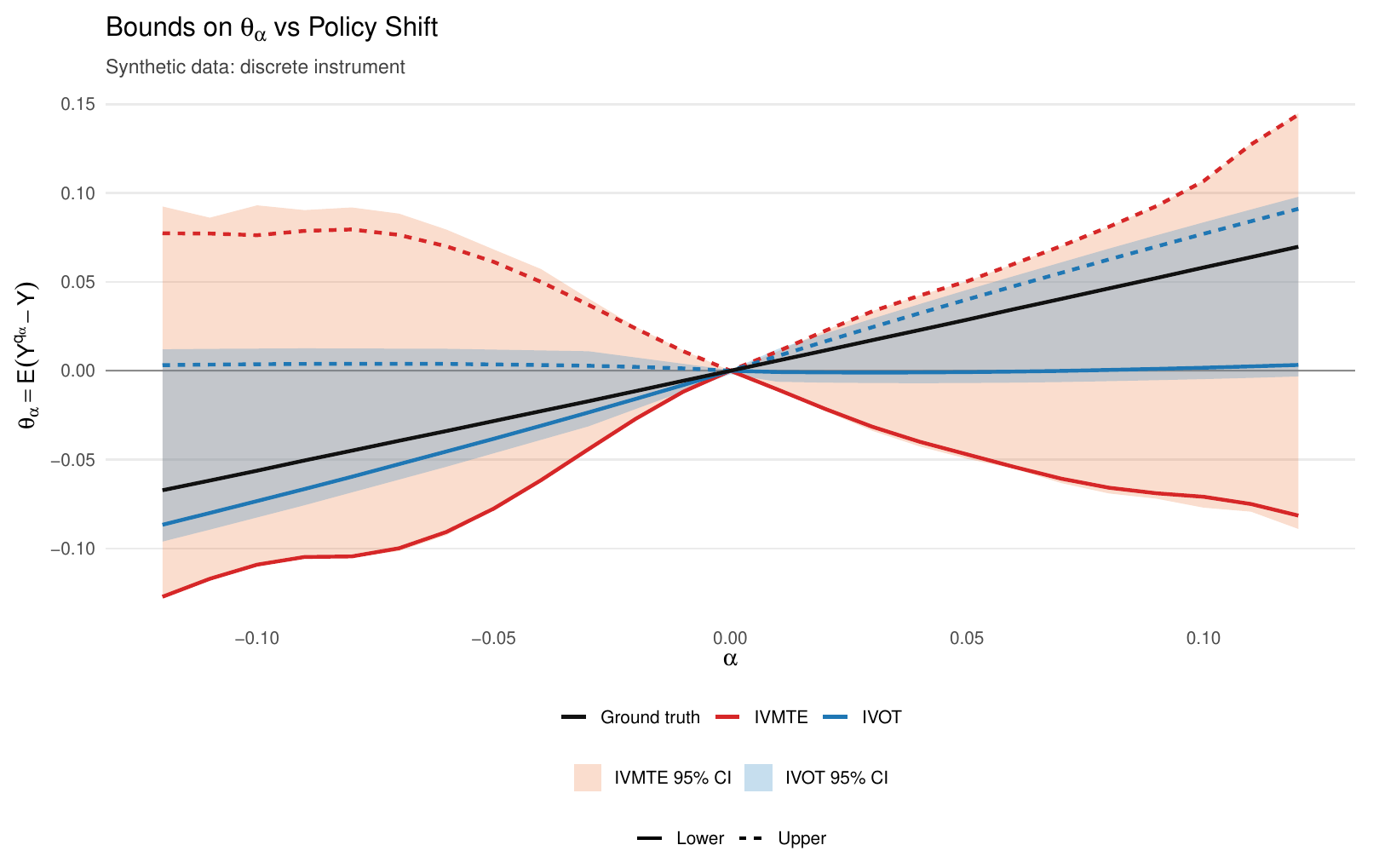}%
    }
    \caption{IVOT versus IVMTE identified sets for $\theta_\alpha = \mathbb{E}[Y^{q_\alpha}-Y]$ across policy shifts $\alpha \in [-0.12, 0.12]$. Solid line: ground truth. Shaded ribbons: pointwise 95\% confidence intervals for the estimated bound endpoints (orange = IVMTE, blue = IVOT). IVOT bounds are tighter in both settings.}
    \label{fig:synthetic_bounds}
\end{figure}

\subsection{Effect of Price Subsidies for Bed Nets}\label{sec:sim_bednet}

\paragraph{Background and setup.}
We apply our method to the bed net subsidy experiment of \citet{dupas2014short}. A longstanding debate in development economics concerns the optimal pricing strategy for health-protective goods: while high prices screen out low-valuation users, subsidies expand take-up but may attract individuals who ultimately do not use the product. Understanding the \emph{causal} effect of price reductions on product \emph{usage}---not just on take-up---is therefore essential for welfare analysis. The dataset records insecticide-treated bed net (ITN) transactions and follow-up usage for $n = 1{,}078$ Kenyan households; the instrument $Z$ is the offered price (17 distinct values spanning 0--250 KSh, randomized across distribution points), the treatment $W \in \{0,1\}$ indicates purchase, and the outcome $Y \in \{0,1\}$ indicates one-year follow-up usage. The baseline policy is the reference price $z_0 = 150$ KSh. For each $\alpha \in [0.05, 0.62]$, we study the counterfactual policy target that raises the purchase propensity at that baseline price from $\hat p(z_0)$ to $q_\alpha := \min(\hat{p}(z_0) + \alpha, 1)$; thus the denominator of $\text{PRTE}_\alpha = \mathbb{E}[Y^{q_\alpha} - Y]/\alpha$ is the known shift size $\alpha$. The maximum $\alpha_{\max}\approx 0.621$ equals the propensity at zero price minus the baseline propensity. Because the instrument is discrete, the closed-form bounds of \cref{thm:discrete_lower_bound} apply directly to the numerator, after which we divide by $\alpha$. For IVMTE, we report two sieve sizes (degree-$10$ and degree-$20$ $u$-splines on $[0,1]$) so that the shape-restriction trade-off is explicit: degree~$20$ enlarges the MTR class and imposes weaker shape restrictions, so its identified set should widen relative to degree~$10$. Full data, propensity-estimation, and policy-construction details are in \cref{app:simulation}.

\paragraph{Results.}
\cref{fig:bednet_bounds} plots the IVOT and IVMTE identified sets for $\text{PRTE}_\alpha$ across subsidy levels. IVOT consistently yields a tighter identified set than IVMTE across the reported range of $\alpha$ and under both spline specifications. At the smallest subsidy ($\alpha = 0.05$), the IVOT interval is $[0.740, 1.000]$ (width $0.260$), while the IVMTE interval is $[0.129, 0.990]$ (width $0.861$) at degree $10$ and widens to $[-0.372, 1.002]$ (width $1.374$) at degree $20$---roughly a $3.3\times$ and $5.3\times$ reduction in width relative to IVOT, respectively. The IVMTE bound visibly widens as the spline degree increases from $10$ to $20$, empirically confirming that the higher-degree sieve is more nonparametric and imposes weaker shape restrictions on the MTR. IVOT's advantage is therefore robust across these two IVMTE specifications.

Both methods indicate a positive policy effect across most of the subsidy range---the IVOT lower bound is positive throughout, and the degree-$10$ IVMTE lower bound is positive for all $\alpha$, with the more conservative degree-$20$ lower bound turning positive once $\alpha \geqslant 0.14$. Moreover, IVOT achieves near-point identification (width $< 0.05$) for the majority of $\alpha$ values, exhibiting a periodic pattern in which the bounds tighten to near-zero width before widening slightly as additional complier groups enter the policy margin: for instance, the identified set is essentially a point for $\alpha \in \{0.08, 0.09, 0.11\}$ and again for $\alpha \in \{0.23, 0.24, 0.26\}$. At the maximum feasible subsidy ($\alpha \approx 0.621$), IVOT point-identifies $\text{PRTE}_\alpha \approx 0.597$. Full numerical results, including IVOT 95\% delta-method confidence intervals, are reported in \cref{app:simulation}.

\begin{figure}[!htbp]
    \centering
    \includegraphics[width=0.58\textwidth]{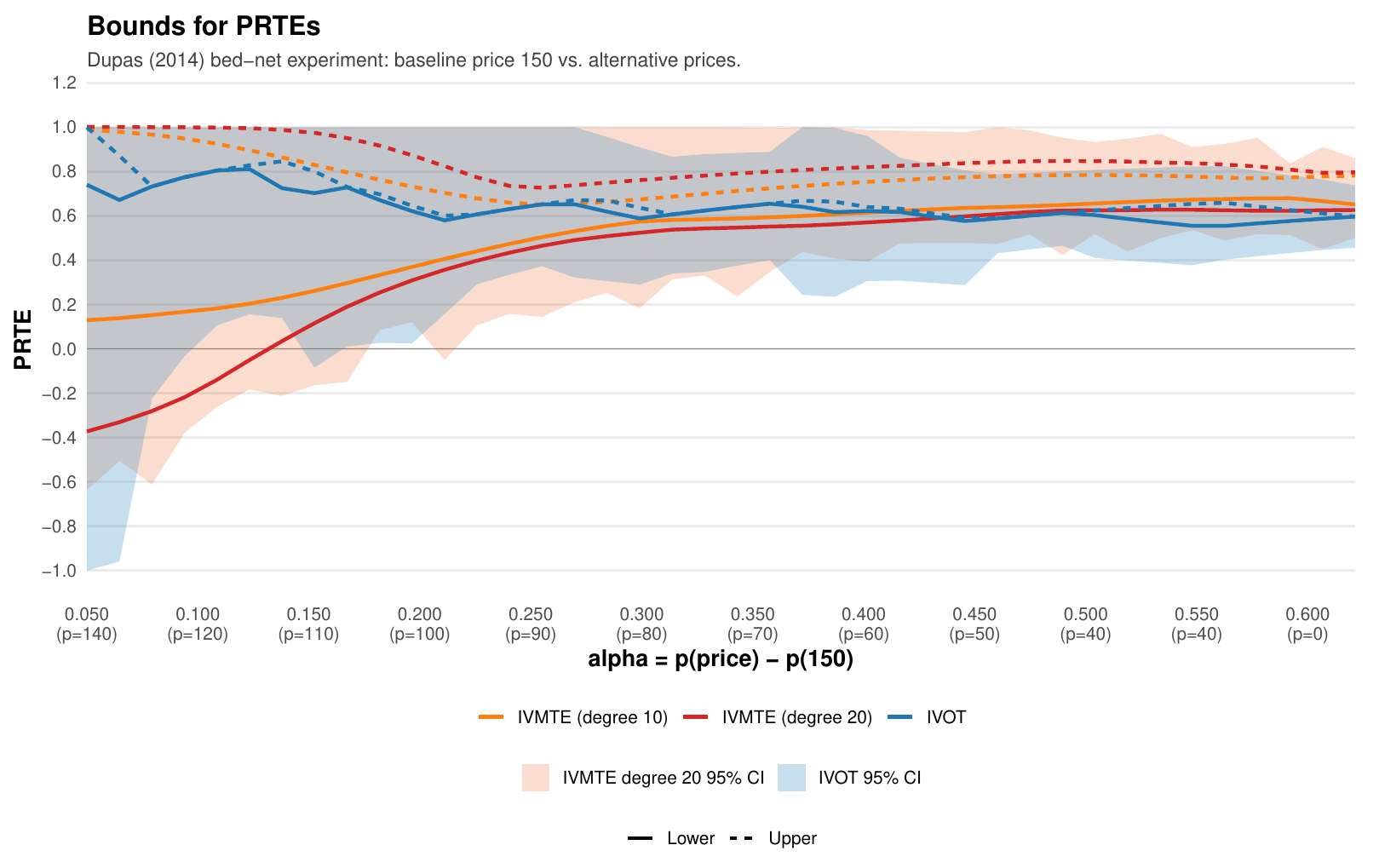}
    \caption{IVOT versus IVMTE identified sets for $\text{PRTE}_\alpha = \mathbb{E}[Y^{q_\alpha}-Y]/\alpha$ on the bed net data of \citet{dupas2014short}. IVMTE shown for degree-$10$ (orange) and degree-$20$ (red) $u$-spline sieves; degree~$20$ widens the bound, reflecting the more nonparametric MTR class. Shaded ribbons: pointwise 95\% confidence intervals for the estimated bound endpoints (orange = IVMTE deg.~$20$, blue = IVOT). IVOT substantially tightens the identified set and achieves near-point identification at several subsidy levels.}
    \label{fig:bednet_bounds}
\end{figure}

%% file: sections/conclusion.tex
We established a novel connection between the partial identification of PRTEs in IV models and OT. By formulating the problem as a CCOT problem over joint distributions compatible with the observed data, we showed that the multidimensional optimization analytically reduces to one-dimensional OT problems with product costs, yielding explicit closed-form expressions for the sharp bounds and bypassing the computationally intensive optimization required by existing approaches. The framework extends to general treatment settings (continuous and multi-valued), and we develop corresponding estimation results: Neyman-orthogonal DML scores deliver $\sqrt{n}$-consistency and asymptotic normality for discrete instruments, while for continuous instruments we characterize nonparametric convergence rates. A central insight, demonstrated theoretically and empirically, is that preserving full distributional information is both tractable and valuable for tight partial identification: moment-relaxation approaches that reduce the observed data to conditional expectations discard higher-order distributional features and can produce substantially wider bounds, whereas the CCOT formulation enforces compatibility with the entire observed distribution and makes the underlying optimization analytically transparent. We believe this insight extends beyond the IV setting studied here.

\paragraph{Acknowledgements. } Jose Blanchet gratefully acknowledges support from DoD through the grants Air Force Office of Scientific Research under award number FA9550-20-1-0397 and ONR 1398311, also support from NSF via grants 2229012, 2312204, 2403007 is gratefully acknowledged. 
Vasilis Syrgkanis gratefully acknowledges support from NSF Award IIS-2337916, an Amazon Research Award and a Google Research Award.

%% file: sections/notation.tex
\section{Notation}
{
\begin{longtable}{lp{12cm}}
\caption{Table of Important Variables and Mathematical Symbols}
\label{tab:symbols} \\

% --- HEADER FOR THE VERY FIRST PAGE ---
\hline
\textbf{Symbol} & \textbf{Description} \\
\hline
\endfirsthead

% --- HEADER FOR ALL SUBSEQUENT PAGES ---
\multicolumn{2}{c}%
{{\bfseries \tablename\ \thetable{} -- continued from previous page}} \\
\hline
\textbf{Symbol} & \textbf{Description} \\
\hline
\endhead

% --- FOOTER FOR ALL PAGES EXCEPT THE LAST ---
\hline \multicolumn{2}{r}{{Continued on next page}} \\
\endfoot

% --- FOOTER FOR THE VERY LAST PAGE ---
\hline
\endlastfoot

% --- TABLE CONTENTS ---
$W$ & Treatment variable (binary, discrete, or continuous), taking values in $\mathcal{W}$ \\
$Z$ & Instrumental variable, taking values in $\mathcal{Z}$ \\
$X$ & Covariates, taking values in $\mathcal{X}$ \\
$Y$ & Observed outcome variable, taking values in $\mathcal{Y}$ \\
$Y(w)$ & Potential outcome under counterfactual treatment $W=w$ \\
$U$ & Unobserved latent heterogeneity (resistance type), normalized to $U\mid X \sim \text{Unif}(0,1)$ \\
$p(Z,X)$ & Baseline conditional propensity score or threshold function, $\mathbb{P}(W=1\mid Z,X)$ \\
$q(Z,X)$ & Alternative policy conditional propensity score \\
$\omega(x,u)$ & Policy-specific weight function for the target parameter \\
$\theta_{\omega}$ & Target causal estimand (e.g., Policy-Relevant Treatment Effect) \\
$\underline{\theta}_{\omega,1}, \overline{\theta}_{\omega,1}$ & Sharp lower and upper bounds for the $W=1$ component of the target parameter \\
$y_{\min}, y_{\max}$ & Lower and upper logical/support bounds of the potential outcome \\
$\pi_w$ & Structural joint distribution (probability measure) of $(Y(w), U, X)$ \\
$\mathbb{P}_{\text{obs}}$ & The true observed joint distribution of the data $(Y, W, Z, X)$ \\
$p_k$ & $k$-th smallest unique value in the range of the propensity score $p(z)$ or $p(z,x)$ \\
$I_i, I_i(x)$ & A disjoint sub-interval $(p_i, p_{i+1}]$ of the latent variable space for $U$ \\
$\mu_{1,i}$ & Identified conditional probability measure of $Y(1)$ on the interval $I_i$ \\
$Q_{Y,i}(t)$ & Quantile function of the identified conditional distribution $\mu_{1,i}$ \\
$Q_{\omega,i}(t)$ & Quantile function of the random variable $\omega(U)$ where $U \sim \text{Unif}(p_i, p_{i+1})$ \\
$\underline{p}, \overline{p}$ & Minimum and maximum values of the propensity score in the connected continuous-IV setting \\
$\underline{p}_k(x), \overline{p}_k(x)$ & Endpoints of the $k$-th connected component of the conditional propensity-score range \\
$L_k(x), G_k(x)$ & $k$-th identified LIV interval and adjacent OT gap in the general continuous-IV decomposition \\
$\mu_{1,k\mid x}$ & Identified conditional probability measure of $Y(1)$ on the gap $G_k(x)$ \\
$q_{j,i}(x)$ & $j$-th smallest value of alternative policy $q(z,x)$ falling strictly inside $[p_i(x), p_{i+1}(x))$ \\
$S_k$ & Baseline instrument level set, $\{z \in \mathcal{Z} : p(z,x) = p_k(x)\}$ \\
$T_{j,k}$ & Alternative instrument level set, $\{z \in \mathcal{Z} : q(z,x) = q_{j,k}(x)\}$ \\
$c_{\text{gap}}$ & Minimal gap separating true propensity score levels \\
$\kappa_{j,i}(x)$ & Relative quantile position within the $i$-th baseline interval, $\frac{q_{j,i}(x)-p_i(x)}{p_{i+1}(x)-p_i(x)}$ \\
$\nu_{j,i}(x)$ & Conditional quantile function of the outcome evaluated at $\kappa_{j,i}(x)$ \\
$\gamma_{\text{full},i}(X), \gamma_{j,i}(X), \gamma_K(X)$ & Weighting probabilities for instruments spanning or terminating within specific intervals \\
$J_{\text{full},i}(X), J_{j,i}(X)$ & Conditional expectations/integrals of the outcome within specific quantile bounds \\
$\psi$ & Neyman-orthogonal score function used for Double Machine Learning estimation \\
$I_1, I_2$ & Auxiliary and main estimation samples used for sample splitting \\
$K(\cdot), h_n$ & 1D kernel function and bandwidth sequence used for localized machine learning \\
$\mathcal{F}_n$ & Machine learning hypothesis class \\
$r_{p,n}, r_{X,n}$ & Convergence rates for the estimated propensity score and the machine learning algorithm \\
$F_{W|Z,X}^{-1}$ & Conditional quantile function of the treatment $W$ (structural equation for continuous $W$) \\
$\mathcal{U}_{\text{id}}(x,w)$ & Identifiable region (union of supports of $U$) for the continuous/multi-valued treatment \\
$I_{x,w}(z)$ & Interval support of the latent variable for a multi-valued treatment \\
$J_{x,w}(z)$ & Isolated, disjoint sub-region of $I_{x,w}(z)$ derived from the DAG \\
$P_{1,k}(X)$ & Conditional expectation $\mathbb{E}_{\text{obs}}[YW \mid Z \in S_k, X]$ in the discrete outcome case; in the binary case, $J_{\text{full},k}(X) = P_{1,k+1}(X) - P_{1,k}(X)$ \\
$h_{j,k}^{-}(X),\, h_{j,k}^{+}(X)$ & Binary-specialized integrals for the lower and upper bounds in the discrete outcome case \\
\end{longtable}}

%% file: sections/appendix.tex
\section{Proofs in \cref{sec:pid}}

\subsection{Proof of Sharpness}

\begin{proof}[Proof of \cref{prop:sharpness}]
    \noindent\textbf{Direct direction.}
    Let $\tilde{\pi}$ be a probability measure on $\mathcal{Y}^2 \times [0,1] \times \mathcal{X}$ whose $(Y(w), U, X)$-marginals satisfy $\tilde{\pi}_w \in \Gamma(\mathbb{P}_{\text{obs}})$ for $w \in \{0, 1\}$. By the constraints in~\eqref{eq:ot_formulation}, each $\tilde{\pi}_w$ is a probability measure on $\mathcal{Y} \times [0,1] \times \mathcal{X}$ satisfying:
    \begin{enumerate}
        \item[(C1)] $\displaystyle \int_0^{p(z,x)} \tilde{\pi}_1(\mathrm{d}y \mid u, x) \, \mathrm{d}u = \mathbb{P}_{\text{obs}}(\mathrm{d}y, W\!=\!1 \mid Z\!=\!z, X\!=\!x)$, for $\mathbb{P}_{\text{obs}}$-a.e.\ $(z,x)$;
        \item[(C2)] $\displaystyle \int_{p(z,x)}^{1} \tilde{\pi}_0(\mathrm{d}y \mid u, x) \, \mathrm{d}u = \mathbb{P}_{\text{obs}}(\mathrm{d}y, W\!=\!0 \mid Z\!=\!z, X\!=\!x)$, for $\mathbb{P}_{\text{obs}}$-a.e.\ $(z,x)$;
        \item[(C3)] $\tilde{\pi}_w(\mathrm{d}u, \mathrm{d}x) = \mathrm{d}u \, \mathbb{P}_{\text{obs}}(\mathrm{d}x)$ for $w \in \{0,1\}$.
    \end{enumerate}
    Since $\mathcal{Y}^2 \subseteq \mathbb{R}^2$ is a Polish space and the $(U, X)$-marginal of $\tilde{\pi}$ is $\mathrm{d}u \, \mathbb{P}_{\text{obs}}(\mathrm{d}x)$ by (C3) applied to either marginal, $\tilde{\pi}$ admits a regular conditional kernel $\tilde{\pi}(\mathrm{d}y_0, \mathrm{d}y_1 \mid u, x)$ for $(\mathrm{d}u \otimes \mathbb{P}_{\text{obs}}(\mathrm{d}x))$-a.e.\ $(u, x)$, whose $Y(0)$- and $Y(1)$-marginals recover $\tilde{\pi}_0(\mathrm{d}y_0 \mid u, x)$ and $\tilde{\pi}_1(\mathrm{d}y_1 \mid u, x)$ respectively.

    \medskip
    \noindent\textbf{Construction.}
    We construct the joint distribution $\mathbb{P}$ of $(Y(0), Y(1), U, X, Z, W)$ on the product space $\mathcal{Y}^2 \times [0,1] \times \mathcal{X} \times \mathcal{Z} \times \{0,1\}$ via the following disintegration:
    \begin{equation}\label{eq:sharp_construct}
        \mathbb{P}(\mathrm{d}y_0, \mathrm{d}y_1, \mathrm{d}u, \mathrm{d}x, \mathrm{d}z, \{w\})
        = \tilde{\pi}(\mathrm{d}y_0, \mathrm{d}y_1 \mid u, x) \, \mathrm{d}u \, \mathbb{P}_{\text{obs}}(\mathrm{d}z \mid x) \, \mathbb{P}_{\text{obs}}(\mathrm{d}x) \, \delta_{\indicator(u \leqslant p(z,x))}(\{w\}),
    \end{equation}
    where $\delta_{\indicator(u \leqslant p(z,x))}(\{w\})$ is the Dirac mass placing all weight on $w = \indicator(u \leqslant p(z,x))$, enforcing the threshold-crossing selection rule deterministically. Crucially, $(Y(0), Y(1))$ are drawn jointly from $\tilde{\pi}(\cdot \mid u, x)$ rather than from an independent product, so the coupling prescribed by $\tilde{\pi}$ is preserved conditional on $(U, X)$; the independent coupling $\tilde{\pi}_0(\mathrm{d}y_0 \mid u, x) \, \tilde{\pi}_1(\mathrm{d}y_1 \mid u, x)$ is one feasible choice among many. The measure in~\eqref{eq:sharp_construct} is well-defined since each factor is a regular conditional probability measure.

    \medskip
    \noindent\textbf{Verification of (i): joint marginal distribution.}
    We show that the $(Y(0), Y(1), U, X)$-marginal of $\mathbb{P}$ equals $\tilde{\pi}$. Marginalizing~\eqref{eq:sharp_construct} over $(Z, W)$:
    \begin{align*}
        \mathbb{P}(\mathrm{d}y_0, \mathrm{d}y_1, \mathrm{d}u, \mathrm{d}x)
        &= \tilde{\pi}(\mathrm{d}y_0, \mathrm{d}y_1 \mid u, x) \, \mathrm{d}u \, \mathbb{P}_{\text{obs}}(\mathrm{d}x) \underbrace{\sum_{w \in \{0, 1\}} \int_{\mathcal{Z}} \delta_{\indicator(u \leqslant p(z,x))}(\{w\}) \, \mathbb{P}_{\text{obs}}(\mathrm{d}z \mid x)}_{=\, 1} \\
        &= \tilde{\pi}(\mathrm{d}y_0, \mathrm{d}y_1 \mid u, x) \, \mathrm{d}u \, \mathbb{P}_{\text{obs}}(\mathrm{d}x).
    \end{align*}
    By (C3) the $(U, X)$-marginal of $\tilde{\pi}$ is $\mathrm{d}u \, \mathbb{P}_{\text{obs}}(\mathrm{d}x)$, so the right-hand side equals $\tilde{\pi}(\mathrm{d}y_0, \mathrm{d}y_1, \mathrm{d}u, \mathrm{d}x)$, verifying (i).

    \medskip
    \noindent\textbf{Verification of (ii): structural assumptions.}
    We verify each part of \cref{asp:structure}.
    \begin{itemize}
        \item \emph{Consistency.} Consistency holds trivially by setting the observed outcome to be $Y = W Y(1) + (1-W) Y(0)$.
        \item \emph{Conditional instrumental exogeneity.} By~\eqref{eq:sharp_construct}, conditional on $X = x$, the variables $(Y(0), Y(1), U)$ are drawn from $\tilde{\pi}(\mathrm{d}y_0, \mathrm{d}y_1 \mid u, x) \, \mathrm{d}u$ and $Z$ is drawn independently from $\mathbb{P}_{\text{obs}}(\mathrm{d}z \mid x)$. These appear as independent factors in~\eqref{eq:sharp_construct}, so $Z \perp (Y(0), Y(1), U) \mid X$ under $\mathbb{P}$.
        \item \emph{Threshold crossing.} The selection equation $W = \indicator(U \leqslant p(Z,X))$ holds $\mathbb{P}$-a.s.\ by the Dirac mass in~\eqref{eq:sharp_construct}. Moreover, (C3) gives $U \mid X \sim \mathrm{Unif}(0,1)$, which is a continuous distribution.
    \end{itemize}

    \medskip
    \noindent\textbf{Verification of (iii): observational equivalence.}
    We must show that the distribution of the observables $(Y, Z, X, W)$ induced by $\mathbb{P}$ coincides with $\mathbb{P}_{\text{obs}}$. It suffices to verify the conditional distribution of $(Y, W)$ given $(Z, X)$, since the marginal of $(Z, X)$ under $\mathbb{P}$ equals $\mathbb{P}_{\text{obs}}(Z, X)$ by construction.

    Fix $(z, x)$ and consider the event $W = 1$. Under $\mathbb{P}$, marginalizing~\eqref{eq:sharp_construct} over $(Y(0), Y(1))$ reduces the joint kernel to its $Y(1)$-marginal $\tilde{\pi}_1(\cdot \mid u, x)$; using $W = 1 \Leftrightarrow U \leqslant p(z,x)$ and $Y = Y(1)$ when $W = 1$,
    \begin{align*}
        \mathbb{P}(Y \in A, \, W = 1 \mid Z = z, X = x)
        &= \int_0^{p(z,x)} \tilde{\pi}_1(A \mid u, x) \, \mathrm{d}u.
    \end{align*}
    By constraint (C1), this equals $\mathbb{P}_{\text{obs}}(Y \in A, \, W = 1 \mid Z = z, X = x)$ for every measurable $A \subseteq \mathcal{Y}$. Similarly, for $W = 0$ we have $Y = Y(0)$ and $U > p(z, x)$, so marginalizing over $(Y(0), Y(1))$ leaves the $Y(0)$-marginal $\tilde{\pi}_0$ and
    \begin{align*}
        \mathbb{P}(Y \in A, \, W = 0 \mid Z = z, X = x)
        &= \int_{p(z,x)}^{1} \tilde{\pi}_0(A \mid u, x) \, \mathrm{d}u,
    \end{align*}
    which equals $\mathbb{P}_{\text{obs}}(Y \in A, \, W = 0 \mid Z = z, X = x)$ by constraint (C2). Thus the conditional laws of $(Y, W) \mid (Z, X)$ under $\mathbb{P}$ and $\mathbb{P}_{\text{obs}}$ agree for both values of $W$, completing the verification of (iii).

    \medskip
    \noindent\textbf{Converse direction.}
    Suppose $(Y(0), Y(1), U, X, Z, W)$ is a joint distribution satisfying (ii)--(iii), and let $\tilde{\pi}$ denote its $(Y(0), Y(1), U, X)$-marginal. Fix $w \in \{0, 1\}$. The $(U, X)$-marginal of $\tilde{\pi}_w$ equals the $(U, X)$-marginal of $\tilde{\pi}$; by \cref{asp:structure} (threshold crossing), $U \mid X \sim \mathrm{Unif}(0, 1)$ and $X$ has marginal $\mathbb{P}_{\text{obs}}(\mathrm{d}x)$ under (iii), so $\tilde{\pi}_w(\mathrm{d}u, \mathrm{d}x) = \mathrm{d}u \, \mathbb{P}_{\text{obs}}(\mathrm{d}x)$, establishing (C3). For (C1), using consistency and the selection rule $W = \indicator(U \leqslant p(Z, X))$ together with conditional instrumental exogeneity,
    \begin{align*}
        \mathbb{P}_{\text{obs}}(Y \in A, \, W = 1 \mid Z = z, X = x)
        &= \mathbb{P}(Y(1) \in A, \, U \leqslant p(z, x) \mid Z = z, X = x) \\
        &= \int_0^{p(z, x)} \tilde{\pi}_1(A \mid u, x) \, \mathrm{d}u,
    \end{align*}
    where the second equality uses $Z \perp (Y(1), U) \mid X$ and $U \mid X \sim \mathrm{Unif}(0, 1)$. Property (iii) identifies the left-hand side with the observational conditional distribution, yielding (C1). The argument for (C2) is symmetric. Therefore $\tilde{\pi}_w \in \Gamma(\mathbb{P}_{\text{obs}})$ for each $w$, completing the converse.
\end{proof}

\subsection{Proof of \cref{thm:covariate_continuous_lower_bound}} \label{sec:cov_proof}

\begin{proof}[Proof of \cref{thm:covariate_continuous_lower_bound}]
We prove the sharp lower bound; the upper bound follows by replacing the countermonotonic coupling with the comonotonic one and swapping $y_{\min}$ and $y_{\max}$ in the trivial bound term.

\medskip
\noindent\textbf{Disintegration over $X$.}
Recall that the $W=1$ side minimization problem~\eqref{eq:ot_formulation_1} reads $\min_{\pi_1 \in \Gamma_1(\mathbb{P}_{\text{obs}})} \mathbb{E}_{\pi_1}[Y(1)\,\omega(X,U)]$, where $\Gamma_1(\mathbb{P}_{\text{obs}})$ is the set of probability measures on $\mathcal{Y} \times [0,1] \times \mathcal{X}$ satisfying
\begin{align}
    &\int_0^{p(z,x)} \pi_1(\mathrm{d}y \mid u, x)\,\mathrm{d}u = \mathbb{P}_{\text{obs}}(\mathrm{d}y, W\!=\!1 \mid p(Z,X)\!=\!p(z,x), X\!=\!x),\ \mathbb{P}_{\text{obs}}\text{-a.e.\ } (z,x), \label{eq:pf_cts_constraint1} \\
    &\pi_1(\mathrm{d}u, \mathrm{d}x) = \mathrm{d}u\;\mathbb{P}_{\text{obs}}(\mathrm{d}x). \label{eq:pf_cts_constraint2}
\end{align}
The observational constraint~\eqref{eq:pf_cts_constraint1} couples $\pi_1(\mathrm{d}y \mid u, x)$ across different $u$ only through the integral $\int_0^{p(z,x)} \pi_1(\mathrm{d}y \mid u, x)\,\mathrm{d}u$, which for fixed $x$ involves only the kernel values at that same $x$. Hence the constraints decouple across $x$. Disintegrating $\pi_1$ with respect to the $(U,X)$-marginal~\eqref{eq:pf_cts_constraint2} and using the compactness of $\mathcal{Y}$ together with a measurable selection argument \citep[Proposition 7.50]{bertsekas1996stochastic}, we may exchange the minimization with the outer expectation:
\begin{equation}\label{eq:pf_cts_exchange}
    \min_{\pi_1 \in \Gamma_1(\mathbb{P}_{\text{obs}})} \mathbb{E}_{\pi_1}[Y(1)\,\omega(X,U)] = \mathbb{E}_{X}\left[\min_{\pi_{1,X} \in \Gamma_{1,X}} \int_0^1 \left(\int_{\mathcal{Y}} y\;\pi_{1,X}(\mathrm{d}y \mid u)\right) \omega(X,u)\,\mathrm{d}u\right],
\end{equation}
where $\Gamma_{1,X}$ is the set of probability kernels $\pi_{1,X}(\mathrm{d}y \mid u)$ on $\mathcal{Y} \times [0,1]$ satisfying~\eqref{eq:pf_cts_constraint1} with $x = X$ held fixed.

\medskip
\noindent\textbf{Maximal-interval decomposition for fixed $x$.}
Fix $x \in \mathcal{X}$. Under \cref{asp:continuity_p}, the map $z \mapsto p(z,x)$ is continuous on each connected component of $\mathcal{Z}$, so the image of each component is a closed interval (possibly degenerate). Merging overlaps and ordering from left to right, the conditional range admits the maximal-interval decomposition~\eqref{eq:range_decomposition_cov},
\[
    p(\mathcal{Z}, x) \;=\; \bigcup_{k=1}^{K_x}[\underline{p}_k(x), \overline{p}_k(x)], \qquad 0 \leqslant \underline{p}_1(x) \leqslant \overline{p}_1(x) < \cdots < \underline{p}_{K_x}(x) \leqslant \overline{p}_{K_x}(x) \leqslant 1,
\]
with boundary conventions $\overline{p}_0(x) := 0$ and $\underline{p}_{K_x+1}(x) := 1$. Adopting the notation of \cref{subsubsec:continuous_iv}, this partitions $[0,1]$ into the conditional LIV intervals $L_k(x) = [\underline{p}_k(x), \overline{p}_k(x)]$ ($k=1,\dots,K_x$), the OT gaps $G_k(x) = (\overline{p}_k(x), \underline{p}_{k+1}(x))$ ($k=0,\dots,K_x-1$), and the trivial tail $(\overline{p}_{K_x}(x), 1]$. The observational constraint at the $x$-slice reads: for every $p \in p(\mathcal{Z}, x)$,
\begin{equation}\label{eq:pf_cts_obs}
    \int_0^{p} \pi_{1,x}(\mathrm{d}y \mid u)\,\mathrm{d}u = \mathbb{P}_{\text{obs}}(\mathrm{d}y,\, W\!=\!1 \mid p(Z,X) = p,\, X = x),
\end{equation}
with the convention $\mathbb{P}_{\text{obs}}(\mathrm{d}y, W\!=\!1 \mid p(Z,X)=0, X=x) \equiv 0$.

\medskip
\noindent\textbf{The OT gaps $G_k(x)$.}
For each $k = 0,\dots,K_x-1$, differencing~\eqref{eq:pf_cts_obs} at $p = \underline{p}_{k+1}(x)$ and at $p = \overline{p}_k(x)$ isolates the integral over $G_k(x) = (\overline{p}_k(x), \underline{p}_{k+1}(x))$:
\begin{equation}\label{eq:pf_cts_gap_constraint}
    \int_{G_k(x)} \pi_{1,x}(\mathrm{d}y \mid u)\,\mathrm{d}u = (\underline{p}_{k+1}(x)-\overline{p}_k(x))\,\mu_{1,k\mid x}(\mathrm{d}y),
\end{equation}
where $\mu_{1,k\mid x}$ is the conditional gap-identified measure in~\eqref{eq:gap_identified_measure_cov}. Dividing by $|G_k(x)| = \underline{p}_{k+1}(x)-\overline{p}_k(x) > 0$, the marginal distribution of $Y(1)$ conditional on $U \in G_k(x)$ is fixed to $\mu_{1,k\mid x}$, while the $U$-marginal is $\nu_{k\mid x} := \mathrm{Unif}(G_k(x))$. The joint distribution of $(Y(1), U)$ over $G_k(x)$ is not otherwise pinned down, so the contribution of this gap to the inner minimization is a standard 1D optimal transport problem:
\begin{equation}\label{eq:pf_cts_ot_sub}
    \min_{\gamma_k \in \Pi(\mu_{1,k\mid x},\, \nu_{k\mid x})} \int_{\mathcal{Y} \times G_k(x)} y \cdot \omega(x, u)\,\mathrm{d}\gamma_k(y, u).
\end{equation}

\medskip
\noindent\textbf{The LIV intervals $L_k(x)$.}
Fix a non-degenerate $L_k(x)$ (with $\underline{p}_k(x) < \overline{p}_k(x)$). On the continuum $L_k(x)$ the constraint~\eqref{eq:pf_cts_obs} holds for every $p \in L_k(x)$; subtracting the constraint at $p$ from that at $p + \mathrm{d}p$ and passing to the limit yields, for Lebesgue-a.e.\ $u \in L_k(x)$,
\begin{equation}\label{eq:pf_cts_deriv}
    \pi_{1,x}(\mathrm{d}y \mid u) = \frac{\partial}{\partial p}\bigg|_{p=u} \mathbb{P}_{\text{obs}}(\mathrm{d}y,\, W\!=\!1 \mid p(Z,X) = p,\, X = x).
\end{equation}
This is the LIV identification result \citep{heckman1999local}: the conditional distribution of $Y(1)$ given $U = u$ and $X = x$ is point-identified for $u \in L_k(x)$ a.e. Every feasible kernel must agree on $L_k(x)$, so the contribution to the objective is uniquely determined:
\begin{equation}\label{eq:pf_cts_liv}
    \int_{\underline{p}_k(x)}^{\overline{p}_k(x)} \left(\int_{\mathcal{Y}} y\;\pi_{1,x}(\mathrm{d}y \mid u)\right)\omega(x, u)\,\mathrm{d}u
    = \int_{\underline{p}_k(x)}^{\overline{p}_k(x)} \left(\frac{\partial}{\partial u}\,\mathbb{E}_{\text{obs}}[Y\,W \mid p(Z,X) = u,\, X = x]\right)\omega(x, u)\,\mathrm{d}u.
\end{equation}
On each non-degenerate LIV interval, the constraint~\eqref{eq:pf_cts_obs} has the integral form
\[
p \mapsto \mathbb{P}_{\text{obs}}(\mathrm{d}y, W\!=\!1 \mid p(Z,X) = p, X = x)
= \int_{\underline{p}_k(x)}^{p} \pi_{1,x}(\mathrm{d}y \mid u)\,\mathrm{d}u + C_k(\mathrm{d}y),
\]
for a measure-valued constant $C_k(\mathrm{d}y)$ depending only on the left endpoint of the interval. Hence this map is absolutely continuous on $L_k(x)$ as an indefinite integral of the kernel $u \mapsto \pi_{1,x}(\mathrm{d}y \mid u)$, so the Radon--Nikodym derivative in~\eqref{eq:pf_cts_deriv} exists for Lebesgue-a.e.\ $u$. Moreover, $|y\,\omega(x,u)| \leqslant y_{\max}\|\omega\|_\infty$ by compactness of $\mathcal{Y}$ and boundedness of $\omega$, so the integral in~\eqref{eq:pf_cts_liv} is well-defined. Degenerate $L_k(x)$ ($\underline{p}_k(x) = \overline{p}_k(x)$) contribute a Lebesgue-null integral and drop out.

\medskip
\noindent\textbf{Compatibility of the region-wise constraints.}
We verify that the region-wise constraints~\eqref{eq:pf_cts_gap_constraint} and~\eqref{eq:pf_cts_deriv} are jointly equivalent to the original constraint family~\eqref{eq:pf_cts_obs}. Fix $p \in p(\mathcal{Z}, x)$ and let $k^\star$ be the unique index with $p \in L_{k^\star}(x)$. Then
\[
    \int_0^p \pi_{1,x}(\mathrm{d}y\mid u)\,\mathrm{d}u \;=\; \sum_{k=0}^{k^\star-1}\underbrace{\int_{G_k(x)}\pi_{1,x}(\mathrm{d}y\mid u)\,\mathrm{d}u}_{\text{fixed by }\eqref{eq:pf_cts_gap_constraint}} \;+\; \sum_{k=1}^{k^\star-1}\underbrace{\int_{L_k(x)}\pi_{1,x}(\mathrm{d}y\mid u)\,\mathrm{d}u}_{\text{determined by }\eqref{eq:pf_cts_deriv}} \;+\; \underbrace{\int_{\underline{p}_{k^\star}(x)}^p\pi_{1,x}(\mathrm{d}y\mid u)\,\mathrm{d}u}_{\text{determined by }\eqref{eq:pf_cts_deriv}}.
\]
A direct computation using the telescoping definition of $\mu_{1,k\mid x}$ in~\eqref{eq:gap_identified_measure_cov} and the fundamental theorem of calculus applied to~\eqref{eq:pf_cts_deriv} shows that this sum equals $\mathbb{P}_{\text{obs}}(\mathrm{d}y, W\!=\!1 \mid p(Z,X)=p, X=x)$, so~\eqref{eq:pf_cts_obs} is automatically satisfied. Conversely, the gap constraints~\eqref{eq:pf_cts_gap_constraint} and the LIV identity~\eqref{eq:pf_cts_deriv} are consequences of~\eqref{eq:pf_cts_obs} by construction. Hence the constraints on the gaps, LIV intervals, and trivial tail are decoupled, and the inner minimization decomposes across regions.

\medskip
\noindent\textbf{The trivial tail $(\overline{p}_{K_x}(x), 1]$.}
For $u > \overline{p}_{K_x}(x)$, no value of the instrument produces a propensity score exceeding $\overline{p}_{K_x}(x)$, so $Y(1)$ is never observed. The constraint family~\eqref{eq:pf_cts_obs} imposes no restriction on $\pi_{1,x}(\mathrm{d}y \mid u)$ there. By \cref{asp:bounded_y}, the pointwise minimum of $\int_{\mathcal{Y}} y\;\pi_{1,x}(\mathrm{d}y \mid u) \cdot \omega(x,u)$ is attained by $\pi_{1,x}(\cdot \mid u) = \delta_{y_{\min}}$ when $\omega(x,u) \geqslant 0$ and by $\delta_{y_{\max}}$ when $\omega(x,u) < 0$:
\begin{equation}\label{eq:pf_cts_trivial}
    \min \int_{\overline{p}_{K_x}(x)}^{1} \left(\int_{\mathcal{Y}} y\;\pi_{1,x}(\mathrm{d}y \mid u)\right)\omega(x, u)\,\mathrm{d}u = \int_{\overline{p}_{K_x}(x)}^{1} \Big( y_{\min}\max\{0,\omega(x,u)\} + y_{\max}\min\{0,\omega(x,u)\} \Big)\mathrm{d}u.
\end{equation}

\medskip
\noindent\textbf{Solving each OT sub-problem.}
For each $k = 0,\dots,K_x-1$, the sub-problem~\eqref{eq:pf_cts_ot_sub} is a one-dimensional optimal transport problem with product cost $c(y,u) = y \cdot \omega(x,u)$ between the marginals $\mu_{1,k\mid x}$ and $\nu_{k\mid x} = \mathrm{Unif}(G_k(x))$. Invoking \cref{thm:1d_ot} (countermonotonic coupling for the minimum):
\begin{equation}\label{eq:pf_cts_ot_sol}
    \min_{\gamma_k \in \Pi(\mu_{1,k\mid x},\, \nu_{k\mid x})} \int y \cdot \omega(x,u)\,\mathrm{d}\gamma_k = (\underline{p}_{k+1}(x)-\overline{p}_k(x))\int_0^1 Q_{Y,k\mid x}(t)\,Q_{\omega,k\mid x}(1-t)\,\mathrm{d}t,
\end{equation}
where $Q_{Y,k\mid x}$ is the quantile function of $\mu_{1,k\mid x}$ and $Q_{\omega,k\mid x}$ is the quantile function of $\omega(x, U)$ for $U \sim \mathrm{Unif}(G_k(x))$. The prefactor $|G_k(x)| = \underline{p}_{k+1}(x)-\overline{p}_k(x)$ arises from the Jacobian of the rescaling.

\medskip
\noindent\textbf{Aggregation over $X$.}
Summing~\eqref{eq:pf_cts_ot_sol} over $k = 0,\dots,K_x-1$, adding the LIV contribution~\eqref{eq:pf_cts_liv} summed over $k = 1,\dots,K_x$, and adding the trivial tail~\eqref{eq:pf_cts_trivial} into~\eqref{eq:pf_cts_exchange}, the global minimum is
\begin{align*}
    \underline{\theta}_{\omega,1} = \mathbb{E}_{X}\Bigg[\;&\sum_{k=0}^{K_X-1}(\underline{p}_{k+1}(X)-\overline{p}_k(X))\int_0^1 Q_{Y,k\mid X}(t)\,Q_{\omega,k\mid X}(1-t)\,\mathrm{d}t \\
    &+ \sum_{k=1}^{K_X}\int_{\underline{p}_k(X)}^{\overline{p}_k(X)} \left(\frac{\partial}{\partial u}\,\mathbb{E}_{\text{obs}}[Y\,W \mid p(Z,X) = u,\, X]\right)\omega(X, u)\,\mathrm{d}u \\
    &+ \int_{\overline{p}_{K_X}(X)}^{1}\Big(y_{\min}\max\{0,\omega(X,u)\} + y_{\max}\min\{0,\omega(X,u)\}\Big)\mathrm{d}u\Bigg].
\end{align*}
The outer expectation is well-defined: each term inside the brackets is bounded uniformly in $x$ by $y_{\max}\|\omega\|_\infty$ (using compactness of $\mathcal{Y}$ and boundedness of $\omega$), so integrability with respect to $\mathbb{P}_{\text{obs}}(\mathrm{d}x)$ is immediate.

\medskip
\noindent\textbf{Sharpness.}
The bound is sharp because: (i) by \cref{prop:sharpness}, every feasible pair $(\pi_0, \pi_1) \in \Gamma(\mathbb{P}_{\text{obs}})$ corresponds to a valid data-generating process satisfying \cref{asp:structure}, so the infimum of $\theta_\omega$ over $\Gamma(\mathbb{P}_{\text{obs}})$ equals the infimum over all observationally equivalent structural models; and (ii) the infimum in each conditional gap sub-problem~\eqref{eq:pf_cts_ot_sub} is attained by the countermonotonic coupling (which exists since $\mu_{1,k\mid x}$ and $\nu_{k\mid x}$ are Borel probability measures on $\mathbb{R}$), the contribution on each LIV interval~\eqref{eq:pf_cts_liv} is uniquely determined by every feasible kernel, and the infimum in the trivial tail~\eqref{eq:pf_cts_trivial} is attained by the Dirac mass at $y_{\min}$ where $\omega(x,u) \geqslant 0$ and at $y_{\max}$ where $\omega(x,u) < 0$. By a measurable selection argument, these conditional optimizers can be assembled into a globally feasible $\pi_1^\star \in \Gamma_1(\mathbb{P}_{\text{obs}})$ that attains the global infimum.
\end{proof}

\subsection{Illustrative Example for \cref{thm:discrete_lower_bound}}

\begin{example}\label{ex:discrete_bd}
    Consider a binary instrument with $\mathbb{P}(Z=1)=1/2$ and a constant propensity score $p(0) = p(1) = 1/2$. Let the target weight function be $\omega(u) = \mathbb{E}[\indicator(u \in (0, q_1(Z)))]$, where $q_1(0) = 1/4$ and $q_1(1) = 3/4$.

    Simple calculation gives the explicit piecewise form of the weight function:
    \begin{align*}
        \omega(u) =
        \begin{cases}
            1, & \text{if } u \in (0, 1/4] \\
            1/2, & \text{if } u \in (1/4, 3/4] \\
            0, & \text{if } u \in (3/4, 1]
        \end{cases}
    \end{align*}
    Because the propensity score is constant $1/2$, the data identifies a single constrained interval $I_0 = (0, 1/2]$ where $Y(1)$ is observed (corresponding to $W=1$), and an unconstrained interval $I_1 = (1/2, 1]$ where $Y(1)$ is entirely unobserved.

    Plugging this into our closed-form optimal transport solution, the sharp lower bound elegantly decomposes into three distinct economic components. On $I_0$, the optimal coupling pairs the highest weights in $\omega(u)$ with the lowest quantiles of the observed outcome distribution $Y \mid W=1$. For the unconstrained interval $I_1$, because $\omega(u) \geqslant 0$ here, the optimal solution trivially assigns the logical minimum $y_{\min}$.

    Evaluating the integral exactly yields:
    \begin{align}\label{eq:example_bound}
        \underline{\theta}_{\omega, 1}
        &= \int_0^{1/2} Q_{Y \mid W=1}(2u) \omega(u) \, \mathrm{d}u + \int_{1/2}^1 y_{\min} \omega(u) \, \mathrm{d}u \nonumber \\
        &= \underbrace{\frac{1}{4} \mathbb{E}[Y \mid W=1]}_{\text{Identifiable Mean}} \ + \ \underbrace{\frac{1}{8} \text{CVaR}_{0.5}(Y \mid W=1)}_{\text{Optimal Transport (CVaR)}} \ + \ \underbrace{\frac{1}{8} y_{\min}}_{\text{Trivial Lower Bound}}
    \end{align}
    where $\text{CVaR}_{\alpha}(Y) = \frac{1}{\alpha} \int_0^\alpha Q_Y(s) \, \mathrm{d}s$ is the Conditional Value at Risk, representing the mathematical expectation of $Y$ conditional on it falling within its bottom $\alpha$-quantile.

    This explicit decomposition reveals exactly why moment relaxation yields unnecessarily loose bounds. Moment relaxation only enforces the first moment over the entire observed interval: $\mathbb{E}[m_1(U) \mid U \in (0, 1/2)] = \mathbb{E}[Y \mid W=1]$.

    To minimize the objective, their linear program evaluates:
    \begin{align*}
        \min_{m_1} \left( \frac{1}{4} \mathbb{E}[Y \mid W=1] + \frac{1}{2} \int_0^{1/4} m_1(u) \, \mathrm{d}u \right) + \frac{1}{8} y_{\min}
    \end{align*}
    subject to the above-mentioned moment constraints. Lacking the strict distributional constraint that bounds the tail behavior, the MTR optimal solution pushes the unconstrained function $m_1(u)$ to its worst-case lower bound, replacing the true exact tail expectation $\text{CVaR}_{0.5}(Y \mid W=1)$ with the global constant $y_{\min}$. Because $\text{CVaR}_{0.5}(Y \mid W=1) \geqslant y_{\min}$, the moment relaxation yields extremely conservative lower bounds. Our CCOT formulation explicitly prevents this by leveraging all distributional information in the observed data.

    We further remark that a decomposition similar to (\ref{eq:example_bound}) exists in general (see \cref{sec:est_inf} for details), and \cref{fig:decomposition} illustrates the three identification regions that compose the bound.
\end{example}

\subsection{Comparison with Moment Relaxation Approach}

\begin{example}[Comparison with Moment Relaxation Approach] \label{exp:loose_bd}
    Suppose there are no covariates and the weight function is $\omega(u) = \indicator(u \in (1/4, 1/2))$. Our target parameter simplifies to:
    \begin{equation*}
        \mathbb{E}[(Y(1) - Y(0)) \indicator(U \in (1/4, 1/2))].
    \end{equation*}
    Consider a data-generating process with a binary instrument $Z \in \{0,1\}$, propensity scores $p(0) = 1/4$ and $p(1) = 3/4$, and the observed outcome is constantly zero, $Y(w) = 0$ almost surely for $w \in \{0,1\}$.

    Under our CCOT formulation, this parameter is point-identified and its value is exactly $0$. Because the conditional distribution of the observed outcome is a Dirac measure at zero, $\mathbb{P}(Y \in \cdot \mid W=1, Z=z) = \delta_0(\cdot)$ for each $z$, the observational constraint forces $\pi_{1}(\cdot \mid u) = \delta_0$ for almost all $u \in (0, 3/4)$. Similarly, $\mathbb{P}(Y \in \cdot \mid W=0, Z=z) = \delta_0(\cdot)$ forces $\pi_0(\cdot \mid u) = \delta_0$ for almost all $u \in (1/4, 1)$. Since $\omega$ is supported on $(1/4, 1/2) \subset (0, 3/4) \cap (1/4, 1)$, both $Y(1)$ and $Y(0)$ are point-identified on the support of $\omega$, and the target is sharply $0$.

    In contrast, \citet{magne2018using} relax the distributional constraints into moment constraints on the MTR, defined as $m_w(u) = \mathbb{E}[Y(w) \mid U=u]$. Under their framework, the data only restricts the MTR through its conditional expectation:
    \begin{align*}
        \frac{1}{p(z)} \int_{0}^{p(z)} m_1(u) \, \mathrm{d}u &= \mathbb{E}[Y \mid W=1, Z=z] = 0, \\
        \frac{1}{1-p(z)} \int_{p(z)}^{1} m_0(u) \, \mathrm{d}u &= \mathbb{E}[Y \mid W=0, Z=z] = 0.
    \end{align*}
    Evaluating at $z=0$ and $z=1$ and taking differences yields:
    \begin{equation*}
        \int_{1/4}^{3/4} m_1(u) \, \mathrm{d}u = 0, \qquad \int_{1/4}^{3/4} m_0(u) \, \mathrm{d}u = 0.
    \end{equation*}
    Their method bounds the target parameter by solving:
    \begin{equation*}
        \begin{split}
            \max_{m_0,m_1 \in \mathcal{M}} / \min_{m_0,m_1 \in \mathcal{M}} \quad & \int_{1/4}^{1/2} (m_1(u) - m_0(u)) \, \mathrm{d}u \\
            \text{s.t.} \quad & \int_{1/4}^{3/4} m_1(u) \, \mathrm{d}u = 0, \quad \int_{1/4}^{3/4} m_0(u) \, \mathrm{d}u = 0.
        \end{split}
    \end{equation*}

    If the function class $\mathcal{M}$ is sufficiently rich, this relaxation discards the point-mass nature of the outcome and yields unnecessarily wide bounds. The moment constraints only pin the \emph{integral} of each MTR over $(1/4, 3/4)$ to zero, but leave the \emph{shape} within this interval unrestricted. An adversarial optimizer exploits this by setting $m_1(u) = M$ on $(1/4, 1/2)$ and $m_1(u) = -M$ on $(1/2, 3/4)$, which satisfies $\int_{1/4}^{3/4} m_1 = 0$ yet contributes $M/4$ to the objective. Applying the symmetric construction to $m_0$ yields a total upper bound of $M/2$ instead of $0$. The resulting bounds $[-M/2, \, M/2]$ are arbitrarily wider than the sharp value of $0$. By strictly enforcing the full distributional constraint, the CCOT approach eliminates these mathematically feasible but structurally impossible counterfactuals. \cref{fig:example3_1} illustrates this gap numerically.
    \begin{figure}
        \centering
        \includegraphics[width=0.5\linewidth]{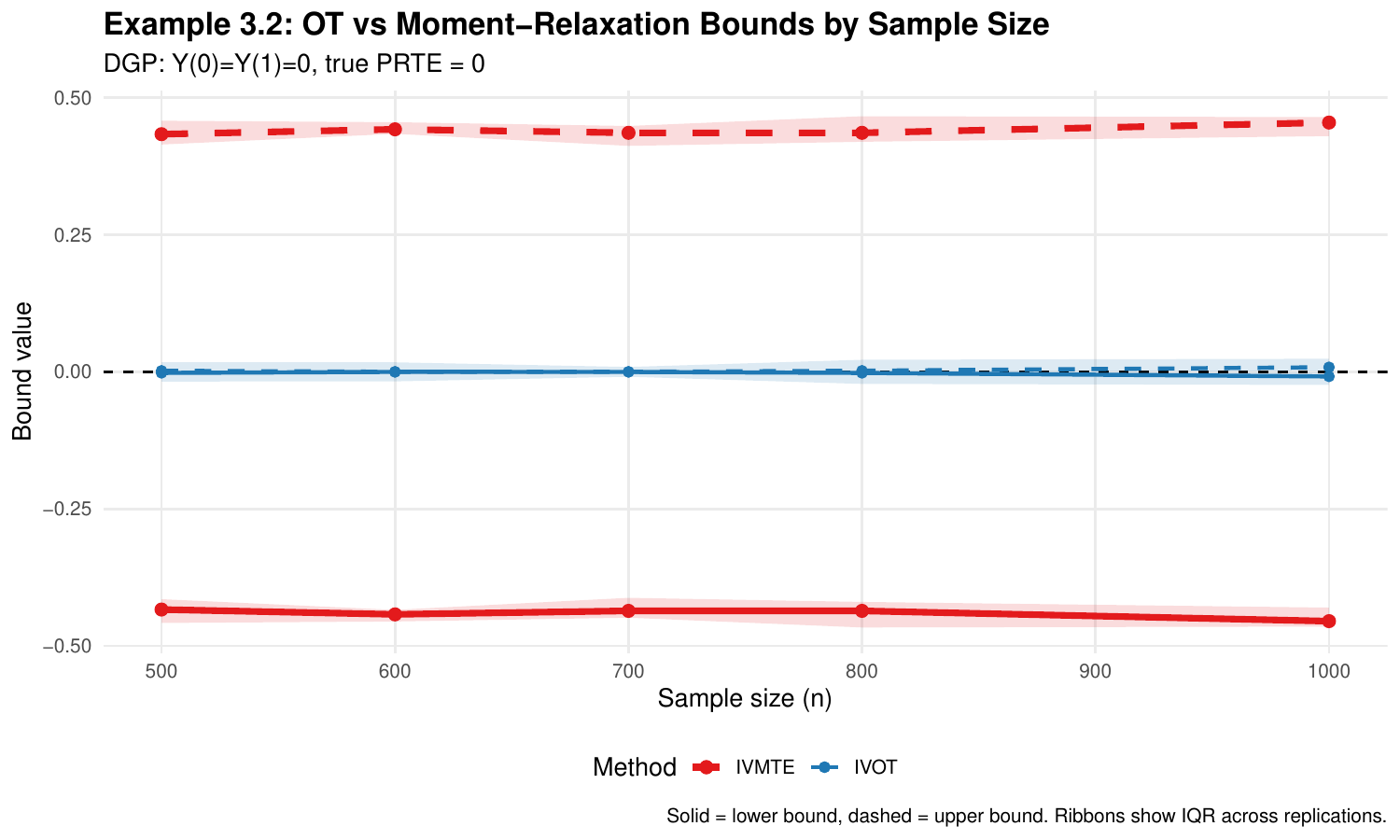}
        \caption{Bounds of the moment relaxation approach and the CCOT approach by different sample sizes. Each data point is an average of 20 runs. In the experiment $M=1$, i.e., $y_{\max}=1, y_{\min}=-1$. }
        \label{fig:example3_1}
    \end{figure}
\end{example}

\section{Proofs in \cref{sec:extension}}\label{sec:proofs_extension}

\begin{proof}[Proof of \cref{prop:sharpness_generalized}]
    Let $\{\tilde{\pi}_{w,x}\}_{w \in \mathcal{W}, x \in \mathcal{X}} \in \Gamma_{\text{gen}}(\mathbb{P}_{\text{obs}})$ be an arbitrary feasible family. By the constraints in~\eqref{eq:ot_generalized_separate}, each $\tilde{\pi}_{w,x}$ is a probability measure on $\mathcal{Y} \times [0,1]$ satisfying:
    \begin{enumerate}
        \item[(C1)] $\displaystyle \int_0^1 \tilde{\pi}_{w,x}(\mathrm{d}y \mid u) \, \mathrm{d}F_{U \mid Z, X, W}(u \mid z, x, w) = \mathbb{P}_{\text{obs}}(\mathrm{d}y \mid W\!=\!w, Z\!=\!z, X\!=\!x)$, for all $z \in \mathcal{Z}(x,w)$, for $\lambda$-a.e.\ $w \in \mathcal{W}$, and for $\mathbb{P}_{\text{obs}}$-a.e.\ $x \in \mathcal{X}$;
        \item[(C2)] $\tilde{\pi}_{w,x}(\mathrm{d}u) = \mathrm{d}u$ for $\lambda$-a.e.\ $w \in \mathcal{W}$ and $\mathbb{P}_{\text{obs}}$-a.e.\ $x \in \mathcal{X}$.
    \end{enumerate}
    Here $\tilde{\pi}_{w,x}(\mathrm{d}y \mid u)$ denotes the regular conditional distribution obtained by disintegrating $\tilde{\pi}_{w,x}$ with respect to the $U$-marginal, which exists since $\mathcal{Y} \subseteq \mathbb{R}$ is a Polish space.

    \medskip
    \noindent\textbf{Construction.}
    We construct the joint distribution $\mathbb{P}$ of $(\{Y(w)\}_{w \in \mathcal{W}}, U, X, Z, W)$ in three stages.

    \emph{Stage 1: Base variables.} Draw $(X, Z, U)$ from:
    \begin{equation}\label{eq:sharp_base_gen}
        X \sim \mathbb{P}_{\text{obs}}(\mathrm{d}x), \quad Z \mid X \sim \mathbb{P}_{\text{obs}}(\mathrm{d}z \mid x), \quad U \sim \mathrm{Unif}(0,1) \text{ independently of } (Z, X).
    \end{equation}
    Note that $U$ is drawn independently of $Z$ given $X$; this will ensure conditional instrumental exogeneity. The treatment is then deterministically assigned as $W = F_{W \mid Z,X}^{-1}(U \mid Z, X)$.

    \emph{Stage 2: Potential outcome process.} Conditional on $(U, X) = (u, x)$, we construct the stochastic process $\{Y(w)\}_{w \in \mathcal{W}}$ as follows. For each finite subset $\{w_1, \dots, w_m\} \subset \mathcal{W}$, define the finite-dimensional distribution:
    \begin{equation}\label{eq:sharp_fdd}
        \mathbb{P}\big(Y(w_1) \in A_1, \dots, Y(w_m) \in A_m \mid U = u, X = x\big) = \prod_{j=1}^{m} \tilde{\pi}_{w_j, x}(A_j \mid u).
    \end{equation}
    That is, conditional on $(U, X)$, the potential outcomes at any finite collection of treatment levels are mutually independent, with each $Y(w)$ drawn from $\tilde{\pi}_{w,x}(\cdot \mid u)$. These finite-dimensional distributions are trivially consistent under permutations and marginalization (since each factor is a probability measure). Because $\mathcal{Y}$ is Polish and $\mathcal{W}$ is a Borel subset of a compact interval, the Kolmogorov extension theorem yields a probability measure on the product $\sigma$-algebra of $\mathcal{Y}^{\mathcal{W}}$ whose finite-dimensional marginals are given by~\eqref{eq:sharp_fdd}; this is sufficient for the integral functionals studied here, and we do not require additional path regularity. This defines the conditional distribution of $\{Y(w)\}_{w \in \mathcal{W}}$ given $(U, X)$.

    \emph{Stage 3: Full joint distribution.} The joint law $\mathbb{P}$ of $(\{Y(w)\}_{w \in \mathcal{W}}, U, X, Z, W)$ is obtained by combining Stages~1 and~2. Crucially, the process $\{Y(w)\}_{w \in \mathcal{W}}$ depends on $(Z, W)$ only through $(U, X)$, since Stage~2 conditions solely on $(U, X)$.

    \medskip
    \noindent\textbf{Verification of (i): conditional distributions.}
    Fix $w \in \mathcal{W}$ in the full-$\lambda$-measure set on which (C2) holds, and fix $x \in \mathcal{X}$ in the corresponding full-$\mathbb{P}_{\mathrm{obs}}$-measure set. By~\eqref{eq:sharp_fdd} with $m = 1$, the conditional distribution of $Y(w)$ given $(U, X) = (u, x)$ under $\mathbb{P}$ is exactly $\tilde{\pi}_{w,x}(\cdot \mid u)$. Since $U$ has Lebesgue density on $[0,1]$ independently of $X$ by~\eqref{eq:sharp_base_gen}, the joint distribution of $(Y(w), U)$ given $X = x$ is:
    \begin{align*}
        \mathbb{P}(Y(w) \in A, \, U \in B \mid X = x)
        = \int_{B} \tilde{\pi}_{w,x}(A \mid u) \, \mathrm{d}u
        = \tilde{\pi}_{w,x}(Y(w) \in A, \, U \in B),
    \end{align*}
    where the last equality uses (C2). This holds for all measurable $A \subseteq \mathcal{Y}$ and $B \subseteq [0,1]$, so $(Y(w), U) \mid X = x$ has distribution $\tilde{\pi}_{w,x}$ under $\mathbb{P}$. Therefore part~(i) of the proposition holds for $\lambda$-a.e.\ $w$ and $\mathbb{P}_{\mathrm{obs}}$-a.e.\ $x$.

    \medskip
    \noindent\textbf{Verification of (ii): structural assumptions.}
    We verify each part of \cref{asp:structure_generalized}.
    \begin{itemize}
        \item \emph{Consistency.} Under $\mathbb{P}$, the treatment is $W = F_{W \mid Z,X}^{-1}(U \mid Z, X)$, and the observed outcome is $Y = Y(W)$ by definition.
        \item \emph{Conditional instrumental exogeneity.} By construction, conditional on $X = x$, the instrument $Z$ is drawn from $\mathbb{P}_{\text{obs}}(\mathrm{d}z \mid x)$ in Stage~1, while the process $(U, \{Y(w)\}_{w \in \mathcal{W}})$ is constructed in Stages~1--2 depending only on $(U, X)$ and not on $Z$. Since $U$ is drawn independently of $Z$ given $X$, and the conditional law of $\{Y(w)\}_{w \in \mathcal{W}}$ given $(U, X)$ does not involve $Z$, we have $Z \perp (U, \{Y(w)\}_{w \in \mathcal{W}}) \mid X$ under $\mathbb{P}$.
        \item \emph{Selection mechanism.} The equation $W = F_{W \mid Z,X}^{-1}(U \mid Z, X)$ holds $\mathbb{P}$-a.s.\ by Stage~1. Moreover, $U \mid X \sim \mathrm{Unif}(0,1)$ by~\eqref{eq:sharp_base_gen}.
    \end{itemize}

    \medskip
    \noindent\textbf{Verification of (iii): observational equivalence.}
    We must show that the distribution of the observables $(Y, Z, X, W)$ induced by $\mathbb{P}$ coincides with $\mathbb{P}_{\text{obs}}$. The marginal of $(Z, X)$ under $\mathbb{P}$ equals $\mathbb{P}_{\text{obs}}(Z, X)$ by~\eqref{eq:sharp_base_gen}. We verify the remaining two components: (1) the conditional distribution of $W$ given $(Z, X)$, and (2) the conditional distribution of $Y$ given $(W, Z, X)$.

    \emph{Conditional distribution of $W$ given $(Z, X)$.}
    Under $\mathbb{P}$, we have $W = F_{W \mid Z,X}^{-1}(U \mid Z, X)$ with $U \sim \mathrm{Unif}(0,1)$ independently of $(Z, X)$. By the quantile transformation, $W \mid (Z=z, X=x)$ has cumulative distribution function $F_{W \mid Z,X}(\cdot \mid z, x)$, which is exactly the observed conditional distribution of $W$ under $\mathbb{P}_{\text{obs}}$.

    \emph{Conditional distribution of $Y$ given $(W, Z, X)$.}
    Fix $(z, x)$ and $\lambda$-a.e.\ $w \in \mathcal{W}$. By the selection mechanism, the event $\{W = w\}$ corresponds to $U \in \mathcal{U}_{w}(z,x) \coloneqq \{u \in [0,1] : F_{W \mid Z,X}^{-1}(u \mid z, x) = w\}$. By the structure of the quantile function, $\mathcal{U}_{w}(z,x)$ is always an interval (possibly a singleton). On this interval, the Lebesgue measure restricted to $\mathcal{U}_{w}(z,x)$ and renormalized to a probability measure is precisely $F_{U \mid Z,X,W}(\cdot \mid z, x, w)$.

    By consistency, $Y = Y(W) = Y(w)$ on $\{W = w\}$. Since $\{Y(w')\}_{w' \in \mathcal{W}}$ is conditionally independent of $Z$ given $(U, X)$ by Stage~2, the regular conditional distribution of $Y$ given $(W = w, Z = z, X = x)$ under $\mathbb{P}$ is:
    \begin{align*}
        \mathbb{P}(Y \in A \mid W = w, Z = z, X = x)
        &= \frac{1}{|\mathcal{U}_{w}(z,x)|}\int_{\mathcal{U}_{w}(z,x)} \tilde{\pi}_{w,x}(A \mid u) \, \mathrm{d}u \\
        &= \int_0^1 \tilde{\pi}_{w,x}(A \mid u) \, \mathrm{d}F_{U \mid Z,X,W}(u \mid z, x, w).
    \end{align*}
    By constraint (C1), the right-hand side equals $\mathbb{P}_{\text{obs}}(Y \in A \mid W = w, Z = z, X = x)$ for $\lambda$-a.e.\ $w$.

    Since the conditional distribution of $W \mid (Z, X)$ and the conditional distribution of $Y \mid (W, Z, X)$ agree with $\mathbb{P}_{\text{obs}}$ up to the usual null sets in the conditioning variable, the induced distribution of $(Y, Z, X, W)$ under $\mathbb{P}$ is identical to $\mathbb{P}_{\text{obs}}$.
\end{proof}

\begin{proof}[Proof of \cref{thm:tsem_bounds}]
We prove the sharp lower bound; the upper bound follows by swapping $y_{\min}$ and $y_{\max}$ in the trivial bound term.

By \cref{prop:sharpness_generalized}, the infimum of $\theta_\omega$ over all structural models satisfying \cref{asp:structure_generalized} that are observationally equivalent to $\mathbb{P}_{\text{obs}}$ equals the infimum over the feasible set $\Gamma_{\text{gen}}(\mathbb{P}_{\text{obs}})$. The global minimization problem is:
\[
    \underline{\theta} = \min_{\{\pi_{w,x}\} \in \Gamma_{\text{gen}}(\mathbb{P}_{\text{obs}})} \int_{\mathcal{W}} \mathbb{E}_{X}\!\left[\int_0^1 \mathbb{E}_{\pi_{w,X}}[Y(w) \mid U\!=\!u]\,\omega(w,X,u)\,\mathrm{d}u\right]\mathrm{d}w,
\]
where the feasible set $\Gamma_{\text{gen}}(\mathbb{P}_{\text{obs}})$ consists of all measure families $\{\pi_{w,x}\}_{w \in \mathcal{W}, x \in \mathcal{X}}$ satisfying, for each $(x,w)$:
\begin{enumerate}
    \item[(C1)] $\displaystyle \int_0^1 \pi_{w,x}(\mathrm{d}y \mid u)\,\mathrm{d}F_{U \mid Z,X,W}(u \mid z,x,w) = \mathbb{P}_{\text{obs}}(\mathrm{d}y \mid W\!=\!w, Z\!=\!z, X\!=\!x)$, for all $z \in \mathcal{Z}(x,w)$;
    \item[(C2)] $\pi_{w,x}(\mathrm{d}u) = \mathrm{d}u$.
\end{enumerate}
For a given identifiable weight $\omega(w,x,u)$, the objective for each $(x,w)$-stratum is:
\[
    \int_0^1 \mathbb{E}_{\pi_{w,x}}[Y(w) \mid U\!=\!u]\,\omega(w,x,u)\,\mathrm{d}u.
\]

\medskip
\noindent\textbf{Separation over $(x,w)$ strata.}
Both the objective and the constraints (C1)--(C2) are separable across $(x,w)$: the kernel $\pi_{w,x}$ at one stratum is constrained independently of all other strata, and the objective decomposes additively. Since $\mathcal{Y}$ is compact (\cref{asp:bounded_y}), the integrand $\int_{\mathcal{Y}} y\,\pi_{w,x}(\mathrm{d}y \mid u) \cdot \omega(w,x,u)$ is bounded uniformly in $(x,w)$. By a measurable selection argument \citep[Proposition~7.50]{bertsekas1996stochastic}, the minimization and the outer integration can be exchanged:
\[
    \underline{\theta} = \int_{\mathcal{W}} \mathbb{E}_{X}\!\left[\underline{\theta}(w, X)\right]\mathrm{d}w,
\]
where $\underline{\theta}(x,w)$ denotes the value of the $(x,w)$-sub-problem in~\eqref{eq:ot_generalized_separate}. It therefore suffices to solve each sub-problem individually.

\medskip
\noindent\textbf{Decomposition of the objective for fixed $(x,w)$.}
Fix $(x,w)$ and drop them from the notation where unambiguous. Under \cref{asp:strict_monotonicity}, the structural treatment function $h(z,x,\cdot)$ is strictly increasing in $U$, so for each instrument value $z$, the conditional distribution $F_{U \mid Z=z, X=x, W=w}$ degenerates to a point mass at the unique $u_z$ satisfying $w = F_{W \mid Z,X}^{-1}(u_z \mid z, x)$. The identifiable support is therefore the measurable subset
\[
    \mathcal{U}_{\text{id}}(x,w) = \left\{ F_{W \mid Z,X}(w \mid z, x) : z \in \mathcal{Z}(x,w) \right\} \subseteq [0,1].
\]

We decompose the unit interval into the identified region $\mathcal{U}_{\text{id}}(x,w)$ and its complement $[0,1] \setminus \mathcal{U}_{\text{id}}(x,w)$:
\begin{align*}
    \int_0^1 \mathbb{E}_{\pi}[Y(w) \mid U\!=\!u]\,\omega(w,x,u)\,\mathrm{d}u
    &= \underbrace{\int_0^1 \indicator\!\big(u \in \mathcal{U}_{\text{id}}(x,w)\big)\,\mathbb{E}_{\pi}[Y(w) \mid U\!=\!u]\,\omega(w,x,u)\,\mathrm{d}u}_{\text{identified contribution}} \\
    &\quad + \underbrace{\int_0^1 \indicator\!\big(u \notin \mathcal{U}_{\text{id}}(x,w)\big)\,\mathbb{E}_{\pi}[Y(w) \mid U\!=\!u]\,\omega(w,x,u)\,\mathrm{d}u}_{\text{unidentified contribution}}.
\end{align*}
Because $F_{U \mid Z=z, X=x, W=w}$ is degenerate at $u_z$ for each $z$, constraint~(C1) pins down $\pi(\mathrm{d}y \mid u)$ pointwise on the identified region $u \in \mathcal{U}_{\text{id}}(x,w)$ and imposes no restriction on the complement. We therefore optimize the identified and unidentified contributions independently.

\medskip
\noindent\textbf{Point identification on $\mathcal{U}_{\text{id}}(x,w)$.}
Since $F_{U \mid Z=z, X=x, W=w}$ is a point mass at $u_z$, constraint~(C1) collapses to:
\[
    \pi_{w,x}(\mathrm{d}y \mid u_z) = \mathbb{P}_{\text{obs}}(\mathrm{d}y \mid W\!=\!w, Z\!=\!z, X\!=\!x),
    \quad \text{for each } z \in \mathcal{Z}(x,w).
\]
Hence for every $u \in \mathcal{U}_{\text{id}}(x,w)$, selecting any $z_u \in \mathcal{Z}(x,w)$ such that $F_{W \mid Z,X}(w \mid z_u, x) = u$ yields
\[
    \pi_{w,x}(\mathrm{d}y \mid u) = \mathbb{P}_{\text{obs}}(\mathrm{d}y \mid W\!=\!w, Z\!=\!z_u, X\!=\!x).
\]
This uniquely pins down $\pi_{w,x}(\mathrm{d}y \mid u)$ on $\mathcal{U}_{\text{id}}(x,w)$; if multiple instrument values map to the same $u$, the observational equivalence constraints imply the same conditional law, so the choice of $z_u$ is immaterial. In particular, there is no freedom in the coupling on this region, and the identified contribution is uniquely determined:
\begin{align*}
    &\int_0^1 \indicator\!\big(u \in \mathcal{U}_{\text{id}}(x,w)\big)\,
        \mathbb{E}_{\pi_{w,x}}[Y(w) \mid U\!=\!u]\,
        \omega(w,x,u)\,\mathrm{d}u \\
    &\quad= \int_0^1 \indicator\!\big(u \in \mathcal{U}_{\text{id}}(x,w)\big)\,
        \mathbb{E}_{\text{obs}}[Y \mid W\!=\!w, Z\!=\!z_u, X\!=\!x]\,
        \omega(w,x,u)\,\mathrm{d}u,
\end{align*}
where $z_u$ is the instrument value satisfying $F_{W \mid Z,X}(w \mid z_u, x) = u$.

\medskip
\noindent\textbf{Trivial bound on $[0,1] \setminus \mathcal{U}_{\text{id}}(x,w)$.}
For $u \notin \mathcal{U}_{\text{id}}(x,w)$, no instrument value generates treatment level $w$ at latent rank $u$. The constraints impose no restriction on $\pi_{w,x}(\mathrm{d}y \mid u)$, and the minimization is unconstrained. By \cref{asp:bounded_y}, the pointwise minimum is achieved by $\pi_{w,x}(\cdot \mid u) = \delta_{y_{\min}}$ when $\omega(w,x,u) \geqslant 0$, and by $\pi_{w,x}(\cdot \mid u) = \delta_{y_{\max}}$ when $\omega(w,x,u) < 0$, yielding the contribution:
\[
    \int_0^1 \indicator\!\big(u \notin \mathcal{U}_{\text{id}}(x,w)\big)\,
    \Big( y_{\min}\max\{0,\omega(w,x,u)\} + y_{\max}\min\{0,\omega(w,x,u)\} \Big)\mathrm{d}u.
\]

\medskip
\noindent\textbf{Aggregation.}
Combining the two regions and integrating over all $(x,w)$ strata:
\begin{align*}
    \underline{\theta}
    &= \int_{\mathcal{W}} \mathbb{E}_{X}\!\Bigg[
        \int_0^1 \indicator\!\big(u \in \mathcal{U}_{\text{id}}(x,w)\big)\,
        \mathbb{E}_{\text{obs}}[Y \mid w, Z\!=\!z_u, X]\,
        \omega(w,X,u)\,\mathrm{d}u
    \Bigg]\mathrm{d}w \\
    &\quad + \int_{\mathcal{W}} \mathbb{E}_{X}\!\Bigg[
        \int_0^1 \indicator\!\big(u \notin \mathcal{U}_{\text{id}}(x,w)\big)\,
        \Big( y_{\min}\max\{0,\omega(w,X,u)\} + y_{\max}\min\{0,\omega(w,X,u)\} \Big)\mathrm{d}u
    \Bigg]\mathrm{d}w.
\end{align*}
Integrability is immediate from the compactness of $\mathcal{Y}$.

\medskip
\noindent\textbf{Sharpness.}
By \cref{prop:sharpness_generalized}, the infimum over $\Gamma_{\text{gen}}(\mathbb{P}_{\text{obs}})$ equals the infimum over all observationally equivalent structural models. On $\mathcal{U}_{\text{id}}(x,w)$ the value is uniquely pinned down by the data; on the complement the infimum is attained by the Dirac mass $\delta_{y_{\min}}$ where $\omega(w,x,u) \geqslant 0$ and $\delta_{y_{\max}}$ where $\omega(w,x,u) < 0$. For each $(x,w)$, this defines a conditional kernel $\pi_{w,x}^\star(\mathrm{d}y \mid u)$ that is measurable in $u$ (as a piecewise composition of the data-determined kernel on the finite set $\mathcal{U}_{\text{id}}(x,w)$ and the Dirac masses on the complement). By a standard measurable selection argument \citep[Proposition~7.50]{bertsekas1996stochastic}, these conditional optimizers can be assembled into a jointly measurable map $(x,w) \mapsto \pi_{w,x}^\star$ that belongs to $\Gamma_{\text{gen}}(\mathbb{P}_{\text{obs}})$ and attains the global infimum.
By \cref{prop:sharpness_generalized}, the infimum over $\Gamma_{\text{gen}}(\mathbb{P}_{\text{obs}})$ equals the infimum over all observationally equivalent structural models. On $\mathcal{U}_{\text{id}}(x,w)$ the value is uniquely pinned down by the data; on the complement the infimum is attained by the Dirac mass $\delta_{y_{\min}}$ where $\omega(w,x,u) \geqslant 0$ and $\delta_{y_{\max}}$ where $\omega(w,x,u) < 0$. For each $(x,w)$, this defines a conditional kernel $\pi_{w,x}^\star(\mathrm{d}y \mid u)$ that is measurable in $u$ (as a piecewise composition of the data-determined kernel on the measurable set $\mathcal{U}_{\text{id}}(x,w)$ and the Dirac masses on the complement). By a standard measurable selection argument \citep[Proposition~7.50]{bertsekas1996stochastic}, these conditional optimizers can be assembled into a jointly measurable map $(x,w) \mapsto \pi_{w,x}^\star$ that belongs to $\Gamma_{\text{gen}}(\mathbb{P}_{\text{obs}})$ and attains the global infimum.
\end{proof}

\begin{proof}[Proof of \cref{thm:multivalued_bounds}]
We prove the sharp lower bound; the upper bound follows by replacing the countermonotonic coupling with the comonotonic one and swapping $y_{\min}$ and $y_{\max}$ in the trivial bound term.

For each $(x,w)$-stratum, the sub-problem from~\eqref{eq:ot_generalized_separate} reads:
\[
    \underline{\theta}(x,w) = \min_{\pi_{w,x}} \int_0^1 \mathbb{E}_{\pi_{w,x}}[Y(w) \mid U\!=\!u]\,\omega(w,x,u)\,\mathrm{d}u
\]
subject to the observational constraints: for every $z \in \mathcal{Z}(x,w)$,
\[
    \frac{1}{|I_{x,w}(z)|}\int_{I_{x,w}(z)} \pi_{w,x}(\mathrm{d}y \mid u)\,\mathrm{d}u = \mathbb{P}_{\text{obs}}(\mathrm{d}y \mid W\!=\!w, Z\!=\!z, X\!=\!x),
\]
and the uniform marginal constraint $\pi_{w,x}(\mathrm{d}u) = \mathrm{d}u$, where $I_{x,w}(z) = \{u \in [0,1] : F_{W \mid Z,X}^{-1}(u \mid z,x) = w\}$ is the latent support interval and $\omega(w,x,u)$ is the identifiable weight function.

\medskip
\noindent\textbf{Separation over $(x,w)$ strata.}
By the same separability argument as in the proof of \cref{thm:tsem_bounds}, the minimization and the outer summation can be exchanged, yielding $\underline{\theta}_\omega = \sum_{w \in \mathcal{W}} \mathbb{E}_{X}[\underline{\theta}(w, X)]$. It therefore suffices to solve each sub-problem individually.

\medskip
\noindent\textbf{Decomposition of the objective for fixed $(x,w)$.}
Fix $(x,w)$ and drop them from the notation where unambiguous. Since $\mathcal{Z}$ is discrete, the constraint~\eqref{eq:contraints_multitreatment} is a finite system. The identified support is $\mathcal{U}_{\text{id}}(x,w) = \bigcup_{z \in \mathcal{Z}(x,w)} I(z)$, and the unconstrained region is $J(\emptyset) = [0,1] \setminus \mathcal{U}_{\text{id}}(x,w)$. We decompose the objective accordingly:
\begin{align*}
    \int_0^1 \mathbb{E}_{\pi}[Y(w) \mid U\!=\!u]\,\omega(w,x,u)\,\mathrm{d}u
    &= \underbrace{\int_{\mathcal{U}_{\text{id}}(x,w)} \mathbb{E}_{\pi}[Y(w) \mid U\!=\!u]\,\omega(w,x,u)\,\mathrm{d}u}_{\text{constrained contribution}} \\
    &\quad + \underbrace{\int_{J(\emptyset)} \mathbb{E}_{\pi}[Y(w) \mid U\!=\!u]\,\omega(w,x,u)\,\mathrm{d}u}_{\text{unconstrained contribution}}.
\end{align*}
On $J(\emptyset)$, the data imposes no restriction on $\pi(\mathrm{d}y \mid u)$, so this region is optimized separately (see below). On $\mathcal{U}_{\text{id}}(x,w)$, the intervals $\{I(z)\}_{z \in \mathcal{Z}(x,w)}$ may overlap, coupling the constraints across different instruments. We disentangle these overlapping constraints using the DAG structure from \cref{asp:algebra}.

\medskip
\noindent\textbf{Disentangling the constraints via the DAG structure.}
Recall that the constraint~\eqref{eq:contraints_multitreatment} reads: for every $z \in \mathcal{Z}(x,w)$,
\[
    \frac{1}{|I(z)|}\int_{I(z)} \pi(\mathrm{d}y \mid u)\,\mathrm{d}u
    = \mathbb{P}_{\text{obs}}(\mathrm{d}y \mid W\!=\!w, Z\!=\!z, X\!=\!x),
\]
where $I(z) = I_{x,w}(z)$ and $\pi = \pi_{w,x}$.

Under \cref{asp:algebra}, the intersection-closed family $\{I(z)\}_{z \in \mathcal{Z}(x,w)}$ admits a DAG ordering by strict inclusion. Recall from~\eqref{eq:mu_isolated} the isolated disjoint sub-regions $J(z) = I(z) \setminus \bigcup_{z' \in \mathrm{children}(z)} I(z')$ and the recursively defined isolated measures $\mu_{w,z,x}$. We now verify that the original constraints are equivalent to the simplified marginal constraints~\eqref{eq:simplified_marginal_constraint}.

We proceed by induction on the DAG. For a leaf node $z$ (i.e., $\mathrm{children}(z) = \emptyset$), we have $J(z) = I(z)$, so~\eqref{eq:contraints_multitreatment} directly gives
\[
    \frac{1}{|J(z)|}\int_{J(z)} \pi(\mathrm{d}y \mid u)\,\mathrm{d}u
    = \mathbb{P}_{\text{obs}}(\mathrm{d}y \mid W\!=\!w, Z\!=\!z, X\!=\!x)
    = \mu_{w,z,x}(\mathrm{d}y).
\]
For an internal node $z$, suppose~\eqref{eq:simplified_marginal_constraint} holds for all descendants. Then from~\eqref{eq:contraints_multitreatment}:
\begin{align*}
    |I(z)|\,\mathbb{P}_{\text{obs}}(\mathrm{d}y \mid w, z, x)
    &= \int_{I(z)} \pi(\mathrm{d}y \mid u)\,\mathrm{d}u \\
    &= \int_{J(z)} \pi(\mathrm{d}y \mid u)\,\mathrm{d}u
        \;+\; \sum_{z' \in \mathrm{children}(z)}
            \int_{I(z')} \pi(\mathrm{d}y \mid u)\,\mathrm{d}u.
\end{align*}
By the inductive hypothesis applied to each child $z'$ and its descendants, the second term equals $\sum_{z' \in \mathrm{Dec}(z)} |J(z')|\,\mu_{w,z',x}(\mathrm{d}y)$. Rearranging and dividing by $|J(z)|$ recovers exactly the definition~\eqref{eq:mu_isolated}:
\[
    \frac{1}{|J(z)|}\int_{J(z)} \pi(\mathrm{d}y \mid u)\,\mathrm{d}u = \mu_{w,z,x}(\mathrm{d}y).
\]
Hence the original system~\eqref{eq:contraints_multitreatment} is equivalent to~\eqref{eq:simplified_marginal_constraint}. 

\medskip
\noindent\textbf{Independent OT sub-problems on the constrained regions.}
Since the regions $\{J(z)\}_{z \in \mathcal{Z}(x,w)}$ are pairwise disjoint and partition the identified support $\mathcal{U}_{\text{id}}(x,w)$, and the simplified constraint~\eqref{eq:simplified_marginal_constraint} couples $\pi(\mathrm{d}y \mid u)$ only within each $J(z)$, the constrained contribution from the decomposition above further decomposes:
\[
    \int_{\mathcal{U}_{\text{id}}(x,w)} \mathbb{E}_{\pi}[Y(w) \mid U\!=\!u]\,\omega(w,x,u)\,\mathrm{d}u
    = \sum_{z \in \mathcal{Z}(x,w)} \int_{J(z)} \mathbb{E}_{\pi}[Y(w) \mid U\!=\!u]\,\omega(w,x,u)\,\mathrm{d}u.
\]
No cross-region constraints exist, so the optimization on the identified support reduces to independent 1D problems.

For each $z \in \mathcal{Z}(x,w)$, the constraint~\eqref{eq:simplified_marginal_constraint} fixes the marginal distribution of $Y(w)$ on $J(z)$ to be $\mu_{w,z,x}$, while the $U$-marginal is $\mathrm{Unif}(J(z))$. The sub-problem on $J(z)$ is therefore a standard 1D optimal transport problem:
\[
    \min_{\gamma_z \in \Pi(\mu_{w,z,x},\, \mathrm{Unif}(J(z)))}
    \int_{\mathcal{Y} \times J(z)} y \cdot \omega(w,x,u)\,\mathrm{d}\gamma_z(y,u).
\]

\medskip
\noindent\textbf{Closed-form solution via countermonotonic coupling.}
Since the cost is a product of a function of $y$ and a function of $u$, we invoke \cref{thm:1d_ot} directly (countermonotonic coupling for the minimum):
\[
    \min_{\gamma_z \in \Pi(\mu_{w,z,x},\, \mathrm{Unif}(J(z)))}
    \int y \cdot \omega(w,x,u)\,\mathrm{d}\gamma_z
    = |J(z)|\int_0^1 Q_{Y,J(z)}(t)\,Q_{\omega,J(z)}(1-t)\,\mathrm{d}t,
\]
where $Q_{Y,J(z)}$ is the quantile function of $\mu_{w,z,x}$ and $Q_{\omega,J(z)}$ is the quantile function of $\omega(w,x,U)$ for $U \sim \mathrm{Unif}(J(z))$. The prefactor $|J(z)|$ arises from the Jacobian of the rescaling.

\medskip
\noindent\textbf{Trivial bound on the unconstrained region.}
On $J(\emptyset) = [0,1] \setminus \bigcup_{z \in \mathcal{Z}(x,w)} I(z)$, the data imposes no restriction on $\pi(\mathrm{d}y \mid u)$. By \cref{asp:bounded_y}, the pointwise minimum is $\pi(\cdot \mid u) = \delta_{y_{\min}}$ when $\omega(w,x,u) \geqslant 0$ and $\pi(\cdot \mid u) = \delta_{y_{\max}}$ when $\omega(w,x,u) < 0$, contributing:
\[
    \int_{J(\emptyset)} \Big( y_{\min}\max\{0,\omega(w,x,u)\} + y_{\max}\min\{0,\omega(w,x,u)\} \Big)\mathrm{d}u.
\]

\medskip
\noindent\textbf{Aggregation.}
Summing over all constrained regions and the unconstrained remainder:
\begin{align*}
        \underline{\theta}(x,w)
   & = \sum_{z \in \mathcal{Z}(x,w)} |J_{x,w}(z)|
        \int_0^1 Q_{Y,J_{x,w}(z)}(t)\,Q_{\omega,J_{x,w}(z)}(1-t)\,\mathrm{d}t
    \;\\ &\quad \quad \quad \quad+\; \int_{J_{x,w}(\emptyset)} \Big(y_{\min}\max\{0,\omega(w,x,u)\} + y_{\max}\min\{0,\omega(w,x,u)\}\Big)\mathrm{d}u.
\end{align*}
The global bound follows by summing over treatment levels: $\underline{\theta}_\omega = \sum_{w \in \mathcal{W}} \mathbb{E}_{X}[\underline{\theta}(w, X)]$. Integrability is immediate from compactness of $\mathcal{Y}$ and boundedness of $\omega$.

\medskip
\noindent\textbf{Sharpness.}
Sharpness follows by the same argument as in the proof of \cref{thm:tsem_bounds}: on each $J(z)$ the infimum is attained by the countermonotonic coupling, on $J(\emptyset)$ by the appropriate Dirac masses, and a measurable selection argument assembles these into a globally feasible optimizer.
\end{proof}

\section{Estimation Procedure Details}\label{app:procedure_details}

This appendix collects the full procedural details abbreviated in \cref{sec:est_inf} by Algorithms~\ref{alg:dml_continuous_outcome}--\ref{alg:continuous_instrument}, plus the formal definitions and derivations underlying the closed-form rewritings~(\ref{eq:target_discrete_iv}) and~(\ref{eq:continuous_simplified_target}).

\subsection{Proof of \cref{prop:discrete_reduction}}\label{app:closed_form_rewriting}

The nuisance components $\gamma_{\text{full},k}, \gamma_{j,k}, \gamma_K, J_{\text{full},k}, J_{j,k}, \Delta_K$ and the relative threshold $\kappa_{j,k}$ are defined in \cref{sec:est_inf} immediately preceding \cref{prop:discrete_reduction}; we refer to those formulas throughout this proof.

\begin{proof}[Proof of \cref{prop:discrete_reduction}]
    By \cref{thm:discrete_lower_bound}, it suffices to rewrite the conditional integral $\int_0^1 Q_{Y,k\mid X}(u)\,Q_{\omega,k\mid X}(1-u)\,\mathrm{d}u$ as an observable finite weighted sum. Fix $X$. On each baseline interval $(p_k(X), p_{k+1}(X))$ the weight function $\omega(X,u)$ is a non-increasing step function, constant between consecutive alternative-policy values $\{q_{j,k}(X)\}_{j=0}^{l_k}$. On the sub-interval $(q_{j,k}(X), q_{j+1,k}(X))$ (with the convention $q_{0,k}(X)=p_k(X)$ and $q_{l_k+1,k}(X) = p_{k+1}(X)$), exactly those instruments with $q(Z,X) > q_{j,k}(X)$ within the $k$-th interval and all instruments in strictly higher intervals contribute; hence
    \begin{equation*}
        \omega(X, q_{j,k}(X)) = \mathbb{P}_{\text{obs}}\!\left(Z \in \bigcup_{l=j}^{l_k} T_{l,k} \mathrel{\big|} X\right) + \mathbb{P}_{\text{obs}}\!\left(Z \in \bigcup_{i=k+1}^K S_i \mathrel{\big|} X\right) - \mathbb{P}_{\text{obs}}\!\left(Z \in \bigcup_{i=k+1}^K \bigcup_{l=0}^{l_i} T_{l,i} \mathrel{\big|} X\right).
    \end{equation*}
    Regrouping, this step-function value decomposes as $\omega(X, q_{j,k}(X)) = \gamma_{j,k}(X) - \gamma_{\text{full},k}(X) \cdot \indicator(j \geqslant 1) - \gamma_{\text{full},k}(X)\indicator(j=0)$, where the $\gamma_{\text{full},k}(X)$ term collects the "full" contribution acting on the entire interval and $\gamma_{j,k}(X)$ collects the additional sub-interval mass. Multiplying by $Q_{Y,k\mid X}$ and integrating over each sub-interval $[\kappa_{j,k}(X), \kappa_{j+1,k}(X)]$ in the normalized scale $u = (v - p_k(X))/(p_{k+1}(X) - p_k(X))$ yields
    \begin{equation*}
        (p_{k+1}(X) - p_k(X)) \int_0^1 Q_{Y,k\mid X}(u)\,Q_{\omega,k\mid X}(1-u)\,\mathrm{d}u = \gamma_{\text{full},k}(X)\, J_{\text{full},k}(X) + \sum_{j=1}^{l_k}\gamma_{j,k}(X)\, J_{j-1,k}(X),
    \end{equation*}
    where we used $J_{\text{full},k}(X) = (p_{k+1}(X)-p_k(X))\,\mathbb{E}_{\mu_{1,k\mid X}}[Y] = (p_{k+1}(X)-p_k(X))\int_0^1 Q_{Y,k\mid X}(u)\,\mathrm{d}u$ and the definition of $J_{j-1,k}$.

    The trivial-bound term on the tail $(p_K(X), 1]$ simplifies for the PRTE weight: since $\omega(X,u) = \mathbb{P}_{\text{obs}}(u < q(Z,X) \mid X) \geqslant 0$ for $u > p_K(X)$, we have $\int_{p_K(X)}^1 (y_{\min}\max\{0,\omega\} + y_{\max}\min\{0,\omega\})\,\mathrm{d}u = y_{\min}\int_{p_K(X)}^1 \omega(X,u)\,\mathrm{d}u$, which equals $\Delta_K(X)$ by Fubini. Taking the outer expectation over $X$ and summing over $k$ yields~(\ref{eq:target_discrete_iv}).

    The corresponding representation for the upper bound is obtained by the symmetric argument with $Q_{\omega,k\mid X}(1-t)$ replaced by $Q_{\omega,k\mid X}(t)$ in the OT term and $y_{\min}, y_{\max}$ swapped in the tail.
\end{proof}

\subsection{Proof of \cref{prop:J_specialization}}\label{app:J_specialization}

\begin{proof}[Proof of part (i).] Fix $k$ and $j \in \{0,\dots,l_k-1\}$. By the change of variables $u = F_{\mu_{1,k\mid X}}(y)$, continuity of $\mu_{1,k\mid X}$ at $\nu_{j,k}(X)$ and $\nu_{j+1,k}(X)$ gives $F_{\mu_{1,k\mid X}}(\nu_{j,k}(X)) = \kappa_{j,k}(X)$ and $F_{\mu_{1,k\mid X}}(\nu_{j+1,k}(X)) = \kappa_{j+1,k}(X)$ almost surely. Therefore,
\begin{align*}
    \int_{\kappa_{j,k}(X)}^{\kappa_{j+1,k}(X)} Q_{Y,k\mid X}(u)\,\mathrm{d}u &= \int_{\nu_{j,k}(X)}^{\nu_{j+1,k}(X)} y\, \mathrm{d}F_{\mu_{1,k\mid X}}(y) = \mathbb{E}_{\mu_{1,k\mid X}}\!\big[Y\,\indicator\!\big(\nu_{j,k}(X) < Y \leqslant \nu_{j+1,k}(X)\big)\big],
\end{align*}
where the second equality uses the continuity of $\mu_{1,k\mid X}$ at the two boundary points to include/exclude boundary atoms without ambiguity. Multiplying by $p_{k+1}(X)-p_k(X)$ yields~\eqref{eq:J_cexpectation}.
\end{proof}

\begin{proof}[Proof of part (ii).] Suppose $Y \in \{0,1\}$. Let $\theta_k(X) \coloneqq \mathbb{E}_{\mu_{1,k\mid X}}[Y]$ denote the complier success probability on the $k$-th interval. By the definition of the identified complier measure \citep[cf.][]{angrist1995identification}, $(p_{k+1}(X)-p_k(X))\,\theta_k(X) = \mathbb{E}_{\text{obs}}[YW \mid Z \in S_{k+1}, X] - \mathbb{E}_{\text{obs}}[YW \mid Z \in S_k, X] = P_{1,k+1}(X) - P_{1,k}(X)$, so $J_{\text{full},k}(X) = P_{1,k+1}(X) - P_{1,k}(X)$.

Since $Y \in \{0,1\}$, the conditional quantile function is the step function $Q_{Y,k\mid X}(u) = \indicator(u \geqslant 1 - \theta_k(X))$. Equivalently, $Q_{Y,k\mid X}(u) = 1$ iff $u(p_{k+1}(X) - p_k(X)) + p_k(X) \geqslant p_{k+1}(X) - J_{\text{full},k}(X)$, i.e., the corresponding propensity value $v = p_k(X) + u(p_{k+1}(X)-p_k(X))$ lies in the interval $[p_{k+1}(X) - J_{\text{full},k}(X),\, p_{k+1}(X)]$. Writing $J_{j-1,k}(X)$ as a length in the original propensity scale, it equals the length of the intersection of $[q_{j-1,k}(X), q_{j,k}(X)]$ with $[p_{k+1}(X) - J_{\text{full},k}(X),\, p_{k+1}(X)]$:
\begin{equation*}
    J_{j-1,k}(X) = \max\!\Big(0,\; q_{j,k}(X) - \max\!\big(q_{j-1,k}(X),\; p_{k+1}(X) - J_{\text{full},k}(X)\big)\Big) = h_{j,k}^{-}(X).
\end{equation*}
Substituting into~(\ref{eq:target_discrete_iv}) yields~\eqref{eq:binary_lower_bound}.
\end{proof}

\subsection{Discrete Instrument, Continuous Outcome: Nuisance Estimation}\label{app:procedure_continuous_outcome}

We give the full nuisance-estimation steps summarized in \cref{alg:dml_continuous_outcome}. Randomly partition the observed dataset of size $n$ into two disjoint sets: an auxiliary sample $I_1$ and a main estimation sample $I_2$, each of size $n/2$. Using only $I_1$, we estimate the collection of nuisance parameters as follows.
\begin{enumerate}
    \item \textbf{Propensity Scores:} Estimate the baseline propensity score $p(z,x)$ by regressing the treatment $W$ on covariates $X$ conditional on each $z \in \mathcal{Z}$. Compute the alternative policy $\hat{q}(z,x) = \phi(z, x, \hat{p}(z,x))$ by plugging in $\hat{p}$.
    \item \textbf{Level Sets:} By \cref{asp:prop_gap}, the true score levels are separated by at least $c_{\text{gap}}$. We first compute the empirical marginal averages over the auxiliary sample: $\bar{p}(z) = \frac{1}{|I_1|} \sum_{i \in I_1} \hat{p}(z, X_i)$ and $\bar{q}(z) = \frac{1}{|I_1|} \sum_{i \in I_1} \hat{q}(z, X_i)$. Let $V = \{\bar{p}(z)\}_{z\in\mathcal{Z}} \cup \{\bar{q}(z)\}_{z\in\mathcal{Z}}$. We group the elements in $V$ to disjoint subsets $V_1,\cdots,V_E$ such that the radius of each subset is less than $c_\text{gap}/2$.  To construct the baseline sets $\widehat{S}_k$, we partition $\mathcal{Z}$ into equivalence classes where $z$ and $z'$ are grouped if $\bar{p}(z)$ and $\bar{p}(z')$ are in the same $V_e$. Let $\hat{p}_k$ be the mean of $\bar{p}(z)$ within $\widehat{S}_k$, indexed such that $\hat{p}_k < \hat{p}_{k+1}$. Similarly, to construct the alternative sets $\widehat{T}_{j,k}$, we partition this subset by grouping $z$ and $z'$ if $\bar{q}(z),\bar{q}(z')$ are in the same subset $V_e$. Ordering these resulting groups by their mean $\bar{q}$ values yields $\widehat{T}_{1,k}, \dots, \widehat{T}_{l_k,k}$. Let $\hat{q}_{j,k}$ be the mean of $\bar{q}(z)$ within $\widehat{T}_{j,k}$, indexed such that $\hat{p}_k\leqslant \hat{q}_{1,k}<\cdots<\hat{q}_{\hat{l}_k,k} < \hat{p}_{k+1}$.
    \item \textbf{Instrument Assignment: } Estimate $\pi_k(x) = \mathbb{P}_{\text{obs}}(Z \in S_k \mid X=x)$ by regressing $\indicator(Z \in \widehat{S}_k)$ on $X$. Similarly, estimate the weighting probabilities $\gamma_{j,k}(x)$ and $\gamma_K(x)$ by regressing their respective set indicators on $X$.
    \item \textbf{Conditional Quantiles:} By definition, $\nu_{j,k}(x) = Q_{Y,k \mid x}(\kappa_{j,k}(x))$ is the $\kappa_{j,k}(x)$-quantile of the conditional distribution $\mu_{1,k \mid x}$, where $\mu_{1,k \mid x}$ is identified in (\ref{eq:y_identified_measure}).
    Applying the quantile characterization $\mathbb{P}_{\mu_{1,k \mid x}}(Y \leqslant \nu_{j,k}(x)) = \kappa_{j,k}(x) = \frac{q_{j,k}(x) - p_k(x)}{p_{k+1}(x) - p_k(x)}$ and expanding $\mathbb{P}_{\mu_{1,k \mid x}}$ in terms of the observed conditional distributions yields the following moment equation:
    \begin{align*}
    p_{k+1} (X)&\mathbb{E} [\indicator (Y\leqslant \nu _{j,k} (X))\mid X,Z\in S_{k+1}, W=1 ]\\
    &-p_{k} (X)\mathbb{E} [ \indicator  (Y\leqslant \nu _{j,k} (X))\mid X,Z\in S_{k}, W=1 ]-(q_{j,k} (X)-p_{k} (X)) =0.
\end{align*}
We will estimate $\nu_{j,k}$ by
\begin{align*}
\nu_{j,k}( x) & =\arg\min_{\nu }\mathbb{E}\left[\Omega(Z,X, W)\max\{0,\nu ( X) -Y\} -( q_{j,k}( X) -p_{k}( X)) \nu ( X)\right],
\end{align*}
where $$\Omega(Z,X, W) = W \cdot \left(\frac{\indicator(Z\in S_{k+1} )}{\pi _{k+1}( X)} -\frac{ \indicator(Z\in S_{k} )}{\pi _{k}( X)}\right).$$
Then, we estimate the conditional expectations of the indicators within the moment constraints:
\begin{align*}
    M_{j,k}^{+}(X) &\coloneqq \mathbb{E}\left[\indicator(Y \leqslant \nu_{j,k}(X)) \mathrel{|} Z \in S_{k+1}, W=1,  X \right], \\
    M_{j,k}^{-}(X) &\coloneqq \mathbb{E}\left[\indicator(Y \leqslant \nu_{j,k}(X)) \mathrel{|} Z \in S_{k}, W=1,  X \right].
\end{align*}

    \item \textbf{Estimate} $J_{j,k}(X)$ \textbf{and} $J_{\text{full},i}(X)$: We estimate $J_{j,k}(x)$ by
\begin{align*}
J_{j,k}^{+}(X) & =\mathbb{E}[ Y\indicator( \nu _{j,k}( X) < Y \leqslant \nu _{j+1,k}( X)) \mid X,Z\in S_{k+1} ,W=1],\\
J_{j,k}^{-}(X) & =\mathbb{E}[ Y\indicator( \nu _{j,k}( X) < Y \leqslant \nu _{j+1,k}( X)) \mid X,Z\in S_{k} ,W=1] ,\\
J_{j,k}(X) & = p_{k+1}( X)J_{j,k}^{+}(X) - p_{k}( X)J_{j,k}^{-}(X),
\end{align*}
and estimate $ J_{\text{full} ,k} (X) $ by
\begin{align*}
J_{\text{full} ,k}^{+} (X) & =\mathbb{E}[ Y\mid X,Z\in S_{k+1} ,W=1]  ,\\
J_{\text{full} ,k}^{-} (X) & =\mathbb{E}[ Y\mid X,Z\in S_{k} ,W=1] ,\\
J_{\text{full} ,k} (X) & = p_{k+1}( X)J_{\text{full} ,k}^{+} (X)-  p_{k}( X)J_{\text{full} ,k}^{-} (X).
\end{align*}
\end{enumerate}

In summary, the nuisance functions of this problem are, for all interval indices $k \in \{0, \dots, K\}$ and sub-interval indices $j \in \{0, \dots, l_k\}$:
$$\eta(x) =  \left\{p( z,x) ,\, p_{k}( x) ,\, \pi _{k}( x),\, \gamma_K(x) ,\, \nu _{j,k}( x),\, M_{j,k}^{\pm}(x),\, J_{j,k}^{\pm}( x),\, J_{\text{full} ,k}^{\pm}(x)\right\}_{j,k}.$$

The Neyman-orthogonal score for the target~(\ref{eq:target_discrete_iv}) is constructed by combining empirical influence functions for the conditional expectations $J_{j,k}^{\pm}(X)$ and $J_{\mathrm{full},k}^{\pm}(X)$ with pathwise derivatives arising from the quantile thresholds $\nu_{j,k}(X)$ and propensity score boundaries $p_k(X)$. In addition, the score must correct for the estimation of (i)~the conditional quantile indicators $M_{j,k}^{\pm}(X)$ via their Riesz representers, (ii)~the propensity scores $p_k(X)$ through IPW residuals, and (iii)~the alternative policy values $q_{j,k}(X)$ through the derivative $\frac{\partial \phi}{\partial p}$. The weighting probabilities $\gamma(X)$ are handled by forming product scores using the chain rule for influence functions. The complete formulas for all score components---$\psi_{j,k}^J$, $\psi_{\mathrm{full},k}^J$, $\psi_{j,k}^{\mathrm{prod}}$, $\psi_{\mathrm{full},k}^{\mathrm{prod}}$, and $\psi_{\Delta_K}$---are given in \cref{app:scores_continuous}.

\subsection{Discrete Instrument, Binary Outcome: Nuisance Estimation}\label{app:procedure_binary_outcome}

The closed-form reduction~(\ref{eq:binary_lower_bound}) is proved as part~(ii) of \cref{prop:J_specialization} in \cref{app:J_specialization}. Here we give the full nuisance-estimation steps summarized in \cref{alg:dml_binary_outcome}. The procedure follows the same two-step sample-splitting strategy as the continuous outcome case, but with a simpler nuisance list: the conditional quantile estimation (Step~4 of the continuous-outcome procedure) and the indicator expectations $M_{j,k}^{\pm}$ are no longer needed. The full nuisance list, for all interval indices $k \in \{0, \dots, K\}$ and sub-interval indices $j \in \{0, \dots, l_k\}$, is
\begin{equation*}
    \eta(x) = \Big\{p(z,x),\; p_k(x),\; \pi_k(x),\; \gamma_{\text{full},k}(x),\; \gamma_{j,k}(x),\; \gamma_K(x),\; P_{1,k}(x)\Big\}_{j,k}.
\end{equation*}

\textbf{Step 1: Nuisance Estimation.} Using the auxiliary sample $I_1$:
\begin{enumerate}
    \item \textbf{Propensity Scores and Level Sets:} Estimate $p(z,x)$, compute $\hat{q}(z,x) = \phi(z,x,\hat{p}(z,x))$, and construct the level sets $\widehat{S}_k$, $\widehat{T}_{j,k}$ exactly as in the continuous outcome procedure (Steps~1--3 in \cref{app:procedure_continuous_outcome}).
    \item \textbf{Conditional Joint Probability:} Estimate $P_{1,k}(x) = \mathbb{E}[YW \mid Z \in S_k, X=x]$ by regressing $YW$ on $X$ conditional on $Z \in \widehat{S}_k$, using any flexible machine learning method. Similarly, estimate $\gamma_{\text{full},k}(x)$, $\gamma_{j,k}(x)$, and $\gamma_K(x)$ by regressing their respective set indicators on $X$.
    \item \textbf{Rearrangement:} In the population, $P_{1,k}(X)$ is monotone non-decreasing in $k$ and the gaps are bounded by $P_{1,k+1}(X) - P_{1,k}(X) \leqslant p_{k+1}(X) - p_k(X)$, since $J_{\text{full},k}(X) = (p_{k+1}(X) - p_k(X))\,\theta_k(X)$ with $\theta_k(X) \in [0,1]$. However, because each $\hat{P}_{1,k}$ is estimated on a disjoint subsample ($Z \in \widehat{S}_k$) by a separate regression, the finite-sample estimates need not preserve either constraint. We therefore project them back onto the structurally feasible set via monotone rearrangement \citep{chernozhukov2009improving}: for each observation $X_i$, sort the vector $(\hat{P}_{1,0}(X_i), \ldots, \hat{P}_{1,K}(X_i))$ in ascending order to obtain $(\hat{P}_{1,0}^*(X_i), \ldots, \hat{P}_{1,K}^*(X_i))$.
\end{enumerate}

\textbf{Step 2: Neyman-Orthogonal Score Construction.} The target functional for the lower bound is $\underline{\theta}_{\omega,1} = \mathbb{E}[m(X; \eta)]$, where
\begin{equation*}
    m(X; \eta) = \sum_{k=0}^{K-1}\bigg(\gamma_{\text{full},k}(X)\, J_{\text{full},k}(X) + \sum_{j=1}^{l_k} \gamma_{j,k}(X)\, h_{j,k}^{-}(X)\bigg) + \Delta_K(X),
\end{equation*}
and the Neyman-orthogonal score takes the aggregated form
\begin{align}\label{eq:binary_score}
    \psi(O; \eta) &= \sum_{k=0}^{K-1}\bigg(\psi_{\text{full},k}^{\text{prod}}(O; \eta) + \sum_{j=1}^{l_k} \psi_{j,k}^{\text{prod}}(O; \eta)\bigg) + \psi_{\Delta_K}(O; \eta).
\end{align}
The construction of each component score proceeds by computing the pathwise derivatives of $h_{j,k}^{-}(X)$ with respect to $P_{1,k}(X)$, $p_k(X)$, and $q_{j,k}(X)$, and combining them with standard IPW residuals. Because $h_{j,k}^{-}$ involves a $\max$ operator, the pathwise derivatives depend on active-set, boundary, and dominance indicators that track which constraint binds in the $\max$. The product scores account for the estimation of $\gamma(X)$ via the chain rule for influence functions, and the trivial-bound score $\psi_{\Delta_K}$ is identical to the continuous-outcome case. The complete formulas are given in \cref{app:scores_discrete}. Averaging the plug-in scores over $I_2$ yields the estimator previewed in~(\ref{eq:binary_estimator}); the upper-bound score is derived similarly.

\subsection{Proof of \cref{prop:continuous_reduction}}\label{app:procedure_fubini}

\begin{proof}[Proof of \cref{prop:continuous_reduction}]
    Because $\mathcal{Z}$ is connected and $z \mapsto p(z,X)$ is continuous by \cref{asp:continuity_p}, its image is a connected interval for almost every $X$, so \cref{thm:covariate_continuous_lower_bound} specializes with $K_X = 1$ and conditional LIV interval $[\underline{p}(X), \overline{p}(X)]$. We derive the three terms of~(\ref{eq:continuous_simplified_target}) in turn. Substituting the explicit form of the weight function,
    \begin{equation*}
        \omega(x,u) = \mathbb{E}_{Z \mid X}[\indicator (u \leqslant q(Z,X)) - \indicator (u \leqslant p(Z,X)) \mid X=x],
    \end{equation*}
    and applying Fubini's theorem to swap the order of integration, we handle each term separately.

    \emph{First term (quantile integral on the gap $(0, \underline{p}(X))$).} Substituting the weight function and swapping expectations, apply a change of variables $v = \underline{p}(X)u$ to avoid dividing by the baseline propensity score in the integration limits. Using $\indicator (A) - 1 = -\indicator (A^c)$, the indicator becomes $-\indicator (v > q(Z,X))$, which restricts the domain of integration to the upper tail:
    \begin{align*}
        \mathbb{E}_{X}\!\left[ \underline{p}(X)\int_{0}^{1} Q_{Y,\underline{p} \mid X}(u)\omega(X, \underline{p}(X)u)\, \mathrm{d}u \right]
        &= \mathbb{E}_{Z,X}\!\left[ \underline{p}(X)\int_{0}^{1} Q_{Y,\underline{p} \mid X}(u)\big(\indicator (\underline{p}(X)u \leqslant q(Z,X)) - 1\big)\, \mathrm{d}u \right] \\
        &= -\mathbb{E}_{Z,X}\!\left[ \int_{\min\{\underline{p}(X), q(Z,X)\}}^{\underline{p}(X)} Q_{Y,\underline{p} \mid X}\!\left({v}/{\underline{p}(X)}\right)\, \mathrm{d}v \right].
    \end{align*}

    \emph{Second term ($g_1$ difference on the LIV interval).} Integrating the derivative of $g_1$ yields $g_1$ evaluated at the truncated limits defined by the indicators. The evaluation at the lower endpoint $\underline{p}(X)$ cancels between the two indicator terms:
    \begin{align*}
       \mathbb{E}_{X}\!\left[ \int_{\underline{p}(X)}^{\overline{p}(X)} \frac{\partial g_1(u, X)}{\partial u} \omega(X,u)\, \mathrm{d}u \right]   &= \mathbb{E}_{Z,X}\!\left[ \int_{\underline{p}(X)}^{\overline{p}(X)} \frac{\partial g_1(u, X)}{\partial u} \big(\indicator (u \leqslant q(Z,X)) - \indicator (u \leqslant p(Z,X))\big)\, \mathrm{d}u \right] \\
        &= \mathbb{E}_{Z,X}\!\left[ g_1\big(\min\{\overline{p}(X), q(Z,X)\}, X\big) - g_1\big(p(Z,X), X\big) \right].
    \end{align*}

    \emph{Third term (trivial bound on $(\overline{p}(X),1]$).} Since $p(Z,X) \leqslant \overline{p}(X)$, we have $\indicator(u \leqslant p(Z,X)) = 0$ for $u \geqslant \overline{p}(X)$, so $\omega(X,u) = \mathbb{P}(u \leqslant q(Z,X) \mid X) \geqslant 0$ for the PRTE weight on $(\overline{p}(X), 1]$. Thus $\int_{\overline{p}(X)}^1 \big(y_{\min}\max\{0,\omega\} + y_{\max}\min\{0,\omega\}\big)\,\mathrm{d}u$ reduces to
    \begin{align*}
        y_{\min} \mathbb{E}_{X}\!\left[\int_{\overline{p}(X)}^{1} \omega(X,u)\, \mathrm{d}u\right] &= y_{\min} \mathbb{E}_{Z,X}\!\left[\int_{\overline{p}(X)}^{1} \indicator (u \leqslant q(Z,X))\, \mathrm{d}u\right] = y_{\min}\, \mathbb{E}_{Z,X}\!\left[\max\{0, q(Z,X) - \overline{p}(X)\}\right].
    \end{align*}
    Aggregating the three components yields~(\ref{eq:continuous_simplified_target}). The upper-bound representation follows from the symmetric argument, swapping $Q_{Y,\underline{p}\mid X}(u)$ for $Q_{Y,\underline{p}\mid X}(1-u)$ in the OT term and $y_{\min}$ for $y_{\max}$ in the trivial tail.
\end{proof}

\subsection{Continuous Instrument: Localized $g_1$ and Boundary Quantile Estimation}\label{app:procedure_continuous_instrument}

We give the full procedural details summarized in \cref{alg:continuous_instrument}. We randomly partition the observed dataset of size $n$ into three disjoint subsets: a propensity estimation sample $I_0$, a nuisance estimation sample $I_1$, and a main evaluation sample $I_2$, each of size $n/3$. The three-fold split is needed because the boundary quantile estimator is trained on a subset of observations selected using the estimated propensity score; by estimating $\hat{p}$ on the independent sample $I_0$, the selection rule becomes a fixed function conditional on $I_0$, so the selected observations in $I_1$ remain conditionally i.i.d.

\paragraph{Propensity Scores and Boundaries.} Using the propensity estimation sample $I_0$, estimate the baseline propensity score $p(z,x)$ by regressing the treatment $W$ on $(Z, X)$ to obtain $\hat{p}(z,x)$. Compute the alternative policy propensity by plugging in the estimated score: $\hat{q}(z,x) = \phi\big(z, x, \hat{p}(z,x)\big)$. Next, for each $x \in \mathcal{X}$, estimate the conditional boundary limits by taking the supremum and infimum over the instrument support:
    \begin{equation*}
        \widehat{\overline{p}}(x) = \sup_{z \in \mathcal{Z}} \hat{p}(z,x), \quad \text{and} \quad \widehat{\underline{p}}(x) = \inf_{z \in \mathcal{Z}} \hat{p}(z,x).
    \end{equation*}

\paragraph{Conditional Expectation ($g_1$) via Localized ML.} Estimating $g_1(u, x) = \mathbb{E}_{\text{obs}}[YW \mid p(Z,X)=u, X=x]$ poses two competing challenges. The covariate dimension $d_X$ precludes classical nonparametric estimators and calls for flexible machine learning, yet \eqref{eq:continuous_simplified_target} evaluates $g_1$ at the {estimated} propensity scores $\hat{p}(Z,X)$ and $\min\{\widehat{\overline{p}}(X), \hat{q}(Z,X)\}$, so the functional expansion underlying \cref{thm:continuous_rate} requires $\hat{g}_1(u,x)$ to be Lipschitz continuous in $u$ with high probability. An off-the-shelf ML regression of $YW$ on the joint feature vector $(u, X)$---using random forests, boosted trees, or ReLU networks---generally produces fits that are piecewise constant or non-smooth in $u$ and therefore does not deliver this guarantee.

    We resolve both issues by combining a kernel-weighted local $M$-estimator in the scalar direction $u$ with an unconstrained ML fit in the covariate direction $x$, and then interpolating between grid points. Structurally, this extends classical local polynomial regression \citep{fan1996local} to ML base learners and can be viewed as a local $M$-estimator in the sense of the generalized random forest framework \citep{athey2019generalized}, with our standard kernel in $u$ playing the role of the adaptive forest weights. The closest precedent in the causal-inference literature is the local-polynomial DR-learner of \citet{kennedy2023towards}, which pairs ML-estimated nuisances with a local polynomial second-stage smoother to obtain smoothness in a conditioning variable.

    Concretely, let $K(\cdot)$ be a continuously differentiable $1$D kernel with bounded derivative and let $h_n$ be a bandwidth. Fix an equidistant grid $u_m = m/M$ for $m=0, 1, \dots, M$ covering $[0,1]$. For each grid point $u_m$, we train a flexible ML model $\hat{f}_{u_m} \in \mathcal{F}$ (e.g., random forests, neural networks, or $\ell_1$-penalized regression) by minimizing the kernel-weighted squared-error loss on the nuisance estimation sample $I_1$, using the propensity scores $\hat{p}$ estimated from $I_0$:
    \begin{equation*}
        \hat{f}_{u_m} \;=\; \arg\min_{f \in \mathcal{F}} \sum_{i \in I_1} K\!\left( \frac{u_m - \hat{p}(Z_i, X_i)}{h_n} \right) \big(Y_i W_i - f(X_i)\big)^2.
    \end{equation*}
    Each $\hat{f}_{u_m}(x)$ targets $g_1(u_m, x)$ using observations whose estimated propensity score lies within an $O(h_n)$ neighborhood of $u_m$. For any $u \in [u_m, u_{m+1}]$, we define the final estimator by linearly interpolating between adjacent grid predictions:
    \begin{equation*}
        \hat{g}_1(u, x) \;=\; \hat{f}_{u_m}(x) + \frac{u - u_m}{u_{m+1} - u_m} \big(\hat{f}_{u_{m+1}}(x) - \hat{f}_{u_m}(x)\big).
    \end{equation*}
    Two features of this construction merit emphasis. First, $\hat{g}_1(u,x)$ is piecewise linear in $u$ {by construction} and is therefore deterministically Lipschitz in $u$ on every realization of the data, with Lipschitz constant $M \cdot \max_m \|\hat{f}_{u_{m+1}} - \hat{f}_{u_m}\|_\infty$---a property that no joint ML regression on $(u, X)$ is known to deliver. Second, because ML flexibility in $X$ and local smoothing in $u$ operate in orthogonal directions, the covariate dimension enters only through the ML rate $r_{X,n}$ in \cref{asp:continuous_rates}, while the smoothing in $u$ contributes only a univariate second-order bias of order $h_n^2$.

\paragraph{Boundary Quantile Estimation.} Estimating $Q_{Y,\underline{p} \mid X}(\cdot)$ requires two ingredients: first, isolating observations whose propensity score lies near the lower boundary $\underline{p}(X)$; second, estimating the full conditional quantile function on this localized subsample.

    For the localization step, choose a bandwidth sequence $\delta_n \to 0$. Using the propensity scores $\hat{p}$ and boundary estimates $\widehat{\underline{p}}$ from $I_0$, restrict the nuisance sample to
    \begin{equation*}
        \mathcal{I}_{\delta} = \{i \in I_1 : \hat{p}(Z_i, X_i) \leqslant \widehat{\underline{p}}(X_i) + \delta_n\}.
    \end{equation*}
    Because $\hat{p}$ and $\widehat{\underline{p}}$ are functions of $I_0$ alone, the selection rule is a fixed mapping conditional on $I_0$, so the observations in $\mathcal{I}_{\delta}$ are conditionally i.i.d.\ given $I_0$.

    For the conditional quantile step, we follow the standard grid-based approach to estimating a conditional quantile process: solve a finite set of conditional quantile regressions at prespecified quantile levels and interpolate in between, as employed in \citet{chernozhukov2013inference} and \citet{belloni2019conditional}. Fix a grid resolution $M$ and define the quantile levels $u_m = m/M$ for $m = 1, \dots, M-1$. At each $u_m$, run a conditional $u_m$-quantile regression of $Y$ on $X$ using the localized subsample $\mathcal{I}_{\delta}$ to obtain $\widehat{Q}_{Y,\underline{p} \mid X}(u_m)$; this inner regression may employ any consistent conditional quantile estimator (e.g., linear or series quantile regression, quantile forests, or neural quantile regression). The full continuous estimator $\widehat{Q}_{Y,\underline{p} \mid X}(u)$ on $[0,1]$ is then obtained by linear interpolation between adjacent grid estimates.

\paragraph{Target Evaluation.} Using the main evaluation sample $I_2$, we evaluate the empirical analog of the simplified target functional \eqref{eq:continuous_simplified_target} by plugging in the nuisance estimators from $I_0$ and $I_1$. For each observation $i \in I_2$, we compute the three components:

    The first term (quantile tail integral) is evaluated numerically over the estimated limits:
    \begin{equation*}
        \hat{\psi}_{1}(O_i) = - \int_{\min\{\widehat{\underline{p}}(X_i), \hat{q}(Z_i,X_i)\}}^{\widehat{\underline{p}}(X_i)} \widehat{Q}_{Y,\underline{p} \mid X_i}\left(\frac{v}{\widehat{\underline{p}}(X_i)}\right) \, \mathrm{d}v.
    \end{equation*}

    The second term evaluates the non-parametric regression $\hat{g}_1$ at the truncated limits:
    \begin{equation*}
        \hat{\psi}_{2}(O_i) = \hat{g}_1\big(\min\{\widehat{\overline{p}}(X_i), \hat{q}(Z_i,X_i)\}, X_i\big) - \hat{g}_1\big(\hat{p}(Z_i,X_i), X_i\big).
    \end{equation*}

    The third term (trivial bound) is the straightforward threshold excess:
    \begin{equation*}
        \hat{\psi}_{3}(O_i) = y_{\min} \max\big\{0, \hat{q}(Z_i,X_i) - \widehat{\overline{p}}(X_i)\big\}.
    \end{equation*}
    Averaging these three components over $I_2$ yields the estimator previewed in~(\ref{eq:estimator_continuous_instrument}).

\section{Orthogonal Score Constructions}\label{app:scores}

This section presents the detailed Neyman-orthogonal score formulas for the estimation procedures described in \cref{sec:est_inf}.

\subsection{Continuous Outcome Score Components}\label{app:scores_continuous}

We provide the complete score components for the continuous outcome setting of \cref{subsubsec:continuous_outcome}. Recall the target functional:
\begin{align*}
    \underline{\theta}_{\omega,1} = \mathbb{E}\Bigg[\sum_{k=0}^{K-1}\bigg(\gamma_{\text{full},k}(X) J_{\text{full},k}(X) + \sum_{j=1}^{l_k} \gamma_{j,k}(X) J_{j-1,k}(X)\bigg) + \Delta_K(X) \Bigg].
\end{align*}

\paragraph{Influence function residuals.} The empirical residuals for the conditional expectations, evaluated on observation $O = (X, Z, W, Y)$, are:
\begin{align*}
    IF_{J^+}(O) &= \frac{1}{\pi_{k+1}(X)} \indicator(Z \in S_{k+1}) W \Big( Y \indicator(\nu_{j,k}(X) \leqslant Y < \nu_{j+1,k}(X)) - J_{j,k}^+(X) \Big) \\
    IF_{M_{j,k}^+}(O) &= \frac{1}{\pi_{k+1}(X)} \indicator(Z \in S_{k+1}) W \Big( \indicator(Y \leqslant \nu_{j,k}(X)) - M_{j,k}^+(X) \Big),
\end{align*}
with $IF_{J^-}$ and $IF_{M_{j,k}^-}$ defined analogously over $S_k$.

\paragraph{Riesz representers.} Because $p_k(X)$ and $p_{k+1}(X)$ act as structural interval boundaries, the augmented functional yields non-zero derivatives with respect to these parameters. The exact Riesz representers for the marginal probabilities are:
\begin{align*}
    \alpha_{p_k}(X) &= -J_{j,k}^-(X) - \nu_{j,k}(X)M_{j,k}^-(X) + \nu_{j+1,k}(X)M_{j+1,k}^-(X) + \nu_{j,k}(X) - \nu_{j+1,k}(X), \\
    \alpha_{p_{k+1}}(X) &= J_{j,k}^+(X) + \nu_{j,k}(X)M_{j,k}^+(X) - \nu_{j+1,k}(X)M_{j+1,k}^+(X).
\end{align*}

\paragraph{Score for $J_{j,k}(X)$.} Aggregating the expectation residuals and the structural parameter influences, the complete uncentered orthogonal score for the integral component $J_{j,k}(X)$ is:
\begin{align*}
    \psi_{j,k}^J(O; \eta) &= J_{j,k}(X) + IF_{J^+}(O) - IF_{J^-}(O) \\
    &\quad + \nu_{j,k}(X) \Big( p_{k+1}(X)M_{j,k}^+(X) - p_k(X)M_{j,k}^-(X) - (q_{j,k}(X) - p_k(X)) \\
    &\quad \quad \quad \quad \quad \quad + IF_{M_{j,k}^+}(O) - IF_{M_{j,k}^-}(O) \Big) \\
    &\quad - \nu_{j+1,k}(X) \Big( p_{k+1}(X)M_{j+1,k}^+(X) - p_k(X)M_{j+1,k}^-(X) - (q_{j+1,k}(X) - p_k(X)) \\
    &\quad \quad \quad \quad \quad \quad + IF_{M_{j+1,k}^+}(O) - IF_{M_{j+1,k}^-}(O) \Big) \\
    &\quad + \alpha_{p_k}(X) \frac{1}{\pi_k(X)} \indicator(Z \in S_k) \big(W - p_k(X)\big) + \alpha_{p_{k+1}}(X) \frac{1}{\pi_{k+1}(X)} \indicator(Z \in S_{k+1}) \big(W - p_{k+1}(X)\big) \\
    &\quad - \nu_{j,k}(X) \sum_{z \in T_{j,k}} \frac{\partial \phi}{\partial p}(z,X, p(z,X)) \frac{\indicator(Z=z)}{\pi(z,X)} \big( W - p(z,X) \big) \\
    &\quad + \nu_{j+1,k}(X) \sum_{z \in T_{j+1,k}} \frac{\partial \phi}{\partial p}(z,X, p(z,X)) \frac{\indicator(Z=z)}{\pi(z,X)} \big( W - p(z,X) \big).
\end{align*}

\paragraph{Score for $J_{\mathrm{full},k}(X)$.} The score for the full interval component follows symmetrically but omits the quantile constraint terms:
\begin{align*}
    \psi_{\text{full},k}^J(O; \eta) &= J_{\text{full},k}(X) \\
    &\quad + \frac{1}{\pi_{k+1}(X)} \indicator(Z \in S_{k+1}) W \Big( Y - J_{\text{full},k}^+(X) \Big) - \frac{1}{\pi_k(X)} \indicator(Z \in S_k) W \Big( Y - J_{\text{full},k}^-(X) \Big) \\
    &\quad + J_{\text{full},k}^+(X) \frac{1}{\pi_{k+1}(X)} \indicator(Z \in S_{k+1}) \big(W - p_{k+1}(X)\big) - J_{\text{full},k}^-(X) \frac{1}{\pi_k(X)} \indicator(Z \in S_k) \big(W - p_k(X)\big).
\end{align*}

\paragraph{Product scores.} To account for the estimation of the weighting probabilities $\gamma(X)$, we construct the orthogonal product scores using the chain rule for influence functions. The uncentered product score for the fractional interval component is:
\begin{align*}
    \psi_{j,k}^{\text{prod}}(O; \eta) &= \gamma_{j,k}(X) \psi_{j-1,k}^J(O; \eta) + J_{j-1,k}(X) \left( \indicator\Big(Z \in \bigcup_{l=j}^{l_k} T_{l,k}\Big) - \gamma_{j,k}(X) \right),
\end{align*}
and similarly for the full interval component:
\begin{align*}
    \psi_{\text{full},k}^{\text{prod}}(O; \eta) &= \gamma_{\text{full},k}(X) \psi_{\text{full},k}^J(O; \eta) \\
    &\quad + J_{\text{full},k}(X) \left( \indicator\Big(Z \in \bigcup_{m=k+1}^K \bigcup_{l=0}^{l_m} T_{l,m}\Big) - \indicator\Big(Z \in \bigcup_{m=k+1}^K S_m\Big) - \gamma_{\text{full},k}(X) \right).
\end{align*}

\paragraph{Trivial bound score.} The uncentered orthogonal score for the trivial bound component on the final interval, $\Delta_K(X)$, is:
\begin{align*}
    \psi_{\Delta_K}(O; \eta) &= y_{\min} \indicator\Big(Z \in \bigcup_{j=1}^{l_K} T_{j,K}\Big) \big( q(Z,X) - p_K(X) \big) \\
    &\quad + y_{\min} \indicator\Big(Z \in \bigcup_{j=1}^{l_K} T_{j,K}\Big) \frac{\partial \phi}{\partial p}(Z,X, p(Z,X)) \big( W - p(Z,X) \big) \\
    &\quad - y_{\min} \gamma_K(X) \frac{1}{\pi_K(X)} \indicator(Z \in S_K) \big( W - p_K(X) \big).
\end{align*}

\subsection{Discrete Outcome Score Components}\label{app:scores_discrete}

We provide the complete score components for the binary outcome setting of \cref{subsubsec:discrete_outcome}. The standard IPW residuals for $P_{1,k}(X)$ and $p_k(X)$ are:
\begin{align*}
    R_{P,k}(O) &= \frac{\indicator(Z \in S_k)}{\pi_k(X)} \big(YW - P_{1,k}(X)\big), \\
    R_{p,k}(O) &= \frac{\indicator(Z \in S_k)}{\pi_k(X)} \big(W - p_k(X)\big).
\end{align*}
The key pathwise derivative of $m$ with respect to $J_{\text{full},k}(X)$ is:
\begin{equation*}
    \frac{\partial m}{\partial J_{\text{full},k}}(X) = \gamma_{\text{full},k}(X) + \sum_{j=1}^{l_k} \gamma_{j,k}(X)\, \indicator\!\big(q_{j-1,k}(X) < p_{k+1}(X) - J_{\text{full},k}(X) \leqslant q_{j,k}(X)\big),
\end{equation*}
which equals the weight function evaluated at the lower integration boundary. This has a natural interpretation: an infinitesimal increase in $J_{\text{full},k}$ extends the integration region $[p_{k+1} - J_{\text{full},k},\, p_{k+1}]$ to the left, and the marginal contribution is precisely $\omega$ at the boundary.

\paragraph{Auxiliary indicators.} To state the orthogonal score for $\mathbb{E}[h_{j,k}^{-}(X)]$, we introduce the following notation. Let
\begin{equation*}
    I_{j,k}^{-}(X) \coloneqq \indicator\!\big(q_{j,k}(X) > \max(q_{j-1,k}(X),\, p_{k+1}(X) - J_{\text{full},k}(X))\big)
\end{equation*}
be the \emph{active set indicator} for $h_{j,k}^{-}$, i.e., $I_{j,k}^{-} = \indicator(h_{j,k}^{-} > 0)$. Define the \emph{boundary indicator}
\begin{equation*}
    D_{j,k}^{-}(X) \coloneqq \indicator\!\big(q_{j-1,k}(X) < p_{k+1}(X) - J_{\text{full},k}(X) < q_{j,k}(X)\big),
\end{equation*}
which equals one precisely when the lower integration boundary $p_{k+1} - J_{\text{full},k}$ falls strictly inside the sub-interval $(q_{j-1,k}, q_{j,k})$, and the \emph{dominance indicator}
\begin{equation*}
    E_{j,k}^{-}(X) \coloneqq \indicator\!\big(q_{j-1,k}(X) > p_{k+1}(X) - J_{\text{full},k}(X)\big),
\end{equation*}
which equals one when the inner $\max$ in $h_{j,k}^{-}$ selects $q_{j-1,k}$ rather than $p_{k+1} - J_{\text{full},k}$. Note that $D_{j,k}^{-}$ and $E_{j,k}^{-}$ partition the active set: $I_{j,k}^{-} = D_{j,k}^{-} + E_{j,k}^{-} \cdot \indicator(q_{j,k} > q_{j-1,k})$ (the two indicators are mutually exclusive).

\paragraph{Pathwise derivatives.} The pathwise derivatives of $h_{j,k}^{-}$ with respect to the nuisance parameters are as follows. Since $h_{j,k}^{-}$ depends on $P_{1,m}$ only through $J_{\text{full},k} = P_{1,k+1} - P_{1,k}$:
\begin{equation*}
    \alpha_{P,k+1}^{(j,k)}(X) = D_{j,k}^{-}(X), \qquad \alpha_{P,k}^{(j,k)}(X) = -D_{j,k}^{-}(X), \qquad \alpha_{P,m}^{(j,k)} = 0 \;\text{for } m \notin \{k, k{+}1\}.
\end{equation*}
For the propensity scores, $p_{k+1}$ enters $h_{j,k}^{-}$ directly through the boundary $p_{k+1} - J_{\text{full},k}$, and $p_k$ enters only when $j = 1$ (since $q_{0,k} = p_k$):
\begin{equation*}
    \alpha_{p,k+1}^{(j,k)}(X) = -D_{j,k}^{-}(X), \qquad
    \alpha_{p,k}^{(j,k)}(X) = \begin{cases} -E_{1,k}^{-}(X)\, I_{1,k}^{-}(X) & \text{if } j = 1, \\ 0 & \text{if } j \geqslant 2. \end{cases}
\end{equation*}

\paragraph{Score for $h_{j,k}^{-}(X)$.} The complete orthogonal score for $\mathbb{E}[h_{j,k}^{-}(X)]$ is:
\begin{align*}
    \psi_{j,k}^{h}(O; \eta) &= h_{j,k}^{-}(X) + D_{j,k}^{-}(X)\,\big(R_{P,k+1}(O) - R_{P,k}(O)\big) \\
    &\quad + \alpha_{p,k+1}^{(j,k)}(X)\, R_{p,k+1}(O) + \alpha_{p,k}^{(j,k)}(X)\, R_{p,k}(O) \\
    &\quad + I_{j,k}^{-}(X) \sum_{z \in T_{j,k}} \frac{\partial \phi}{\partial p}(z,X,p(z,X))\, \frac{\indicator(Z = z)}{\pi(z, X)}\big(W - p(z,X)\big) \\
    &\quad - E_{j,k}^{-}(X)\, I_{j,k}^{-}(X) \sum_{z \in T_{j-1,k}} \frac{\partial \phi}{\partial p}(z,X,p(z,X))\, \frac{\indicator(Z = z)}{\pi(z, X)}\big(W - p(z,X)\big).
\end{align*}
The third line corrects for the estimation of the upper boundary $q_{j,k}$ through instruments in $T_{j,k}$, and the fourth line corrects for the lower boundary $q_{j-1,k}$ through instruments in $T_{j-1,k}$; the latter matters only when $q_{j-1,k}$ dominates $p_{k+1} - J_{\text{full},k}$ in the inner $\max$ (i.e., $E_{j,k}^{-} = 1$). When $j = 1$, the lower boundary $q_{0,k} = p_k$ does not depend on individual instruments, so the fourth line vanishes and its correction is absorbed into $\alpha_{p,k}^{(1,k)}$.

\paragraph{Product scores.} The product score for the fractional interval component is:
\begin{align*}
    \psi_{j,k}^{\text{prod}}(O; \eta) &= \gamma_{j,k}(X)\, \psi_{j,k}^{h}(O; \eta) + h_{j,k}^{-}(X) \left(\indicator\Big(Z \in \bigcup_{l=j}^{l_k} T_{l,k}\Big) - \gamma_{j,k}(X)\right).
\end{align*}
For the full-interval component, the orthogonal score for $J_{\text{full},k}(X) = P_{1,k+1}(X) - P_{1,k}(X)$ is simply $\psi_{\text{full},k}^{J}(O; \eta) = J_{\text{full},k}(X) + R_{P,k+1}(O) - R_{P,k}(O)$, and the full-interval product score is:
\begin{align*}
    \psi_{\text{full},k}^{\text{prod}}(O; \eta) &= \gamma_{\text{full},k}(X)\,\psi_{\text{full},k}^{J}(O; \eta) \\
    &\quad + J_{\text{full},k}(X)\left(\indicator\Big(Z \in \bigcup_{m=k+1}^{K} \bigcup_{l=0}^{l_m} T_{l,m}\Big) - \indicator\Big(Z \in \bigcup_{m=k+1}^{K} S_m\Big) - \gamma_{\text{full},k}(X)\right).
\end{align*}
The trivial bound score $\psi_{\Delta_K}(O; \eta)$ is identical to the continuous outcome case (see \cref{app:scores_continuous}), since $\Delta_K(X)$ does not involve the quantile structure.

\section{Proofs in \cref{sec:est_inf}}

\subsection{Discrete Instrument}
\begin{proof}[Proof of \cref{thm:dml_normality}]
Let $\eta_0$ denote the vector of true nuisance parameter values and $\hat{\eta}$ the estimates trained on the auxiliary sample~$I_1$. Write $\theta_0 = \underline{\theta}_{\omega,1}$ for the true parameter value. For a functional $\Phi(\eta)$ and a perturbation direction $\delta$ in the same function space as $\eta$, we write $\partial_\delta \Phi(\eta) \coloneqq \frac{d}{dt}\Phi(\eta + t\,\delta)\big|_{t=0}$ for the Gateaux (directional) derivative. When the perturbation is along a single nuisance component $\eta_\alpha$, we write $\partial_{\eta_\alpha}$ for brevity. Recall the target functional:
\begin{equation}\label{eq:pf_target_recall}
    \theta_0 = \mathbb{E}\Bigg[\sum_{k=0}^{K-1}\bigg(\gamma_{\mathrm{full},k}(X)\,J_{\mathrm{full},k}(X) + \sum_{j=1}^{l_k} \gamma_{j,k}(X)\,J_{j-1,k}(X)\bigg) + \Delta_K(X)\Bigg],
\end{equation}
and the total Neyman-orthogonal score:
\begin{equation}\label{eq:pf_total_score}
    \psi(O;\eta) = \sum_{k=0}^{K-1}\Bigl(\psi_{\mathrm{full},k}^{\mathrm{prod}}(O;\eta) + \sum_{j=1}^{l_k}\psi_{j,k}^{\mathrm{prod}}(O;\eta)\Bigr) + \psi_{\Delta_K}(O;\eta),
\end{equation}
where each product score debiases the corresponding term in the target. Specifically, $\psi_{j,k}^J(O;\eta)$ is the uncentered orthogonal score for the conditional integral component $\mathbb{E}[J_{j,k}(X)]$, and the product score for $\mathbb{E}[\gamma_{j,k}(X)\,J_{j-1,k}(X)]$ is:
\begin{equation}\label{eq:pf_prod_score_recall}
    \psi_{j,k}^{\mathrm{prod}}(O;\eta) = \gamma_{j,k}(X)\,\psi_{j-1,k}^J(O;\eta) + J_{j-1,k}(X)\Bigl(\indicator\bigl(Z \in {\textstyle\bigcup_{l=j}^{l_k}} T_{l,k}\bigr) - \gamma_{j,k}(X)\Bigr),
\end{equation}
and symmetrically for $\psi_{\mathrm{full},k}^{\mathrm{prod}}(O;\eta)$. The proof first verifies that the score satisfies Neyman orthogonality and then applies the standard DML convergence argument with sample splitting.

\medskip
\noindent\textbf{Exact recovery of the level sets.}
Under \cref{asp:prop_gap}, all distinct values in $\{p(z,x), q(z,x)\}_{z \in \mathcal{Z}}$ are separated by at least $c_{\mathrm{gap}} > 0$. Since $\mathcal{Z}$ is finite and $\hat{p}(z,x) \to p(z,x)$ in $L_2$ at rate $o_P(n^{-1/4})$, the marginal averages $\bar{p}(z)$ and $\bar{q}(z)$ converge uniformly to their population limits. For sufficiently large $n$, the clustering procedure therefore recovers $\widehat{S}_k = S_k$ and $\widehat{T}_{j,k} = T_{j,k}$ with probability approaching one. We condition on this event throughout.

\medskip
\noindent\textbf{Verification of the moment condition.}
We verify $\mathbb{E}[\psi(O;\eta_0)] = \theta_0$. All influence function correction terms in the score are inverse-probability-weighted residuals of the general form
\[
    \frac{\indicator(Z \in S_k)}{\pi_k(X)}\,W\bigl(g(Y,X) - \mathbb{E}[g(Y,X) \mid X, Z \in S_k, W\!=\!1]\bigr).
\]
By the tower property, conditioning first on $(X,Z)$: when $Z \in S_k$, we have $p(Z,X) = p_k(X)$, so $\mathbb{E}[W \mid X, Z \in S_k] = p_k(X)$. It follows that
\[
    \mathbb{E}\Big[\frac{\indicator(Z \in S_k)}{\pi_k(X)}\,W\bigl(g - \mathbb{E}[g \mid X, S_k, W\!=\!1]\bigr)\Big] = \mathbb{E}\Big[\frac{\indicator(Z \in S_k)}{\pi_k(X)}\,p_k(X)\cdot 0\Big] = 0.
\]
Applying this identity to each correction term: $\mathbb{E}[IF_{J^\pm}(O)] = 0$ and $\mathbb{E}[IF_{M_{j,k}^\pm}(O)] = 0$ at $\eta_0$. The quantile correction term in $\psi_{j,k}^J$ contains the factor $p_{k+1}(X)M_{j,k}^+(X) - p_k(X)M_{j,k}^-(X) - (q_{j,k}(X) - p_k(X))$, which vanishes at $\eta_0$ by the conditional moment equation
\begin{equation}\label{eq:pf_quantile_moment_dml}
    \begin{split}
        &p_{k+1}(X)\,\mathbb{E}[\indicator(Y \leqslant \nu_{j,k}(X)) \mid X, S_{k+1}, W\!=\!1]\\
        & \quad\quad \quad \quad - p_k(X)\,\mathbb{E}[\indicator(Y \leqslant \nu_{j,k}(X)) \mid X, S_k, W\!=\!1] - (q_{j,k}(X) - p_k(X)) = 0
    \end{split}
\end{equation}
that defines $\nu_{j,k}$. The propensity score correction terms $\frac{\indicator(Z \in S_k)}{\pi_k(X)}(W - p_k(X))$ and the policy correction terms $\frac{\indicator(Z=z)}{\pi(z,X)}(W - p(z,X))$ likewise have zero mean by the same tower-property argument. Consequently,
\begin{align*}
    \mathbb{E}[\psi_{j,k}^J(O;\eta_0)] &= \mathbb{E}[J_{j,k}(X)], & \mathbb{E}[\psi_{\mathrm{full},k}^J(O;\eta_0)] &= \mathbb{E}[J_{\mathrm{full},k}(X)].
\end{align*}
For the product scores, substituting~\eqref{eq:pf_prod_score_recall} and using $\mathbb{E}[\indicator(Z \in \bigcup_{l=j}^{l_k} T_{l,k}) - \gamma_{j,k}(X)] = 0$:
\begin{align*}
    \mathbb{E}[\psi_{j,k}^{\mathrm{prod}}(O;\eta_0)] &= \mathbb{E}\bigl[\gamma_{j,k}(X)\,\underbrace{\mathbb{E}[\psi_{j-1,k}^J(O;\eta_0) \mid X]}_{= J_{j-1,k}(X)}\bigr] + \mathbb{E}\bigl[J_{j-1,k}(X)\,\underbrace{\mathbb{E}[\indicator(\cdots) - \gamma_{j,k}(X) \mid X]}_{= 0}\bigr] \\
    &= \mathbb{E}[\gamma_{j,k}(X)\,J_{j-1,k}(X)],
\end{align*}
and symmetrically $\mathbb{E}[\psi_{\mathrm{full},k}^{\mathrm{prod}}(O;\eta_0)] = \mathbb{E}[\gamma_{\mathrm{full},k}(X)\,J_{\mathrm{full},k}(X)]$. The same reasoning yields $\mathbb{E}[\psi_{\Delta_K}(O;\eta_0)] = \mathbb{E}[\Delta_K(X)]$. Summing over all components and comparing with~\eqref{eq:pf_target_recall} gives $\mathbb{E}[\psi(O;\eta_0)] = \theta_0$.

\medskip
\noindent\textbf{Neyman orthogonality of $\psi_{j,k}^J$.}
We verify that the Gateaux derivative $\partial_t\,\mathbb{E}[\psi_{j,k}^J(O;\eta_0 + t\,\delta)]|_{t=0}$ vanishes for all perturbation directions~$\delta$. We group the nuisance parameters by type and check orthogonality with respect to each group.

\emph{Conditional expectations $(J_{j,k}^\pm, M_{j,k}^\pm)$.}
Consider a perturbation $J_{j,k}^+(X) \to J_{j,k}^+(X) + t\,\delta J^+(X)$. The plug-in term $J_{j,k} = p_{k+1}\,J_{j,k}^+ - p_k\,J_{j,k}^-$ contributes $+\mathbb{E}[p_{k+1}(X)\,\delta J^+(X)]$, while the IPW correction $IF_{J^+}$ contributes $-\mathbb{E}\bigl[\frac{\indicator(Z \in S_{k+1})}{\pi_{k+1}(X)}\,p_{k+1}(X)\,\delta J^+(X)\bigr]$. These cancel because
\[
    \mathbb{E}\!\left[\frac{\indicator(Z \in S_{k+1})}{\pi_{k+1}(X)} \mathrel{\bigg|} X\right] = \frac{\mathbb{P}(Z \in S_{k+1} \mid X)}{\pi_{k+1}(X)} = 1.
\]
The same cancellation applies to perturbations of $J_{j,k}^-$, $M_{j,k}^\pm$, and $M_{j+1,k}^\pm$ via their respective IPW corrections.

\emph{Instrument assignment probabilities $(\pi_k, \pi_{k+1})$.}
Perturbing $\pi_{k+1}(X) \to \pi_{k+1}(X) + t\,\delta\pi(X)$ introduces a factor $-\delta\pi(X)/\pi_{k+1}(X)^2$ in the IPW terms. For any such IPW term $\frac{\indicator(Z \in S_{k+1})}{\pi_{k+1}(X)}\,W\,(g(O) - \bar{g}(X))$, the first-order contribution is
\[
    -\mathbb{E}\!\left[\frac{\indicator(Z \in S_{k+1})}{\pi_{k+1}(X)^2}\,\delta\pi(X)\,W\,(g(O) - \bar{g}(X))\right].
\]
Conditioning on $(X,Z)$ with $Z \in S_{k+1}$, we have $\mathbb{E}[W(g(O) - \bar{g}(X)) \mid X, Z \in S_{k+1}] = p_{k+1}(X) \cdot 0 = 0$ at $\eta_0$, so the Gateaux derivative vanishes. The argument for $\delta\pi_k$ is identical.

\emph{Propensity scores $(p_k, p_{k+1})$.}
A perturbation $p_k(X) \to p_k(X) + t\,\delta p_k(X)$ generates first-order contributions from three sources (all multiplied by $\delta p_k(X)$ inside $\mathbb{E}[\cdot]$):
\begin{enumerate}
    \item[(i)] The plug-in term $J_{j,k} = p_{k+1}\,J_{j,k}^+ - p_k\,J_{j,k}^-$, contributing $-J_{j,k}^-(X)$;
    \item[(ii)] The quantile correction terms $\nu_{j,k}(\cdots) - \nu_{j+1,k}(\cdots)$, contributing $\nu_{j,k}(X)(1 - M_{j,k}^-(X)) - \nu_{j+1,k}(X)(1 - M_{j+1,k}^-(X))$;
    \item[(iii)] The Riesz representer correction $\alpha_{p_k}(X)\frac{\indicator(Z \in S_k)}{\pi_k(X)}(W - p_k(X))$, contributing $-\alpha_{p_k}(X)$.
\end{enumerate}
Substituting the definition
\[
    \alpha_{p_k} = -J_{j,k}^- - \nu_{j,k}\,M_{j,k}^- + \nu_{j+1,k}\,M_{j+1,k}^- + \nu_{j,k} - \nu_{j+1,k}
\]
and collecting terms:
\begin{align*}
    &\bigl(-J_{j,k}^-\bigr) + \bigl(\nu_{j,k} - \nu_{j,k}\,M_{j,k}^- - \nu_{j+1,k} + \nu_{j+1,k}\,M_{j+1,k}^-\bigr) \\
    &\quad - \bigl(-J_{j,k}^- - \nu_{j,k}\,M_{j,k}^- + \nu_{j+1,k}\,M_{j+1,k}^- + \nu_{j,k} - \nu_{j+1,k}\bigr) = 0.
\end{align*}
The verification for $\delta p_{k+1}$, using $\alpha_{p_{k+1}} = J_{j,k}^+ + \nu_{j,k}\,M_{j,k}^+ - \nu_{j+1,k}\,M_{j+1,k}^+$, is identical.

\emph{Quantile thresholds $(\nu_{j,k}, \nu_{j+1,k})$.}
A perturbation $\nu_{j,k}(X) \to \nu_{j,k}(X) + t\,\delta\nu(X)$ affects the score through two channels.

\emph{Channel~(a): indicator boundaries.} The indicator $\indicator(\nu_{j,k}(X) \leqslant Y)$ appears inside $IF_{J^\pm}$ and $IF_{M_{j,k}^\pm}$. Let $f^\pm(\nu) \coloneqq f_{Y \mid X, S_{k\pm1}, W=1}(\nu \mid X)$ denote the conditional densities. Differentiating the expected IPW corrections:
\begin{align*}
    \text{from } \mathbb{E}[IF_{J^+} - IF_{J^-}]&: \quad -\nu_{j,k}(X)\bigl(p_{k+1}(X)\,f^+(\nu_{j,k}) - p_k(X)\,f^-(\nu_{j,k})\bigr)\,\delta\nu(X), \\
    \text{from } \nu_{j,k}\cdot\mathbb{E}[IF_{M_{j,k}^+} - IF_{M_{j,k}^-}]&: \quad +\nu_{j,k}(X)\bigl(p_{k+1}(X)\,f^+(\nu_{j,k}) - p_k(X)\,f^-(\nu_{j,k})\bigr)\,\delta\nu(X),
\end{align*}
where the second line uses $\partial_\nu \mathbb{E}[\indicator(Y \leqslant \nu) \mid X, S_{k\pm1}, W\!=\!1] = f^\pm(\nu)$. These cancel exactly.

\emph{Channel~(b): explicit multiplicative factor.} The term $\nu_{j,k}(X)\bigl(p_{k+1} M_{j,k}^+ - p_k M_{j,k}^- - (q_{j,k} - p_k)\bigr)$ contributes $\delta\nu(X)$ times the conditional moment equation~\eqref{eq:pf_quantile_moment_dml}, which vanishes at $\eta_0$. The dependence of $\alpha_{p_k}$ and $\alpha_{p_{k+1}}$ on $\nu_{j,k}$ is absorbed by the zero-mean residuals: for any $L_2$ function $h(X)$,
\[
    \mathbb{E}\!\left[h(X)\,\frac{\indicator(Z \in S_k)}{\pi_k(X)}\bigl(W - p_k(X)\bigr)\right] = \mathbb{E}\!\left[h(X)\,\frac{\pi_k(X)}{\pi_k(X)}\cdot 0\right] = 0.
\]
The local conditional density assumption ensures all derivatives are well-defined. The treatment of $\delta\nu_{j+1,k}$ is symmetric.

\emph{Alternative policy propensity $(q_{j,k})$.}
Since $q_{j,k}(X) = \phi(z,X,p(z,X))$ for $z \in T_{j,k}$, perturbing $p(z,X) \to p(z,X) + t\,\delta p(z,X)$ for $z \in T_{j,k}$ induces $\delta q_{j,k}(X) = \frac{\partial\phi}{\partial p}(z,X,p(z,X))\,\delta p(z,X)$. The first-order contributions are:
\begin{align*}
    \text{from quantile correction}&: \quad -\mathbb{E}\bigl[\nu_{j,k}(X)\,\delta q_{j,k}(X)\bigr], \\
    \text{from policy IPW correction}&: \quad +\mathbb{E}\!\left[\nu_{j,k}(X)\sum_{z \in T_{j,k}}\frac{\partial\phi}{\partial p}\frac{\indicator(Z\!=\!z)}{\pi(z,X)}\bigl(W - p(z,X)\bigr)\right].
\end{align*}
The second term evaluates to $+\mathbb{E}[\nu_{j,k}(X)\,\delta q_{j,k}(X)]$ by the tower property: $\mathbb{E}[\frac{\indicator(Z=z)}{\pi(z,X)}\,(W - p(z,X)) \mid X] = \frac{\pi(z,X)}{\pi(z,X)}\cdot 0 = 0$, while $\mathbb{E}[\frac{\indicator(Z=z)}{\pi(z,X)}\,W \mid X] = p(z,X)$. The two contributions cancel. The treatment of $q_{j+1,k}$ is symmetric.

This completes the orthogonality verification for $\psi_{j,k}^J$.

\medskip
\noindent\textbf{Neyman orthogonality of the remaining components.}
The score $\psi_{\mathrm{full},k}^J$ is a simplified version of $\psi_{j,k}^J$ without quantile threshold terms; its orthogonality follows by restricting the above arguments to the nuisance parameters $(J_{\mathrm{full},k}^\pm, p_k, p_{k+1}, \pi_k, \pi_{k+1})$.

For the product scores, we invoke the chain rule for influence functions. Since $\psi_{j-1,k}^J$ is an orthogonal score for $\mathbb{E}[J_{j-1,k}(X)]$ and $\indicator(Z \in \bigcup_{l=j}^{l_k} T_{l,k}) - \gamma_{j,k}(X)$ is the influence function for $\gamma_{j,k}(X) = \mathbb{P}(Z \in \bigcup_{l=j}^{l_k} T_{l,k} \mid X)$, the product score~\eqref{eq:pf_prod_score_recall}
\[
    \psi_{j,k}^{\mathrm{prod}} = \gamma_{j,k}(X)\,\psi_{j-1,k}^J(O;\eta) + J_{j-1,k}(X)\bigl(\indicator(Z \in {\textstyle\bigcup_{l=j}^{l_k}} T_{l,k}) - \gamma_{j,k}(X)\bigr)
\]
is orthogonal for the product functional $\mathbb{E}[\gamma_{j,k}(X)\,J_{j-1,k}(X)]$. To see this, perturb any nuisance component $\eta_\alpha$ and apply the Leibniz rule:
\[
    \partial_{\eta_\alpha}\mathbb{E}[\psi_{j,k}^{\mathrm{prod}}] = \gamma_{j,k}\,\underbrace{\partial_{\eta_\alpha}\mathbb{E}[\psi_{j-1,k}^J]}_{=\,0\text{ (orthogonality of }\psi_{j-1,k}^J)} + J_{j-1,k}\,\underbrace{\partial_{\eta_\alpha}\mathbb{E}[\indicator(\cdots) - \gamma_{j,k}]}_{=\,0\text{ (trivial orthogonality)}} = 0.
\]
The same argument applies to $\psi_{\mathrm{full},k}^{\mathrm{prod}}$.

For $\psi_{\Delta_K}$: the target $\Delta_K = y_{\min}\,\mathbb{E}[\indicator(Z \in \bigcup T_{j,K})(q(Z,X) - p_K(X))]$ depends on the propensity scores through $q(z,x) = \phi(z,x,p(z,x))$ and $p_K(x)$. The IPW correction for $p_K$ uses $\frac{\indicator(Z \in S_K)}{\pi_K(X)}(W - p_K(X))$, and for $q$ uses $\frac{\partial\phi}{\partial p}\frac{\indicator(Z=z)}{\pi(z,X)}(W - p(z,X))$. Orthogonality with respect to these nuisance parameters follows by the same tower-property cancellations as above.

\medskip
\noindent\textbf{Second-order remainder bound.}
By Neyman orthogonality, the functional $\eta \mapsto \mathbb{E}[\psi(O;\eta)]$ has a vanishing gradient at $\eta_0$, so the Taylor expansion yields
\begin{equation}\label{eq:pf_dml_bias}
    \big|\mathbb{E}[\psi(O;\hat{\eta})] - \theta_0\big| \leqslant C\,\|\hat{\eta} - \eta_0\|_{L_2}^2.
\end{equation}
The smoothness of $\eta \mapsto \mathbb{E}[\psi(O;\eta)]$ is justified as follows. The non-smooth indicator functions $\indicator(\nu_{j,k}(X) \leqslant Y)$ in the score are smoothed upon taking expectations over $Y$, producing conditional CDFs $F_{Y \mid X,Z,W}(\nu_{j,k}(X))$ that are differentiable in $\nu_{j,k}$ with bounded derivatives by the local conditional density assumption. The IPW denominators $1/\pi_k(X)$ are smooth on the region $\pi_k \geqslant c_\pi > 0$ guaranteed by \cref{asp:overlap}. All remaining terms in the score are polynomial in the nuisance parameters, hence smooth. The constant $C$ in~\eqref{eq:pf_dml_bias} is uniformly controlled by: the boundedness of $Y$ (\cref{asp:bounded_y}), the overlap condition (\cref{asp:overlap}), the boundedness of $\frac{\partial\phi}{\partial p}$ (\cref{asp:phi_smooth}), and the conditional densities at the quantile thresholds. 

Since $\partial_\delta\,\mathbb{E}[\psi(O;\eta_0)] = 0$ for all directions $\delta$ (Neyman orthogonality), the Taylor expansion of $\eta \mapsto \mathbb{E}[\psi(O;\eta)]$ around $\eta_0$ starts at second order. The second-order remainder is a sum of bilinear forms in pairs of nuisance estimation errors $(\hat{\eta}_\alpha - \eta_{0,\alpha})(\hat{\eta}_\beta - \eta_{0,\beta})$, with coefficients uniformly bounded under the regularity conditions (bounded $Y$, overlap, conditional densities at the quantile thresholds, smooth $\phi$). Since $\|\hat{\eta}_\alpha - \eta_{0,\alpha}\|_{L_2} = o_P(n^{-1/4})$ for each component $\alpha$, the Cauchy--Schwarz inequality gives $|\mathbb{E}[\psi(O;\hat{\eta})] - \theta_0| = o_P(n^{-1/2})$.

\medskip
\noindent\textbf{Asymptotic normality.}
By the sample-splitting construction, $\hat{\eta}$ (trained on $I_1$) is independent of $\{O_i\}_{i \in I_2}$. Decompose:
\begin{equation}\label{eq:pf_dml_decomp}
    \sqrt{n/2}\,\bigl(\hat{\underline{\theta}}_{\omega,1} - \theta_0\bigr) = \underbrace{\frac{1}{\sqrt{n/2}}\sum_{i \in I_2}\bigl(\psi(O_i;\eta_0) - \theta_0\bigr)}_{\mathrm{(I)}} + \underbrace{\frac{1}{\sqrt{n/2}}\sum_{i \in I_2}\bigl(\psi(O_i;\hat{\eta}) - \psi(O_i;\eta_0)\bigr)}_{\mathrm{(II)}}.
\end{equation}
For term~(I): the summands are i.i.d.\ with mean zero and finite variance $\sigma^2 = \mathrm{Var}(\psi(O;\eta_0))$. Finiteness follows from \cref{asp:bounded_y} (bounded outcomes), \cref{asp:overlap} (bounded IPW weights), and boundedness of $\omega$ and $\frac{\partial\phi}{\partial p}$. By the Lindeberg--L\'{e}vy central limit theorem, $\mathrm{(I)} \xrightarrow{d} \mathcal{N}(0,\sigma^2)$.

For term~(II): condition on $I_1$ (which fixes $\hat{\eta}$) and write
\[
    \mathrm{(II)} = \sqrt{n/2}\cdot B_n + \frac{1}{\sqrt{n/2}}\sum_{i \in I_2}\epsilon_i,
\]
where $B_n = \mathbb{E}[\psi(O;\hat{\eta}) - \psi(O;\eta_0) \mid I_1] = \mathbb{E}[\psi(O;\hat{\eta})] - \theta_0$ and $\epsilon_i = \bigl(\psi(O_i;\hat{\eta}) - \psi(O_i;\eta_0)\bigr) - B_n$ are conditionally i.i.d.\ zero-mean random variables. By~\eqref{eq:pf_dml_bias}, $\sqrt{n/2}\,|B_n| = O_P\bigl(\sqrt{n}\,\|\hat{\eta} - \eta_0\|_{L_2}^2\bigr) = o_P(1)$. For the martingale term, the conditional variance satisfies
\[
    \mathrm{Var}(\epsilon_1 \mid I_1) \leqslant \mathbb{E}\bigl[(\psi(O;\hat{\eta}) - \psi(O;\eta_0))^2 \mid I_1\bigr].
\]
The score $\psi(O;\eta)$ is Lipschitz in the smooth nuisance components (conditional expectations, propensity scores, instrument probabilities) under \cref{asp:overlap}. For the non-smooth indicator terms, $|\indicator(\hat{\nu}_{j,k} \leqslant Y) - \indicator(\nu_{j,k}^0 \leqslant Y)|$ is nonzero only when $Y$ lies between $\hat{\nu}_{j,k}(X)$ and $\nu_{j,k}^0(X)$, which by the local conditional density assumption occurs with probability $O(|\hat{\nu}_{j,k} - \nu_{j,k}^0|)$. Consequently, $\mathrm{Var}(\epsilon_1 \mid I_1) = O_P(\|\hat{\eta} - \eta_0\|_{L_2}) = o_P(1)$, and $\frac{1}{\sqrt{n/2}}\sum_{i \in I_2}\epsilon_i = o_P(1)$ by Chebyshev's inequality. Combining, $\mathrm{(II)} = o_P(1)$.

By Slutsky's theorem applied to~\eqref{eq:pf_dml_decomp}, we conclude
\[
    \sqrt{n/2}\,\bigl(\hat{\underline{\theta}}_{\omega,1} - \underline{\theta}_{\omega,1}\bigr) \xrightarrow{d} \mathcal{N}(0,\,\sigma^2).
\]

\medskip
\noindent\textbf{Consistency of the variance estimator.}
Since $\hat{\eta} \xrightarrow{P} \eta_0$ and the score $\psi(O;\eta)$ is continuous in $\eta$ for $\mathbb{P}_{\mathrm{obs}}$-a.e.\ $O$ (the set of discontinuity, where $Y$ coincides with a quantile threshold, has measure zero since the conditional density exists at the quantile thresholds), the continuous mapping theorem together with the uniform law of large numbers gives $\hat{\sigma}^2 = \frac{1}{|I_2|}\sum_{i \in I_2}\bigl(\hat{\psi}(O_i;\hat{\eta}) - \hat{\underline{\theta}}_{\omega,1}\bigr)^2 \xrightarrow{P} \sigma^2$, enabling the construction of asymptotically valid confidence intervals.
\end{proof}

\begin{proof}[Proof of \cref{thm:dml_discrete_outcome}]
Let $\eta_0$ denote the true nuisance parameter values and $\hat{\eta}$ the estimates trained on $I_1$. Write $\theta_0 = \underline{\theta}_{\omega,1}$. Recall the target functional \eqref{eq:binary_lower_bound} and the orthogonal score \eqref{eq:binary_score}, which decomposes as $\psi = \sum_{k}(\psi_{\mathrm{full},k}^{\mathrm{prod}} + \sum_{j}\psi_{j,k}^{\mathrm{prod}}) + \psi_{\Delta_K}$. The proof verifies Neyman orthogonality at the component level and applies the standard DML convergence argument.

\medskip
\noindent\textbf{Exact recovery of the level sets.}
The argument is identical to the continuous outcome case. Under \cref{asp:prop_gap}, since $\hat{p}(z,x) \to p(z,x)$ in $L_2$ at rate $o_P(n^{-1/4})$ and $\mathcal{Z}$ is finite, the clustering procedure recovers $\widehat{S}_k = S_k$ and $\widehat{T}_{j,k} = T_{j,k}$ with probability approaching one. We condition on this event throughout.

\medskip
\noindent\textbf{Verification of the moment condition.}
We verify $\mathbb{E}[\psi(O;\eta_0)] = \theta_0$. All IPW correction terms in $\psi_{j,k}^h$ and $\psi_{\mathrm{full},k}^J$ have the form $\frac{\indicator(Z \in A)}{\mathbb{P}(Z \in A \mid X)}(g(O) - \mathbb{E}[g(O) \mid Z \in A, X])$ for some event $A$, which has conditional expectation zero by the tower property. Hence $\mathbb{E}[\psi_{j,k}^h(O;\eta_0)] = \mathbb{E}[h_{j,k}^-(X)]$ and $\mathbb{E}[\psi_{\mathrm{full},k}^J(O;\eta_0)] = \mathbb{E}[J_{\mathrm{full},k}(X)]$. Since the product scores satisfy $\mathbb{E}[\psi_{j,k}^{\mathrm{prod}}] = \gamma_{j,k}\,\mathbb{E}[\psi_{j,k}^h] + \mathbb{E}[h_{j,k}^-]\,\mathbb{E}[\indicator(\cdots) - \gamma_{j,k}] = \gamma_{j,k}\,\mathbb{E}[h_{j,k}^-]$ (and similarly $\mathbb{E}[\psi_{\mathrm{full},k}^{\mathrm{prod}}] = \gamma_{\mathrm{full},k}\,\mathbb{E}[J_{\mathrm{full},k}]$), summing gives $\mathbb{E}[\psi(O;\eta_0)] = \mathbb{E}[m(X;\eta_0)] = \theta_0$.

\medskip
\noindent\textbf{Neyman orthogonality.}
We must show $\partial_{\eta_\alpha}\mathbb{E}[\psi(O;\eta)]\big|_{\eta=\eta_0} = 0$ for each nuisance component $\eta_\alpha$. By the product score structure, it suffices to verify orthogonality of the component scores $\psi_{j,k}^h$ and $\psi_{\mathrm{full},k}^J$ with respect to $(P_{1,m}, p_m, p(z,X))$, and then show the product scores inherit orthogonality with respect to $\gamma$.

\emph{Orthogonality of $\psi_{j,k}^h$ with respect to $P_{1,m}$.}
Consider a perturbation $P_{1,m} \to P_{1,m} + t\,\delta P_{1,m}(X)$. Since $h_{j,k}^-$ depends on $P_{1,m}$ only through $J_{\mathrm{full},k} = P_{1,k+1} - P_{1,k}$, the pathwise derivative is
\[
    \partial_t \mathbb{E}[h_{j,k}^-] \big|_{t=0} = \mathbb{E}\!\left[\alpha_{P,m}^{(j,k)}(X)\,\delta P_{1,m}(X)\right],
\]
where $\alpha_{P,k+1}^{(j,k)} = D_{j,k}^-$, $\alpha_{P,k}^{(j,k)} = -D_{j,k}^-$, and $\alpha_{P,m}^{(j,k)} = 0$ for $m \notin \{k,k{+}1\}$. Under \cref{asp:discrete_outcome_gap}, $D_{j,k}^- = \indicator(q_{j-1,k} < p_{k+1} - J_{\mathrm{full},k} < q_{j,k})$ is well-defined. The IPW correction $D_{j,k}^-(R_{P,k+1} - R_{P,k})$ in $\psi_{j,k}^h$ contributes
\[
    \partial_t \mathbb{E}\!\left[D_{j,k}^-\!\left(\frac{\indicator(Z \in S_{k+1})}{\pi_{k+1}}\bigl(YW - P_{1,k+1}\bigr) - \frac{\indicator(Z \in S_k)}{\pi_k}\bigl(YW - P_{1,k}\bigr)\right)\right]\bigg|_{t=0} = -\mathbb{E}\!\left[\alpha_{P,m}^{(j,k)}\,\delta P_{1,m}\right],
\]
using $\mathbb{E}[\frac{\indicator(Z \in S_m)}{\pi_m} \mid X] = 1$, so the two contributions cancel. The same argument gives $\partial_{P_{1,m}}\mathbb{E}[\psi_{\mathrm{full},k}^J] = 0$ (with $\alpha_{P,k+1}^{(\mathrm{full},k)} = 1$ and $\alpha_{P,k}^{(\mathrm{full},k)} = -1$).

\emph{Orthogonality of $\psi_{j,k}^h$ with respect to $p_m$ and $p(z,X)$.}
The propensity scores enter $h_{j,k}^-$ through three channels:

\begin{enumerate}
    \item the integration boundary $p_{k+1} - J_{\mathrm{full},k}$, corrected by $\alpha_{p,k+1}^{(j,k)} R_{p,k+1}$
    \item the lower level-set boundary $q_{0,k} = p_k$ when $j = 1$, corrected by $\alpha_{p,k}^{(1,k)} R_{p,k}$
    \item the alternative policy values $q_{j,k}$ and $q_{j-1,k}$ through the mapping $\phi(z,X,p(z,X))$, corrected by the policy terms involving $T_{j,k}$ and $T_{j-1,k}$.
\end{enumerate}

For channel (a), a perturbation $p_{k+1} \to p_{k+1} + t\,\delta p_{k+1}(X)$ shifts the boundary by $\delta p_{k+1}$, giving $\partial_t h_{j,k}^- = -D_{j,k}^-\,\delta p_{k+1}$. The correction $\alpha_{p,k+1}^{(j,k)} R_{p,k+1} = -D_{j,k}^- \cdot \frac{\indicator(Z \in S_{k+1})}{\pi_{k+1}}(W - p_{k+1})$ contributes $\partial_t \mathbb{E}[\alpha_{p,k+1}^{(j,k)} R_{p,k+1}] = D_{j,k}^-\,\delta p_{k+1}$, canceling the plug-in derivative.

For channel (b), when $j = 1$ and $E_{1,k}^- = 1$ (i.e., $p_k > p_{k+1} - J_{\mathrm{full},k}$), a perturbation of $p_k$ shifts $q_{0,k} = p_k$, giving $\partial_t h_{1,k}^- = -E_{1,k}^- I_{1,k}^-\,\delta p_k$. The correction $\alpha_{p,k}^{(1,k)} R_{p,k}$ contributes the matching cancellation.

For channel (c), a perturbation of $p(z,X)$ for $z \in T_{j,k}$ shifts $q_{j,k}$ by $\frac{\partial \phi}{\partial p}\,\delta p(z,X)$, contributing $I_{j,k}^-\,\frac{\partial \phi}{\partial p}\,\delta p(z,X)$ to $\partial_t h_{j,k}^-$. The policy correction $I_{j,k}^- \sum_{z' \in T_{j,k}} \frac{\partial \phi}{\partial p} \frac{\indicator(Z = z')}{\pi(z',X)}(W - p(z',X))$ cancels this via $\mathbb{E}[\frac{\indicator(Z = z)}{\pi(z,X)} \mid X] = 1$. Symmetrically, when $E_{j,k}^- = 1$, the correction for $q_{j-1,k}$ through $T_{j-1,k}$ cancels the derivative $-E_{j,k}^- I_{j,k}^-\,\frac{\partial \phi}{\partial p}\,\delta p(z,X)$ for $z \in T_{j-1,k}$.

\emph{Inheritance by product scores.}
Since $\psi_{j,k}^{\mathrm{prod}} = \gamma_{j,k}\,\psi_{j,k}^h + h_{j,k}^-(\indicator(\cdots) - \gamma_{j,k})$ and $\mathbb{E}[\indicator(\cdots) - \gamma_{j,k} \mid X] = 0$ at $\eta_0$, any perturbation of $(P_{1,m}, p_m, p(z,X))$ yields
\[
    \partial_\eta \mathbb{E}[\psi_{j,k}^{\mathrm{prod}}] = \gamma_{j,k}\,\underbrace{\partial_\eta \mathbb{E}[\psi_{j,k}^h]}_{=\,0} + \partial_\eta \mathbb{E}[h_{j,k}^-] \cdot \underbrace{\mathbb{E}[\indicator(\cdots) - \gamma_{j,k} \mid X]}_{=\,0} = 0.
\]
The same holds for $\psi_{\mathrm{full},k}^{\mathrm{prod}}$ and $\psi_{\Delta_K}$.

\emph{Perturbation of $\gamma_{j,k}$ and $\gamma_{\mathrm{full},k}$.}
The product score structure ensures orthogonality. By the Leibniz rule:
\[
    \partial_{\gamma_{j,k}}\mathbb{E}[\psi_{j,k}^{\mathrm{prod}}] = \gamma_{j,k}\,\underbrace{\partial_{\gamma_{j,k}}\mathbb{E}[\psi_{j,k}^h]}_{=\,0} + \mathbb{E}[h_{j,k}^-]\,\underbrace{\partial_{\gamma_{j,k}}\mathbb{E}[\indicator(\cdots) - \gamma_{j,k}]}_{=\,0} = 0.
\]
The same argument applies to $\gamma_{\mathrm{full},k}$ and its corresponding product score.

\emph{Rearrangement.}
Under \cref{asp:discrete_outcome_gap}, the true values satisfy the monotonicity constraint $P_{1,0}(x) \leqslant \cdots \leqslant P_{1,K}(x)$ for all $x$ (since $J_{\mathrm{full},k}(x) \geqslant 0$). The rearrangement operator projects onto this monotonicity constraint pointwise in $x$. By the classical rearrangement inequality, for each $x$ the rearranged vector $(\hat{P}_{1,0}^*(x), \ldots, \hat{P}_{1,K}^*(x))$ is at least as close to the true monotone vector $(P_{1,0}(x), \ldots, P_{1,K}(x))$ as the original $(\hat{P}_{1,0}(x), \ldots, \hat{P}_{1,K}(x))$:
\[
    \sum_{k=0}^{K} \bigl|\hat{P}_{1,k}^*(x) - P_{1,k}(x)\bigr|^2 \leqslant \sum_{k=0}^{K} \bigl|\hat{P}_{1,k}(x) - P_{1,k}(x)\bigr|^2 \quad \text{for all } x.
\]
Integrating over $x$, rearrangement can only decrease the $L_2$ estimation error: $\|\hat{P}_{1,k}^* - P_{1,k}\|_{L_2} \leqslant \|\hat{P}_{1,k} - P_{1,k}\|_{L_2} = o_P(n^{-1/4})$. The subsequent clipping $\hat{J}_{\mathrm{full},k}^* \leftarrow \min(\hat{J}_{\mathrm{full},k}^*,\, \Delta p_k)$ is a projection onto $[0, \Delta p_k]$; since the true $J_{\mathrm{full},k} \in [0, \Delta p_k]$, this projection can only further reduce the estimation error by the same contraction argument. Therefore, all convergence rate assumptions on the nuisance estimators are preserved after rearrangement and clipping.

\medskip
\noindent\textbf{Second-order remainder bound.}
Let $\Delta\eta = \hat{\eta} - \eta_0$ and write the linearization error as
\[
    R_n = \frac{1}{|I_2|}\sum_{i \in I_2}\bigl[m(X_i;\hat{\eta}) - m(X_i;\eta_0) - \nabla_\eta m(X_i;\eta_0)^\top \Delta\eta(X_i)\bigr].
\]
The functional $m(X;\eta)$ is piecewise linear in $(J_{\mathrm{full},k}, p_k, q_{j,k})$ through the $\max$ and $\min$ operators, and bilinear in $(\gamma_{j,k}, h_{j,k}^-)$. Under \cref{asp:discrete_outcome_gap}, the $\max$ and $\min$ functions are locally linear at the true parameter values. The only second-order terms arise from the bilinear products $\gamma_{j,k}\,h_{j,k}^-$ and $\gamma_{\mathrm{full},k}\,J_{\mathrm{full},k}$, yielding a remainder of order
\[
    R_n = O_P\!\left(\|\Delta\gamma\|_{L_2}\,\|\Delta h\|_{L_2} + \|\Delta\gamma_{\mathrm{full}}\|_{L_2}\,\|\Delta J_{\mathrm{full}}\|_{L_2}\right) = O_P(n^{-1/4} \cdot n^{-1/4}) = o_P(n^{-1/2}).
\]

\medskip
\noindent\textbf{Asymptotic normality.}
By the standard DML argument \citep{chernozhukov2018double}, the moment condition, Neyman orthogonality, and the $o_P(n^{-1/2})$ remainder bound together imply:
\[
    \sqrt{n/2}\,(\hat{\underline{\theta}}_{\omega,1} - \theta_0) = \frac{1}{\sqrt{n/2}}\sum_{i \in I_2}\bigl(\psi(O_i;\eta_0) - \theta_0\bigr) + o_P(1) \xrightarrow{d} \mathcal{N}(0,\,\sigma^2),
\]
where $\sigma^2 = \mathrm{Var}(\psi(O;\eta_0))$.

\medskip
\noindent\textbf{Consistency of the variance estimator.}
Since $\hat{\eta} \xrightarrow{P} \eta_0$ and, under \cref{asp:discrete_outcome_gap}, the score $\psi(O;\eta)$ is continuous in $\eta$ for $\mathbb{P}_{\mathrm{obs}}$-a.e.\ $O$ (the set of discontinuity corresponds to $p_{k+1}(X) - J_{\mathrm{full},k}(X) = q_{j,k}(X)$, which has probability zero under the gap assumption), the continuous mapping theorem and the uniform law of large numbers give $\hat{\sigma}^2 \xrightarrow{P} \sigma^2$.
\end{proof}

\subsection{Continuous Instrument}

\begin{proof}[Proof of \cref{thm:continuous_rate}]

Recall the target functional from \eqref{eq:continuous_simplified_target}. Define $g_1(u,x) \coloneqq \mathbb{E}_{\mathrm{obs}}[YW \mid p(Z,X)=u, X=x]$ and decompose the target as $\underline{\theta}_{\omega, 1} = \mathbb{E}_{Z,X}[\psi_1(O) + \psi_2(O) + \psi_3(O)]$, where
\begin{align*}
    \psi_1(O) &\coloneqq - \int_{\min\{\underline{p}(X), q(Z,X)\}}^{\underline{p}(X)} Q_{Y,\underline{p} \mid X}\!\left(\frac{v}{\underline{p}(X)}\right) \mathrm{d}v, \\
    \psi_2(O) &\coloneqq g_1\big(\min\{\overline{p}(X), q(Z,X)\}, X\big) - g_1\big(p(Z,X), X\big), \\
    \psi_3(O) &\coloneqq y_{\min} \max\{0, q(Z,X) - \overline{p}(X)\}.
\end{align*}
The plug-in estimator is $\hat{\underline{\theta}}_{\omega, 1} = \frac{1}{|I_2|} \sum_{i \in I_2} [\hat{\psi}_1(O_i) + \hat{\psi}_2(O_i) + \hat{\psi}_3(O_i)]$, where each $\hat{\psi}_j$ substitutes $(\hat{p}, \hat{q}, \widehat{\underline{p}}, \widehat{\overline{p}}, \hat{g}_1, \widehat{Q}_{Y,\underline{p}\mid X})$ for the true quantities:
\begin{align*}
    \hat{\psi}_1(O) &= - \int_{\min\{\widehat{\underline{p}}(X), \hat{q}(Z,X)\}}^{\widehat{\underline{p}}(X)} \widehat{Q}_{Y,\underline{p} \mid X}\!\left(\frac{v}{\widehat{\underline{p}}(X)}\right) \mathrm{d}v, \\
    \hat{\psi}_2(O) &= \hat{g}_1\big(\min\{\widehat{\overline{p}}(X), \hat{q}(Z,X)\}, X\big) - \hat{g}_1\big(\hat{p}(Z,X), X\big), \\
    \hat{\psi}_3(O) &= y_{\min} \max\{0, \hat{q}(Z,X) - \widehat{\overline{p}}(X)\}.
\end{align*}

\medskip
\noindent\textbf{Main decomposition.}
By the sample-splitting construction, $\hat{\psi}$ is fixed conditional on $I_0 \cup I_1$ and the observations $\{O_i\}_{i \in I_2}$ are i.i.d.\ Decompose the estimation error as
\begin{align*}
    \hat{\underline{\theta}}_{\omega,1} - \underline{\theta}_{\omega,1} &= \underbrace{\frac{1}{|I_2|}\sum_{i \in I_2}\bigl(\hat{\psi}(O_i) - \psi(O_i)\bigr)}_{\text{nuisance plug-in error}} + \underbrace{\frac{1}{|I_2|}\sum_{i \in I_2}\psi(O_i) - \mathbb{E}[\psi(O)]}_{\text{sampling error}}.
\end{align*}
Since $Y$ is bounded (\cref{asp:bounded_y}) and $\omega$ is bounded, $\psi(O)$ has finite variance. By the central limit theorem, the sampling error is $O_P(n^{-1/2})$, which is dominated by the stated rate. By the law of large numbers conditional on $I_0 \cup I_1$, the first term satisfies
\[
    \frac{1}{|I_2|}\sum_{i \in I_2}\bigl(\hat{\psi}_j(O_i) - \psi_j(O_i)\bigr) = \mathbb{E}\bigl[\hat{\psi}_j(O) - \psi_j(O) \mid I_0, I_1\bigr] + O_P(n^{-1/2}),
\]
for each $j \in \{1,2,3\}$. It therefore suffices to bound $\mathbb{E}[|\hat{\psi}_j(O) - \psi_j(O)| \mid I_0, I_1]$ for each component.

\medskip
\noindent\textbf{Preliminary: propensity score plug-in errors.}
By \cref{asp:phi_smooth}, $|\hat{q}(z,x) - q(z,x)| = |\phi(z,x,\hat{p}) - \phi(z,x,p)| \leqslant C_\phi \|\hat{p} - p\|_\infty = O_P(r_{p,n})$, where $C_\phi$ is the uniform bound on $\frac{\partial\phi}{\partial p}$. Similarly, $|\widehat{\overline{p}}(x) - \overline{p}(x)| \leqslant \|\hat{p}-p\|_\infty$ and $|\widehat{\underline{p}}(x) - \underline{p}(x)| \leqslant \|\hat{p}-p\|_\infty$, so all estimated propensity-derived quantities are uniformly $O_P(r_{p,n})$-close to their population counterparts.

\medskip
\noindent\textbf{Analysis of the trivial bound $\hat{\psi}_3$.}
Since $|\max\{0,a\} - \max\{0,b\}| \leqslant |a-b|$, we have
\[
    |\hat{\psi}_3(O) - \psi_3(O)| \leqslant |y_{\min}|\bigl(|\hat{q}(Z,X) - q(Z,X)| + |\widehat{\overline{p}}(X) - \overline{p}(X)|\bigr) = O_P(r_{p,n}).
\]

\medskip
\noindent\textbf{Analysis of the conditional expectation term $\hat{\psi}_2$.}
Define $u_1 = \min\{\overline{p}(X), q(Z,X)\}$, $\hat{u}_1 = \min\{\widehat{\overline{p}}(X), \hat{q}(Z,X)\}$, $u_2 = p(Z,X)$, and $\hat{u}_2 = \hat{p}(Z,X)$. The error decomposes as
\begin{align*}
    |\hat{\psi}_2(O) - \psi_2(O)| &\leqslant \sum_{l=1}^2 |\hat{g}_1(\hat{u}_l, X) - g_1(u_l, X)| \\
    &\leqslant \sum_{l=1}^2 \Big(\underbrace{|\hat{g}_1(\hat{u}_l, X) - \hat{g}_1(u_l, X)|}_{\text{evaluation plug-in}} + \underbrace{|\hat{g}_1(u_l, X) - g_1(u_l, X)|}_{\text{estimator accuracy}}\Big).
\end{align*}

\emph{Stochastic Lipschitz property of $\hat{g}_1$.}
Since $\hat{g}_1(u,x)$ is piecewise linear in $u$ on $\{u_m = m/M\}_{m=0}^M$, its Lipschitz constant is $L_{\hat{g}}(x) = \max_{0 \leqslant m < M} M|\hat{f}_{u_{m+1}}(x) - \hat{f}_{u_m}(x)|$. By the triangle inequality,
\[
    M|\hat{f}_{u_{m+1}}(x) - \hat{f}_{u_m}(x)| \leqslant \underbrace{M|g_1(u_{m+1},x) - g_1(u_m,x)|}_{\leqslant\, \|\partial_u g_1\|_\infty \text{ by the mean value theorem}} + 2M \cdot \max_{m'} \|\hat{f}_{u_{m'}} - g_1(u_{m'}, \cdot)\|_\infty,
\]
where the first term is bounded by the regularity of $g_1$ (\cref{asp:continuous_rates}(2)). The grid-point error established below satisfies $\max_{m'}\|\hat{f}_{u_{m'}} - g_1(u_{m'}, \cdot)\|_\infty = O_P(r_{p,n}/h_n + r_{X,n} + h_n^2)$, so
\[
    L_{\hat{g}}(x) \leqslant \|\partial_u g_1\|_\infty + 2M \cdot O_P\!\left(\frac{r_{p,n}}{h_n} + r_{X,n} + h_n^2\right) = O_P(1),
\]
where the last step uses $M(r_{p,n}/h_n + r_{X,n} + h_n^2) = o(1)$ under $M \gtrsim \sqrt{nh_n}$ and the nuisance rates being $o(M^{-1})$. Consequently,
\[
    |\hat{g}_1(\hat{u}_l, X) - \hat{g}_1(u_l, X)| \leqslant L_{\hat{g}}(X) \cdot |\hat{u}_l - u_l| = O_P(r_{p,n}).
\]

\emph{Convergence of the localized ML estimator.}
We bound the pointwise error $|\hat{g}_1(u,x) - g_1(u,x)|$ at any fixed evaluation point $u \in [\underline{p}(x), \overline{p}(x)]$. Let $u_m$ be the nearest grid point. Define
\[
    \tilde{f}_{u_m} = \arg\min_{f \in \mathcal{F}} \sum_{i \in I_1} K\!\left(\frac{u_m - p(Z_i,X_i)}{h_n}\right)(Y_i\indicator(W_i\!=\!1) - f(X_i))^2
\]
as the oracle ML estimator using the true propensity scores. By the triangle inequality:
\[
    \|\hat{f}_{u_m} - g_1(u_m, \cdot)\|_2 \leqslant \underbrace{\|\hat{f}_{u_m} - \tilde{f}_{u_m}\|_2}_{\text{Generated Regressor Error}} + \underbrace{\|\tilde{f}_{u_m} - g_1(u_m, \cdot)\|_2}_{\text{Statistical ML Error $+$ Kernel Bias}}.
\]

For the \emph{statistical ML error and kernel bias}: let $R(f) = \mathbb{E}\bigl[K\bigl(\frac{u_m - p(Z,X)}{h_n}\bigr)(YW - f(X))^2\bigr]$ denote the kernel-weighted population risk, and let $f^*(x) = \mathbb{E}[K(\cdot)YW \mid X\!=\!x] / \mathbb{E}[K(\cdot) \mid X\!=\!x]$ be the unconstrained Bayes optimal predictor. Since the squared loss satisfies the Pythagorean identity $R(f) - R(f^*) = \|f - f^*\|_{\bar{w}}^2$, where $\bar{w}(x) = \mathbb{E}[K(\cdot) \mid X\!=\!x]$ defines a weighted $L_2$ norm equivalent to $\|\cdot\|_2$ (the density of $p(Z,X)$ is bounded above and below by \cref{asp:continuous_regularity}), the oracle inequality for empirical risk minimization \citep[Theorem~14.1]{wainwright2019high} applied to $\tilde{f}_{u_m}$ over $\mathcal{F}_n$ gives
\[
    \|\tilde{f}_{u_m} - f^*\|_{\bar{w}}^2 \leqslant \inf_{f \in \mathcal{F}_n} \|f - f^*\|_{\bar{w}}^2 + O_P(r_{X,n}^2).
\]
Since $g_1(u_m, \cdot) \in \mathcal{F}_n$ by realizability (\cref{asp:continuous_rates}), the infimum is bounded by $\|g_1(u_m, \cdot) - f^*\|_{\bar{w}}^2 = O(h_n^4)$, where the $O(h_n^2)$ bias of $f^*$ relative to $g_1(u_m, \cdot)$ follows from the twice continuous differentiability of $g_1$ in $u$ (\cref{asp:continuous_rates}) via the bias expansion of Nadaraya--Watson-type kernel regression \citep[Theorem~1.1]{tsybakov2009introduction}. The kernel weights localize the regression to an effective sample of size $\Theta(nh_n)$ (\cref{asp:continuous_regularity}), so the oracle inequality applies at this effective sample size. By the triangle inequality:
\[
    \|\tilde{f}_{u_m} - g_1(u_m, \cdot)\|_2 \leqslant \|\tilde{f}_{u_m} - f^*\|_{\bar{w}} + \|f^* - g_1(u_m, \cdot)\|_{\bar{w}} = O_P(r_{X,n} + h_n^2).
\]

For the \emph{generated regressor error}: the kernel weight perturbation satisfies
\[
    \left|K\!\left(\frac{u_m - \hat{p}(Z_i,X_i)}{h_n}\right) - K\!\left(\frac{u_m - p(Z_i,X_i)}{h_n}\right)\right| \leqslant \frac{\|K'\|_\infty}{h_n}\|\hat{p} - p\|_\infty = O\!\left(\frac{r_{p,n}}{h_n}\right),
\]
where $K'$ is bounded by \cref{asp:continuous_rates}. Since the kernel-weighted regression is a ratio estimator with denominator bounded away from zero (\cref{asp:continuous_regularity}), this $O(r_{p,n}/h_n)$ perturbation of each weight propagates linearly to the weighted average, yielding $\|\hat{f}_{u_m} - \tilde{f}_{u_m}\|_2 = O_P(r_{p,n}/h_n)$.

Combining, $\|\hat{f}_{u_m} - g_1(u_m, \cdot)\|_2 = O_P(r_{p,n}/h_n + r_{X,n} + h_n^2)$. The interpolation between grid points adds at most the grid resolution error, which is $O(1/M) = O((nh_n)^{-1/2})$ and is absorbed. Therefore,
\[
    |\hat{g}_1(u_l, X) - g_1(u_l, X)| = O_P\!\left(\frac{r_{p,n}}{h_n} + r_{X,n} + h_n^2\right).
\]
Aggregating the evaluation plug-in and estimator accuracy contributions, and noting $r_{p,n} \leqslant r_{p,n}/h_n$ for $h_n \leqslant 1$:
\[
    |\hat{\psi}_2(O) - \psi_2(O)| = O_P\!\left(\frac{r_{p,n}}{h_n} + r_{X,n} + h_n^2\right).
\]

\medskip
\noindent\textbf{Analysis of the boundary quantile integral $\hat{\psi}_1$.}
Define shorthand $a \coloneqq \underline{p}(X)$, $\hat{a} \coloneqq \widehat{\underline{p}}(X)$, $b \coloneqq \min\{\underline{p}(X), q(Z,X)\}$, and $\hat{b} \coloneqq \min\{\widehat{\underline{p}}(X), \hat{q}(Z,X)\}$. Adding and subtracting $\int_b^a \widehat{Q}_{Y,\underline{p}\mid X}(v/a)\,\mathrm{d}v$, the error decomposes as
\begin{align*}
    |\hat{\psi}_1(O) - \psi_1(O)| &\leqslant \underbrace{\left| \int_{\hat{b}}^{\hat{a}} \widehat{Q}_{Y,\underline{p}\mid X}\!\left(\frac{v}{\hat{a}}\right) \mathrm{d}v - \int_{b}^{a} \widehat{Q}_{Y,\underline{p}\mid X}\!\left(\frac{v}{a}\right) \mathrm{d}v \right|}_{\text{(I): limit and argument error}} + \underbrace{\int_{b}^{a} \left|\widehat{Q}_{Y,\underline{p}\mid X}\!\left(\frac{v}{a}\right) - Q_{Y,\underline{p}\mid X}\!\left(\frac{v}{a}\right)\right| \mathrm{d}v}_{\text{(II): quantile function error}}.
\end{align*}

\emph{Term~(I): Limit and argument error.}
If $\min\{a, \hat{a}\} = 0$, then $b \leqslant a$ and $\hat{b} \leqslant \hat{a}$ give $|\text{(I)}| \leqslant \max\{a, \hat{a}\}\,\|\widehat{Q}_{Y,\underline{p}\mid X}\|_\infty$. Since $\min\{a,\hat{a}\}=0$ implies $\max\{a,\hat{a}\} = |\hat{a}-a| = O_P(r_{p,n})$, we obtain $\text{(I)} = O_P(r_{p,n})$.

For $\min\{a,\hat{a}\} > 0$, the substitutions $t = v/\hat{a}$ and $t = v/a$ yield
\[
    \int_{\hat{b}}^{\hat{a}} \widehat{Q}_{Y,\underline{p}\mid X}\!\left(\frac{v}{\hat{a}}\right) \mathrm{d}v = \hat{a} \int_{\hat{b}/\hat{a}}^{1} \widehat{Q}_{Y,\underline{p}\mid X}(t)\,\mathrm{d}t, \quad \int_{b}^{a} \widehat{Q}_{Y,\underline{p}\mid X}\!\left(\frac{v}{a}\right) \mathrm{d}v = a \int_{b/a}^{1} \widehat{Q}_{Y,\underline{p}\mid X}(t)\,\mathrm{d}t.
\]
Define $G(c) \coloneqq \int_c^{1} \widehat{Q}_{Y,\underline{p}\mid X}(t)\,\mathrm{d}t$, so that $\text{(I)} = |\hat{a}\,G(\hat{b}/\hat{a}) - a\,G(b/a)|$. Since $\widehat{Q}_{Y,\underline{p}\mid X}$ is uniformly bounded by $[y_{\min}, y_{\max}]$ (\cref{asp:bounded_y}), $G$ is $\|\widehat{Q}_{Y,\underline{p}\mid X}\|_\infty$-Lipschitz and bounded by $\|\widehat{Q}_{Y,\underline{p}\mid X}\|_\infty$. We decompose
\[
    \hat{a}\,G(\hat{b}/\hat{a}) - a\,G(b/a) = \hat{a}\bigl[G(\hat{b}/\hat{a}) - G(b/a)\bigr] + (\hat{a} - a)\,G(b/a),
\]
so that $\text{(I)} \leqslant \hat{a}\,\|\widehat{Q}_{Y,\underline{p}\mid X}\|_\infty \cdot |\hat{b}/\hat{a} - b/a| + |\hat{a} - a|\,\|\widehat{Q}_{Y,\underline{p}\mid X}\|_\infty$. The second term is $O_P(r_{p,n})$. For the first term, note that
\[
    \hat{a}\left|\frac{\hat{b}}{\hat{a}} - \frac{b}{a}\right| = \left|\hat{b} - \frac{\hat{a}\,b}{a}\right| = \left|(\hat{b} - b) - \frac{b}{a}(\hat{a} - a)\right| \leqslant |\hat{b} - b| + \frac{b}{a}\,|\hat{a} - a| \leqslant |\hat{b} - b| + |\hat{a} - a| = O_P(r_{p,n}),
\]
where the penultimate inequality uses $b \leqslant a$. Hence $\text{(I)} = O_P(r_{p,n})$.

\emph{Term~(II): Quantile function error.}
The substitution $t = v/a$ yields
\[
    \text{(II)} = a \int_{b/a}^{1} \big|\widehat{Q}_{Y,\underline{p}\mid X}(t) - Q_{Y,\underline{p}\mid X}(t)\big|\,\mathrm{d}t \leqslant \int_0^1 \big|\widehat{Q}_{Y,\underline{p}\mid X}(t) - Q_{Y,\underline{p}\mid X}(t)\big|\,\mathrm{d}t,
\]
where the inequality uses $a \leqslant 1$ and $[b/a, 1] \subseteq [0,1]$. Note that this integrated form is precisely the conditional $W_1$ distance between $\widehat{F}_{Y,\underline{p}\mid X}$ and $F_{Y,\underline{p}\mid X}$, so no uniform-in-$\tau$ control is needed: integration absorbs any pointwise blow-up of the quantile error at $\tau \in \{0,1\}$ against the vanishing Lebesgue measure near the endpoints. We decompose via the triangle inequality through the localized target $Q_{Y,\delta_n\mid X}$ defined in \cref{asp:continuous_rates}(6):
\begin{align*}
    \int_0^1 \big|\widehat{Q}_{Y,\underline{p}\mid X}(t) &- Q_{Y,\underline{p}\mid X}(t)\big|\,\mathrm{d}t \\
    &\leqslant \underbrace{\int_0^1 \big|\widehat{Q}_{Y,\underline{p}\mid X}(t) - Q_{Y,\delta_n\mid X}(t)\big|\,\mathrm{d}t}_{\text{(II.a): statistical estimation error}} + \underbrace{\int_0^1 \big|Q_{Y,\delta_n\mid X}(t) - Q_{Y,\underline{p}\mid X}(t)\big|\,\mathrm{d}t}_{\text{(II.b): localization bias}},
\end{align*}
where $Q_{Y,\delta_n\mid X}(\tau)$ is the $\tau$-th conditional quantile of $Y$ given $p(Z,X) \leqslant \underline{p}(X) + \delta_n$ and $X$, as defined in \cref{asp:continuous_rates}(6).

For~(II.a), the estimator $\widehat{Q}_{Y,\underline{p}\mid X}$ is trained on the localized subset $\mathcal{I}_\delta = \{i \in I_1 : \hat{p}(Z_i, X_i) \leqslant \widehat{\underline{p}}(X_i) + \delta_n\}$, where $\hat{p}$ and $\widehat{\underline{p}}$ are estimated from the independent sample $I_0$. Conditional on $I_0$, the selection rule $\hat{p}(Z_i,X_i) \leqslant \widehat{\underline{p}}(X_i) + \delta_n$ is a fixed (non-random) function of each observation, so the selected observations $\{(Y_i, X_i)\}_{i \in \mathcal{I}_\delta}$ are conditionally i.i.d.\ from the distribution of $(Y,X)$ given $\hat{p}(Z,X) \leqslant \widehat{\underline{p}}(X) + \delta_n$. Since $r_{p,n} = o(\delta_n)$, the sup-norm error in $\hat{p}$ shifts the effective localization bandwidth by a negligible amount relative to $\delta_n$, so this conditional distribution is within $O(r_{p,n})$ total variation distance of the target population with $p(Z,X) \leqslant \underline{p}(X) + \delta_n$. By \cref{asp:continuous_regularity}, the density of $p(Z,X)$ near $\underline{p}(X)$ is bounded away from zero, so $|\mathcal{I}_\delta| = \Theta_P(n\delta_n)$. Applying the integrated quantile rate from \cref{asp:continuous_rates}(5) to this conditionally i.i.d.\ sample of effective size $\Theta_P(n\delta_n)$ gives, uniformly in $X$,
\[
    \text{(II.a)} = O_P\bigl(r_Q(n\delta_n)\bigr).
\]

For~(II.b), the $W_1$-Lipschitz smoothness condition in \cref{asp:continuous_rates}(6) applied at $\delta = \delta_n$ gives directly, uniformly in $x \in \mathcal{X}$,
\[
    \text{(II.b)} = \int_0^1 \big|Q_{Y,\delta_n\mid X}(t) - Q_{Y,\underline{p}\mid X}(t)\big|\,\mathrm{d}t \leqslant C_{\mathrm{Lip}}\,\delta_n = O(\delta_n).
\]

Combining Terms~(I) and~(II), and noting that $r_{p,n} = o(\delta_n)$:
\[
    |\hat{\psi}_1(O) - \psi_1(O)| = O_P\bigl(r_Q(n\delta_n) + \delta_n\bigr).
\]

\medskip
\noindent\textbf{Aggregation.}
Combining the three components and noting that $r_{p,n}$ is dominated by $r_{p,n}/h_n$ (since $h_n \leq 1$) and the parametric rate $n^{-1/2}$ is dominated by the nonparametric rates:
\begin{align*}
    |\hat{\underline{\theta}}_{\omega,1} - \underline{\theta}_{\omega,1}| &= O_P\!\left(\frac{r_{p,n}}{h_n} + r_{X,n} + h_n^2 + r_Q(n\delta_n) + \delta_n\right).
\end{align*}

\end{proof}

\section{Additional Simulation Details}\label{app:simulation}

\subsection{Synthetic Experiment: Data-Generating Process}

\paragraph{Continuous instrument.}
The DGP has the following components:
\begin{itemize}
    \item Instrument: $Z \sim \text{Unif}(0,1)$, independent of everything else.
    \item Latent resistance: $U \mid Z \sim \text{Unif}(0,1)$, independent of $Z$.
    \item Propensity score: $p(Z) = \operatorname{logistic}(\beta_0 + \beta_1 Z)$ with $\beta_0 = -1$ and $\beta_1 = 2$, so $p(Z)$ ranges from $\operatorname{logistic}(-1) \approx 0.27$ to $\operatorname{logistic}(1) \approx 0.73$.
    \item Treatment: $W = \indicator(U \leqslant p(Z))$.
    \item Potential outcomes: $Y(1) = aU + \theta + \varepsilon$ and $Y(0) = aU + \varepsilon$, where $a = 0.5$, $\theta = 0.5$, and $\varepsilon \sim \text{Unif}(-0.5, 0.5)$. The outcome support is $[y_{\min}, y_{\max}] = [-1, 2]$.
    \item Observed outcome: $Y = WY(1) + (1-W)Y(0)$.
\end{itemize}
The MTE is constant: $\text{MTE}(u) = \mathbb{E}[Y(1)-Y(0)\mid U=u] = \theta = 0.5$ for all $u$. The policy of interest is $q_\alpha(Z) = \operatorname{clip}(p(Z) + \alpha, 0, 1)$, and the target parameter is
\[
    \theta_\alpha = \mathbb{E}[Y^{q_\alpha} - Y] = \theta \cdot \mathbb{E}[\indicator(U \leqslant q_\alpha(Z)) - \indicator(U \leqslant p(Z))].
\]
For small $|\alpha|$ where boundary clipping is negligible, this simplifies to $\theta_\alpha \approx \theta \cdot \alpha = 0.5\alpha$.

\paragraph{Discrete instrument.}
The discrete setting uses a different DGP designed to highlight the performance of the closed-form bounds when the propensity score takes only finitely many values.

\begin{itemize}
    \item \textbf{Instrument and latent resistance.} $Z \sim \text{Bernoulli}(0.5)$ (binary) and $U \sim \text{Unif}(0,1)$, independent of each other.

    \item \textbf{Propensity score and treatment.} The propensity score is piecewise constant with two levels:
    \[
        p(Z) = \begin{cases} 0.25 & \text{if } Z = 0, \\ 0.75 & \text{if } Z = 1. \end{cases}
    \]
    Treatment follows the threshold-crossing rule $W = \indicator(U \leqslant p(Z))$.

    \item \textbf{Potential outcomes.} The control outcome is $Y(0) = 0.1 + 0.22U$ and the treatment effect is heterogeneous and increasing in $U$: $\tau(U) = 0.48 + 0.18U$, giving $Y(1) = Y(0) + \tau(U) = 0.58 + 0.40U$. Hence the MTE is $\text{MTE}(u) = 0.48 + 0.18u$, which ranges from $0.48$ to $0.66$. Measurement noise $\varepsilon \sim \text{Unif}(-0.1, 0.1)$ is added to the observed outcome $Y = WY(1) + (1-W)Y(0) + \varepsilon$. The outcome support used in estimation is $[y_{\min}, y_{\max}] = [0, 1]$.
\end{itemize}

\noindent The target parameter is $\theta_\alpha = \mathbb{E}[Y^{q_\alpha} - Y]$ under the same uniform policy shift $q_\alpha(Z) = \operatorname{clip}(p(Z)+\alpha, 0, 1)$. For small positive $\alpha$, the new compliers at each instrument value have $U \in (p(Z), p(Z)+\alpha)$, so the ground truth is approximately
\[
    \theta_\alpha \;=\; \tfrac{1}{2}\int_{0.25}^{0.25+\alpha}(0.48+0.18u)\,\mathrm{d}u \;+\; \tfrac{1}{2}\int_{0.75}^{0.75+\alpha}(0.48+0.18u)\,\mathrm{d}u \;=\; 0.57\alpha.
\]
Because the propensity score support consists of only two points $\{0.25, 0.75\}$, the MTE is identified only on the interval $(0.25, 0.75)$, and the IVOT and IVMTE bounds differ in how they treat the unidentified regions $(0, 0.25)$ and $(0.75, 1)$.

\paragraph{Sample size and comparison.}
The continuous instrument experiment uses $n=10{,}000$ observations and the discrete instrument experiment uses $n=5{,}000$ observations. IVMTE bounds are computed using the \texttt{ivmte} R package with moment constraints derived from the observed data. The PRTE weight function $\omega(u)$ corresponding to the $q_\alpha$-policy is used for both methods.

\subsection{Synthetic Experiment: Numerical Results}

\cref{tab:synthetic_continuous,fig:synthetic_continuous,tab:synthetic_discrete} report the numerical bounds for selected values of $\alpha > 0$, together with the IVMTE 95\% backward confidence interval. For the discrete instrument setting, the IVOT 95\% delta-method confidence interval is also reported. The IVOT bounds are uniformly tighter: in the continuous case by a factor of 5--8$\times$, and in the discrete case by approximately 2.4$\times$ across all tested $\alpha$. For the discrete instrument, the IVOT 95\% delta-method CI is also tighter than the IVMTE 95\% backward CI. The IVMTE 95\% CI is only slightly wider than the IVMTE identified set in most cases.

\begin{table}[htbp]
\centering
\caption{Synthetic experiment (continuous instrument, $n=10{,}000$): IVOT vs.\ IVMTE bounds for $\theta_\alpha = \mathbb{E}[Y^{q_\alpha}-Y]$. IVOT covers the truth for all $\alpha$. The IVMTE 95\% CI is a backward confidence interval computed by the \texttt{ivmte} package.}
\label{tab:synthetic_continuous}
\renewcommand{\arraystretch}{1.3}
\begin{tabular}{ccccccc}
\toprule
$\alpha$ & Truth & IVOT $[\ell,u]$ & IVOT width & IVMTE $[\ell,u]$ & IVMTE 95\% CI \\
\midrule
$0.01$ & $0.0050$ & $[0.0046,\; 0.0051]$ & $0.0005$ & $[-0.0066,\; 0.0074]$ & $[-0.0068,\; 0.0075]$ \\
$0.03$ & $0.0152$ & $[0.0121,\; 0.0164]$ & $0.0044$ & $[-0.0240,\; 0.0435]$ & $[-0.0292,\; 0.0490]$ \\
$0.05$ & $0.0253$ & $[0.0172,\; 0.0292]$ & $0.0119$ & $[-0.0313,\; 0.0651]$ & $[-0.0369,\; 0.0703]$ \\
$0.07$ & $0.0354$ & $[0.0204,\; 0.0433]$ & $0.0229$ & $[-0.0444,\; 0.0911]$ & $[-0.0526,\; 0.0965]$ \\
$0.10$ & $0.0502$ & $[0.0215,\; 0.0665]$ & $0.0450$ & $[-0.0613,\; 0.1287]$ & $[-0.0683,\; 0.1370]$ \\
$0.12$ & $0.0601$ & $[0.0198,\; 0.0830]$ & $0.0632$ & $[-0.0711,\; 0.1537]$ & $[-0.0893,\; 0.1678]$ \\
\bottomrule
\end{tabular}
\end{table}

\begin{figure}[htbp]
    \centering
    \includegraphics[width=0.60\textwidth]{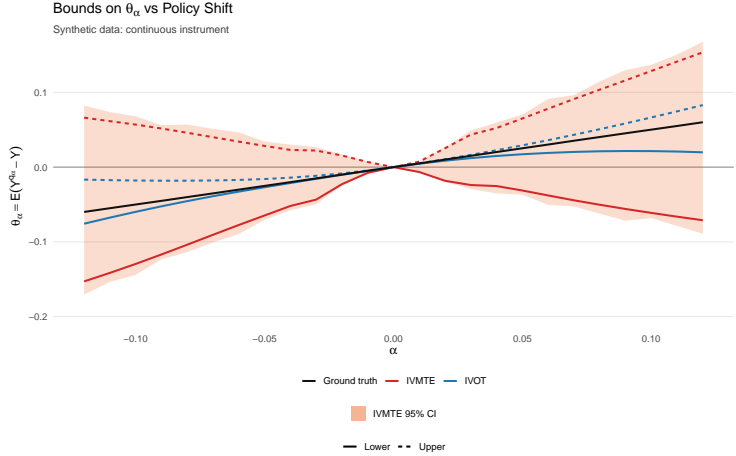}
    \caption{Continuous instrument ($n=10{,}000$): IVOT versus IVMTE identified sets and IVMTE 95\% backward CI for $\theta_\alpha = \mathbb{E}[Y^{q_\alpha}-Y]$ across $\alpha \in [-0.12, 0.12]$. The IVOT bounds cover the truth for all $\alpha$.}
    \label{fig:synthetic_continuous}
\end{figure}

\paragraph{Finite-sample coverage: effect of sample size.}
\cref{tab:synthetic_continuous_5000,fig:synthetic_continuous_5000} report the continuous instrument results at the smaller sample size of $n=5{,}000$ (compare with \cref{tab:synthetic_continuous,fig:synthetic_continuous} at $n=10{,}000$). While the IVOT bounds remain substantially tighter than IVMTE throughout, coverage drops below 100\% for three values of $\alpha$ near zero ($\alpha \in \{-0.02, -0.01, 0.01\}$), yielding an overall coverage rate of $88\%$ (22 out of 25 grid points). The near-misses occur where the IVOT interval is extremely tight (width ${\approx}\,0.002$) and the true $\theta_\alpha$ falls just outside the estimated bounds due to finite-sample error in propensity score estimation. At $n = 10{,}000$, coverage reaches 100\% across all $\alpha$, confirming that this is a finite-sample phenomenon rather than a systematic bias.

\begin{table}[htbp]
\centering
\caption{Synthetic experiment (continuous instrument, $n=5{,}000$): IVOT vs.\ IVMTE bounds for $\theta_\alpha = \mathbb{E}[Y^{q_\alpha}-Y]$. A $\dagger$ next to $\alpha$ indicates a coverage failure, which occurs only for small $|\alpha|$ where the IVOT interval is extremely narrow.}
\label{tab:synthetic_continuous_5000}
\renewcommand{\arraystretch}{1.3}
\begin{tabular}{ccccccc}
\toprule
$\alpha$ & Truth & IVOT $[\ell,u]$ & IVOT width & IVMTE $[\ell,u]$ & IVMTE 95\% CI \\
\midrule
$0.01^{\dagger}$ & $0.0050$ & $[0.0044,\; 0.0049]$ & $0.0005$ & $[-0.0057,\; 0.0063]$ & $[-0.0059,\; 0.0064]$ \\
$0.03$ & $0.0152$ & $[0.0121,\; 0.0163]$ & $0.0042$ & $[-0.0246,\; 0.0449]$ & $[-0.0280,\; 0.0485]$ \\
$0.05$ & $0.0252$ & $[0.0183,\; 0.0294]$ & $0.0112$ & $[-0.0290,\; 0.0666]$ & $[-0.0359,\; 0.0725]$ \\
$0.07$ & $0.0354$ & $[0.0225,\; 0.0439]$ & $0.0214$ & $[-0.0405,\; 0.0932]$ & $[-0.0504,\; 0.0992]$ \\
$0.10$ & $0.0502$ & $[0.0251,\; 0.0671]$ & $0.0419$ & $[-0.0593,\; 0.1329]$ & $[-0.0694,\; 0.1449]$ \\
$0.12$ & $0.0601$ & $[0.0241,\; 0.0830]$ & $0.0589$ & $[-0.0699,\; 0.1580]$ & $[-0.0922,\; 0.1755]$ \\
\bottomrule
\end{tabular}
\end{table}

\begin{figure}[htbp]
    \centering
    \includegraphics[width=0.60\textwidth]{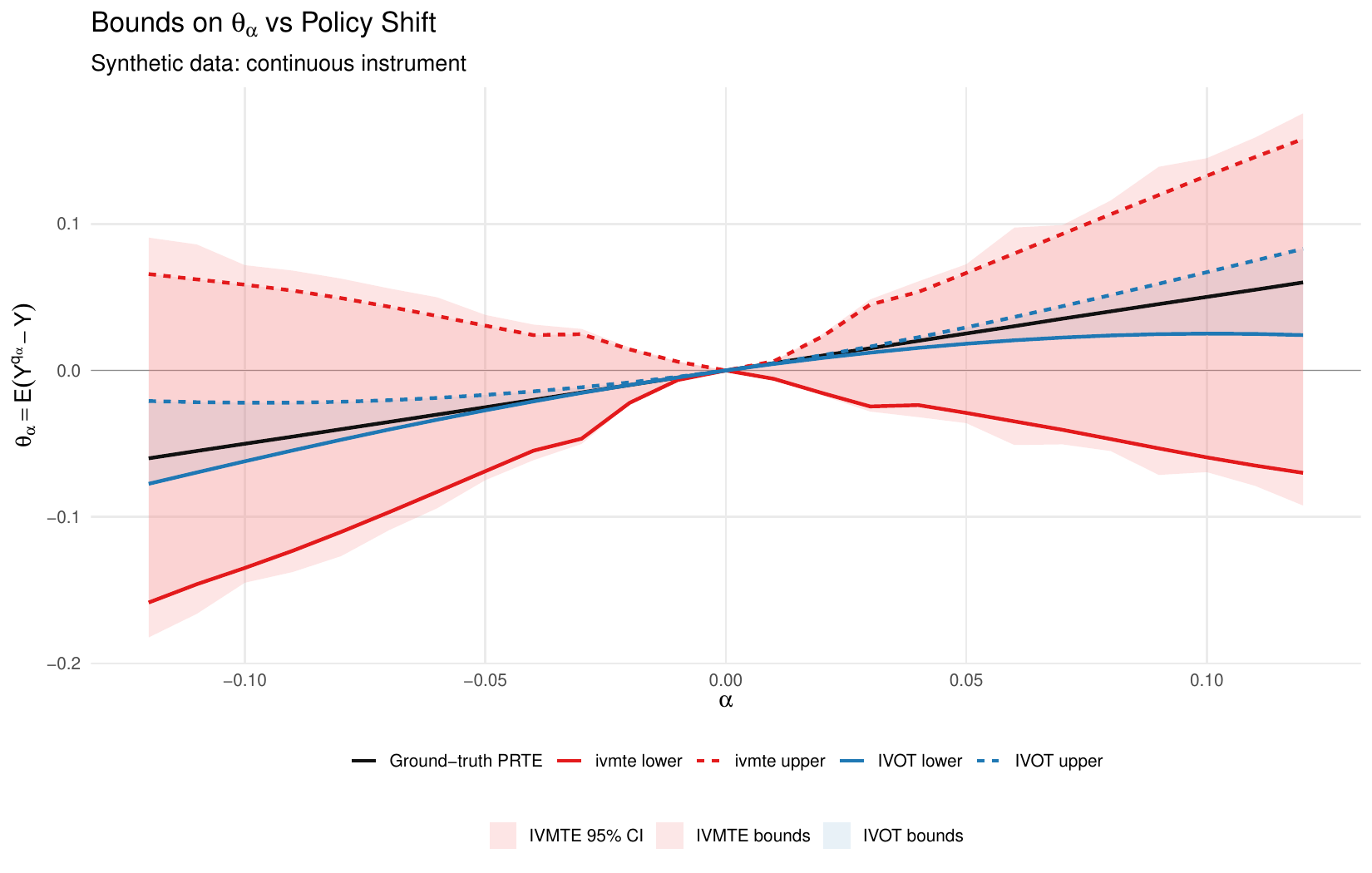}
    \caption{Continuous instrument ($n=5{,}000$): IVOT versus IVMTE identified sets for $\theta_\alpha = \mathbb{E}[Y^{q_\alpha}-Y]$ across $\alpha \in [-0.12, 0.12]$. The three values near $\alpha = 0$ where coverage fails are visible as points where the truth (solid line) falls just outside the IVOT bounds.}
    \label{fig:synthetic_continuous_5000}
\end{figure}

\begin{table}[htbp]
\centering
\caption{Synthetic experiment (discrete instrument): IVOT vs.\ IVMTE bounds for $\theta_\alpha = \mathbb{E}[Y^{q_\alpha}-Y]$. Coverage = 1 for all $\alpha$.}
\label{tab:synthetic_discrete}
\renewcommand{\arraystretch}{1.3}
\resizebox{\textwidth}{!}{%
\small
\begin{tabular}{ccccccc}
\toprule
$\alpha$ & Truth & IVOT $[\ell,u]$ & IVOT width & IVOT 95\% CI & IVMTE $[\ell,u]$ & IVMTE 95\% CI \\
\midrule
$0.01$ & $0.0057$ & $[-0.0007,\; 0.0083]$ & $0.0090$ & $[-0.0062,\; 0.0122]$ & $[-0.0106,\; 0.0109]$ & $[-0.0110,\; 0.0109]$ \\
$0.03$ & $0.0172$ & $[-0.0010,\; 0.0245]$ & $0.0255$ & $[-0.0069,\; 0.0293]$ & $[-0.0315,\; 0.0334]$ & $[-0.0338,\; 0.0335]$ \\
$0.05$ & $0.0287$ & $[-0.0008,\; 0.0401]$ & $0.0409$ & $[-0.0069,\; 0.0455]$ & $[-0.0470,\; 0.0503]$ & $[-0.0495,\; 0.0503]$ \\
$0.07$ & $0.0405$ & $[-0.0001,\; 0.0551]$ & $0.0552$ & $[-0.0063,\; 0.0610]$ & $[-0.0608,\; 0.0702]$ & $[-0.0632,\; 0.0702]$ \\
$0.10$ & $0.0581$ & $[0.0017,\; 0.0770]$ & $0.0753$ & $[-0.0047,\; 0.0835]$ & $[-0.0709,\; 0.1067]$ & $[-0.0770,\; 0.1069]$ \\
$0.12$ & $0.0697$ & $[0.0033,\; 0.0911]$ & $0.0878$ & $[-0.0032,\; 0.0978]$ & $[-0.0815,\; 0.1440]$ & $[-0.0890,\; 0.1448]$ \\
\bottomrule
\end{tabular}}
\end{table}

\subsection{Bed Net Application: Data and Setup}

\paragraph{Dataset.}
We use data from \citet{dupas2014subsidies}, which studies insecticide-treated bed net (ITN) take-up among Kenyan households. The instrument is the offered price $Z$ taking 17 distinct values spanning 0--250 Kenyan shillings. The treatment $W \in \{0,1\}$ indicates ITN purchase and the outcome $Y \in \{0,1\}$ indicates ITN usage at a one-year follow-up. The estimation sample consists of $n = 1078$ observations after merging the purchase and follow-up datasets.

\paragraph{Propensity score estimation.}
The propensity score is estimated via logistic regression of $W$ on $Z$, post-processed with isotonic regression to enforce the monotonicity assumption (higher price $\Rightarrow$ lower purchase probability). The estimated propensity scores range from approximately $0.23$ at the reference price of 150 KSh to $0.85$ at zero price.

\paragraph{Policy and target.}
The baseline policy corresponds to the reference price $z_0 = 150$ KSh with $\hat{p}(z_0) \approx 0.23$. The alternative policy $q_\alpha$ shifts the propensity score upward: $q_\alpha = \min(\hat{p}(z_0) + \alpha, 1)$ for $\alpha \in [0.05, 0.62]$. The maximum $\alpha_{\max} \approx 0.621$ equals the propensity at zero price minus the baseline propensity. The target is the policy-relevant treatment effect $\text{PRTE}_\alpha = \mathbb{E}[Y^{q_\alpha}-Y]/\alpha$, measuring the average per-unit effect of increasing compliance probability by $\alpha$.

\subsection{Bed Net Application: Numerical Results}

\cref{tab:bednet} reports the IVOT and IVMTE bounds for $\text{PRTE}_\alpha$ at selected subsidy levels, together with 95\% confidence intervals for both methods. IVOT yields tighter bounds throughout. Both methods indicate a positive and economically meaningful effect of price subsidies on bed net usage. The IVOT 95\% delta-method CI rules out non-positive per-unit effects for all subsidy levels considered, whereas the IVMTE 95\% backward CI remains inconclusive at small $\alpha$. At large $\alpha$ (near the maximum feasible shift), the IVOT bounds nearly point-identify $\text{PRTE}_\alpha$, whereas the IVMTE identified set and its 95\% CI remain substantially wider.

\begin{table}[htbp]
\centering
\caption{Bed net application: IVOT vs.\ IVMTE bounds for $\text{PRTE}_\alpha = \mathbb{E}[Y^{q_\alpha}-Y]/\alpha$ at selected policy shifts $\alpha$. The IVMTE point bounds reported in the table use degree-$10$ $u$-splines (degree-$20$ results are discussed in the main text and shown in \cref{fig:bednet_bounds}). The IVOT 95\% CI is computed via the delta method (influence function); the IVMTE 95\% CI is a backward confidence interval from the \texttt{ivmte} package based on the degree-$20$ specification.}
\label{tab:bednet}
\renewcommand{\arraystretch}{1.3}
\resizebox{\textwidth}{!}{%
\begin{tabular}{cccccc}
\toprule
$\alpha$ & IVOT lower & IVOT upper & IVOT 95\% CI & IVMTE $[\ell,u]$ & IVMTE 95\% CI \\
\midrule
$0.050$ & $0.740$ & $1.000$ & $[-1.000,\; 1.000]$ & $[0.129,\; 0.990]$ & $[-0.636,\; 1.002]$ \\
$0.094$ & $0.774$ & $0.774$ & $[-0.033,\; 1.000]$ & $[0.167,\; 0.949]$ & $[-0.380,\; 1.000]$ \\
$0.153$ & $0.703$ & $0.800$ & $[-0.084,\; 1.000]$ & $[0.262,\; 0.830]$ & $[-0.164,\; 1.000]$ \\
$0.211$ & $0.580$ & $0.601$ & $[0.158,\; 1.000]$ & $[0.406,\; 0.703]$ & $[-0.050,\; 1.000]$ \\
$0.299$ & $0.588$ & $0.637$ & $[0.290,\; 0.909]$ & $[0.575,\; 0.674]$ & $[0.183,\; 1.000]$ \\
$0.402$ & $0.622$ & $0.640$ & $[0.306,\; 0.960]$ & $[0.615,\; 0.753]$ & $[0.393,\; 0.987]$ \\
$0.504$ & $0.603$ & $0.624$ & $[0.410,\; 0.808]$ & $[0.656,\; 0.784]$ & $[0.518,\; 0.933]$ \\
$0.621$ & $0.597$ & $0.597$ & $[0.457,\; 0.737]$ & $[0.651,\; 0.782]$ & $[0.501,\; 0.861]$ \\
\bottomrule
\end{tabular}}
\end{table}